\newcommand{\ignore}[1]{}
\newcommand{\csbdom}{\stackrel{D}{\rightarrow}}
\newcommand{\csbrdom}{\stackrel{D^R}{\rightarrow}}
\newtheorem{theorem}{Theorem}[section]
\newtheorem{lemma}[theorem]{Lemma}
\newtheorem{corollary}[theorem]{Corollary}
\newtheorem{property}[theorem]{Property}
\newcolumntype{C}[1]{>{\centering\let\newline\\\arraybackslash\hspace{0pt}}m{#1}}
\newcommand{\comment}[1]{}
\definecolor{winered}{rgb}{0.5,0,0}
\begin{document}
\title{\bf Strong Connectivity in Directed Graphs\\ under Failures, with Applications\thanks{An extended abstract of this work appeared in the {\em Proc.~28th ACM-SIAM Symposium on Discrete Algorithms,} pp.~1880--1899, 2017}}
\author{Loukas Georgiadis$^{1}$ \and Giuseppe F. Italiano$^{2}$ \and Nikos Parotsidis$^{3}$}

\date{\today}

\maketitle

\begin{abstract}
In this paper, we investigate some basic connectivity problems in directed graphs (digraphs). Let $G$ be a digraph with $m$ edges and $n$ vertices, and let $G\setminus e$ (resp., $G\setminus v$) be the digraph obtained after deleting edge $e$ (resp., vertex $v$) from $G$. As a first result, we show how to compute in $O(m+n)$ worst-case time:
\begin{itemize}
\vspace{-.15cm}
\item The total number of strongly connected components in $G\setminus e$ (resp., $G\setminus v$), \emph{for all edges $e$} (resp., \emph{for all vertices $v$}) in $G$.
\vspace{-.1cm}
\item The size of the largest and of the smallest strongly connected components in $G\setminus e$ (resp., $G\setminus v$), \emph{for all edges $e$} (resp., \emph{for all vertices $v$}) in $G$.
\end{itemize}
\vspace{-.1cm}
\noindent
Let $G$ be strongly connected. We say that edge $e$ (resp., vertex $v$) separates two vertices $x$ and $y$, if $x$ and $y$ are no longer strongly connected in $G\setminus e$ (resp., $G\setminus v$). As a second set of results,  we show how to build in $O(m+n)$ time $O(n)$-space data structures that can answer in optimal time the following basic connectivity queries on digraphs:
\vspace{-.15cm}
\begin{itemize}
\item Report in $O(n)$ worst-case time all the strongly connected components of $G\setminus e$ (resp., $G\setminus v$), for a query edge $e$ (resp., vertex $v$).
\vspace{-.1cm}
\item Test whether an edge or a vertex separates two query vertices in $O(1)$ worst-case time.
\vspace{-.1cm}
\item
Report all edges (resp., vertices) that separate two query vertices in optimal worst-case time, i.e., in time $O(k)$, where $k$ is the number of separating edges (resp., separating vertices). (For $k=0$, the time is $O(1)$).
\end{itemize}
\vspace{-.1cm}
All our bounds are tight and are obtained with a common algorithmic framework, based on a novel compact representation of the decompositions induced by the $1$-connectivity (i.e., $1$-edge and $1$-vertex) cuts in digraphs, which might be of independent interest. With the help of our data structures we can design  efficient algorithms for several other connectivity problems on digraphs and we can also obtain in linear time a strongly connected spanning subgraph of $G$ with $O(n)$ edges that maintains the $1$-connectivity  cuts of $G$ and the decompositions induced by those cuts.
\end{abstract}

\footnotetext[1]{Department of Computer Science \& Engineering, University of Ioannina, Greece. E-mail: \texttt{loukas@cs.uoi.gr}.}
\footnotetext[2]{LUISS University, Rome, Italy. E-mail: \texttt{gitaliano@luiss.it}.
}
\footnotetext[3]{University of Copenhagen, Denmark. E-mail: \texttt{nipa@di.ku.dk}.
The author is supported by Grant  Number 16582, Basic Algorithms Research Copenhagen (BARC), from the VILLUM Foundation.
}

\clearpage

\tableofcontents

\clearpage

\section{Introduction}
\label{sec:introduction}

In this paper, we investigate some basic connectivity problems in directed graphs (digraphs).
Before defining precisely the problems considered, we need few definitions.
Let $G=(V,E)$ be a directed graph (digraph), with $m$ edges and $n$ vertices. Digraph $G$ is \emph{strongly connected} if there is a directed path from each vertex to every other vertex.
The \emph{strongly connected components} of $G$ are its maximal strongly connected subgraphs. Two vertices $u,v \in V$  are \emph{strongly connected}  if they belong to the same strongly connected component of $G$. The \emph{size} of a strongly connected component is given by its number of vertices.
An edge (resp., a vertex) of $G$ is a \emph{strong bridge} (resp., a \emph{strong articulation point}) if its removal increases the number of strongly connected components. Note that strong bridges (resp., strong articulation points) are $1$-edge (resp., $1$-vertex) cuts for digraphs.
Let $G$ be strongly connected. $G$ is \emph{$2$-edge-connected} if it has no strong bridges, and it is
\emph{$2$-vertex-connected} if it has at least three vertices and no strong articulation points.
Let $C \subseteq V$.
The induced subgraph of $C$, denoted by $G[C]$, is the subgraph of $G$ with vertex set $C$ and edge set $E \cap (C \times C)$.
If $G[C]$ is $2$-edge-connected (resp., $2$-vertex-connected), and there is no set of vertices $C'$ with $C \subsetneq C' \subseteq V$ such that $G[C']$ is also
$2$-edge-connected (resp., $2$-vertex-connected), then $G[C]$ is a \emph{maximal $2$-edge-connected (resp., $2$-vertex-connected) subgraph of $G$}.
Hence, in the context of reliable communication, maximal
$2$-edge- and $2$-vertex-connected subgraphs correspond, respectively, to parts of a network that are resilient to single vertex and edge failures.
These concepts, however, do not capture the pairwise connectivity among the vertices.
%
Two vertices $u, v\in V$ are said to be \emph{$2$-edge-connected} (resp., \emph{$2$-vertex-connected}), and we denote this relation by  $u \leftrightarrow_{\mathrm{2e}} v$ (resp., $u \leftrightarrow_{\mathrm{2v}} v$), if there are two edge-disjoint  (resp., two internally vertex-disjoint) directed paths from $u$ to $v$  and two edge-disjoint  (resp., two internally vertex-disjoint) directed paths from $v$ to $u$ (note that a path from $u$ to $v$ and a path from $v$ to $u$ need not be edge- or vertex-disjoint). A \emph{$2$-edge-connected component} (resp., \emph{$2$-vertex-connected component}) of a digraph $G=(V,E)$ is defined as a maximal subset $B \subseteq V$ such that $u \leftrightarrow_{\mathrm{2e}} v$ (resp., $u \leftrightarrow_{\mathrm{2v}} v$) for all $u, v \in B$.
See Figure \ref{figure:2-connectivity-example}.
Let $G\setminus e$ (resp., $G\setminus v$) denote the digraph obtained after deleting edge $e$ (resp., vertex $v$ together with all its incident edges).
We say that edge $e$ (resp., vertex $v$) \emph{separates} vertices $x$ and $y$, if $x$ and $y$ are no longer strongly connected in $G\setminus e$ (resp., $G\setminus v$).

Connectivity-related problems for digraphs are notoriously harder than for undirected graphs, and indeed many notions for undirected connectivity do not translate to the
directed case.
As shown in~\cite{CHILP:2CS,biblocks:ES1980,2ECC:GILP:TALG,2VCB,2CC:HenzingerKL:ICALP15}, in digraphs $2$-vertex and $2$-edge connectivity have a much richer and more complicated structure than in undirected graphs.
For instance, in the case of undirected graphs the  $2$-edge- (resp., $2$-vertex-)
 connected components are identical to the maximal $2$-edge- (resp., \mbox{$2$-vertex-)} connected subgraphs. This is not the case for digraphs, however, where components can be different from components: namely, two vertices may be $2$-edge- (resp., $2$-vertex-) connected without being necessarily in the same maximal $2$-edge- (resp., $2$-vertex-) connected subgraph~\cite{2ECC:GILP:TALG,2VCB}.
Moreover, an undirected graph is naturally
decomposed by bridges (resp., articulation points) into a tree of $2$-edge- (resp., $2$-vertex-) connected components, known as the bridge-block (resp., block) tree (see, e.g.,~\cite{onlineBiconnected:WT92}). In
 digraphs, the decomposition induced by strong bridges (resp., strong articulation points) becomes much more complicated (see Figure \ref{figure:grid-example}):
 in general, it was shown by Bencz\'ur that in digraphs there can be no ``cut'' tree for various connectivity concepts~\cite{Benczur95}.
Hence, it is not surprising that $2$-connectivity problems on directed graphs
are much harder than on undirected graphs.
For undirected graphs,
it has been known for over 40 years how to compute the analogous notions (bridges, articulation points, $2$-edge- and $2$-vertex-connected components) in linear time,
by simply using depth first search~\cite{dfs:t}. In the case of digraphs, however, the same problems revealed to be much more challenging: although these problems have been investigated for quite a long time (see, e.g.,~\cite{biblocks:ES1980,makino,nagamochi}), obtaining fast algorithms for $2$-edge and $2$-vertex connectivity for digraphs has been an elusive goal for many years.
Indeed, it has been shown only recently that all strong bridges and strong articulation points of a digraph can be computed in linear time~\cite{Italiano2012}.
Additionally, it was shown very recently how to compute the $2$-edge- and $2$-vertex-connected components of digraphs in linear time \cite{2ECC:GILP:TALG,2VCB}, while
the best current bound for computing the $2$-edge- and the $2$-vertex-connected components in digraphs is not even linear, but it is
$O(\min\{m^{3/2}, n^2\})$~\cite{CHILP:2CS,2CC:HenzingerKL:ICALP15}.

\begin{figure}[t]
	\begin{center}
		\includegraphics[trim={1.5cm 5cm 1.5cm 1.5cm}, clip=true, width=1\textwidth]{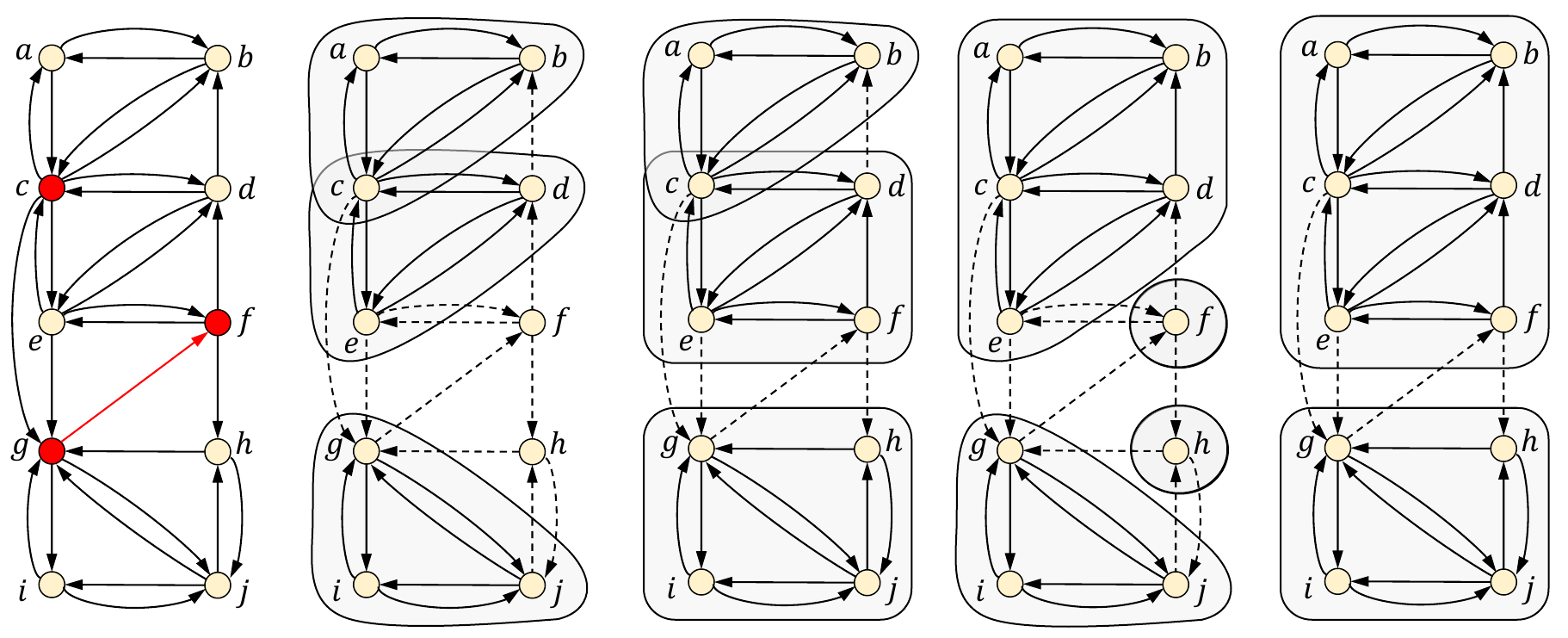}
		\begin{tabular}{C{0.0\textwidth}C{0.1\textwidth}C{0.23\textwidth}C{0.15\textwidth}C{0.20\textwidth}C{0.15\textwidth}C{0.20\textwidth}}
			\\[-.6cm]
			& (a) $G$ &  (b) $Max2\mathit{VCS}(G)$  & (c) $2\mathit{VCC}(G)$ &  (d) $Max2\mathit{ECS}(G)$ &  (e) $2\mathit{ECC}(G)$ &
		\end{tabular}
	\end{center}
	\caption{(a) A strongly connected digraph $G$, with strong articulation points and strong bridges shown in red (better viewed in color). (b) The maximal $2$-vertex-connected subgraphs of $G$. (c) The $2$-vertex-connected components of $G$. (d) The maximal $2$-edge-connected subgraphs of $G$. (e) The $2$-edge-connected components of $G$. Note that vertices $e$ and $f$ are in the same $2$-vertex- (resp., $2$-edge-) connected  component of $G$ since there are two internally vertex-disjoint (resp., edge-disjoint) paths from $e$ to $f$ and from $f$ to $e$. However, $e$ and $f$ are not in the same maximal $2$-vertex (resp., $2$-edge-) connected subgraph of $G$.
	}
	\label{figure:2-connectivity-example}
\end{figure}

\begin{figure}[t]
	\begin{center}
		\includegraphics[trim={1.5cm 10cm 1cm 1.5cm}, clip=true, width=1\textwidth]{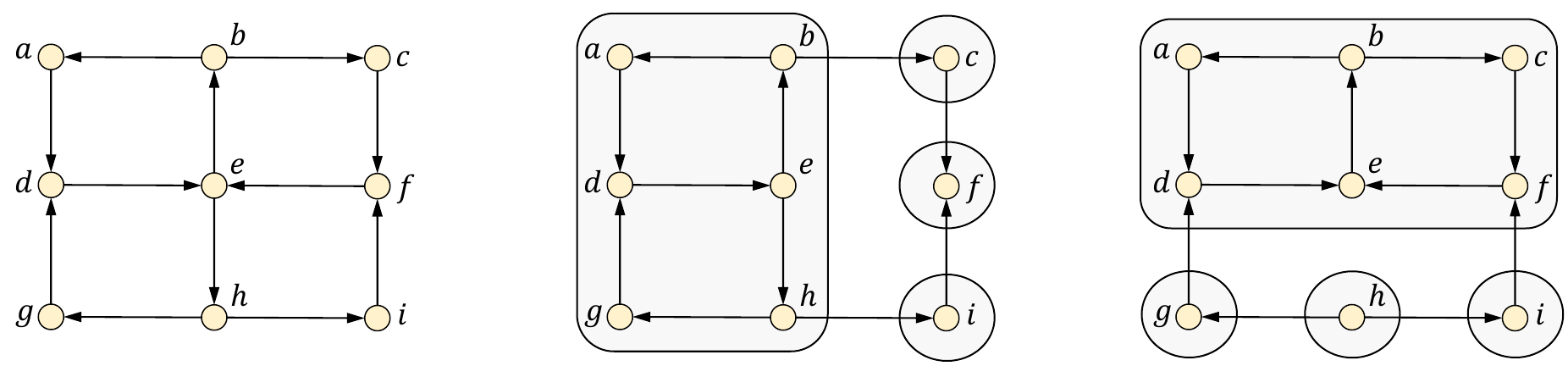}
		\begin{tabular}{C{0.25\textwidth}C{0.43\textwidth}C{0.23\textwidth}}
			(a) $G$ &  (b) $G \setminus (f,e)$  & (c) $G \setminus (e,h)$
		\end{tabular}
	\end{center}
	\caption{An example illustrating the complicated structure of $1$-edge cuts in digraphs.
		(a) A strongly connected digraph $G$.
		(b) The strongly connected components in $G\setminus (f,e)$.
		(c) The strongly connected components in $G\setminus (e,h)$.
		Note that a strongly connected component in $G\setminus (f,e)$ and a strongly connected component in $G\setminus (e,h)$ are neither disjoint nor nested.
		In fact, all edges are strong bridges, and the deletion of each edge creates many non-disjoint and non-nested sets in the resulting partitions.
	}
	\label{figure:grid-example}
\end{figure}

In this paper, we are interested in computing efficiently some properties of $G\setminus e$ and of $G\setminus v$,
for all possible edges $e$ and vertices $v$,
such as the number, or the largest or the smallest size of their strongly connected components,
or in
finding an edge $e$ or a vertex $v$ whose deletion minimizes/maximizes those properties.
Those problems are not only theoretically interesting, but they also arise in a variety of application areas, including
network analysis and
computational biology.
For instance, in several critical networked infrastructures, one is interested in  identifying
vertices and edges, whose removal results in a specific
degradation of the network global pairwise connectivity \cite{dinh}.
In social networks, finding vertices whose deletion optimizes some connectivity properties is related to identifying
key players whose removal fragments / disrupts the underlying network \cite{Borgatti,ventresca}. Applications in computational biology include the computation of steady states on digraphs governed by
Laplacian dynamics,
which include the symbolic derivation of kinetic equations and steady state expressions for biochemical
systems~\cite{Gunawardena12,MihalakUY15}. In particular, Mihal\'ak et al.~\cite{MihalakUY15} presented recently a recursive deletion-contraction
algorithm for such applications, whose efficient implementation  needs to find repeatedly the edge of a strongly connected digraph
whose deletion maximizes quantities such as the number of resulting strongly connected components or minimizes their maximum size.

Before stating our new bounds, we review some simple-minded solutions to the problems considered. 
A trivial approach to find
an edge whose deletion minimizes/maximizes the number, or the largest or the smallest size of the resulting strongly connected components,
would be to recompute the strongly connected components of $G\setminus e$, for each edge $e$ in $G$, which requires $O(m^2)$ time in the worst case. This trivial bound can be easily improved by observing that if an edge $e$ is not a strong bridge then, by definition, $G \setminus e$ remains strongly connected.
Hence, we can first compute all strong bridges of $G$ in $O(m+n)$ time~\cite{Italiano2012} and then
consider only the case where the edge to be deleted is a strong bridge,
by recomputing
the strongly connected components of $G\setminus e$ for each strong bridge $e$. Let $b$ the total number of strong bridges in $G$: this yields a total time of $O(mb)$, which is $O(mn)$ in the worst case, since there can be $O(n)$ strong bridges in a digraph $G$. Similar bounds apply in the case of vertex deletions: we can find
a vertex whose deletion minimizes/maximizes the number, or the largest or the smallest size of the resulting strongly connected components in $O(mp)$ time, where $p$ is the total number of strong articulation points in $G$. Since $p\leq n$, this can be $O(mn)$ in the worst case.

\paragraph{Our results.}
In this paper, we present new algorithms and data structures for computing in $O(m+n)$ worst-case time:
\begin{itemize}
\item The total number of strongly connected components in $G\setminus e$ (resp., $G\setminus v$),  \emph{for all edges $e$} (resp., \emph{for all vertices $v$}) in $G$,  thus improving the trivial bound of $O(mn)$. Our bound is asymptotically tight.
\item The size of the largest and of the smallest strongly connected components
in $G\setminus e$ (resp., $G\setminus v$), \emph{for all edges $e$} (resp., \emph{for all vertices $v$}) in $G$, thus improving the trivial bound of $O(mn)$. Our bound is again asymptotically tight.
\end{itemize}

\noindent
Note that this gives immediately an algorithm for finding in linear time an edge (resp., a vertex) whose deletion minimizes/maximizes the total number or the largest/smallest size of the resulting strongly connected components in the resulting digraph,
improving over the previous $O(mn)$ bounds.
We can also
build $O(n)$-space data structures that, after $O(m+n)$-time preprocessing, are able to answer in asymptotically optimal time the following basic $2$-edge and $2$-vertex connectivity queries on digraphs:
\begin{itemize}
\item Report in $O(n)$ worst-case time all the strongly connected components of $G\setminus e$ (resp., $G\setminus v$), for a query edge $e$ (resp., vertex $v$), improving the trivial bound of $O(m+n)$. Note that those bounds are asymptotically tight, as one needs $O(n)$ time to output the strongly connected components of a digraph.
\item Test whether an edge or a vertex separates two query vertices in $O(1)$ worst-case time, improving the trivial bound of $O(m)$.
\item Report all the edges (resp., vertices) that separate two query vertices in optimal worst-case time, i.e., in time $O(k)$, where $k$ is the number of separating edges (resp., separating vertices).
(For $k=0$, the time is $O(1)$). This
improves the trivial bound of $O(mn)$.
\end{itemize}

With our approach, we can design  efficient algorithms for several other
connectivity problems on digraphs.
After $O(m+n)$-time preprocessing, we can answer in $O(1)$ time for each edge $e$ (resp., vertex $v$) queries such as the number of strongly connected components in $G\setminus e$ (resp., $G\setminus v$), or the maximum / minimum size of a strongly connected component in $G\setminus e$ (resp., $G\setminus v$).
We can further output all the strongly connected components in $G\setminus e$, for all edges $e$ in $G$, in total $O(m+nb)$ worst-case time, and all the strongly connected components in $G\setminus v$, for all vertices $v$ in $G$, in total $O(m+np)$ worst-case time, improving over previous $O(mb)$ and $O(mp)$ bounds.
All those bounds are asymptotically tight.
Note that $O(m+n)$, $O(m+nb)$ and $O(m+np)$ are all $O(n^2)$ in the worst case, while $O(mb)$ and $O(mp)$ are $O(n^3)$. Thus, our data structures are able to improve one order of magnitude over previously known bounds.
Furthermore, our approach is able to provide alternative linear-time algorithms for computing
the $2$-edge-connected and $2$-vertex-connected components of a digraph, which are much simpler than the algorithms presented in~\cite{2ECC:GILP:TALG,2VCB}, and thus are likely to be more amenable to  practical implementations~\cite{2CC:ALENEX18}.
Finally, we show how to obtain in linear time, a
strongly connected spanning subgraph of $G$ with $O(n)$ edges that maintains: (i) the $1$-connectivity cuts
of $G$ (i.e., $1$-edge cuts given by strong bridges, and $1$-vertex cuts given by strong articulation points) and
the decompositions induced by those cuts, and (ii) the $2$-edge-connected and $2$-vertex-connected components of $G$.

Our framework provides efficient algorithms for several variations of the \emph{critical node detection}, defined as follows: We wish to find the vertex $x$ of $G$ that minimizes some connectivity function $f(G \setminus x)$ defined over the sizes of the strongly connected components of $G \setminus x$.
We refer to this vertex $x$ as the \emph{most critical node} of $G$. 
For instance, as spelled out in the survey on critical node detection problems in \cite{LALOU201892}, the function $f(G \setminus x)$ can be set to be the number of the strongly connected components of $G \setminus x$, or the size of the smallest or the largest strongly connected component, or the number of strongly connected pairs of vertices in $G \setminus x$.
Our framework provides linear time algorithms for finding the most critical node for a plethora of functions $f(G \setminus x)$, including all the aforementioned cases.
Based on our framework, \cite{Paudel:2018} obtains an alternative linear-time algorithm for the critical node detection problem, where $f(G \setminus x)$ is set to be the number of strongly connected pairs of vertices in $G \setminus x$, and shows that repeated applications of the most critical node detection algorithm gives an efficient heuristic for the more general problem where we wish to compute a set $S \subseteq V$ of at most $k$ vertices 
that minimizes $f(G \setminus S)$.
%
%
As noted in \cite{Arulselvan:2009}, the critical node detection problem has, in particular, several applications in the field of social network
analysis. Other applications of critical nodes include network immunization~\cite{Immune:PhysRevLett}, and the study of covert terrorist
networks~\cite{FM941}.

\paragraph{\bf Related work.}
The problem of preprocessing a digraph $G$ so that one can quickly answer queries under edge or vertex failures is not new. For instance, it has been investigated for reachability~\cite{KING2002150}
and shortest paths~\cite{Demetrescu:2008}. Specifically, King and Sagert~\cite{KING2002150} showed how to process a directed acyclic graph $G$
so that, for any pair of query vertices $x$ and $y$, and a query edge $e$, one can test in constant time if there is a path from $x$ to $y$ in $G \setminus e$.
Demetrescu et al.~\cite{Demetrescu:2008} considered the problem of preprocessing an edge-weighted digraph $G$ to answer queries that ask
for the shortest distance from any given vertex $x$ to any other vertex $y$ avoiding an arbitrary failed vertex or edge. They provide an oracle
that answers such queries in constant time using $O(n^2 \log{n})$ space.
Our framework allows us to answer in asymptotically optimal time and space various queries related to the strongly connected components of
a digraph $G$ under an arbitrary edge or vertex failure. We can not only compute the SCCs that remain in $G$ after the deletion of an edge
or a vertex, but we can also report various statistics such as the number of SCCs in \emph{constant time} per query (failed) edge or vertex.
We can also extend our framework to compute a class of functions over the number of SCCs again in constant time per query edge or vertex.
Our framework also supports constant-time path queries under failures, such as, are there paths from $x$ to $y$ and from $y$ to $x$ after the deletion of a given
failed edge or vertex?
Specifically, they presented an algorithm that reports the strongly connected components of $G \setminus F$, for any set $F$ of $k$ edges
or vertices, in $O(2^k n \log^2{n})$ time. Their algorithm uses a data structure for $G$ of size $O(2^k n^2)$, computed in
$O(2^k n^2 m)$ time during a preprocessing phase. Note that for $k=1$, our algorithm has faster preprocessing and query time, while requiring only $O(n)$ space.
Moreover, our  framework enables us to answer in asymptotically optimal time a collection of basic $2$-edge and $2$-vertex connectivity queries on digraphs.

\paragraph{\bf Key ideas.}
All our results are obtained with
a common algorithmic framework, based
on a novel combination of
dominance relations~\cite{Aho,Lorry} and loop nesting forests~\cite{st:t}.
We remark that the $1$-connectivity cuts of a digraph
can be found efficiently with the use of two dominator trees \cite{Italiano2012}.
However, these dominator trees alone do not reveal enough information in order
to determine the strongly connected components after the deletion of a single edge
or vertex in these cuts.
One of our key observations is that we can complement the dominance information with
loop nesting information, and obtain a new compact representation of the decompositions induced by
the $1$-connectivity cuts of digraphs, which consists simply of four trees: two dominator trees and two loop nesting trees.
Still, combining these four trees in order to extract the decompositions induced by
the $1$-connectivity cuts turns out to be a nontrivial task. One of the main
technical difficulties is to locate or count the vertices that are
common in different subtrees of these four trees. To overcome this obstacle,
we develop a novel technique, that takes advantage of relations between
dominator and loop nesting trees. Moreover, our techniques can be generalized
so that we can compute several functions defined on the decompositions induced by
the $1$-connectivity cuts.
We believe that the framework developed
in this paper may be of independent interest and may prove to be useful for other problems as well. In particular, after this work, we have been able to apply it to the incremental
maintenance of $2$-edge
connectivity properties on digraphs \cite{GIN16:ICALP,GIN18:LATIN}.
\nocite{cytron:91:toplas,domin:lt}

\paragraph{Organization of the paper.}
The remainder of the paper is organized as follows.
In Section \ref{sec:definitions} we introduce some preliminary definitions and terminology.
Section \ref{sec:scc} describes our new framework for
representing the structure of strong bridges and the decomposition they induce in a digraph. We use this framework to perform computations on the strongly connected components that could be obtained after an edge deletion, such as reporting all strongly connected components, counting the number of strongly connected components, and computing the size of the largest and of the smallest strongly connected component. In Section \ref{sec:2ECBs} we describe our new and simpler linear-time algorithm for computing the $2$-edge-connected components of a strongly connected digraph $G$, while in Section \ref{sec:other} we deal with pairwise $2$-edge connectivity queries. Section \ref{sec:vertices}, Section \ref{sec:2VCBs} and Section \ref{sec:vertices-other} extend to vertex connectivity respectively the results of Section \ref{sec:scc}, Section \ref{sec:2ECBs} and Section \ref{sec:other}, respectively.
In Section \ref{sec:sparse-certificate}, we describe how to construct in linear time, a sparse strongly connected spanning subgraph of
a strongly connected digraph $G$ that maintains the same decompositions induced by
the $1$-connectivity cuts of $G$, and the $2$-edge and $2$-vertex-connected components of $G$.
Finally, Section \ref{sec:conclusions} contains some concluding remarks and lists some open problems.

\section{Preliminaries}
\label{sec:definitions}

We assume that the reader is familiar with standard graph terminology.
All graphs in this paper are directed, i.e., an edge $e = (u,v)$ in digraph $G$ is directed from $u$, the \emph{tail} of $e$, to $v$, the \emph{head} of $e$. We also assume that at this point the reader is familiar with the definitions of $2$-edge and $2$-vertex connectivity on directed graphs already given in the introduction and contained in more detail in~\cite{2ECC:GILP:TALG,2VCB}.

Let $T$ be a rooted tree. Throughout the paper, we assume that the edges of $T$ are directed away from the root.
For each directed edge $(u,v)$ in $T$, we say that $u$ is a parent of $v$ (and we denote it by $t(v)$) and that $v$ is a child of $u$. Every vertex except the root has a unique parent.
If there is a (directed) path from vertex $v$ to vertex $w$ in $T$, we say that $v$ is an ancestor of $w$ and that $w$ is a descendant of $v$, and we denote this path by $T[v,w]$. If $v\neq w$, we say that $v$ is a proper ancestor of $w$ and that $w$ is a proper descendant of $v$, and denote by
$T(v, w]$ the path in $T$ to $w$ from the child of $v$ that is an ancestor of $w$. The ancestor/descendant relationship can be naturally extended to edges. Namely, let $w$ be a vertex and $(u,v)$ be an edge of $T$.
If $w$ is an ancestor (resp., proper ancestor) of $u$ (and thus also of $v$), we say that
$w$ is an ancestor (resp., proper ancestor) of edge $(u,v)$ and that $(u,v)$ is a descendant (resp., proper descendant) of $w$.
Similarly, if $w$ is a descendant (resp., proper descendant) of $v$ (and thus also of $u$), we say that
$w$ is a descendant (resp., proper descendant) of edge $(u,v)$ and that $(u,v)$ is an ancestor (resp., proper ancestor) of $w$. Let $(u,v)$ and $(w,z)$ be two edges in $T$.
If vertex $v$ is an ancestor of vertex $w$, we say that edge
$(u,v)$ is an ancestor of edge $(w,z)$ and that edge $(w,z)$ is a descendant of edge $(u,v)$.
Also, for a rooted tree $T$, we let $T(v)$ denote the subtree of $T$ rooted at $v$, and we also view $T(v)$ as the set of descendants of $v$.

Let $T$ be a depth first search (dfs)
tree of a digraph $G$, starting from a given vertex $s$.
Edge $(v, w)$ of the digraph $G$ is a \emph{tree edge} if $v = t(w)$, a \emph{forward edge} if $v$ is a proper ancestor of $t(w)$ in $T$, a \emph{back edge} if $v$ is a proper descendant of $w$ in $T$, and a \emph{cross edge} if $v$ and $w$ are unrelated in $T$.
A \emph{preorder} of $T$ is a total order of the vertices of $T$ such that, for every vertex $v$, the descendants of $v$ are ordered consecutively, with $v$ first. It can be obtained by a depth-first traversal of $T$, by ordering the vertices in the order they are first visited by the traversal.
The following lemma is an immediate consequence of depth-first search.

\begin{lemma} \emph{(Path Lemma \cite{dfs:t})}
\label{lemma:dfs}
Let $T$ be a dfs tree of a digraph $G$, and let $\mathit{pre}(v)$ denote the preorder number of vertex $v$ in $T$. If $v$ and $w$ are vertices such that $\mathit{pre}(v)<\mathit{pre}(w)$, then any path from  $v$ to $w$ must contain a common ancestor of $v$ and $w$ in $T$.
\end{lemma}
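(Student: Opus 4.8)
The plan is to prove the lemma by exhibiting a single explicit common ancestor, namely the vertex of smallest preorder number lying on the given path. Fix a path $P=(v=u_0,u_1,\ldots,u_k=w)$ in $G$ and let $z$ be the vertex of $P$ with minimum preorder number, say $z=u_i$. Since $z$ lies on $P$, it suffices to prove that $z$ is an ancestor of both $v$ and $w$ in $T$. Besides the classification of the edges of $G$ into tree, forward, back and cross edges with respect to $T$, I would use three standard properties of a dfs tree: that the descendants of a vertex $x$ are exactly the vertices whose preorder number lies in $[\mathit{pre}(x),\,\mathit{pre}(x)+|T(x)|-1]$ (this is just the definition of preorder); that the ancestors of a fixed vertex form a chain in $T$; and that every cross edge $(a,b)$ of $G$ satisfies $\mathit{pre}(b)<\mathit{pre}(a)$.

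First I would prove, by induction on $j\geq i$, that $u_j$ is a descendant of $z$; the instance $j=k$ then says that $z$ is an ancestor of $w$. The base case $j=i$ is immediate since $u_i=z$. For the inductive step, assume $u_j\in T(z)$ and consider the edge $(u_j,u_{j+1})$ of $G$; note that $\mathit{pre}(u_{j+1})\geq\mathit{pre}(z)$ by minimality of $z$ on $P$ and that $\mathit{pre}(u_j)\leq\mathit{pre}(z)+|T(z)|-1$ because $u_j\in T(z)$. If $(u_j,u_{j+1})$ is a tree or forward edge, then $u_{j+1}$ is a descendant of $u_j$, hence of $z$. If it is a back edge, then $u_{j+1}$ is an ancestor of $u_j$, so $u_{j+1}$ and $z$ are both ancestors of $u_j$ and hence comparable, and then $\mathit{pre}(u_{j+1})\geq\mathit{pre}(z)$ forces $u_{j+1}$ to be a descendant of $z$. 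If it is a cross edge, then $\mathit{pre}(u_{j+1})<\mathit{pre}(u_j)\leq\mathit{pre}(z)+|T(z)|-1$, which together with $\mathit{pre}(u_{j+1})\geq\mathit{pre}(z)$ again places $u_{j+1}$ in $T(z)$.

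It then remains to show that $z$ is an ancestor of $v$ as well, and this is the only place where the hypothesis $\mathit{pre}(v)<\mathit{pre}(w)$ is used. Since $v$ lies on $P$ we have $\mathit{pre}(v)\geq\mathit{pre}(z)$; if equality holds then $v=z$ and we are done, so assume $\mathit{pre}(z)<\mathit{pre}(v)$ and suppose, for contradiction, that $v\notin T(z)$. Then $v$ lies strictly beyond the subtree of $z$ in preorder, namely $\mathit{pre}(v)\geq\mathit{pre}(z)+|T(z)|$; but $w\in T(z)$ by the previous step, so $\mathit{pre}(w)\leq\mathit{pre}(z)+|T(z)|-1<\mathit{pre}(v)$, contradicting $\mathit{pre}(v)<\mathit{pre}(w)$. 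Hence $v\in T(z)$, so $z$ is a common ancestor of $v$ and $w$ lying on $P$, as claimed. (The same argument works under the weaker hypothesis $\mathit{pre}(v)\leq\mathit{pre}(w)$, with the equality case being trivial.)

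The step I expect to be the real obstacle is the cross-edge case of the induction in the second paragraph: it is the only point at which the directedness of $G$ and of the depth-first search genuinely matters, since one needs that a cross edge always leads into an already-finished part of the dfs tree and hence strictly decreases the preorder number; without this, an edge of $P$ could in principle leave the subtree $T(z)$ and the whole scheme would break down. Everything else is routine bookkeeping with the preorder-interval description of subtrees and with the chain structure of the set of ancestors of a vertex.
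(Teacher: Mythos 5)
Your proof is correct. The paper does not provide its own argument for this lemma --- it cites it as a known result from Tarjan's original depth-first search paper --- so there is no in-paper proof to compare against. Your argument is the standard one: take $z$ to be the vertex of minimum preorder number on the path, prove by induction along the path (case-splitting on tree/forward, back, and cross edges) that every vertex from $z$ onward lies in $T(z)$, hence $w \in T(z)$, and then use the preorder-interval description of $T(z)$ together with the hypothesis $\mathit{pre}(v) < \mathit{pre}(w)$ to conclude $v \in T(z)$ as well. You correctly identify the cross-edge case, where $\mathit{pre}(u_{j+1}) < \mathit{pre}(u_j)$ is essential, as the step where the directedness of the DFS is genuinely used, and the back-edge case is handled cleanly via the chain property of ancestors. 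The only stylistic remark: in the back-edge case, the conclusion is really that $u_{j+1}$ is either $z$ itself or a proper descendant of $z$; your phrasing covers both, since equality in $\mathit{pre}(u_{j+1}) \geq \mathit{pre}(z)$ gives $u_{j+1} = z$, which is trivially in $T(z)$.
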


\subsection{Flow graphs, dominators, and bridges}
\label{sec:dominators}

A \emph{flow graph} is a directed graph with a distinguished \emph{start vertex} $s$ such that every vertex is reachable from $s$. Let $G=(V,E)$ be a strongly connected graph.
The \emph{reverse digraph} of $G$, denoted by $G^R=(V, E^R)$, is the digraph that results from $G$ by reversing the direction of all edges.
Throughout the paper we let $s$ be a fixed but arbitrary start vertex of $G$.
Since $G$ is strongly connected, all vertices are reachable from $s$ and reach $s$, so we can view both $G$ and $G^R$ as flow graphs with start vertex $s$. To avoid ambiguities, throughout the paper we will denote those flow graphs respectively by $G_s$ and $G_s^R$.

Let $G_s$ be a flow graph with start vertex $s$.
A vertex $u$ is a \emph{dominator} of a vertex $v$ ($u$ \emph{dominates} $v$) if every path from $s$ to $v$ in $G_s$ contains $u$; $u$ is a \emph{proper dominator} of $v$ if $u$ dominates $v$ and $u \not= v$.
Let \emph{dom$(v)$} be the set of dominators of $v$. Clearly, \emph{dom}$(s)$ $=\{s\}$ and for any $v\neq s$ we have that $\{s,v\}\subseteq$ \emph{dom$(v)$}: we say that $s$ and $v$ are the \emph{trivial dominators} of $v$ in the flow graph $G_s$.
The dominator relation is reflexive and transitive. Its transitive reduction is a rooted tree, the \emph{dominator tree} $D$: $u$ dominates $v$ if and only if $u$ is an ancestor of $v$ in $D$.
If $v \not= s$,  the parent of $v$ in $D$, denoted by $d(v)$, is the \emph{immediate dominator} of $v$: it is the unique proper dominator of $v$ that is dominated by all proper dominators of $v$. Similarly, we can define the dominator relation in the flow graph $G_s^R$, and let $D^R$ denote the dominator tree of $G_s^R$. We also denote the immediate dominator of $v$ in $G_s^R$ by $d^R(v)$.
Throughout the paper, we let $N$ (resp., $N^R$) denote the set of nontrivial dominators of  $G_s$ (resp., $G^R_s$).
Lengauer and Tarjan~\cite{domin:lt} presented an algorithm for computing dominators in  $O(m \alpha(m,n))$ time for a flow graph with $n$ vertices and $m$ edges, where $\alpha$ is a functional inverse of Ackermann's function~\cite{dsu:tarjan}.
Subsequently, several linear-time algorithms
were discovered~\cite{domin:ahlt,dominators:bgkrtw,dominators:Fraczak2013,dominators:poset}.

An edge $(u,v)$ is a \emph{bridge} of a flow graph $G_s$ if all paths from $s$ to $v$ include $(u,v)$.\footnote{Throughout the paper, to avoid confusion we use consistently the term  \emph{bridge} to refer to a bridge of a flow graph and the term \emph{strong bridge} to refer to a strong bridge in the original graph.}
The following properties were proved in ~\cite{Italiano2012}.
\begin{property}
\label{property:strong-bridge} \emph{(\cite{Italiano2012})}
Let $s$ be an arbitrary start vertex of $G$. An edge $e=(u,v)$ is strong bridge of $G$ if and only if it is a bridge of $G_s$ (so $u=d(v)$) or a bridge of $G_s^R$ (so $v=d^R(u)$) or both.
\end{property}

\begin{property}
\label{property:strong-articulation-point} \emph{(\cite{Italiano2012})}
Let $s$ be an arbitrary start vertex of $G$. A vertex $v \not= s$ is a strong articulation point of $G$ if and only if $v$ is a nontrivial dominator in $G_s$ or a nontrivial dominator in $G_s^R$ or both.
\end{property}

As a consequence of Property~\ref{property:strong-bridge},
all the strong bridges of the digraph $G$ can be obtained from the bridges of the flow graphs $G_s$ and $G_s^R$, and thus there can be at most $(2n-2)$ strong bridges in a digraph $G$. Figure~\ref{figure:example1ab} illustrates a strongly connected graph $G$, its reverse graph $G^R$ and the dominator trees $D$ and $D^R$.
Let $e=(u,v)$ be a strong bridge of $G$ such that $(v,u)$ is a bridge of $G_s^R$. Since there is no danger of ambiguity, with a little abuse of notation we will say that $e$ is a bridge of $G_s^R$ (although it is actually the reverse edge $(v,u)$ that is a bridge of $G_s^R$).
We refer to the edges that are bridges in both flow graphs $G_s$ and $G_s^R$ as \emph{common bridges}.
We will use the following lemmata from \cite{2ECC:GILP:TALG,2VCB} that hold for a flow graph $G_s$ of a strongly connected digraph $G$.

\begin{figure}[t!]
	\begin{center}
		\centerline{\includegraphics[trim={0 0 0 0cm}, clip=true, width=1\textwidth]{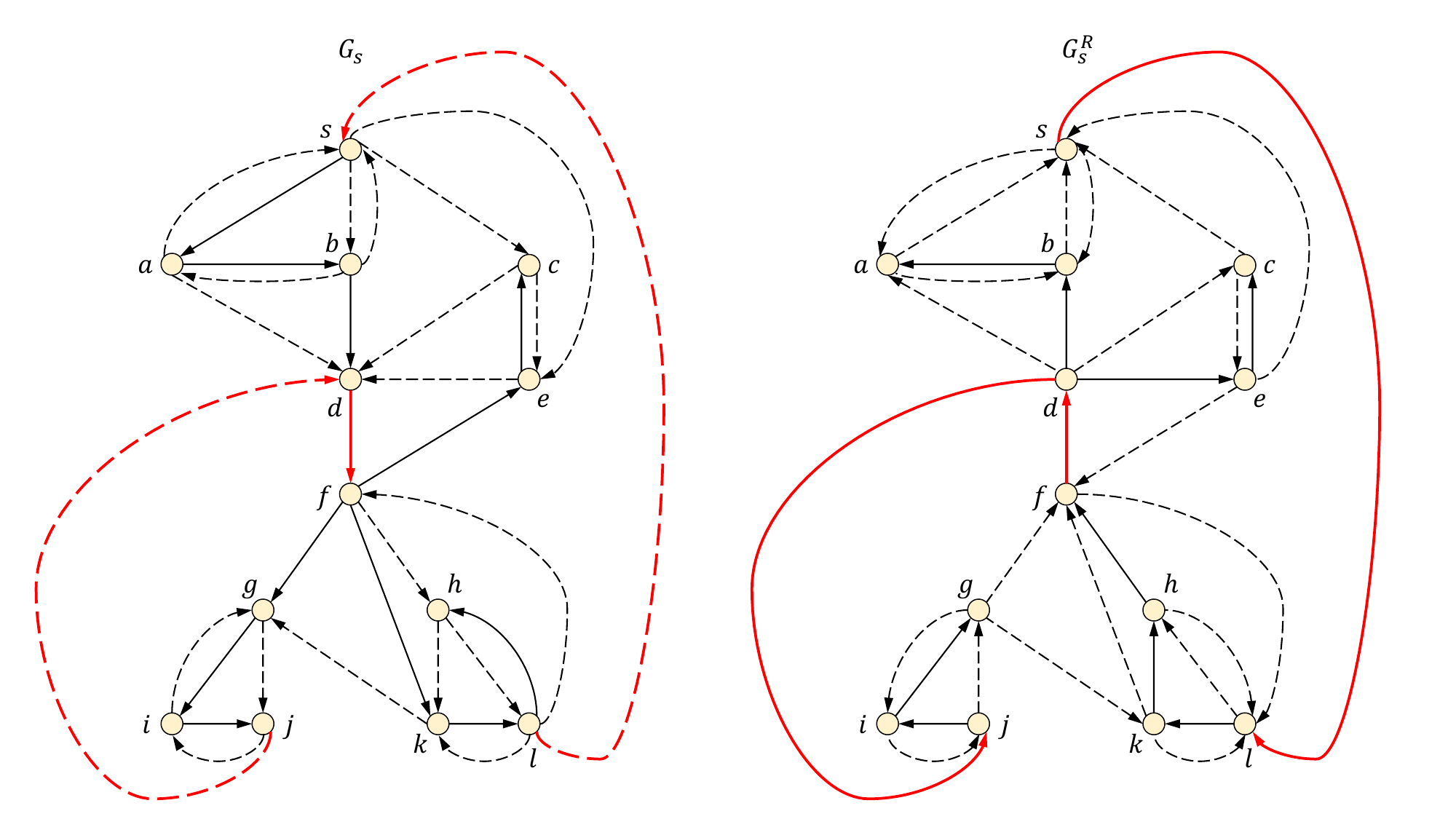}}
		\centerline{\includegraphics[trim={0 0 0 6cm}, clip=true, width=1.1\textwidth]{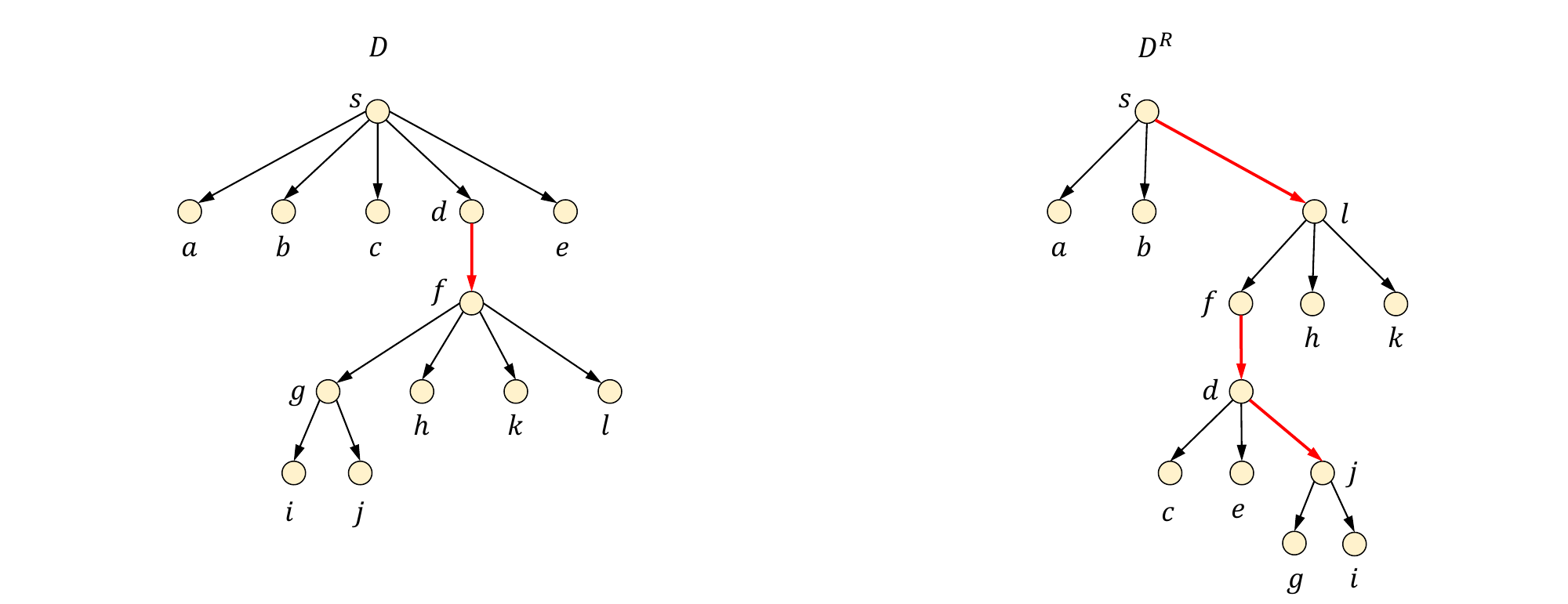}}
		\caption{A flow graph $G_s$ and its reverse $G_s^R$, and their dominator trees $D$ and $D^R$. The corresponding digraph $G$ is strongly connected. The solid edges in $G_s$ and  $G_s^R$ are the edges of depth first search trees with root $s$. Strong bridges of $G$ and $G^R$ and bridges of $G_s$ and $G_s^R$ in $D$ and $D^R$ are shown red. The bridge decomposition of $D$ and $D^R$ is obtained after deleting the red edges. (Better viewed in color.)}
		\label{figure:example1ab}
	\end{center}
\end{figure}

\begin{lemma}
\label{lemma:partition-paths} \emph{(\cite{2ECC:GILP:TALG})}
Let $G$ be a strongly connected digraph and let $(u,v)$ be a strong bridge of $G$. Also, let $D$ and $D^R$ be the dominator trees of the corresponding flow graphs $G_s$ and $G_s^R$, respectively, for an arbitrary start vertex $s$.
\begin{itemize}
\item[(a)] Suppose $u=d(v)$. Let $w$ be any vertex that is not a descendant of $v$ in $D$. Then there is a path from $w$ to $v$ in $G$ that does not contain any proper descendant of $v$ in $D$. Moreover, all simple paths in $G$ from $w$ to any descendant of $v$ in $D$ must contain the edge $(d(v),v)$.
\item[(b)] Suppose $v=d^R(u)$. Let $w$ be any vertex that is not a descendant of $u$ in $D^R$. Then there is a path from $u$ to $w$ in $G$ that does not contain any proper descendant of $u$ in $D^R$. Moreover, all simple paths in $G$ from any descendant of $u$ in $D^R$ to $w$ must contain the edge $(u,d^R(u))$.
\end{itemize}
\end{lemma}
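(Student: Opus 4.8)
\emph{Proof plan.}

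The plan is to reduce both parts to a single structural fact about the subtree $D(v)$ (resp.\ $D^R(u)$) combined with the bridge property of $(u,v)$ in $G_s$ (resp.\ $G_s^R$), and then derive part (b) from part (a) by passing to the reverse digraph. The key elementary fact I would establish first is: \emph{for every vertex $v$ and every edge $(a,b)$ of $G$ with $a\notin D(v)$ and $b\in D(v)$, one has $b=v$.} Indeed, $a\notin D(v)$ means $v$ does not dominate $a$ in $G_s$, so there is an $s$-to-$a$ path $P_a$ avoiding $v$; if $b\neq v$, then $P_a$ followed by $(a,b)$ would be an $s$-to-$b$ walk avoiding $v$, contradicting $v\in \mathit{dom}(b)$ (which holds since $b\in D(v)$). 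In other words, the only way to enter the subtree $D(v)$ from outside is through $v$ itself.

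Next I would observe that the hypotheses ``$(u,v)$ is a strong bridge of $G$'' together with ``$u=d(v)$'' force $(u,v)$ to be a \emph{bridge} of $G_s$, i.e.\ every $s$-to-$v$ path in $G$ uses the edge $(u,v)$ (this follows from Property~\ref{property:strong-bridge}; alternatively, if some $s$-to-$v$ path avoided $(u,v)$ it would still pass through $u=d(v)$, and its $u$-to-$v$ suffix would be a $u$-to-$v$ walk in $G\setminus(u,v)$ that could be spliced into any path of $G$, showing $G\setminus(u,v)$ is still strongly connected and contradicting that $(u,v)$ is a strong bridge). Since $v$ dominates every $x\in D(v)$, it follows that every $s$-to-$x$ path in $G$ uses $(u,v)$; equivalently, \emph{no vertex of $D(v)$ is reachable from $s$ in $G\setminus(u,v)$.}

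For part (a)(i): by strong connectivity of $G$ there is a path $Q$ from $w$ to $v$; let $Q'$ be the prefix of $Q$ up to the first occurrence of $v$. Since $w\notin D(v)$, the first vertex of $Q'$ that lies in $D(v)$ (if any) is reached by an edge coming from outside $D(v)$, hence by the first fact it equals $v$; as $v$ occurs only at the end of $Q'$, this means $Q'$ meets $D(v)$ only in its last vertex, so $Q'$ is the desired $w$-to-$v$ path containing no proper descendant of $v$ in $D$. For part (a)(ii): let $x\in D(v)$ and let $\pi$ be any (simple) path in $G$ from $w$ to $x$. Since $w\notin D(v)$, $v$ does not dominate $w$, so there is an $s$-to-$w$ path $P$ avoiding $v$; then $P$ does not use the edge $(u,v)$ and lies in $G\setminus(u,v)$. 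If $\pi$ also avoided $(u,v)$, then $P\cdot\pi$ would be an $s$-to-$x$ walk in $G\setminus(u,v)$, contradicting the previous paragraph. Hence $\pi$ contains $(u,v)=(d(v),v)$.

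Finally, part (b) is the mirror image of part (a): I would apply part (a) to the reverse digraph $G^R$ with start vertex $s$, using that a path from $u$ to $w$ in $G$ is exactly a path from $w$ to $u$ in $G^R$, that the relevant subtree is now the set of descendants of $u$ in $D^R$, that the edge $(u,d^R(u))$ of $G$ is the edge $(d^R(u),u)=(v,u)$ of $G^R$, and that the hypotheses (strong bridge, $v=d^R(u)$) make $(u,v)$ a bridge of $G_s^R$ just as above. I expect the main (and only somewhat delicate) point to be the interplay in the second paragraph between ``$u=d(v)$'' (every $s$-$v$ path visits $u$) and ``$(u,v)$ is a bridge of $G_s$'' (every $s$-$v$ path uses the \emph{edge} $(u,v)$); once this is pinned down, everything else is an immediate reachability argument built on the first fact, and the write-up should be short.
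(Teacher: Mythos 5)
The paper does not prove Lemma~\ref{lemma:partition-paths} — it is cited verbatim from~\cite{2ECB} — so there is no in-paper argument to compare against, but your proof is correct and self-contained. The two ingredients you isolate are exactly the right ones: the dominator-tree fact that any edge $(a,b)$ with $a\notin D(v)$ and $b\in D(v)$ must have $b=v$, and the observation that a strong bridge $(u,v)$ with $u=d(v)$ is automatically a bridge of $G_s$ (your detour-splicing argument for this is sound and is needed to upgrade ``every such path passes through the \emph{vertex} $v$'' to ``every such path uses the \emph{edge} $(d(v),v)$,'' which is what the lemma actually claims). The derivation of part (b) by applying part (a) to $G^R$ with the same start vertex is routine. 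No gaps.
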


\begin{lemma}\emph{(\cite{2VCB})}
\label{lemma:paths-through-SAP}
Let $G$ be a strongly connected digraph and let $v$ be a strong articulation point of $G$. Also, let $D$ and $D^R$ be the dominator trees of the corresponding flow graphs $G_s$ and $G_s^R$, respectively, for an arbitrary start vertex $s$.
\begin{itemize}
\item[(a)] Suppose $v$ is a nontrivial dominator of $G_s$. Let $w$ be any vertex that is not a descendant of $v$ in $D$.
Then there is a path from $w$ to $v$ in $G$ that does not contain any proper descendant of $v$ in $D$. Moreover, all simple paths in $G$ from $w$ to any descendant of $v$ in $D$ must contain $v$.
\item[(b)] Suppose $v$ is a nontrivial dominator of $G^R_s$. Let $w$ be any vertex that is not a descendant of $v$ in $D^R$.
Then there is a path from $v$ to $w$ in $G$ that does not contain any proper descendant of $v$ in $D^R$. Moreover, all simple paths in $G$ from any descendant of $v$ in $D^R$ to $w$ must contain $v$.
\end{itemize}
\end{lemma}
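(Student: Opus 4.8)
The plan is to deduce both parts of the lemma from a single structural fact about the dominator tree $D$ of $G_s$, which here plays the role that the bridge $(d(v),v)$ plays in Lemma~\ref{lemma:partition-paths}: the only way a path of $G$ can enter the subtree $D(v)$ is through $v$ itself. Formally, I would first establish the following claim: if $(p,q)\in E$ and $q$ is a proper descendant of $v$ in $D$, then $p\in D(v)$. This is immediate from the definition of dominators together with the flow-graph property of $G_s$: if $p$ were not a descendant of $v$ in $D$, then $v$ would not dominate $p$, so there would be a path $\sigma$ from $s$ to $p$ in $G_s$ that avoids $v$; appending the edge $(p,q)$ to $\sigma$ would give a path from $s$ to $q$ that avoids $v$ (note $q\neq v$, since $q$ is a proper descendant of $v$), contradicting the fact that $v$ dominates $q$. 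Since $G_s$ is a flow graph, $\sigma$ exists whenever $v$ does not dominate $p$, so the argument is complete.

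Granting this claim, part~(a) follows quickly. For the first assertion, since $G$ is strongly connected there is a simple path $\tau$ from $w$ to $v$ in $G$, and I would argue that $\tau$ avoids $D(v)\setminus\{v\}$ entirely: otherwise, letting $q$ be the first vertex of $\tau$ lying in $D(v)\setminus\{v\}$ and $p$ its predecessor on $\tau$, we would have $p\notin D(v)\setminus\{v\}$ yet $p\in D(v)$ by the claim, hence $p=v$; but then $v$ occurs on $\tau$ in the position of $p$, which is strictly before the last vertex of $\tau$, while that last vertex is $v$ itself — contradicting simplicity. Thus $\tau$ is the required path. For the second assertion, let $x$ be a descendant of $v$ in $D$ and let $\pi$ be a simple path from $w$ to $x$; if $x=v$ there is nothing to prove, and otherwise $x\in D(v)\setminus\{v\}$, so $\pi$ has a first vertex $q$ in $D(v)\setminus\{v\}$ whose predecessor $p$ on $\pi$ lies outside $D(v)\setminus\{v\}$ but, by the claim, inside $D(v)$; hence $p=v$ and $v\in\pi$. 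In both cases I use the hypothesis $w\notin D(v)$, which guarantees that the first vertex of $\tau$ (resp.\ $\pi$) lying in $D(v)\setminus\{v\}$ is not the starting vertex and therefore has a predecessor on the path.

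Part~(b) is then obtained from part~(a) by reversal symmetry: I would apply part~(a) to the strongly connected digraph $G^R$ with the same start vertex $s$ — whose dominator tree is $D^R$ and in which $v$ is, by hypothesis, a nontrivial dominator — and reverse the direction of each path produced; this turns the two assertions for $G^R$ and $D^R$ into precisely the two assertions of~(b).

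I do not expect a genuine obstacle here. The whole lemma rests on the single observation that $D(v)$ is \emph{dominator-closed}, i.e.\ can only be entered at $v$, which is a one-line consequence of the definition of dominators plus the flow-graph property; everything after that is routine surgery on simple paths, entirely parallel to the strong-bridge case in Lemma~\ref{lemma:partition-paths}. The only point requiring a little care is the ``first entry into $D(v)\setminus\{v\}$'' step, which is legitimate exactly because $w$ is assumed not to be a descendant of $v$ in $D$ — this is the sole place where that hypothesis is used.
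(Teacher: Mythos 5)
The paper does not prove this lemma itself --- it is stated as a citation to \cite{2VCB} --- so there is no internal proof to compare against. On its own terms, your argument is correct and complete: the key structural fact you isolate (any edge $(p,q)$ with $q$ a proper descendant of $v$ in $D$ must have $p\in D(v)$, because otherwise a $v$-avoiding $s$-to-$p$ path extended by $(p,q)$ would contradict $v\in\mathit{dom}(q)$) is exactly the ``$D(v)$ can only be entered at $v$'' principle that underlies all such dominator arguments, and the ``first entry into $D(v)\setminus\{v\}$'' surgery on a simple path is carried out carefully, including the necessary check that the entry point has a predecessor (which uses $w\notin D(v)$) and the simplicity argument for the first assertion of~(a). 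The reduction of~(b) to~(a) via $G^R$ and $D^R$ with path reversal is also sound. This is the standard proof, entirely parallel to the strong-bridge version in Lemma~\ref{lemma:partition-paths}.
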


After deleting from the dominator trees $D$ and $D^R$ respectively the bridges of $G_s$ and $G_s^R$, we obtain the \emph{bridge decomposition}  of $D$ and $D^R$ into forests $\mathcal{D}$ and $\mathcal{D}^R$ (see Figure~\ref{figure:example1ab}).
Throughout the paper, we denote by $D_u$ (resp., $D_u^R$) the tree in $\mathcal{D}$ (resp., $\mathcal{D}^R$) containing vertex $u$, and by $r_u$ (resp., $r^R_u$) the root of $D_u$ (resp., $D_u^R$).

\subsection{Loop nesting forests}
\label{sec:loop-nesting}

Let $G=(V,E)$ be a directed graph.
A \emph{loop nesting forest} represents a hierarchy of strongly connected subgraphs of $G$~\cite{st:t}, and is defined with respect to a dfs tree $T$ of $G$ as follows.
For any vertex $u$, the \emph{loop} of $u$, denoted by $\mathit{loop}(u)$, is the set of all descendants $x$ of $u$ in $T$ such that there is a path from $x$ to $u$ in $G$ containing only descendants of $u$ in $T$. Vertex $u$ is the \emph{head} of $\mathit{loop}(u)$. Any two vertices in $\mathit{loop}(u)$ reach each other. Therefore, $\mathit{loop}(u)$ induces a strongly connected subgraph of $G$; it is the unique maximal set of descendants of $u$ in $T$ that does so.
The $\mathit{loop}(u)$ sets form a laminar family of subsets of $V$:
for any two vertices $u$ and $v$, $\mathit{loop}(u)$ and $\mathit{loop}(v)$ are either disjoint or nested (i.e., one contains the other).
The above property allows us to define the \emph{loop nesting forest} $H$ of $G$, with respect to $T$, as the forest in which the parent of any vertex $v$, denoted by $h(v)$, is the nearest proper ancestor $u$ of $v$ in $T$ such that $v \in \mathit{loop}(u)$ if there is such a vertex $u$, and null otherwise.
Then $\emph{loop}(u)$ is the set of all descendants of vertex $u$ in $H$, which we will also denote as $H(u)$ (the subtree of $H$ rooted at vertex $u$).
Since $T$ is a dfs tree, every cycle contains a back edge \cite{dfs:t}.  More generally, every cycle $C$ contains a vertex $u$ that is a common ancestor of all other vertices $v$ of $T$ in the cycle \cite{dfs:t}, which means that any $v \in C$ is in $\mathit{loop}(u)$. Hence, every cycle of $G$ is contained in a loop.
A loop nesting forest can be computed in linear time~\cite{dominators:bgkrtw,st:t}.
Since here we deal with strongly connected digraphs, each vertex is contained in a loop, so $H$ is a tree.
Therefore, we will refer to $H$ as the \emph{loop nesting tree} of $G$.
Figure \ref{figure:example1c} shows the loop nesting trees $H$ and $H^R$ of the flow graphs $G_s$ and $G_s^R$ given in  Figure \ref{figure:example1ab}.

\begin{figure}[ht!]
\begin{center}
\centerline{\includegraphics[trim={0 0 0 5.5cm}, clip=true, width=1.3\textwidth]{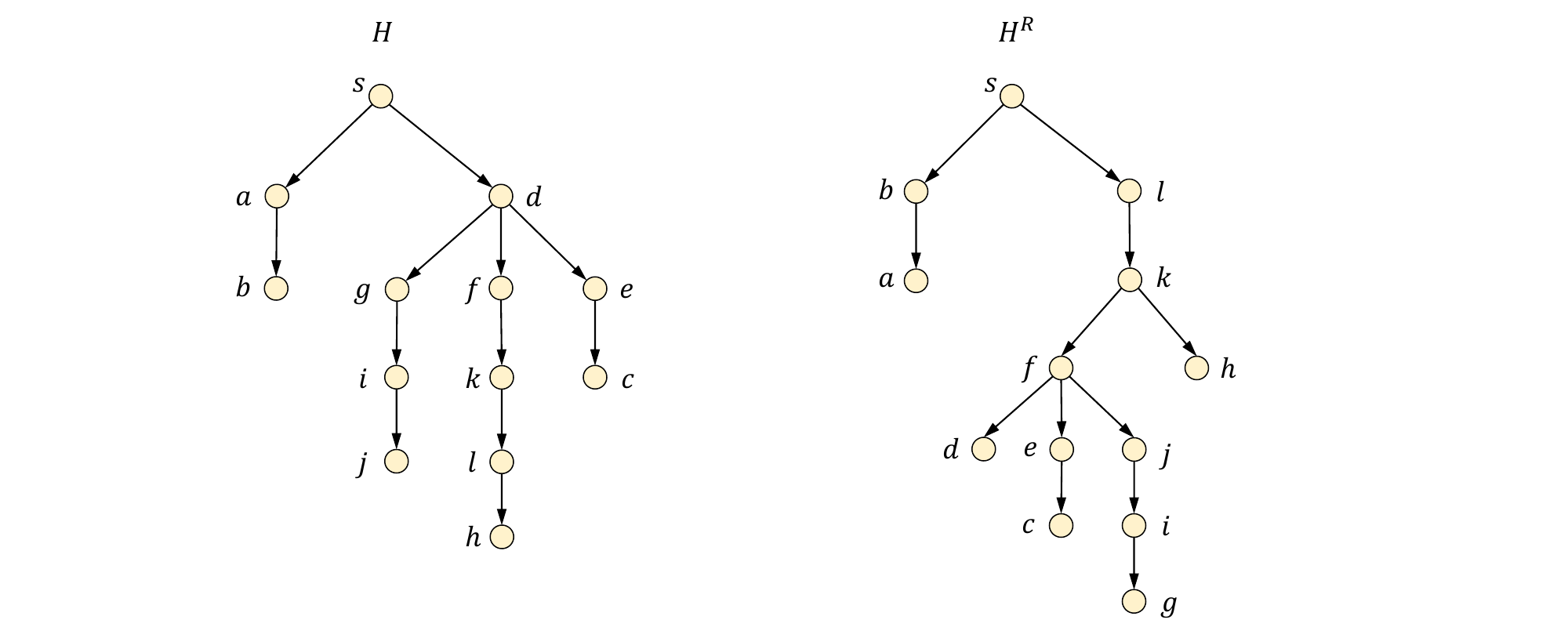}	}
\caption{The loop nesting trees $H$ and $H^R$ of the flow graphs $G_s$ and $G_s^R$ of Figure \ref{figure:example1ab} respectively, with respect to the dfs trees shown in Figure \ref{figure:example1ab}.}
\label{figure:example1c}
\end{center}
\end{figure}

The following lemma will be useful throughout the paper.
\begin{lemma}
Let $x\not=s$ be any vertex in $G$. 
Let $h(x)$ be the parent of $x$ in the loop nesting tree $H$ and ${r}_{h(x)}$ be the root of the tree $D_{h(x)}$ in the bridge decomposition $\mathcal{D}$.
Then
${r}_{h(x)}$ is an ancestor of $x$ in $D$.
\label{lemma:ancestor-of-x}
\end{lemma}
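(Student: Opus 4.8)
The plan is to prove the equivalent statement that $r_{h(x)}$ dominates $x$ in the flow graph $G_s$. Write $h := h(x)$ and $r := r_{h(x)}$ for brevity. I will use two elementary facts. First, since $r$ is by definition the root of the tree $D_h$ in the bridge decomposition $\mathcal{D}$, and $\mathcal{D}$ is obtained by deleting edges from $D$, $r$ is an ancestor of $h$ in $D$; hence $r$ dominates $h$ in $G_s$. Second, whenever a vertex $z$ dominates a vertex $w$ in $G_s$, $z$ is an ancestor of $w$ in the dfs tree $T$, because the tree path $T[s,w]$ is a path from $s$ to $w$ in $G$ and so must pass through $z$. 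Combining the two facts (with $w=h$), $r$ is an ancestor of $h$ in $T$; and since $h=h(x)$ is a proper ancestor of $x$ in $T$ by definition of the loop nesting tree, $r$ is a proper ancestor of $x$ in $T$ as well.

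Next I would fix an arbitrary path $P$ from $s$ to $x$ in $G$ and argue that $r$ lies on $P$; this suffices. Since $h(x)$ is the parent of $x$ in $H$, we have $x \in \mathit{loop}(h)$, so there is a path $Q$ from $x$ to $h$ in $G$ that uses only descendants of $h$ in $T$. Then $W := P\cdot Q$ is a walk from $s$ to $h$, and because $r$ dominates $h$, $r$ must occur on $W$, i.e., $r$ is a vertex of $P$ or of $Q$. If $r$ lies on $P$ we are done. Otherwise $r$ lies on $Q$, hence $r$ is a descendant of $h$ in $T$; together with the fact that $r$ is an ancestor of $h$ in $T$ this forces $r=h$.

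It remains to settle the case $r = h(x)$. If $h(x)=s$, then $r=s$ trivially lies on $P$. Otherwise $h(x)\neq s$ is the root of $D_{h(x)}$, which — by the way the bridge decomposition is formed — means the edge $(d(h(x)),h(x))$ of $D$ is a bridge of $G_s$; consequently every walk from $s$ to $h(x)$, and in particular $W$, traverses this edge. It cannot be traversed within $Q$: the vertex $d(h(x))$ dominates $h(x)$ and differs from it, so it is a proper ancestor of $h(x)$ in $T$, hence not a descendant of $h(x)$ in $T$, so it does not occur on $Q$. Therefore $(d(h(x)),h(x))$ is an edge of $P$, so $h(x)=r$ is a vertex of $P$. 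In every case $r$ lies on $P$; as $P$ was arbitrary, $r$ dominates $x$, i.e., $r_{h(x)}$ is an ancestor of $x$ in $D$.

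The only delicate point I expect is the degenerate case $r_{h(x)}=h(x)$: there the trick of appending a path that stays inside $\mathit{loop}(h(x))$ and observing that such a path avoids the proper ancestors of $h(x)$ in $T$ gives nothing by itself, and one must instead exploit that the edge of $D$ entering $h(x)$ is a bridge, which forces every $s$-to-$h(x)$ walk through $h(x)$ itself. Everything else is a routine combination of the definition of $\mathit{loop}(\cdot)$, the characterization of bridges of $G_s$, and the observation that dominators in $G_s$ are ancestors in the dfs tree $T$.
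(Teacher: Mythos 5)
Your proof is correct, and it takes a genuinely different route from the paper's. The paper argues by contradiction: assuming $r_{h(x)}$ is not an ancestor of $x$ in $D$, it applies Lemma~\ref{lemma:partition-paths}(a) with $w=x$ to conclude that every simple path from $x$ to $h(x)$ must cross the bridge $\bigl(d(r_{h(x)}),\,r_{h(x)}\bigr)$, while the $\mathit{loop}$-defining path from $x$ to $h(x)$ stays inside $T(h(x))$ and cannot, since $d(r_{h(x)})$ is a proper ancestor of $h(x)$ in $T$. You instead argue domination directly from first principles: you take an arbitrary $s$-to-$x$ path $P$, append the loop path $Q$ from $x$ to $h(x)$, and use that $r := r_{h(x)}$ dominates $h(x)$ to force $r$ onto $P\cdot Q$ (the extension from paths to walks is the routine ``extract a simple subpath'' argument); being both an ancestor and, if on $Q$, a descendant of $h(x)$ in $T$, $r$ would then equal $h(x)$, and you settle this degenerate case by an ad hoc bridge argument. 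The underlying observation is the same in both proofs---a path confined to $T(h(x))$ cannot exit through the bridge entering $r_{h(x)}$---but you avoid invoking Lemma~\ref{lemma:partition-paths} at the cost of the separate $r=h(x)$ case, which the paper's contradiction-plus-lemma formulation absorbs without a split. Both are clean; yours is a bit longer but more self-contained.
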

\begin{proof}
Assume by contradiction that ${r}_{h(x)}$ is not an ancestor of $x$ in $D$.
Let $T$ be the dfs tree based on which the loop nesting tree $H$ was built.
By the definition of dominators, all paths from $s$ to $h(x)$ contain ${r}_{h(x)}$, and therefore ${r}_{h(x)}$ is an ancestor of $h(x)$ in $T$.
By Lemma \ref{lemma:partition-paths}(a) all paths from $x$ to $h(x)$ in $G$ contain the strong bridge $(d({r}_{h(x)}),{r}_{h(x)})$.
But this is a contradiction to the fact that there is a path from $x$ to $h(x)$ containing only descendants of $h(x)$ in $T$, since ${r}_{h(x)}$ is an ancestor of $h(x)$ in $T$.
\end{proof}

\section{Strongly connected components in $G\setminus e$}
\label{sec:scc}

In this section, we describe our
compact representation of the structure of all the $1$-edge cuts (given by strong bridges) of a strongly connected digraph $G$. Let $G^R$ be the reverse digraph of $G$, and let $s$ be any vertex in $G$. Let $G_s$ be the flow graph with start vertex $s$ and let $G_s^R$ be the reverse flow graph with start vertex $s$.
Let $D$ and $D^R$ be the dominator trees of $G_s$ and $G_s^R$, and let $H$ and $H^R$ be the loop nesting trees of $G_s$ and $G_s^R$.
We show that the four trees $D$, $D^R$, $H$ and $H^R$  are sufficient to encode efficiently the decompositions that the strong bridges induce in $G$, i.e., all the strongly connected components of $G\setminus e$, \emph{for all strong bridges $e$} in $G$.
In particular, let $e$ be a strong bridge in $G$. We will show how the four trees $D$, $D^R$, $H$ and $H^R$ can be effectively exploited for solving efficiently the following problems:
\begin{itemize}
\item Compute all the strongly connected components of $G\setminus e$;
\item Count the number of strongly connected components of $G\setminus e$;
 \item{Find the size of the smallest or the largest strongly connected component of $G\setminus e$.}
\end{itemize}

Throughout this section, we assume without loss of generality that the input digraph $G$ is strongly connected (otherwise,
we apply our algorithms to the strongly connected components of $G$). If $G$ is strongly connected, then $m\geq n$, where $m$ and $n$ are respectively the number of edges and vertices in $G$, which will simplify some of the bounds. Since all our algorithms are based on dominator trees and loop nesting forests, we fix arbitrarily a start vertex $s$ in $G$.
We also restrict our attention to the strong bridges of $G$, since only the deletion of a strong bridge of $G$ can affect its strongly connected components.

Let $G=(V,E)$ be a strongly connected digraph, $s$ be an arbitrary start vertex in $G$,  and let $e=(u,v)$ be  a strong bridge of $G$. We will first prove some general properties of the strongly connected components of $G\setminus e$. We will then exploit those properties in Sections \ref{sec:all-scc}, \ref{sec:SCCs-num} and \ref{sec:minmax-scc}
to design efficient solutions for our problems.
Consider the dominator relations in the flow graphs $G_s$ and $G_s^R$.
By Property \ref{property:strong-bridge}, one of the following cases must hold:
\begin{itemize}
\item[(a)] $e$ is a bridge in $G_s$ but not in $G_s^R$, so $u=d(v)$ and $v \not=d^R(u)$.
\item[(b)] $e$ is a bridge in $G_s^R$ but not in $G_s$, so $u \not=d(v)$ and $v=d^R(u)$.
\item[(c)] $e$ is a common bridge, i.e., a bridge both in $G_s$ and in $G_s^R$, so $u=d(v)$ and $v=d^R(u)$.
\end{itemize}

We will show how to compute the strongly connected components of $G \setminus e$ in each of these cases. Consider $u=d(v)$, i.e., when either (a) or (c) holds.
Case (b) is symmetric to (a).
By Lemma \ref{lemma:partition-paths}, the deletion of the edge $e=(u,v)$ separates the descendants of $v$ in $D$, denoted by $D(v)$, from $V \setminus D(v)$. Therefore, we can compute separately the strongly connected components of the subgraphs of $G \setminus e$ induced by the vertices in $D(v)$ and by the vertices in $V \setminus D(v)$.
We begin with some lemmata that allow us to compute the strongly connected components in the subgraph of $G \setminus e$ that is induced by the vertices in $D(v)$. We recall here that, given the loop nesting tree $H$ of $G_s$ and a vertex $w$ in $H$, we denote by $h(w)$ the parent of $w$ in $H$ and by
$H(w)$ the set of descendants of  $w$ in $H$.

\begin{lemma}
\label{lemma:subtree1}
Let $e=(u,v)$ be a strong bridge of $G$ that is also a bridge in the flow graph $G_s$ (i.e., $u=d(v)$).
For any vertex $w \in D(v)$ the vertices in $H(w)$ are contained in a strongly connected component $C$ of $G \setminus e$ such that $C \subseteq D(v)$.
\end{lemma}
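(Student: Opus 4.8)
The plan is to show that the set $H(w)$ (the descendants of $w$ in the loop nesting tree $H$ of $G_s$) lies entirely inside $D(v)$, and then that it is strongly connected in $G \setminus e$. The containment $H(w) \subseteq D(v)$ will come from the observation that $H(w) = \mathit{loop}(w)$ consists of descendants of $w$ in the dfs tree $T$ that can reach $w$ using only descendants of $w$ in $T$; combined with the fact that $w \in D(v)$ and the separation property of Lemma~\ref{lemma:partition-paths}(a), any vertex of $H(w)$ that were outside $D(v)$ would need the edge $(d(v),v)$ to reach $w$, contradicting that it reaches $w$ through $T$-descendants of $w$ (which are proper descendants of $v$ in $D$, by Lemma~\ref{lemma:ancestor-of-x} applied along the $H$-chain, or more directly because the relevant path avoids $(d(v),v)$). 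Here I would lean on Lemma~\ref{lemma:ancestor-of-x}: since $w \in D(v)$, $v$ is an ancestor of $w$ in $D$, and for any $x \in H(w)$ we have $h(x)$ on the path from $x$ up to $w$ in $H$, so $r_{h(x)}$ is an ancestor of $x$ in $D$; chasing this up the loop-nesting chain from $x$ to $w$ shows $x$ is a descendant of $v$ in $D$, i.e.\ $x \in D(v)$.

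Next, for strong connectivity of $H(w)$ in $G \setminus e$: the definition of the loop nesting forest gives that any two vertices in $\mathit{loop}(w) = H(w)$ reach each other, and in fact $\mathit{loop}(w)$ induces a strongly connected subgraph of $G$ via paths using only $T$-descendants of $w$. The point is that none of these internal paths can use the edge $e = (u,v) = (d(v),v)$: such a path stays among descendants of $w$ in $T$, hence (since $w$, and therefore all of $H(w)$, is a proper descendant of $v$ in $D$, so certainly $v \neq$ any such vertex unless $w = v$) it never revisits $v$ from outside, and more to the point $d(v) = u$ is a proper ancestor of $v$ in $D$ and is not a descendant of $w$ in $T$ — so the tree-descendant-confined paths witnessing $\mathit{loop}(w)$ avoid $e$ entirely. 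Thus $H(w)$ induces a strongly connected subgraph of $G \setminus e$.

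Finally I would promote this to a strongly connected component: $H(w)$ is strongly connected in $G \setminus e$ and contained in $D(v)$, and by Lemma~\ref{lemma:partition-paths}(a) the set $D(v)$ is separated in $G \setminus e$ from $V \setminus D(v)$ (every simple path from outside $D(v)$ into $D(v)$ uses $e$), so the strongly connected component $C$ of $G \setminus e$ containing $H(w)$ cannot leave $D(v)$; hence $C \subseteq D(v)$, which is exactly the claim.

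The main obstacle I anticipate is the containment step $H(w) \subseteq D(v)$ — making precise that the loop-internal path witnessing membership in $\mathit{loop}(w)$, which is confined to $T$-descendants of $w$, forces each such vertex to be a $D$-descendant of $v$. The clean way is the inductive use of Lemma~\ref{lemma:ancestor-of-x} along the $H$-path from a generic $x \in H(w)$ up to $w$, together with the hypothesis $w \in D(v)$; the rest (avoiding $e$ on internal paths, and the separation of $D(v)$) follows directly from Lemma~\ref{lemma:partition-paths}(a) and the definition of the loop nesting forest.
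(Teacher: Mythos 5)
Your proof is correct and follows essentially the same route as the paper's: both rely on the witness paths for $\mathit{loop}(w)$ being confined to $T$-descendants of $w$ (hence missing $u=d(v)$ and therefore $e$), together with Lemma~\ref{lemma:partition-paths}(a) to keep the resulting strongly connected component inside $D(v)$. The only organizational difference is that you establish $H(w)\subseteq D(v)$ explicitly up front, whereas the paper proves the adjacent-pair case ($x$, $h(x)$ both in $D(v)$) and lets the containment of the rest of $H(w)$ follow implicitly from the same contradiction argument.
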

\begin{proof}
Let $x \in D(v)$ be a vertex such that $w=h(x) \in D(v)$. We claim that $w$ and $x$ are strongly connected in $G \setminus e$.
Let $T$ be the dfs tree that generated $H$.
Note that, since $w=h(x)$ and $x,w\in D(v)$,
the path $\pi_1$ from $w$ to $x$ in the dfs tree $T$ (in which $x$ is a descendant of $w$) avoids the edge $e=(u,v)$. To show that $w$ and $x$ are strongly connected in $G \setminus e$, we exhibit a path $\pi_2$ from $x$ to $w$ that avoids the edge $e$.
Indeed, by the definition of the loop nesting forest, there is a path $\pi_2$ from $x$ to $w$ that contains only descendants of $w$ in $T$.
Note that $\pi_2$ cannot contain the edge $e$ since $d(v)$ is a proper ancestor of $v$ in $T$ and all descendants of $w$ in $T$ are descendants of $v$ in $T$, since all paths from $s$ to $w$ contain $v$.
Assume by contradiction that either $\pi_1$ or $\pi_2$ contains a vertex $z \not\in D(v)$. But then, Lemma \ref{lemma:partition-paths} implies that the either the subpath of $\pi_1$ from $z$ to $x$
or the subpath of $\pi_2$ from $z$ to $w$ contains $e$, a contradiction.
This implies that every pair of vertices in $H(w)$ are strongly connected in $G \setminus e$. Let $C$ be the strongly connected component of $G \setminus e$ that contains $H(w)$. The same argument implies that all vertices in $C$ are descendants of $v$ in $D$.
\end{proof}

\begin{lemma}
\label{lemma:subtree2}
Let $e=(u,v)$ be a strong bridge of $G$ that is also a bridge in the flow graph $G_s$ (i.e., $u=d(v)$).
For every strongly connected component $C$ in $G\setminus e$ such that $C \subseteq D(v)$, there is a vertex $w \in C$ that is a common ancestor in $H$ of all vertices in $C$.
Moreover, $C=H(w)$.
\end{lemma}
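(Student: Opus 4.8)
The plan is to take $w$ to be the vertex of $C$ that is first visited by the depth-first search used to build $H$, i.e.\ the vertex of $C$ with smallest preorder number in the dfs tree $T$, and then to prove that $C \subseteq \mathit{loop}(w) = H(w)$. Since $w$ itself belongs to $C$, this exhibits the required common ancestor of all vertices of $C$ in $H$. Note that, because $C$ is strongly connected in $G \setminus e$, it induces a strongly connected subgraph of $G$ as well, so the loop nesting structure of $G$ can be brought to bear on it; the hypotheses that $e$ is a bridge of $G_s$ and that $C \subseteq D(v)$ are not really used here beyond matching the setting of Lemma~\ref{lemma:subtree1}.

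The argument splits into two steps. First I would show that $w$ is an ancestor in $T$ of every vertex of $C$. Fix $x \in C$ with $x \neq w$, so $\mathit{pre}(w) < \mathit{pre}(x)$. As $C$ induces a strongly connected subgraph of $G$, there is a path $P$ from $w$ to $x$ that uses only vertices of $C$; by the Path Lemma (Lemma~\ref{lemma:dfs}), $P$ must contain a common ancestor $z$ of $w$ and $x$ in $T$. Since $z$ lies on $P$ we have $z \in C$, hence $\mathit{pre}(z) \geq \mathit{pre}(w)$ by the minimality of $\mathit{pre}(w)$; on the other hand $z$ is an ancestor of $w$, so $\mathit{pre}(z) \leq \mathit{pre}(w)$. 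Thus $z = w$, and $w$ is an ancestor of $x$ in $T$. Second, using strong connectivity of $C$ in the opposite direction, for each $x \in C$ there is a path from $x$ to $w$ in $G$ using only vertices of $C$, and by the first step all those vertices are descendants of $w$ in $T$. This is precisely the defining condition for $x \in \mathit{loop}(w)$, so $x \in H(w)$. Hence $C \subseteq H(w)$, which completes the proof.

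The only delicate point is the first step: combining the Path Lemma with the choice of $w$ as the minimum-preorder vertex of $C$ to force the common ancestor $z$ to coincide with $w$ itself. Everything else is a direct unfolding of the definition of $\mathit{loop}(\cdot)$, so I expect this argument to be the crux; in particular it is important that the path $P$ be chosen entirely within $C$, since that is what guarantees $z \in C$ and hence $\mathit{pre}(z) \geq \mathit{pre}(w)$.
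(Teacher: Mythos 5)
Your proposal is correct and follows essentially the same route as the paper: pick the vertex $w$ of $C$ with minimum preorder in the dfs tree $T$, and use the Path Lemma together with the fact that strong connectivity of $C$ lets you choose connecting paths inside $C$ to force the Path-Lemma ancestor to be $w$ itself. The only difference is that the paper's write-up is terser at the end --- it jumps directly from $w$ being a $T$-ancestor of every $z \in C$ to $w$ being an ancestor of $z$ in $H$ --- whereas you spell out the missing half: a path from $x$ to $w$ inside $C$ lies among $T$-descendants of $w$, which is exactly the definition of $x \in \mathit{loop}(w) = H(w)$. Your observation that the hypotheses on $e$ and on $C \subseteq D(v)$ are not actually invoked is also accurate; they merely delimit the setting in which the lemma is applied (and are used in the companion Lemmata~\ref{lemma:subtree1} and~\ref{lemma:subtree}).
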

\begin{proof}
Let $C$ be a strongly connected component of $G \setminus e$ that contains only descendants of $v$ in $D$.
Let $T$ be the dfs tree that generated $H$ and let $\mathit{pre}$ be the corresponding preorder numbering of the vertices.
Define $w$ to be the vertex in $C$ with minimum preorder number with respect to $T$.
Consider any vertex $z \in C \setminus w$.
Since $C$ is a strongly connected component, there is a path $\pi$ from $w$ to $z$ that contains only vertices in $C$.
By the choice of $w$, $\mathit{pre}(w)<\mathit{pre}(z)$, so Lemma \ref{lemma:dfs} implies that $\pi$ contains a common ancestor $q$ of $w$ and $z$ in $T$.
Also $q \in C$, since $\pi$ contains only vertices in $C$.
But then $q=w$, since otherwise $\mathit{pre}(q)<\mathit{pre}(w)$ which contradicts the choice of $w$.
Hence $w$ is also an ancestor of $z$ in $H$. 
Moreover, this implies $C \subseteq H(w)$.
Since $w \in D(v)$, we have $H(w) \subseteq C$ by Lemma \ref{lemma:subtree1}.
Hence, $C=H(w)$.
\end{proof}

\begin{lemma}
\label{lemma:subtree}
Let $e=(u,v)$ be a strong bridge of $G$ that is also a bridge in the flow graph $G_s$ (i.e., $u=d(v)$). Let $w$ be a vertex such that $w \in D(v)$ and $h(w) \not\in D(v)$. Then, the subgraph induced by $H(w)$ is a strongly connected component of $G \setminus e$.
\end{lemma}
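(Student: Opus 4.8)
The plan is to deduce the statement by combining Lemma \ref{lemma:subtree1} and Lemma \ref{lemma:subtree2}. Since $w \in D(v)$, Lemma \ref{lemma:subtree1} already provides a strongly connected component $C$ of $G \setminus e$ with $H(w) \subseteq C \subseteq D(v)$. Hence it suffices to prove the reverse inclusion $C \subseteq H(w)$: this forces $C = H(w)$, so that $H(w)$ induces a strongly connected component of $G \setminus e$, as claimed.

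To obtain $C \subseteq H(w)$, I would apply Lemma \ref{lemma:subtree2} to the component $C$ (which is legitimate, since $C \subseteq D(v)$). It yields a vertex $w' \in C$ that is a common ancestor in $H$ of all vertices of $C$; equivalently, $C \subseteq H(w')$. As $w \in H(w) \subseteq C$, the vertex $w'$ is an ancestor of $w$ in $H$. If $w' = w$, then $C \subseteq H(w') = H(w)$ and we are done, so the entire argument reduces to excluding the possibility that $w'$ is a \emph{proper} ancestor of $w$ in $H$.

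This is the only place where the hypothesis $h(w) \notin D(v)$ is used, and it is the crux of the proof. Recall that $H$ is a tree (since $G$ is strongly connected) and that $w \neq s$: indeed $w \in D(v)$ while $s \notin D(v)$ because $v \neq s$. Thus $h(w)$ is well defined, and the proper ancestors of $w$ in $H$ are precisely $h(w)$ and the ancestors of $h(w)$. So if $w'$ were a proper ancestor of $w$, then $h(w) \in H(w')$. But $w' \in C \subseteq D(v)$, so Lemma \ref{lemma:subtree1} applied to $w'$ gives $H(w') \subseteq D(v)$, and therefore $h(w) \in D(v)$, contradicting the hypothesis. Hence $w' = w$ and $C = H(w)$. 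I do not anticipate a genuine obstacle here; one only needs to be careful about the well-definedness of $h(w)$ and about keeping the two invocations of Lemma \ref{lemma:subtree1} (one for $w$, one for $w'$) distinct.
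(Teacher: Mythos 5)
Your proof is correct and follows essentially the same route as the paper's: invoke Lemma~\ref{lemma:subtree1} to obtain $H(w)\subseteq C\subseteq D(v)$, invoke Lemma~\ref{lemma:subtree2} to obtain a vertex $w'\in C$ with $C\subseteq H(w')$, and then use the hypothesis $h(w)\notin D(v)$ to rule out $w'$ being a proper ancestor of $w$. The only difference is that you spell out in full the step the paper states in a single sentence (``the fact that $h(w)\notin D(v)$ implies $x=w$''), by re-applying Lemma~\ref{lemma:subtree1} to $w'$ to get $H(w')\subseteq D(v)$; this is a correct and natural way to justify it.
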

\begin{proof}
From Lemma \ref{lemma:subtree1} we have that $H(w)$ is contained in a strongly connected component $C \subseteq D(v)$ of $G \setminus e$. Let $x \in C$ be the vertex that is a common ancestor in $H$ of all vertices in $C$, as stated by Lemma \ref{lemma:subtree2}. The fact that $h(w) \not\in D(v)$ implies $x=w$. 
So, by Lemma \ref{lemma:subtree2},
 $H(w)$ is a maximal subset of vertices that are strongly connected in $G \setminus e$. Thus $H(w)$ induces a strongly connected component of $G \setminus e$.
\end{proof}

Next we consider the strongly connected components of the subgraph of $G \setminus e$ induced by $V \setminus D(v)$.

\begin{lemma}
\label{lemma:above-tree-1}
Let $e=(u,v)$ be a strong bridge of $G$ that  is a bridge in $G_s$ but not in $G_s^R$ (i.e., such that $u=d(v)$ and $v \not= d^R(u)$).
Let $C = V \setminus D(v)$. Then, the subgraph induced by $C$ is a strongly connected component of $G \setminus e$.
\end{lemma}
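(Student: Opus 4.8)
The plan is to establish two facts about the set $C = V \setminus D(v)$ in the graph $G \setminus e$: first, that the induced subgraph $(G\setminus e)[C]$ is strongly connected; and second, that no vertex of $D(v)$ lies in the same strongly connected component of $G\setminus e$ as a vertex of $C$. Since $v$ has the proper dominator $u=d(v)$ we have $v\neq s$, hence $s\in C$; so the two facts together force $C$ to be precisely the strongly connected component of $G\setminus e$ containing $s$, which is exactly the claim.

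The second fact is the easy one. Because $u=d(v)$, Lemma~\ref{lemma:partition-paths}(a) tells us that every simple path in $G$ from a vertex of $C$ to any descendant of $v$ in $D$ uses the edge $e=(d(v),v)$. Hence in $G\setminus e$ there is no path at all from a vertex of $C$ to a vertex of $D(v)$, so the strongly connected component of $G\setminus e$ containing $s$ cannot contain any vertex of $D(v)$.

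For the strong connectivity of $(G\setminus e)[C]$ I would use $s$ and $d(v)$ (both in $C$) as hubs and prove three reachability statements, all inside $(G\setminus e)[C]$: (i) $s$ reaches every vertex of $C$; (ii) every vertex of $C$ reaches $d(v)$; (iii) $d(v)$ reaches $s$. Statement (ii) comes directly from Lemma~\ref{lemma:partition-paths}(a): for $w\in C$ the guaranteed path from $w$ to $v$ avoids all proper descendants of $v$ in $D$, and by the ``moreover'' clause it must contain $e$; taking it simple, $e$ is forced to be its last edge, so deleting that edge leaves a path from $w$ to $d(v)$ lying entirely in $C$ and avoiding $e$. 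For statement (i), observe that not all $s$-to-$w$ paths in $G$ can use $e$ — otherwise every such path would pass through $v$, making $v$ a dominator of $w$ in $G_s$ and hence $w\in D(v)$, a contradiction — so some $s$-to-$w$ path avoids $e$, and, being unable to enter $D(v)$ (again by the ``moreover'' clause of Lemma~\ref{lemma:partition-paths}(a)), it lies in $C$. Statement (iii) is where the hypothesis $v\neq d^R(u)$ enters: it says exactly that the reverse edge $(v,u)$ is not a bridge of $G_s^R$, so there is an $s$-to-$u$ path in $G_s^R$ avoiding $(v,u)$; reversing it yields a $d(v)$-to-$s$ path in $G$ avoiding $e$, which as before cannot meet $D(v)$ and hence lies in $C$. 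Combining (i)--(iii), any two vertices $w,w'\in C$ are joined by the walk $w \rightsquigarrow d(v) \rightsquigarrow s \rightsquigarrow w'$ inside $(G\setminus e)[C]$, so that subgraph is strongly connected.

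The only genuinely delicate point is statement (iii): unlike (i) and (ii), it fails without the assumption $v\neq d^R(u)$ — this is precisely what separates case (a) from the common-bridge case (c), where $C$ splits further — and the clean way to exploit the assumption is to pass to the reverse flow graph $G_s^R$ and reverse a path. Throughout, one must take the paths produced by Lemma~\ref{lemma:partition-paths}(a) (and the reversed $G_s^R$-path) to be simple, so that the ``first vertex in $D(v)$'' arguments are valid and so that any occurrence of the edge $e$ is pinned to the correct end of the path.
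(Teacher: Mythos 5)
Your proof is correct and essentially follows the same approach as the paper's: use Lemma~\ref{lemma:partition-paths}(a) to show that no vertex of $D(v)$ can be strongly connected to a vertex of $C$ after deleting $e$, note $s\in C$, and then show every $w \in C$ both reaches $s$ and is reached from $s$ inside $C$, invoking $v\neq d^R(u)$ exactly where needed for the ``back to $s$'' direction. The only difference is organizational: you factor the ``$w$ reaches $s$'' step explicitly through the intermediate hub $d(v)$ (via items (ii) and (iii)), whereas the paper compresses this into a single contradiction argument about $w$-to-$s$ paths; your version spells out where the simplicity of paths and the ``moreover'' clause of Lemma~\ref{lemma:partition-paths}(a) are doing the work, which is a mild unpacking rather than a different route.
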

\begin{proof}
By Lemma \ref{lemma:partition-paths}(a), a vertex in $C$ cannot be strongly connected in $G \setminus e$ to a vertex in $D(v)$. Thus, it remains to show that the vertices in $C$ are strongly connected in $G \setminus e$.
Note that by the definition of
$C$ we have that $s \in C$. Now it suffices to show that for any vertex $w \in C$, digraph $G$ has a path $\pi$ from $s$ to $w$ and a path $\pi'$ from $w$ to $s$ containing only vertices in $C$.
Assume by contradiction that all paths in $G$ from $s$ to $w$ contain a vertex in $D(v)$. Then, Lemma \ref{lemma:partition-paths} implies that all paths from $s$ to $w$ contain $e$, which contradicts the fact that $w \not \in D(v)$. We use a similar argument for the paths from $w$ to $s$. If all such paths contain a vertex in $D(v)$, then by Lemma \ref{lemma:partition-paths} we have that all paths from $w$ to $s$ contain $e$. Hence, also all paths from $u$ to $s$ must contain $e$, which contradicts the fact that  $v \not= d^R(u)$.
\end{proof}

Finally we deal with the more complicated case (c). We refer the reader to Figure \ref{figure:single bridge} for an illustration of the sets involved in the lemma.

\begin{figure}[ht!]
	\begin{center}
		\centerline{\includegraphics[trim={0 0 0 7cm}, clip=true, width=1.3\textwidth]{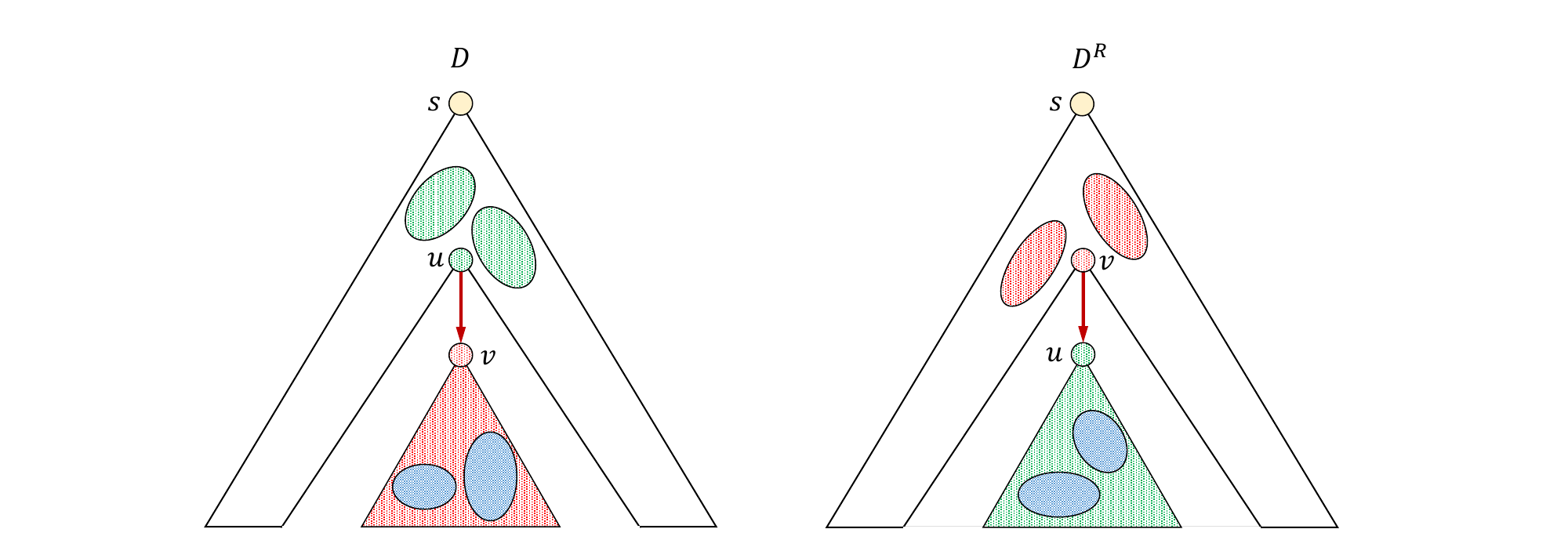}}
		\caption{An illustration of the sets $D(v)$, $D^R(u)$, $D(v) \cap D^R(u)$, and $C = V \setminus \big (D(v) \cup D^R(u) \big)$ that correspond to a common strong bridge $(u,v)$, as they appear with respect to the dominator trees $D$ and $D^R$. The vertices in $D(v) \setminus D^R(u)$ are shown in red, the vertices in $D^R(u) \setminus D(v)$ are shown in green, and the vertices in $D(v) \cap D^R(u)$ are shown in blue. (Better viewed in color.) The remaining vertices are in $C = V \setminus \big (D(v) \cup D^R(u) \big)$.}
		\label{figure:single bridge}
	\end{center}
\end{figure}

\begin{lemma}
\label{lemma:above-tree-2}
Let $e=(u,v)$ be a strong bridge of $G$  that is a common bridge of $G_s$ and $G_s^R$ (i.e., such that $u=d(v)$ and $v = d^R(u)$).
Let $C = V \setminus \big ( D(v) \cup D^R(u) \big ) $. Then, the subgraph induced by $C$ is a strongly connected component of $G \setminus e$.
\end{lemma}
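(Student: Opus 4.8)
The plan is to mirror the proof of Lemma~\ref{lemma:above-tree-1}, but now carving both $D(v)$ and $D^R(u)$ out of $V$. First I would observe that $C\neq\emptyset$: indeed $v\neq s$ (since $v$ has the proper dominator $u=d(v)$) and $u\neq s$ (since $d^R(u)=v$ is defined), so $s$ belongs to neither $D(v)$ nor $D^R(u)$, i.e., $s\in C$.

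Next I would prove maximality, namely that no vertex of $C$ is strongly connected in $G\setminus e$ to a vertex outside $C$. Fix $w\in C$ and $z\in D(v)\cup D^R(u)$. If $z\in D(v)$, then since $w\notin D(v)$, Lemma~\ref{lemma:partition-paths}(a) (with $u=d(v)$) shows that every simple path of $G$ from $w$ to $z$ contains $e=(u,v)$, so $w$ cannot reach $z$ in $G\setminus e$. If instead $z\in D^R(u)$, then since $w\notin D^R(u)$, Lemma~\ref{lemma:partition-paths}(b) (with $v=d^R(u)$) shows that every simple path of $G$ from $z$ to $w$ contains $e$, so $z$ cannot reach $w$ in $G\setminus e$. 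In either case $w$ and $z$ are not strongly connected in $G\setminus e$.

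It then remains to show that the vertices of $C$ all lie in a single strongly connected component of $G\setminus e$; since $s\in C$, it suffices to exhibit, for each $w\in C$, a path from $s$ to $w$ and a path from $w$ to $s$ in $G$ that both stay within $C$ (and hence avoid $e$). For the path from $s$ to $w$: since $w\notin D(v)$, vertex $v$ does not dominate $w$ in $G_s$, so $G$ contains a simple path $P$ from $s$ to $w$ that avoids $v$; then $P$ avoids all of $D(v)$, because $v$ dominates every vertex of $D(v)$ and $P$ misses $v$. Furthermore $P$ avoids $D^R(u)$: if some vertex $y$ of $P$ lay in $D^R(u)$, the suffix of $P$ from $y$ to $w$ would be a simple path from a descendant of $u$ in $D^R$ to $w\notin D^R(u)$, hence would contain $e$ by Lemma~\ref{lemma:partition-paths}(b), forcing the head $v$ of $e$ onto $P$ — contradicting that $P$ avoids $D(v)$. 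Thus $P$ lies in $C$. The path from $w$ to $s$ is symmetric: since $w\notin D^R(u)$, there is a simple path $Q$ from $w$ to $s$ in $G$ avoiding $u$; then $Q$ avoids all of $D^R(u)$ by dominance in $G_s^R$, and $Q$ avoids $D(v)$, since otherwise the prefix of $Q$ from $w$ to its first vertex in $D(v)$ would, by Lemma~\ref{lemma:partition-paths}(a), contain $e$ and thus the tail $u$ of $e$, a contradiction. Hence $Q$ lies in $C$. Combining maximality with the fact that every vertex of $C$ both reaches and is reached from $s$ within $C$ in $G\setminus e$, we conclude that the subgraph induced by $C$ is a strongly connected component of $G\setminus e$.

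The step I expect to be the main obstacle is showing that these two connecting paths can be kept inside $C$; the key new idea for the common-bridge case (compared with Lemma~\ref{lemma:above-tree-1}) is the bootstrapping trick whereby a simple dominance argument first forces a path to avoid one of the forbidden sets $D(v)$ or $D^R(u)$, and then the complementary part of Lemma~\ref{lemma:partition-paths} forces it to avoid the other as well.
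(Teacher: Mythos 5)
Your proof is correct and follows essentially the same route as the paper's: you first show maximality (no vertex of $C$ is strongly connected in $G\setminus e$ to a vertex of $D(v)\cup D^R(u)$) via Lemma~\ref{lemma:partition-paths}, and then show that every $w\in C$ reaches $s$ and is reached from $s$ along paths staying inside $C$, by choosing a path that avoids one forbidden region and then bootstrapping with Lemma~\ref{lemma:partition-paths} to show it must avoid the other as well. The only cosmetic difference is that you obtain the initial path by a direct dominance argument (avoiding $v$, resp.\ $u$) and invoke Lemma~\ref{lemma:partition-paths}(b), resp.\ (a), inline, whereas the paper first packages the same facts into four auxiliary edge/path properties and then uses those; the underlying reasoning is identical.
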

\begin{proof}
The fact that $e$ is a strong bridge and Lemma \ref{lemma:partition-paths} imply
that
the following properties
hold
in $G$:
\begin{itemize}
\item[(1)]  There is a path from $s$ to any vertex in $V \setminus D(v)$ that does not contain $e$.
\item[(2)]   There is a path from any vertex in $V \setminus D^R(u)$ to $s$ that does not contain $e$.
\item[(3)]  There is no edge $(x,y)\neq e$ such that $x \notin D(v)$ and $y \in D(v)$. In particular, since $C \subseteq V \setminus D(v)$, there is no edge $(x,y) \neq e$ such that $x \in C$ and $y\in D(v)$.
\item[(4)]  Symmetrically, there is no edge $(x,y) \neq e$ such that $x \in D^R(u)$ and $y \not \in D^R(u)$. In particular, since $C \subseteq V \setminus D^R(u)$, there is no edge $(x,y) \neq e$ such that $x \in D^R(u)$ and $y\in C$.
\end{itemize}
Let $K$ be a strongly connected component of $G \setminus e$ such that $K \cap C \not = \emptyset$.
By properties (3) and (4) we have that $K$ contains no vertex in $V \setminus C = D(v) \cup D^R(u)$. Thus $K\subseteq C$.
Let $G_C$ be the subgraph of $G$ induced by the vertices in $C$.
We will show that for any vertex $x \in K$ digraph $G_C$ contains a path from $s$ to $x$ and a path from $x$ to $s$. This implies that
all vertices in $C$ are strongly connected in $G_C$, and hence also in $G \setminus e$. Moreover, since $K\subseteq C$ is a strongly connected component of $G\setminus e$, we have that $K=C$ and that it induces a strongly connected component in $G \setminus e$.

First we argue about the existence of a path from $s$ to $x \in C$ in $G_C$.
Let $\pi$ be a path in $G$ from $s$ to $x$ that does not contain $e$.
Property (1) guarantees that such a path exists.
Also, Lemma  \ref{lemma:partition-paths} implies that $\pi$ does not contain a vertex in $D(v)$, since otherwise $\pi$ would include $e$.
It remains to show that $\pi$ also avoids $D^R(u)$. Assume by contradiction that $\pi$ contains a vertex in $z \in D^R(u)$.
Choose $z$ to be the last such vertex in $\pi$. Since $x \not\in D^R(u)$ we have that $z \not= x$. Let $w$ be the successor of $z$ in $\pi$.
From the fact that $\pi$ does not contain vertices in $D(v)$ and by the choice of $z$ it follows that $w \in C$. But then, edge $(z,w) \not= e$ violates property (4), which is a contradiction.
We conclude that path $\pi$ also exists in $G_C$ as claimed.

The argument for the existence of a path from $x \in C$ to $s$ in $G_C$ is symmetric.
Let $\pi$ be a path in $G$ from $x$ to $s$ that does not contain $e$.
From property (2) and Lemma  \ref{lemma:partition-paths} we have that such a path $\pi$ exists and does not contain a vertex in $D^R(u)$.
Now we show that $\pi$ also avoids $D(v)$. Assume by contradiction that $\pi$ contains a vertex in $z \in D(v)$.
Choose $z$ to be the first such vertex in $\pi$.
Since $x \not\in D(v)$ we have that $z \not= x$.
Let $w$ be the predecessor of $z$ in $\pi$.
From the fact that $\pi$ does not contain vertices in $D^R(u)$ and by the choice of $z$ it follows that $w \in C$. So, edge $(w,z) \not= e$ violates property (3). Hence path $\pi$ also exists in $G_C$.
\end{proof}

\begin{lemma}
\label{lemma:common-bridge}
Let $e=(u,v)$ be a strong bridge of $G$  that is a common bridge of $G_s$ and $G_s^R$ (i.e., such that $u=d(v)$ and $v = d^R(u)$).
Let $C$ be a strongly connected component of $G \setminus e$ that contains a vertex in $D(v) \cap D^R(u)$. Then, $C \subseteq D(v) \cap D^R(u)$.
\end{lemma}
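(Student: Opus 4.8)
The plan is to prove the two inclusions $C\subseteq D(v)$ and $C\subseteq D^R(u)$ separately; intersecting them yields exactly the claim. Each inclusion will come from one of the two symmetric halves of Lemma~\ref{lemma:partition-paths}, together with the elementary observation that inside a strongly connected component of $G\setminus e$ any two vertices are joined by a path using only vertices of that component, and that we may take this path to be simple (hence it avoids $e$).

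First I would fix a vertex $z\in C\cap D(v)\cap D^R(u)$, which exists by hypothesis. To get $C\subseteq D(v)$, suppose for contradiction that some $y\in C$ is not a descendant of $v$ in $D$. Since $C$ is strongly connected in $G\setminus e$, there is a simple path $\pi$ from $y$ to $z$ using only vertices of $C$, so $\pi$ avoids $e$. But $y$ is not a descendant of $v$ in $D$ while $z$ is, so Lemma~\ref{lemma:partition-paths}(a) (applied with $w=y$, recalling $u=d(v)$) forces $\pi$ to contain the edge $(d(v),v)=e$, a contradiction. Hence every vertex of $C$ lies in $D(v)$.

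The argument for $C\subseteq D^R(u)$ is symmetric, now routing through $z$ in the other direction: if some $y\in C$ were not a descendant of $u$ in $D^R$, strong connectivity of $C$ in $G\setminus e$ would give a simple path from $z$ to $y$ inside $C$ avoiding $e$, contradicting Lemma~\ref{lemma:partition-paths}(b) (with $w=y$, recalling $v=d^R(u)$), which asserts that every such path must contain $(u,d^R(u))=e$. Combining the two inclusions gives $C\subseteq D(v)\cap D^R(u)$.

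I do not expect a genuine obstacle here; the only points requiring care are replacing an arbitrary walk inside $C$ by a simple path with the same endpoints so that Lemma~\ref{lemma:partition-paths} is applicable, and matching each half of that lemma to the correct crossing — part~(a) governs paths that \emph{enter} $D(v)$, while part~(b) governs paths that \emph{leave} $D^R(u)$. Alternatively, one could phrase the same contradiction directly in terms of edges, using properties~(3) and~(4) established in the proof of Lemma~\ref{lemma:above-tree-2}: a path inside $C\subseteq G\setminus e$ from a vertex outside $D(v)$ to $z\in D(v)$ would have to traverse some edge $(x,y)\neq e$ with $x\notin D(v)$ and $y\in D(v)$, which does not exist, and dually a path from $z\in D^R(u)$ to a vertex outside $D^R(u)$ would traverse some edge $(x,y)\neq e$ with $x\in D^R(u)$ and $y\notin D^R(u)$, again impossible.
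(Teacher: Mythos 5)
Your proof is correct, and it takes a genuinely more economical route than the paper's. The paper fixes $x\in D(v)\cap D^R(u)$ and $y\notin D(v)\cap D^R(u)$ and splits into three cases according to where $y$ sits: it handles $y\in V\setminus\bigl(D(v)\cup D^R(u)\bigr)$ by invoking Lemma~\ref{lemma:above-tree-2} (the result just proved immediately before), and the remaining two cases $y\in D(v)\setminus D^R(u)$ and $y\in D^R(u)\setminus D(v)$ via Lemma~\ref{lemma:partition-paths}. You instead prove the two inclusions $C\subseteq D(v)$ and $C\subseteq D^R(u)$ separately, each by a direct contradiction using only the appropriate half of Lemma~\ref{lemma:partition-paths} together with a simple path inside $C$ (which exists in $G\setminus e$ because the induced subgraph on an SCC is strongly connected, and which may be taken simple so the ``all simple paths'' clause applies). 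Intersecting the two inclusions gives the claim without ever touching Lemma~\ref{lemma:above-tree-2}. What you buy is the removal of a logical dependency and the merging of three cases into two symmetric ones; what the paper's version buys is that it tracks the intuition of the four regions in Figure~\ref{figure:single bridge} and reuses the lemma it just established, which keeps the exposition building incrementally even if it is strictly more than is needed. One small point worth retaining explicitly in a write-up, which you already flag: the reason the witness path avoids $e$ is that it lives in $G\setminus e$, not the (weaker) fact that its vertex set lies in $C$.
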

\begin{proof}
Consider any two vertices $x$ and $y$ such that $x \in D(v) \cap D^R(u)$ and
 $y \not\in D(v) \cap D^R(u)$.
We claim that $x$ and $y$ are not strongly connected in $G \setminus e$, which implies the lemma.
To prove the claim, note that by  Lemma \ref{lemma:above-tree-2}, $x$ is not strongly connected in $G \setminus e$ with any vertex in $V \setminus ( D(v) \cup D^R(u) )$.
Hence, we can assume that $y \in D(v) \setminus D^R(u)$ or $y \in D^R(u) \setminus D(v)$.
In either case, $x$ and $y$ are not strongly connected in $G \setminus e$ by Lemma \ref{lemma:partition-paths}.
\end{proof}

The following theorem summarizes the results of Lemmata \ref{lemma:subtree}, \ref{lemma:above-tree-1}, \ref{lemma:above-tree-2}, and \ref{lemma:common-bridge}.

\begin{theorem}
\label{cor:scc}
Let $G=(V,E)$ be a strongly connected digraph, $s$ be an arbitrary start vertex in $G$,  and let $e=(u,v)$ be a strong bridge of $G$. Let $C$ be a strongly connected component of $G \setminus e$. Then one of the following cases holds:
\begin{itemize}
\item[(a)] If $e$ is a bridge in $G_s$ but not in $G_s^R$ then either $C \subseteq D(v)$ or $C = V \setminus D(v)$.
\item[(b)] If $e$ is a bridge in $G_s^R$ but not in $G_s$ then either $C \subseteq D^R(u)$ or $C = V \setminus D^R(u)$.
\item[(c)] If $e$ is a common bridge of $G_s$ and $G_s^R$ then either $C \subseteq D(v) \setminus D^R(u)$, or $C \subseteq D^R(u) \setminus D(v)$, or $C \subseteq D(v) \cap D^R(u)$, or $C = V \setminus \big( D(v) \cup D^R(u) \big)$.
\end{itemize}
Moreover, if  $C \subseteq D(v)$ (resp., $C \subseteq D^R(u)$) then $C=H(w)$ (resp., $C=H^R(w)$) where $w$ is a vertex in $D(v)$ (resp., $D^R(u)$) such that $h(w) \not \in D(v)$ (resp.,  $h^R(w) \not \in D^R(u)$).
\end{theorem}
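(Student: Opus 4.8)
The plan is to prove the theorem as a straightforward case analysis over the three possibilities for the strong bridge $e=(u,v)$ provided by Property~\ref{property:strong-bridge}, gluing together the structural lemmas already established in this section. The common engine throughout is the separation property of Lemma~\ref{lemma:partition-paths}: whenever $u=d(v)$, no vertex of $D(v)$ can be strongly connected in $G\setminus e$ to a vertex of $V\setminus D(v)$ (any path in $G$ from a non-descendant of $v$ to a descendant of $v$ in $D$ must use $e$), and symmetrically, whenever $v=d^R(u)$, no vertex of $D^R(u)$ can be strongly connected in $G\setminus e$ to a vertex of $V\setminus D^R(u)$. Hence every strongly connected component $C$ of $G\setminus e$ lies entirely on one side of each such partition, and the whole proof consists of identifying, in each case, which of the resulting regions is itself a single strongly connected component.

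First I would treat case (a), where $e$ is a bridge of $G_s$ but not of $G_s^R$, so only $u=d(v)$ holds. By the separation above, $C\subseteq D(v)$ or $C\subseteq V\setminus D(v)$. Lemma~\ref{lemma:above-tree-1} states that $V\setminus D(v)$ induces a strongly connected component of $G\setminus e$; therefore if $C\not\subseteq D(v)$ then $C$ meets $V\setminus D(v)$ and, being a maximal strongly connected vertex set, must equal it. Case (b) is the mirror image: applying the same reasoning to $G^R$ (equivalently, Lemma~\ref{lemma:partition-paths}(b) together with the reverse-graph version of Lemma~\ref{lemma:above-tree-1}) gives $C\subseteq D^R(u)$ or $C=V\setminus D^R(u)$.

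For the common-bridge case (c) both $u=d(v)$ and $v=d^R(u)$ hold, so $C$ lies on one side of the $D(v)$-partition and on one side of the $D^R(u)$-partition simultaneously, leaving the four regions $D(v)\cap D^R(u)$, $D(v)\setminus D^R(u)$, $D^R(u)\setminus D(v)$, and $V\setminus\big(D(v)\cup D^R(u)\big)$. Lemma~\ref{lemma:above-tree-2} shows the last of these is a single strongly connected component, so a $C$ meeting it equals it; Lemma~\ref{lemma:common-bridge} guarantees that a $C$ meeting $D(v)\cap D^R(u)$ stays inside $D(v)\cap D^R(u)$; and a $C$ contained in $\big(D(v)\setminus D^R(u)\big)\cup\big(D^R(u)\setminus D(v)\big)$ cannot straddle the two parts by the separation property, so it is contained in one of them. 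This is exactly the claimed list of alternatives.

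It remains to prove the ``moreover'' clause: any strongly connected component $C\subseteq D(v)$ equals $H(w)$ for some $w\in D(v)$ with $h(w)\notin D(v)$, and symmetrically for $D^R(u)$ with $H^R$. Here I would invoke Lemma~\ref{lemma:subtree2} to obtain a vertex $w\in C$ that is a common $H$-ancestor of all of $C$, i.e.\ $C\subseteq H(w)$; Lemma~\ref{lemma:subtree1} then places $H(w)$ inside a strongly connected component of $G\setminus e$ contained in $D(v)$, which must be $C$ itself since $w\in C$, giving $C=H(w)$. Finally $h(w)\notin D(v)$: otherwise Lemma~\ref{lemma:subtree1} applied to the vertex $h(w)\in D(v)$ would produce a strongly connected component of $G\setminus e$ containing $H(h(w))\supseteq C\cup\{h(w)\}\supsetneq C$, contradicting the maximality of $C$ (note $h(w)\notin H(w)=C$). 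The $D^R(u)$ statement follows by the identical argument in $G^R$. I do not expect a genuine obstacle in any of this — everything needed is already in place from the preceding lemmas; the only points requiring care are the bookkeeping for the reverse-graph symmetry in cases (b) and (c), and the maximality argument pinning down the loop head $w$ in the ``moreover'' clause, which merely re-runs the reasoning behind Lemma~\ref{lemma:subtree}.
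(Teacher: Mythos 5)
Your proof is correct and essentially tracks the paper's intended argument: the paper presents this theorem as a summary of Lemmata~\ref{lemma:subtree}, \ref{lemma:above-tree-1}, \ref{lemma:above-tree-2}, and \ref{lemma:common-bridge} without spelling out the gluing, and you have reconstructed exactly that gluing. Your handling of the ``moreover'' clause via Lemma~\ref{lemma:subtree2} (extract the $H$-minimal vertex $w\in C$), Lemma~\ref{lemma:subtree1} (pin $H(w)$ inside the SCC of $w$, forcing $C=H(w)$), and the short maximality contradiction to conclude $h(w)\notin D(v)$ is precisely the reasoning that the paper folds into Lemma~\ref{lemma:subtree}, so nothing genuinely new is introduced and nothing is missing.
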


\subsection{Finding all strongly connected components of $G\setminus e$}
\label{sec:all-scc}

In this section we show how to
 exploit Theorem \ref{cor:scc}
in order to find efficiently the strongly connected components of $G\setminus e$, for each edge $e$ in $G$.
In particular, we present an $O(n)$-space data structure that, after $O(m)$-time preprocessing, given
a strong bridge $e$ of $G$ can report in $O(n)$ time all the strongly connected components of $G\setminus e$. This is a sharp improvement over the naive solution, which computes from scratch the strongly connected components of $G \setminus e$ in  $O(m)$ time. Furthermore, our bound is asymptotically tight, as one needs $O(n)$ time to output the strongly connected components of a digraph. With our data structure we can output in a total of $O(m+nb)$ worst-case time the strongly connected components of $G\setminus e$, for each edge $e$ in $G$, where $b$ is the total number of strong bridges in $G$.

We next describe our data structure. First, we process the dominator trees $D$ and $D^R$ in $O(n)$ time, so that we can test the ancestor/descendant relation in each tree in constant time~\cite{domin:tarjan}. To answer the query about a strong bridge $e=(u,v)$, we execute a preorder traversal of the loop nesting trees $H$ and $H^R$.
During those traversals, we will assign a label $\mathit{scc}(v)$ to each vertex $v$ that specifies the strongly connected component of $v$: that is, at the end of the preorder traversals of $H$ and $H^R$, each strongly connected component will consist of vertices with the same label.
More precisely,
if $e$ is a bridge of $G_s$ but not of $G^R_s$, then the preorder traversal of $H$ will identify the strongly connected components of all vertices.
Similarly, if $e$ is a bridge of $G^R_s$ but not of $G_s$, then we only execute a preorder traversal of $H^R$.
Finally, if $e$ is a common bridge, then the preorder traversal of $H$ will identify the strongly connected components of all vertices except for those in $D^R(u) \setminus D(v)$ (i.e., the green vertices in Figure \ref{figure:single bridge}). The strongly connected components of these vertices will be discovered during the subsequent preorder traversal of $H^R$.

We do this as follows. We initialize $\mathit{scc}(v)=v$ for all vertices $v$.
During our preorder traversals of $H$ and $H^R$ we will update $\mathit{scc}(v)$ for all vertices $v\neq s$. Throughout, we will always have $\mathit{scc}(s)=s$.
Suppose $e=(u,v)$ is a bridge only in $G_s$. We do a preorder traversal of $H$, and when we visit a vertex $w\neq s$ we test if the condition $(w \in D(v) \wedge h(w) \not\in D(v))$ holds. If it does, then the label of $w$ remains $\mathit{scc}(w)=w$, otherwise the label of $w$ is updated as $\mathit{scc}(w) = \mathit{scc}(h(w))$. We handle the case where $e$ is a bridge only in $G^R_s$ symmetrically.
Suppose now that $e$ is a common bridge.
During the preorder traversal of $H$, when we visit a vertex $w \not\in D^R(u) \setminus D(v)$, $w\neq s$, we test if the condition $(w \in D(v) \wedge h(w) \not\in D(v))$ holds. As before, if this condition holds then the label of $w$ remains $\mathit{scc}(w)=w$, otherwise we set $\mathit{scc}(w) = \mathit{scc}(h(w))$. Note that this process assigns $\mathit{scc}(x)=s$ to all vertices $x \in C = V \setminus \big ( D(v) \cup D^R(u) \big )$. Also, all vertices $x \in H(w)$, where $w \in D(v)$ and $h(w) \not\in D(v)$ are assigned $\mathit{scc}(x) = w$.
Finally, we need to assign appropriate labels to the vertices in $D^R(u) \setminus D(v)$. We do that by executing a similar procedure on $H^R$.  This time, when we visit a vertex $w \in D^R(u) \setminus D(v)$, $w\neq s$, we test if the condition $(w \in D^R(u) \wedge h^R(w) \not\in D^R(u))$ holds. If it does, then the label of $w$ remains $\mathit{scc}(w)=w$, otherwise we set $\mathit{scc}(w) = \mathit{scc}(h^R(w))$. At the end of this process we have that all vertices $x \in H^R(w)$, such that $w \in D^R(u)$ and $h^R(w) \not\in D^R(u)$, are assigned $\mathit{scc}(x) = w$.

In every case, Theorem~\ref{cor:scc} implies that the above procedure assigns correct labels to all vertices.
We remark that, during the execution of a query, each condition
can be tested in constant time, since it involves computing the parent of a vertex in a loop nesting tree or checking the ancestor/descendant relationship in a dominator tree. This yields the following theorem.

\begin{theorem}
\label{theorem:all-scc}
Let $G$ be a strongly connected digraph with $n$ vertices and $m$ edges. We can preprocess $G$ in $O(m)$ time and construct an $O(n)$-space data structure, so that given
an edge $e$ of $G$ we can report in $O(n)$ time all the strongly connected components of $G\setminus e$.
\end{theorem}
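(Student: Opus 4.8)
The plan is to implement verbatim the algorithm sketched just before the statement, and to charge its correctness to Theorem~\ref{cor:scc} and its running time to the fact that a whole query amounts to a constant number of tree traversals. First I would build, in the $O(m)$-time preprocessing phase: a dfs tree $T$ of $G_s$ and $T^R$ of $G_s^R$; the dominator trees $D$ and $D^R$, by any linear-time dominators algorithm~\cite{domin:ahlt}; the loop nesting trees $H$ and $H^R$, in linear time~\cite{st:t}; and the bridge decompositions $\mathcal{D}$, $\mathcal{D}^R$, which in particular lets me tag, for every vertex $x\neq s$, whether $(d(x),x)$ is a bridge of $G_s$ and whether $(d^R(x),x)$ is a bridge of $G_s^R$ --- by Property~\ref{property:strong-bridge} this is exactly what is needed to recognize, in $O(1)$ time, whether a query edge is a strong bridge and, if so, which of cases (a)--(c) applies. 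Finally I would preprocess $D$ and $D^R$ for $O(1)$-time ancestor/descendant queries~\cite{domin:tarjan}. Every retained object --- $D$, $D^R$, $H$, $H^R$ as parent arrays, the roots $r_x$, $r_x^R$, the two bits per vertex, and the $O(1)$-ancestor structures --- occupies $O(n)$ space; the adjacency lists of $G$ need not be kept, and since $G$ is strongly connected we have $m\ge n$ so $O(m)=O(m+n)$.

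For a query edge $e=(u,v)$, a constant number of lookups ($d(v)$, $d^R(u)$, the two bits) decide whether $e$ is a strong bridge; if not, I return the single component $V$. Otherwise I reset $\mathit{scc}(x)\leftarrow x$ for all $x$, run one preorder traversal of $H$ (and, in case (c), a subsequent preorder traversal of $H^R$), and at each visited vertex $w\neq s$ do $O(1)$ work: read $h(w)$ (or $h^R(w)$), perform at most two ancestor tests in $D$ (or $D^R$), and either leave $\mathit{scc}(w)$ unchanged or set $\mathit{scc}(w)\leftarrow\mathit{scc}(h(w))$ (resp. $\mathit{scc}(h^R(w))$). Resetting, the traversal(s), and bucketing the vertices by their final $\mathit{scc}$-label to emit the components each cost $O(n)$, so the query runs in $O(n)$ total time, as claimed.

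Correctness is precisely the statement that, after these passes, two vertices share an $\mathit{scc}$-label iff they lie in the same strongly connected component of $G\setminus e$, and this is where Theorem~\ref{cor:scc} does the work: every component of $G\setminus e$ is one of $V\setminus D(v)$, $V\setminus\!\big(D(v)\cup D^R(u)\big)$, a block $H(w)$ with $w\in D(v)$ and $h(w)\notin D(v)$, or a block $H^R(w)$ with $w\in D^R(u)$ and $h^R(w)\notin D^R(u)$; these blocks are pairwise disjoint. Since a preorder traversal visits each vertex after its loop-nesting-forest parent, the rule ``keep your own label if you are the top of your block, otherwise copy your parent's label'' propagates the top's label along loop-nesting-forest paths to every vertex of the block, while distinct blocks get distinct tops (the tops of the nontrivial blocks lie in $D(v)\cup D^R(u)$ and hence differ from $s$). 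The ancestor tests $w\in D(v)\wedge h(w)\notin D(v)$ (resp. in $D^R$) are exactly the tests that detect the top of a block. I would then argue, block by block, that each block lies \emph{entirely} in the domain of exactly one pass, so no block is split.

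The part that will need the most care --- and the main obstacle I expect --- is the common-bridge case (c), where two passes are needed and their domains $V\setminus\!\big(D^R(u)\setminus D(v)\big)$ and $D^R(u)\setminus D(v)$ must fit together. Besides the ``no split'' claim above, one must verify that during the $H$-pass the vertex $h(w)$ whose label is copied always lies again in the $H$-pass domain --- so that its label has already been finalized --- and symmetrically for $h^R$ in the $H^R$-pass. This is where Lemma~\ref{lemma:partition-paths} and Lemma~\ref{lemma:ancestor-of-x} enter: for instance, for $w\in C=V\setminus\!\big(D(v)\cup D^R(u)\big)$ one shows that $h(w)$ cannot lie in $D(v)$ (otherwise a path witnessing $w\in\mathit{loop}(h(w))$, which stays among $T$-descendants of $h(w)$, would be forced through the bridge $e$, using Lemma~\ref{lemma:partition-paths}(a) together with the fact that $v=r_v$), and likewise cannot lie in $D^R(u)$; hence the $h$-chain out of $w$ stays inside $C$ until it reaches $s$, explaining why all of $C$ ends up labelled $s$. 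The analogous statements for the blocks inside $D(v)$, $D^R(u)$, $D(v)\cap D^R(u)$, and $D^R(u)\setminus D(v)$ are smaller and follow the same pattern.
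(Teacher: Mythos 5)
Your overall plan coincides with the paper's: in the preprocessing phase build $D$, $D^R$, $H$, $H^R$ together with $O(1)$-time ancestor tests, and answer a query by propagating $\mathit{scc}$-labels down $H$ (and, for a common bridge, $H^R$) via the test $\bigl(w\in D(v)\wedge h(w)\notin D(v)\bigr)$, invoking Theorem~\ref{cor:scc} for correctness. The time and space accounting is fine. The gap is in your last paragraph, where you claim that for $w\in C=V\setminus\bigl(D(v)\cup D^R(u)\bigr)$ the vertex $h(w)$ cannot lie in $D(v)$ and ``likewise cannot lie in $D^R(u)$,'' so that the $h$-chain out of $w$ stays inside $C$. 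The first half is true, but the second half is false. Take $G$ on vertices $\{s,g,u,v,w\}$ with edges $(s,g)$, $(g,u)$, $(u,v)$, $(v,w)$, $(w,g)$, $(w,s)$, $(s,w)$. One checks that $(u,v)$ is a common strong bridge, $D(v)=\{v\}$, $D^R(u)=\{u,g\}$, hence $C=\{s,w\}$; and with the dfs tree $s\to g\to u\to v\to w$ one gets $\mathit{loop}(g)=\{g,u,v,w\}$ and $h(w)=g\in D^R(u)\setminus D(v)$. So the $h$-chain from a vertex of $C$ can very well leave $C$ and enter the ``green'' region $D^R(u)\setminus D(v)$.

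This matters because your phrase ``one must verify that during the $H$-pass the vertex $h(w)$ whose label is copied always lies again in the $H$-pass domain'' presupposes a disjoint-domain picture that does not hold; and if the $H$-pass literally skipped the green vertices (as the paper's prose could be read to suggest), it would compute $\mathit{scc}(w)=g$ instead of $s$ in the example above. The argument that does work is the weaker and simpler one: for any $z\notin D(v)$ we have $h(z)\notin D(v)$ (this is the correct half of your claim, and is an easy consequence of Lemma~\ref{lemma:ancestor-of-x}). Hence along every $h$-chain that starts outside $D(v)$ the test $w\in D(v)\wedge h(w)\notin D(v)$ fails at every step simply because its first conjunct is already false, and $\mathit{scc}(\cdot)=s$ propagates all the way down, \emph{provided the $H$-pass visits and updates every vertex of $G$, including those in $D^R(u)\setminus D(v)$}; the $H^R$-pass then rewrites only the labels of $D^R(u)\setminus D(v)$ using the symmetric test in $D^R$. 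You should make this asymmetric scoping of the two passes explicit and replace the ``$h$-chain stays in $C$'' claim by the ``$h$-chain stays outside $D(v)$'' claim above.
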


\begin{corollary}
	\label{corollary:all-scc}
Let $G$ be a strongly connected digraph with $n$ vertices, $m$ edges and $b$ strong bridges. We can output the strongly connected components of $G\setminus e$, for all strong bridges $e$ in $G$, in a total of $O(m+nb)$ worst-case time.
\end{corollary}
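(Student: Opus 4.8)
The plan is to combine Theorem~\ref{theorem:all-scc} with an enumeration of the strong bridges of $G$, incurring the preprocessing cost once and the $O(n)$ query cost once per strong bridge. First I would run the $O(m)$-time preprocessing of Theorem~\ref{theorem:all-scc}: this builds the dominator trees $D$ and $D^R$, their bridge decompositions $\mathcal{D}$ and $\mathcal{D}^R$, the loop nesting trees $H$ and $H^R$, and the $O(n)$-space structures supporting constant-time ancestor/descendant queries in $D$ and $D^R$. Since $G$ is strongly connected we have $m \geq n$, so this preprocessing takes $O(m)$ time and uses $O(n)$ space.

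Next I would extract the list of all strong bridges. By Property~\ref{property:strong-bridge}, an edge $e=(u,v)$ is a strong bridge of $G$ if and only if $u=d(v)$ and $(u,v)$ is a bridge of $G_s$, or $v=d^R(u)$ and $(v,u)$ is a bridge of $G_s^R$ (or both). Both sets of bridges are exactly the edges of $D$ and of $D^R$ that are deleted to form $\mathcal{D}$ and $\mathcal{D}^R$, and these are already identified during preprocessing; hence the list of all $b$ strong bridges can be produced in $O(n)$ additional time, so locating the bridges does not add a new term to the bound.

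Then, for each of the $b$ strong bridges $e$ in turn, I would invoke the query procedure of Theorem~\ref{theorem:all-scc}, which in $O(n)$ worst-case time assigns the labels $\mathit{scc}(\cdot)$ (via a preorder traversal of $H$, and of $H^R$ as well when $e$ is a common bridge) and thereby reports all strongly connected components of $G\setminus e$; a bucket sort by $\mathit{scc}$-label, also in $O(n)$ time, groups the vertices of each component for output. Summing over all $b$ strong bridges gives a total of $O(m) + b\cdot O(n) = O(m+nb)$ time, and the working space stays $O(n)$ because the same data structures are reused across queries. I do not expect a genuine obstacle, as the statement is a direct corollary of Theorem~\ref{theorem:all-scc}; the only point needing care is that the $O(n)$ per-query bookkeeping is truly reusable, i.e.\ the labels $\mathit{scc}(w)$ altered during one query can be reset to $\mathit{scc}(w)=w$ in time proportional to the number of vertices before the next query begins, so that each query remains $O(n)$ rather than degrading after the first.
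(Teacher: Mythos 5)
Your proposal is correct and matches the paper's (implicit) proof exactly: the corollary is stated as an immediate consequence of Theorem~\ref{theorem:all-scc}, obtained by running the $O(m)$-time preprocessing once and then issuing one $O(n)$-time query per strong bridge, for a total of $O(m+nb)$. Your remark about resetting the $\mathit{scc}(\cdot)$ labels between queries in $O(n)$ time is a sensible implementation detail that the paper leaves unstated.
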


Figure \ref{figure:example1de} highlights the vertices of different sets (with respect to the strong bridge $(d,f)$)
in the dominator trees $D$ and $D^R$ and the loop nesting trees $H$ and $H^R$ of the flow graph $G_s$ and $G_s^R$ given in Figure \ref{figure:example1ab}.
Figure \ref{figure:example1f} shows the result of a reporting query for the digraph of Figure \ref{figure:example1ab} and for the edge $(d,f)$.

\begin{figure}[t!]
\begin{center}
\centerline{\includegraphics[trim={0 0 0 6cm}, clip=true, width=1.3\textwidth]{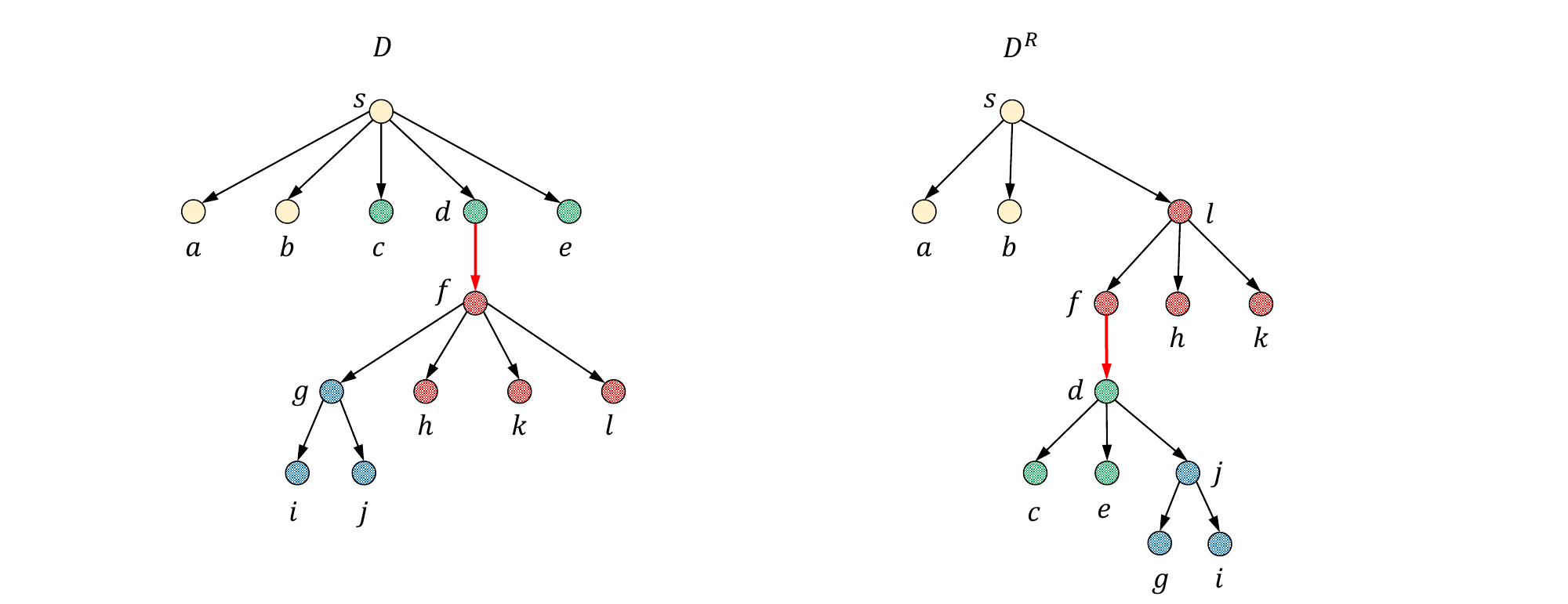}	}
\centerline{\includegraphics[trim={0 0 0 5.5cm}, clip=true, width=1.3\textwidth]{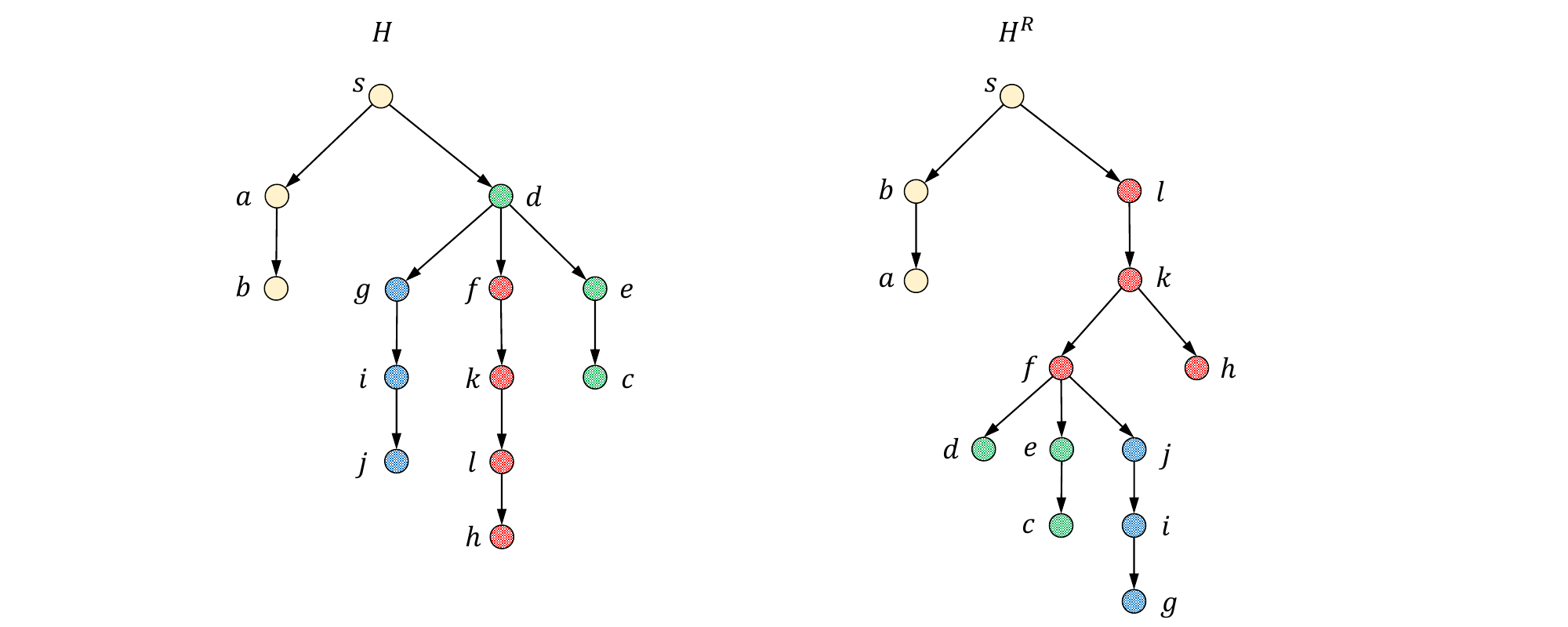}	}
\caption{The dominator trees $D$ and $D^R$ and the loop nesting trees $H$ and $H^R$ of the flow graphs $G_s$ and $G_s^R$ of Figure \ref{figure:example1ab} respectively.
 The edge $(d,f)$ in red is a common bridge. The vertices in $D(f) \setminus D^R(d)$ are shown in red. The vertices in $D^R(d) \setminus D(f)$ are shown in green, and the vertices in $D(f) \cap D^R(d)$  are shown in blue. (Better viewed in color.) The remaining vertices are in $C = V \setminus \big ( D(f) \cup D^R(d) \big )$.}
\label{figure:example1de}
\end{center}
\vspace{1cm}
\end{figure}

\begin{figure}[t!]
\begin{center}
\centerline{\includegraphics[trim={0 0 0 0cm}, clip=true, width=1.3\textwidth]{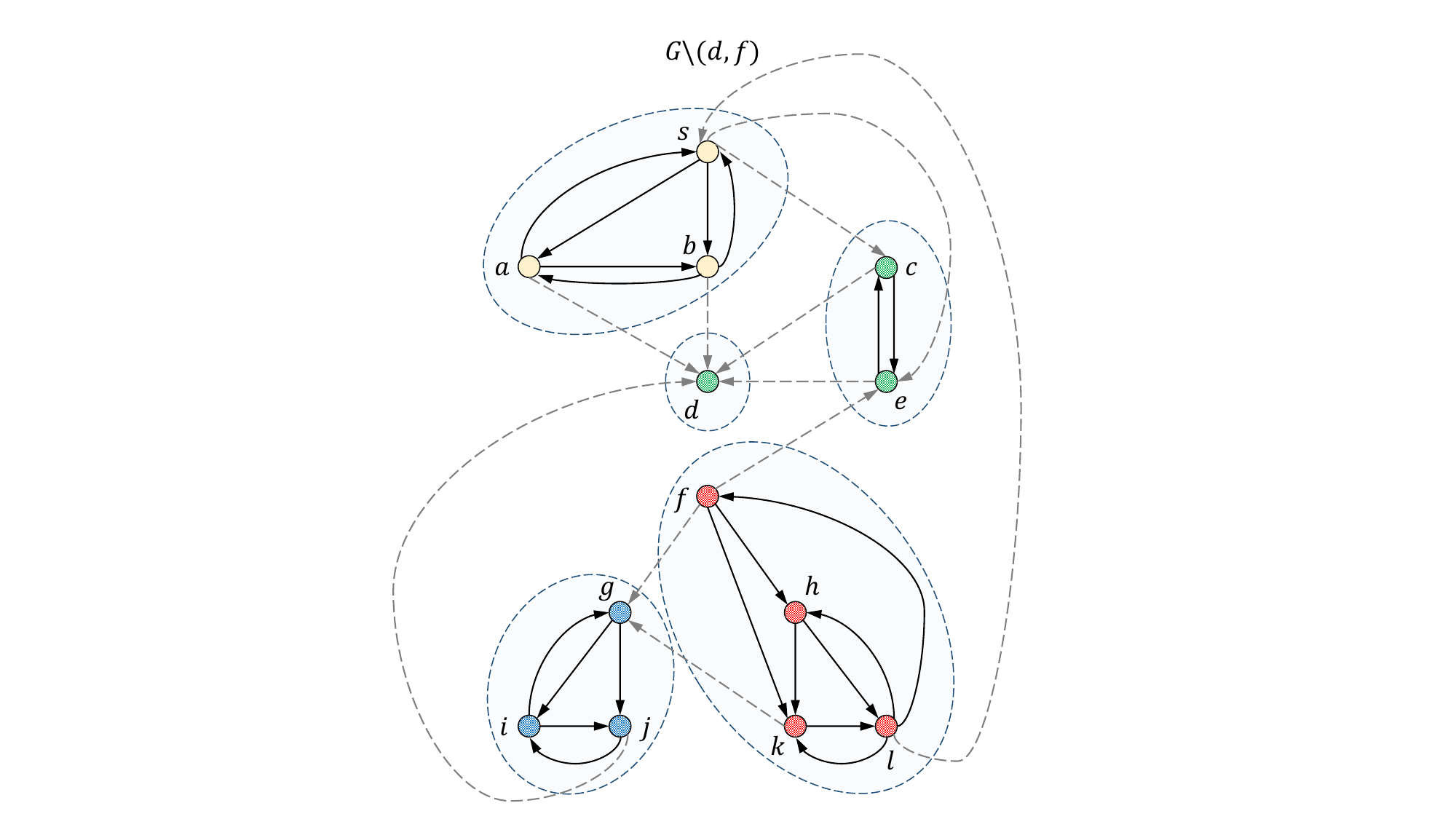}	}
\caption{The strongly connected components of $G \setminus (d,f)$ where $G$ is the graph of Figure \ref{figure:example1ab}.}
\label{figure:example1f}
\end{center}
\end{figure}

\subsection{Counting the  number of strongly connected components of $G\setminus e$}
\label{sec:SCCs-num}

In this section we consider the problem of computing the total number of strongly connected components obtained after the deletion of a single edge in a strongly connected digraph $G=(V,E)$.
In particular, we describe a data structure which, after $O(m)$-time preprocessing, is able to  answer the following aggregate query in worst-case time $O(n)$: ``Find the total number of  strongly connected components in $G \setminus e$, for all edges $e$.''
This provides a linear-time algorithm for computing the total number of strongly connected components obtained after the deletion of a single edge, for all edges in $G$.
Note that we need to answer this query only for edges that are strong bridges in $G$; indeed if $e$ is not a strong bridge, then $G\setminus e$ has exactly the same strongly connected components as $G$.
Our bound is tight, and it improves sharply over the naive $O(mn)$ solution, which computes from scratch the strongly connected components of $G \setminus e$ for each strong bridge $e$ of $G$.

Let $S\subseteq V$ be a subset of vertices of $G$. We denote by $\#SCC(S)$ the number of the strongly connected components in the subgraph of $G$ induced by the vertices in $S$.
Also, for an edge $e$ of $G$, we denote by $\#SCC_e(V)$ the number of the strongly connected components in $G \setminus e$. Our goal is to compute $\#SCC_e(V)$ for every strong bridge $e$ in $G$.
Theorem \ref{cor:scc} yields immediately the following corollary.

\begin{corollary}
\label{cor:nscc}
Let $e=(u,v)$ be a strong bridge of $G$.
Then one of the following cases holds:
\begin{itemize}
\item[(a)] If $e$ is a bridge in $G_s$ but not in $G_s^R$ then $\#SCC_e(V) = \#SCC(D(v))+1$.

\item[(b)] If $e$ is a bridge in $G_s^R$ but not in $G_s$ then $\#SCC_e(V) = \#SCC(D^R(u))+1$.

\item[(c)] If $e$ is a common bridge of $G_s$ and $G_s^R$ then
$\#SCC_e(V) = \#SCC(D(v)\cup D^R(u)) + 1 = \#SCC(D(v)) + \#SCC(D^R(u)) - \#SCC(D(v)\cap D^R(u))+1$.
\end{itemize}

\noindent
Moreover, let $C$ be a strongly connected component of $G \setminus e$. If  $C \subseteq D(v)$ (resp., $C \subseteq D^R(u)$) then $C=H(w)$ (resp., $C=H^R(w)$) where $w$ is a vertex in $D(v)$ (resp., $D^R(u)$) such that $h(w) \not \in D(v)$ (resp.,  $h^R(w) \not \in D^R(u)$).
\end{corollary}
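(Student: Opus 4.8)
The plan is to read the statement off Theorem~\ref{cor:scc} as a counting exercise: that theorem already says which region each strongly connected component of $G\setminus e$ falls into, so it remains only to count, and the one point that really needs an argument is how to convert ``the number of components of $G\setminus e$ contained in a set $S$'' into the quantity $\#SCC(S)$.

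First I would prove the following no-spilling fact: for $S$ equal to any of $D(v)$, $D^R(u)$, or $D(v)\cap D^R(u)$, the quantity $\#SCC(S)$ equals the number of strongly connected components of $G\setminus e$ contained in $S$. In each of these cases $e=(u,v)$ is not an edge of $G[S]$, since $u=d(v)$ is a proper ancestor of $v$ in $D$ (so $u\notin D(v)$, hence $u\notin D(v)\cap D^R(u)$) and $v=d^R(u)$ is a proper ancestor of $u$ in $D^R$ (so $v\notin D^R(u)$); thus $G[S]=(G\setminus e)[S]$. It then suffices to check that a maximal strongly connected subset $K$ of $(G\setminus e)[S]$ stays maximal in $G\setminus e$. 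A strictly larger strongly connected subset $K'$ of $G\setminus e$ would contain a vertex of $K\subseteq S$ and a vertex outside $S$, hence paths of $G\setminus e$ in both directions between them; but there is no path of $G\setminus e$ into $D(v)$ from outside (Lemma~\ref{lemma:partition-paths}(a)), no path of $G\setminus e$ out of $D^R(u)$ (Lemma~\ref{lemma:partition-paths}(b)), and no strongly connected component of $G\setminus e$ meeting $D(v)\cap D^R(u)$ leaves it (Lemma~\ref{lemma:common-bridge}) --- a contradiction in every case.

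With this fact, cases (a) and (b) are immediate: by Theorem~\ref{cor:scc}(a) the components of $G\setminus e$ are those contained in $D(v)$ together with the single component $V\setminus D(v)$ (Lemma~\ref{lemma:above-tree-1}), so $\#SCC_e(V)=\#SCC(D(v))+1$, and case (b) is the mirror image (with $V\setminus D^R(u)$ a single component by Lemma~\ref{lemma:above-tree-1} applied to $G^R$). For case (c), Theorem~\ref{cor:scc}(c) breaks the components of $G\setminus e$ into $a$ contained in $D(v)\setminus D^R(u)$, $b$ in $D^R(u)\setminus D(v)$, $c$ in $D(v)\cap D^R(u)$, and the single component $V\setminus(D(v)\cup D^R(u))$ (Lemma~\ref{lemma:above-tree-2}), whence $\#SCC_e(V)=a+b+c+1$; the no-spilling fact gives $\#SCC(D(v))=a+c$, $\#SCC(D^R(u))=b+c$, $\#SCC(D(v)\cap D^R(u))=c$, and substituting produces the middle expression $\#SCC(D(v))+\#SCC(D^R(u))-\#SCC(D(v)\cap D^R(u))+1$. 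The first expression $\#SCC(D(v)\cup D^R(u))+1$ is the same number rewritten as $1+(a+b+c)$ --- using that no component of $G\setminus e$ straddles the three regions (Theorem~\ref{cor:scc}(c)) --- provided $\#SCC(D(v)\cup D^R(u))$ is understood as the component count of $G\setminus e$ restricted to $D(v)\cup D^R(u)$; this is the correct reading because $e$ is an edge of $G[D(v)\cup D^R(u)]$ and, whenever $D(v)\cap D^R(u)\neq\emptyset$, it merges components of that induced subgraph. The ``moreover'' clause is just the last sentence of Theorem~\ref{cor:scc}.

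The only step I expect to need genuine care is the no-spilling fact as it enters case (c): one must be certain that $\#SCC(D(v))$, $\#SCC(D^R(u))$ and $\#SCC(D(v)\cap D^R(u))$ count components of $G\setminus e$ and not of some finer partition induced by passing to a subgraph --- which is exactly what Lemmata~\ref{lemma:partition-paths} and~\ref{lemma:common-bridge} provide. Everything else is bookkeeping.
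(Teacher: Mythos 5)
Your proof is correct. The paper gives no explicit argument here — it says only that Theorem~\ref{cor:scc} ``yields immediately the following corollary'' — so you are supplying the missing details rather than deviating from a paper proof. The decomposition into counts $a$, $b$, $c$ of components contained in $D(v)\setminus D^R(u)$, $D^R(u)\setminus D(v)$, $D(v)\cap D^R(u)$ is exactly what Theorem~\ref{cor:scc}(c) sets up, and the one thing that genuinely requires an argument is your ``no-spilling'' fact: that the number of strongly connected components of the \emph{induced subgraph} $G[S]$ equals the number of strongly connected components of $G\setminus e$ \emph{contained in} $S$, for each of the three regions $S$. Your justification is right on both halves: $e$ is not an edge of $G[S]$ (its tail $u$ lies outside $D(v)$, its head $v$ outside $D^R(u)$, so $G[S]=(G\setminus e)[S]$), and, by Lemma~\ref{lemma:partition-paths} (resp.\ Lemma~\ref{lemma:common-bridge}), no strongly connected component of $G\setminus e$ meeting $S$ leaves it, so components of the induced subgraph coincide with components of $G\setminus e$ inside $S$.

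You also correctly flag an issue that is easy to miss: the first expression in part~(c), $\#SCC(D(v)\cup D^R(u))+1$, is \emph{not} literally true under the paper's definition of $\#SCC(\cdot)$ as the number of strongly connected components of the subgraph of $G$ (not of $G\setminus e$) induced by the given vertex set. Unlike the three regions above, the set $D(v)\cup D^R(u)$ \emph{does} contain both endpoints of $e$, so $e$ is an edge of $G[D(v)\cup D^R(u)]$ and can merge components. A minimal counterexample: take $V=\{s,u,v\}$ and $E=\{(s,u),(u,v),(v,u),(v,s)\}$; then $e=(u,v)$ is a common bridge (with $u=d(v)$, $v=d^R(u)$), $D(v)\cup D^R(u)=\{u,v\}$, and $G[\{u,v\}]$ is a $2$-cycle, so the literal reading gives $\#SCC(D(v)\cup D^R(u))+1 = 1+1 = 2$, while $G\setminus e$ has three singleton components. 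Your reading of that term as the component count of $(G\setminus e)[D(v)\cup D^R(u)]$ makes the equality true, and the second form — the inclusion-exclusion $\#SCC(D(v))+\#SCC(D^R(u))-\#SCC(D(v)\cap D^R(u))+1$ — is correct as written and is in any case the only form the paper uses in its algorithms. This is a worthwhile catch; the rest is, as you say, bookkeeping, and it is done carefully.
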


Corollary \ref{cor:nscc}
shows that in order to compute the number of strongly connected components in $G\setminus e$ it is enough to compute $\#SCC(D(v))$,  $\#SCC(D^R(u))$ and $\#SCC(D(v)\cap D^R(u))$.
We will first show how to compute $\#SCC(D(v))$ and $\#SCC(D^R(u))$. Next, we will present an algorithm for computing $\#SCC(D(v)\cap D^R(u))$.

\paragraph{Computing $\#SCC(D(v))$ and $\#SCC(D^R(u))$.}
As suggested by Corollary \ref{cor:nscc},
 we can compute
$\#SCC(D(v))$ (resp., $\#SCC(D^R(u))$) by
counting the number of distinct vertices $w$ in $D(v)$ (resp., in $D^R(u)$) for which $h(w)\not \in D(v)$ (resp.,  $h^R(w) \not \in D^R(u)$).
We do this with the help of the bridge decomposition $\mathcal{D}$ (resp., $\mathcal{D}^R$) of $D$ (resp., $D^R$) defined in Section \ref{sec:dominators}. We recall that we denote by $D_x$ (resp., $D_x^R$) the tree in $\mathcal{D}$ (resp., $\mathcal{D}^R$) containing vertex $x$, and by $r_x$ (resp., $r^R_x$) the root of the tree $D_x$ (resp., $D_x^R$).

To compute $\#SCC(D(v))$,
we maintain a counter $SCC_e$ for each bridge $e=(u,v)$ in $G_s$. $SCC_e$ counts the number of  strongly connected components in $G \setminus e$ that are subsets of $D(v)$ encountered so far. Rather than processing the bridges of $G_s$ one at the time, we update simultaneously all counters $SCC_e$ for all bridges $e$ while visiting the bridge decomposition of the dominator tree $D$ in a bottom-up fashion.
To update the counters $SCC_e$, we exploit
the loop nesting tree $H$:
we increment the counter $SCC_e$ of a bridge $e=(u,v)$ in $G_s$ whenever we find a vertex $w$ in $D(v)$ such that $h(w) \not \in D(v)$, since $H(w)\subseteq D(v)$ is a strongly connected component of $G\setminus e$ by Corollary \ref{cor:nscc}.
After the bridge decomposition of $D$ has been processed, for each strong bridge $e=(u,v)$ in $G$
we have that $SCC_e=\#SCC(D(v))$, i.e.,
the counter $SCC_e$
stores exactly the number of strongly connected components of $G\setminus e$ containing only vertices in $D(v)$. In particular, by Corollary \ref{cor:nscc}(a) we can compute $\#SCC_e(V) = SCC_e + 1$ for each strong bridge $e=(u,v)$ which is a bridge in $G_s$ but not in $G_s^R$.

We can compute $\#SCC(D^R(u))$ in a similar fashion. We maintain a counter $SCC^R_e$ for each bridge $e=(u,v)$ in $G^R_s$.
$SCC^R_e$ counts the number of  strongly connected components in $G \setminus e$ that are subsets of $D^R(u)$ encountered so far.
We visit in a bottom-up fashion
the bridge decomposition of $D^R$ with the help of the loop nesting tree $H^R$:
we increment the counter $SCC_e^R$ of a bridge $e=(u,v)$ in $G_s^R$ whenever we find a vertex $w$ in $D^R(u)$ such that $h^R(w) \not \in D^R(u)$, since $H^R(w)\subseteq D^R(u)$ is a strongly connected component of $G\setminus e$ by Corollary \ref{cor:nscc}. At the end of this visit,
for each strong bridge $e=(u,v)$ in $G$
we have that $SCC_e^R=\#SCC(D^R(u))$.
By Corollary \ref{cor:nscc}(b) we can now compute $\#SCC_e(V) = SCC_e^R + 1$ for each strong bridge $e$ which is a bridge in $G_s^R$ but not in $G_s$.

Note that if $e=(u,v)$ is a common bridge of $G_s$ and $G_s^R$, then by Corollary \ref{cor:nscc}(c) we have that $\#SCC_e(V) = SCC_e + SCC_e^R - \#SCC(D(v) \cap D^R(u))+1$. Thus, to complete the description of our algorithm  we have still to show how to compute $\#SCC(D(v)\cap D^R(u))$,
i.e., the number of strongly connected components in the subgraph induced by the vertices in $D(v)\cap D^R(u)$.
We will deal with this issue later.

As previously mentioned, a crucial task for updating the counter $SCC_e$ (resp., $SCC_e^R$) for each strong bridge $e=(u,v)$ is to check for a
vertex $w$ in $D(v)$ (resp., in $D^R(u)$) such that $h(w)\not \in D(v)$ (resp.,  $h^R(w) \not \in D^R(u)$). An efficient method to perform this test hinges on the following lemma.

\begin{lemma}
For each vertex $w \not=s$ in $G$, the following holds: 
\vspace{-.25cm}
\begin{itemize}
\item[(a)]
$H(w)$ (resp., $H^R(w)$) induces a strongly connected component of $G\setminus e$, for any bridge $e$ of $G_s$ (resp., $G_s^R$) in the path $D[r_{h(w)},w]$ (resp., $D^R[r_{h^R(w)}^R,w]$)
from $r_{h(w)}$ to $w$ in $D$ (resp., from $r_{h^R(w)}^R$ to $w$ in $D^R$).
\vspace{-.25cm}
\item[(b)]
$H(w)$ (resp., $H^R(w)$) does not induce a strongly connected component of $G \setminus e$, for any bridge $e$ of $G_s$ (resp., $G_s^R$) in the path $D[s,r_{h(w)}]$ (resp., $D^R[s,r_{h^R(w)}^R]$) from $s$ to $r_{h(w)}$ in $D$ (resp., from $s$ to $r_{h^R(w)}^R$ in $D^R$).
\end{itemize}
\label{lemma:number-of-SCCs-ancestors}
\end{lemma}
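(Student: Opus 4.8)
The plan is to prove part (a) and part (b) by relating membership of $w$ in $D(v)$ to the position of the bridge $e=(d(v),v)$ in the dominator tree $D$, and then invoking Lemma~\ref{lemma:subtree} for the ``yes'' direction and Lemma~\ref{lemma:subtree1}/Theorem~\ref{cor:scc} for the ``no'' direction. The key observation is Lemma~\ref{lemma:ancestor-of-x}: for any vertex $w$, the root $r_{h(w)}$ of the tree $D_{h(w)}$ in the bridge decomposition is an ancestor of $w$ in $D$. Consequently the vertices $x$ with $x\in D(v)$ for a bridge $e=(d(v),v)$ of $G_s$ are exactly those for which $v$ lies on the tree path $D[s,w]$, and the relevant bridges split into those on $D[r_{h(w)},w]$ (which have their lower endpoint inside $D_{h(w)}$ or equal to $r_{h(w)}$ and hence are descendants-or-equal of $r_{h(w)}$, which is an ancestor of $w$ but — crucially — $r_{h(w)}$ is an ancestor of $h(w)$ as well in $T$, so $h(w)\in D(v)$ fails exactly past $r_{h(w)}$) and those on $D[s,r_{h(w)}]$ strictly above. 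I would make this dichotomy precise first, since everything else follows mechanically from it.

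For part (a): let $e=(d(v),v)$ be a bridge of $G_s$ with $v$ on $D[r_{h(w)},w]$. I first note $w\in D(v)$, since $v$ is an ancestor of $w$ in $D$. Next I claim $h(w)\notin D(v)$ — equivalently $v$ is not an ancestor of $h(w)$ in $D$. Here I use that $v$ is a descendant of (or equal to) $r_{h(w)}$ in $D$, hence $v\notin D_{h(w)}$ unless $v=r_{h(w)}$; in the subcase $v=r_{h(w)}$ we have $v=r_{h(w)}$ is an ancestor of $h(w)$ only if $r_{h(w)}=h(w)$ — but then the path $D[r_{h(w)},w]$ would force $h(w)\in D(v)$, which I have to handle by checking whether the endpoint $r_{h(w)}$ is included; the cleaner route is to argue that since $e$ is a bridge of $G_s$ whose head $v$ lies strictly between $r_{h(w)}$ (inclusive) and $w$ along $D$, and since $r_{h(w)}$ is by definition the root of a bridge-decomposition tree, $v$ cannot be an ancestor of $h(w)$ in $D$ (otherwise the bridge $e$ would separate $h(w)$ from $s$ the same way, contradicting $h(w)$ and $w$ being in the same bridge-decomposition tree $D_{h(w)}$ only if $v=r_{h(w)}$ — and in that degenerate case $h(w)$ is not below $v$ because $h(w)$ and $r_{h(w)}$ are in $D_{h(w)}$ while $v$'s incoming edge $e$ is a bridge). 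Once $w\in D(v)$ and $h(w)\notin D(v)$ are established, Lemma~\ref{lemma:subtree} gives immediately that $H(w)$ induces a strongly connected component of $G\setminus e$.

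For part (b): let $e=(d(v),v)$ be a bridge of $G_s$ with $v$ on $D[s,r_{h(w)}]$, i.e. $v$ is a proper ancestor of $r_{h(w)}$ or $v=r_{h(w)}$ with the edge $e$ lying above $D_{h(w)}$ — in any case both $w$ and $h(w)$ are in $D(v)$ (by Lemma~\ref{lemma:ancestor-of-x}, $r_{h(w)}$ is an ancestor of $w$, and $v$ is an ancestor of $r_{h(w)}$, hence of both $w$ and $h(w)$). Then by Lemma~\ref{lemma:subtree1}, $H(w)$ is contained in a strongly connected component $C\subseteq D(v)$ of $G\setminus e$, and $C$ also contains $h(w)$: indeed $h(w)$ and $w$ are strongly connected in $G\setminus e$ because the loop-nesting path from $w$ to $h(w)$ in $T$ stays among descendants of $h(w)$ (hence of $v$, avoiding $e$), and the tree path $h(w)\to w$ in $T$ also avoids $e$ for the same reason. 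So $H(w)\subsetneq C$, meaning $H(w)$ is not a maximal strongly connected set, i.e. does not induce a strongly connected component of $G\setminus e$.

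The main obstacle I anticipate is the boundary bookkeeping around the vertex $r_{h(w)}$: one must be careful about whether the bridge whose head is exactly $r_{h(w)}$ belongs to the path $D[r_{h(w)},w]$ or to $D[s,r_{h(w)}]$, and which of part (a)/part (b) governs it. By the definition of the bridge decomposition, $r_{h(w)}$ is the root of $D_{h(w)}$, so its incoming edge $(d(r_{h(w)}),r_{h(w)})$ is a bridge of $G_s$; for that bridge $v=r_{h(w)}$ is an ancestor of $h(w)$ (both lie in $D_{h(w)}$), so $h(w)\in D(v)$ and we are in the regime of part (b) — consistent with reading $D[s,r_{h(w)}]$ as including its endpoint $r_{h(w)}$ and $D[r_{h(w)},w]$ as the path strictly below, or equivalently $D(r_{h(w)},w]$. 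I would state this convention explicitly at the start of the proof so that the two cases partition the bridges on $D[s,w]$ exactly. Everything else is a direct application of the lemmata already proved; the reverse-graph statements follow by the same argument applied to $G_s^R$, $D^R$, $H^R$.
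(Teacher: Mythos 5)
Your overall strategy is the same as the paper's: reduce both cases to the question of whether $h(w)\in D(v)$, then in part~(a) invoke Lemma~\ref{lemma:subtree} once you have $w\in D(v)$ and $h(w)\notin D(v)$, and in part~(b) show $H(w)$ is properly contained in the strongly connected component of $G\setminus e$ that contains $w$. Your part~(b) is a small variant of the paper's: instead of exhibiting the boundary vertex $z$ explicitly (the paper's route), you show directly that $h(w)$ lies in the same component; both work.

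However, the middle of your part~(a) contains genuine errors, so the key claim $h(w)\notin D(v)$ is never correctly proved there. Specifically, the assertion that $v=r_{h(w)}$ ``is an ancestor of $h(w)$ only if $r_{h(w)}=h(w)$'' is false ($r_{h(w)}$ is \emph{always} an ancestor of $h(w)$, being the root of the bridge-decomposition tree containing it); the parenthetical ``contradicting $h(w)$ and $w$ being in the same bridge-decomposition tree'' uses a premise that need not hold; and ``in that degenerate case $h(w)$ is not below $v$'' asserts the opposite of the truth. These muddles stem from a misreading: you consider $v$ on $D[r_{h(w)},w]$, whereas the lemma speaks of the \emph{edge} $e=(d(v),v)$ lying on that path, which forces $d(v)$ to be on the path as well and therefore excludes $v=r_{h(w)}$ automatically (its tail $d(r_{h(w)})$ is a proper ancestor of $r_{h(w)}$). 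So no convention-change to $D(r_{h(w)},w]$ is needed. Once this is fixed, the clean argument you are circling around is: since $e$ lies on $D[r_{h(w)},w]$, the head $v$ is a proper $D$-descendant of $r_{h(w)}$ and a bridge head; if $v$ were an ancestor of $h(w)$ in $D$ it would lie on $D[r_{h(w)},h(w)]$, which is entirely contained in $D_{h(w)}$, so $v\in D_{h(w)}\setminus\{r_{h(w)}\}$, contradicting that the trees of the bridge decomposition contain no bridge head other than their root. This one sentence is the missing step; the last paragraph of your proposal (correctly sorting the bridge $(d(r_{h(w)}),r_{h(w)})$ into case (b)) handles the boundary but does not supply this main-case argument.
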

\begin{proof}
We only prove the lemma for bridges of $G_s$, as the case for bridges of $G_s^R$ is completely analogous.
Let $e=(u,v)$ be any bridge of $G_s$ in the path $D[r_{h(w)},w]$: since
 $w \in D(v)$ and $h(w) \not \in D(v)$, (a) follows immediately from Lemma~\ref{lemma:subtree}.
Now we turn to (b). Let $e=(u,v)$ be any bridge of $G_s$ in the path $D[s,r_{h(w)}]$, and let $z$ be the nearest ancestor of $w$ in the loop nesting tree $H$ such that $z\in D(v)$ and $h(z) \not \in D(v)$.
Note that $z \not= w$ since $h(w)$ is a descendant of $v$ in $D$. Hence $z$ is a proper ancestor of $w$ in $H$.
By Theorem
\ref{cor:scc}, $H(z)$ induces a strongly connected component of $G\setminus e$.
Since $z$ is a proper ancestor of $w$ in $H$, then $H(w) \subset H(z)$.
As a consequence, $H(w)$ does not induce a maximal strongly connected subgraph in $G\setminus e$ and thus it cannot induce a strongly connected component of $G\setminus e$.
\end{proof}

We are now ready to describe our algorithm. We do not maintain the counters $SCC_e$ and $SCC_e^R$ for each strong bridge $e$ explicitly. Instead, for the sake of efficiency, we distribute this information along some suitably chosen vertices in the dominator trees $D$ and $D^R$.
We first compute
a compressed tree $\widehat{D}$ of the dominator tree $D$ as follows.
Let $x$ be any vertex of $G$, and let $D_x$ be the tree cointaining $x$ in the bridge decomposition $\mathcal{D}$: then $\widehat{D}$ is obtained by contracting all the vertices of $D_x$ into its tree root $r_x$.
Let $x$ be a vertex in the dominator tree $D$ such that
$h(x) \not \in D(r_x)$: then by Lemma \ref{lemma:ancestor-of-x} $r_{h(x)}$ is an ancestor of $x$ in $D$ and by Lemma~\ref{lemma:number-of-SCCs-ancestors}
we need to increment the counter of all bridges that lie in the path $D[r_{h(x)},x]$
from $r_{h(x)}$ to $x$ in $D$.
By construction, those bridges correspond exactly to all the edges in the path $\widehat{D}[r_{h(x)},r_x]$ from $r_{h(x)}$ to $r_x$ in the compressed tree $\widehat{D}$. We refer to such a  path
$\widehat{D}[r_{h(x)},r_x]$
as a \emph{bundle
starting from vertex $r_{h(x)}$ and ending at vertex $r_x$}.
For each bridge $e$ in the bundle  starting from $r_{h(x)}$ and ending at $r_x$, by Lemma~\ref{lemma:number-of-SCCs-ancestors}(a) there is a strongly connected component $H(x)$ in $G \setminus e$, where $x$ is in the tree $D_x$ rooted at $r_x$ of the bridge decomposition $\mathcal{D}$.

For each vertex $z \in \widehat{D}$ we maintain a value $bundle(z)$ which stores the number of bundles that end at $z$, minus the number of bundles that start at $z$. Note that for each vertex $z \in \widehat{D}$,
$\sum_{y\in \widehat{D}(z)} bundle(y)$ equals the total number of bundles that start from a proper ancestor of $z$ in $\widehat{D}$ and end at a descendant of $z$ in $\widehat{D}$.
\begin{lemma}
\label{lemma:bundle}
Let $(u,v)$ be a bridge of $G_s$.
Then $(u,v)$ corresponds to the edge $(r_u,v)$ in $\widehat{D}$ and $\#SCC(D(v))=\sum_{y\in \widehat{D}(v)} bundle(y)$.
\end{lemma}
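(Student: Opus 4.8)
The plan is to prove the two assertions of Lemma~\ref{lemma:bundle} separately. The first assertion---that a bridge $(u,v)$ of $G_s$ corresponds to the edge $(r_u,v)$ in $\widehat D$---follows directly from the construction of the compressed tree. Since $(u,v)$ is a bridge of $G_s$, it is deleted in forming the bridge decomposition $\mathcal D$, so $v$ is a root of its tree $D_v$ in $\mathcal D$, i.e. $v=r_v$, and hence $v$ is not contracted. The vertex $u$ lies in the tree $D_u$ of $\mathcal D$ and is contracted into its root $r_u$. Thus the edge $(u,v)$ of $D$ becomes the edge $(r_u,v)$ of $\widehat D$, and since the only vertices of $\widehat D$ are the roots $r_x$, this edge is indeed present in $\widehat D$.

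For the second assertion I would combine Corollary~\ref{cor:nscc} with the bundle accounting. By Corollary~\ref{cor:nscc}, the strongly connected components of $G\setminus e$ contained in $D(v)$ are exactly the sets $H(x)$ where $x\in D(v)$ and $h(x)\notin D(v)$; hence $\#SCC(D(v))$ is the number of such vertices $x$. Fix a bridge $e=(u,v)$ of $G_s$, equivalently (by the first assertion) the edge $(r_u,v)$ of $\widehat D$; note $v=r_v$. I claim that the vertices $x$ counted by $\#SCC(D(v))$ are in bijection with the bundles whose path in $\widehat D$ contains the edge $(r_u,v)$, and that these are precisely the bundles starting from a proper ancestor of $v$ in $\widehat D$ and ending at a descendant of $v$ in $\widehat D$. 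Indeed, each vertex $x$ with $h(x)\notin D(r_x)$ generates exactly one bundle, namely $\widehat D[r_{h(x)},r_x]$ (here I use Lemma~\ref{lemma:ancestor-of-x} to guarantee $r_{h(x)}$ is an ancestor of $x$, hence of $r_x$, in $D$, so the bundle is well-defined). For such a bundle, the edge $(r_u,v)$ lies on it iff $v$ is a vertex of the path $\widehat D[r_{h(x)},r_x]$ other than its top endpoint $r_{h(x)}$, i.e. iff $r_{h(x)}$ is a proper ancestor of $v$ and $r_x$ is a descendant of $v$ in $\widehat D$. Translating back to $D$: $r_x$ descendant of $v$ means $x\in D(v)$ (since $x\in D(r_x)$), and $r_{h(x)}$ a proper ancestor of $v=r_v$ means $r_{h(x)}\ne r_v$ and hence $h(x)\notin D(v)$; conversely if $x\in D(v)$ and $h(x)\notin D(v)$ then $r_x$ is a descendant of $v$ and $r_{h(x)}$ is an ancestor of $h(x)$ lying strictly above $v$, so $r_{h(x)}$ is a proper ancestor of $v$. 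Thus the correspondence $x\leftrightarrow\text{bundle }\widehat D[r_{h(x)},r_x]$ restricts to a bijection between the vertices counted by $\#SCC(D(v))$ and the bundles passing through the edge $(r_u,v)$.

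It remains to identify this count with $\sum_{y\in\widehat D(v)}bundle(y)$. By the definition of $bundle(\cdot)$ (number of bundles ending at a vertex minus number starting at it), summing over the subtree $\widehat D(v)$ telescopes: a bundle $\widehat D[a,b]$ contributes $+1$ to the sum exactly when $b\in\widehat D(v)$ and $a\notin\widehat D(v)$ (it contributes $0$ if both endpoints are inside, both outside, or---note---its start is in $\widehat D(v)$ and end outside, which cannot happen since the start is an ancestor of the end). But $b\in\widehat D(v)$ and $a\notin\widehat D(v)$ is exactly the condition that the path from $a$ to $b$ crosses into $\widehat D(v)$, i.e. passes through the edge $(r_u,v)$ entering $v$---equivalently that $a$ is a proper ancestor of $v$ and $b$ a descendant of $v$, which is the remark quoted just before the lemma statement. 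Hence $\sum_{y\in\widehat D(v)}bundle(y)$ equals the number of bundles through $(r_u,v)$, which by the previous paragraph equals $\#SCC(D(v))$, completing the proof.

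The main obstacle I anticipate is the careful bookkeeping around the endpoints of a bundle: one must be precise that a bundle through $v$ has its \emph{start} strictly above $v$ (a proper ancestor) and its \emph{end} weakly below $v$ (a descendant, possibly $v$ itself, when $r_x=r_v=v$), and that no bundle can start inside $\widehat D(v)$ and end outside it because the start is always an ancestor of the end. Getting these inclusions exactly right is what makes the telescoping sum match the desired count; everything else is routine translation between $D$ and $\widehat D$ via the bridge decomposition.
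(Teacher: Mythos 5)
Your proof is correct and follows essentially the same route as the paper's: it reduces to Corollary~\ref{cor:nscc}(a), uses Lemma~\ref{lemma:ancestor-of-x} to make the bundle well-defined, and shows that $\sum_{y\in\widehat D(v)}bundle(y)$ telescopes to the count of bundles crossing the edge $(r_u,v)$, which are in bijection with the vertices $x\in D(v)$ with $h(x)\notin D(v)$. The paper phrases this as ``the net contribution of each vertex $w\in D(v)$'' and appeals to the remark preceding the lemma for the telescoping, whereas you spell the telescoping out explicitly as a bijection and note carefully that no bundle can start inside $\widehat D(v)$ and end outside it; these are the same argument with slightly different packaging.
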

\begin{proof}
Since $(u,v)$ is a bridge of $G_s$, $v$ is a root in the bridge decomposition of $D$.
So the fact that $(u,v)$ corresponds to the edge $(r_u,v)$ in $\widehat{D}$ follows from the construction of $\widehat{D}$.
Now we prove the second part of the lemma.
Each vertex $w$ of $G$ such that $h(w) \not \in D(v)$ contributes a $+1$ to $bundle(r_w)$ and a $-1$ to $bundle(r_{h(w)})$.
We consider the net contribution
of each vertex $w \in D(v)$ to $\#SCC(D(v))$, since the contribution of all other vertices is zero.
Recall that $r_{h(w)}$ is an ancestor of $w$ in $D$ by Lemma \ref{lemma:ancestor-of-x}.
So, for any $w \in D(v)$, we have that $v$ is in the bundle $D[r_{h(w)}, r_w]$ if and only if $h(w) \not\in D(v)$.
If $h(w) \not\in D(v)$, $v$ is a proper descendant of $r_{h(w)}$ and the net contribution of $w$ to $\#SCC(D(v))$ is $+1$.
Otherwise, if  $h(w) \in D(v)$, the net contribution of $w$ to $\#SCC(D(v))$ is zero.
Both cases are handled correctly, since by Corollary \ref{cor:nscc}(a), $\#SCC(D(v))$ is equal to the number of vertices
$w \in D(v)$ such that $h(w) \not\in D(v)$.
\end{proof}

We next show how to compute $bundle(z)$ for each vertex $z \in \widehat{D}$.
We process all vertices in $G$ in any order. Whenever we find a vertex $x$ in the digraph $G$ such that $h(x) \not \in D(r_{x})$,
we increment $bundle(r_x)$ and decrement $bundle(r_{h(x)})$. Indeed, by Lemma \ref{lemma:number-of-SCCs-ancestors}(a) there is a bundle
$\widehat{D}[r_{h(x)},r_x]$
starting
from $r_{h(x)}$ and ending at $r_x$:
for each bridge $e$ in
$\widehat{D}[r_{h(x)},r_x]$, $H(x)$
induces a strongly connected component  of $G\setminus e$.

Once $bundle(z)$ is computed for each vertex $z$ in $\widehat{D}$, we can compute for each bridge $(u,v)$ in $G_s$ the value $\#SCC(D(v))$.
Recall that each vertex $z\in\widehat{D}$ is the root of a tree in the bridge decomposition $\mathcal{D}$ and $(d(z),z)$ is a bridge in $G_s$.
From Lemma \ref{lemma:bundle}, we have that $\#SCC(D(v))=bundle(v) + \sum_{y}\#SCC(D(y))$, where the sum is taken for
all children $y$ of  $v$ in $\widehat{D}$.
So we can compute the $\#SCC(D(v))$ values by visiting the compressed tree $\widehat{D}$ in a bottom-up fashion as follows.
For each vertex $z$ visited in $\widehat{D}$, we set $\#SCC(D(z))=\#SCC(D(z))+bundle(z)$, and we increment  $\#SCC(D(r_{d(z)}))$ by the value $\#SCC(D(z))$.

We can compute $\#SCC(D^R(u))$ for each bridge $(u,v)$ in $G_s^R$ in a completely analogous fashion.
The pseudocode of the algorithm is provided below (see Algorithm \ref{alg:SCCsDescendants}).

\begin{algorithm}
	\LinesNumbered
	\DontPrintSemicolon
	\KwIn{Strongly connected digraph $G=(V,E)$}
	\KwOut{For each strong bridge $(u,v)$ the numbers $\#SCC(D(v))$ and $\#SCC(D^R(u))$}
	Compute the reverse digraph $G^R$.
Select an arbitrary start vertex $s \in V$.

	Compute the dominator trees $D$ and $D^R$ of the flow graphs $G_s$ and $G_s^R$, respectively.\;
	
	Compute the loop nesting trees $H$ and $H^R$ of the flow graphs $G_s$ and $G_s^R$, respectively.\;
	
		Compute $\mathcal{D}$ (the bridge decomposition of $D$) and $\mathcal{D}^R$ (the bridge decomposition of $D^R$).\;
		
		Compute the compressed trees $\widehat{D}$ and $\widehat{D}^R$ of the dominator trees $D$ and $D^R$, respectively.\;

	Initialize $bundle(x)=0$ for each $x\in \widehat{D}$ and $bundle^R(x)=0$ for each $x\in \widehat{D}^R(x)$.\;
	\ForEach{vertex $x \in V$}
	{	
		\If{$h(x)\not \in D(r_x)$}{
			Find the roots $r_x$ and $r_{h(x)}$ in the bridge decomposition $\mathcal{D}$\;
			Set $bundle(r_x) = bundle(r_x)+1$ and $bundle(r_{h(x)}) = bundle(r_{h(x)})-1$\;
		}
		\If{$h^R(x)\not \in D^R(r^R_x)$}{
			Find the root $r^R_x$ and $r^R_{h^R(x)}$ in the bridge decomposition $\mathcal{D}^R$\;
			Set $bundle^R(r^R_x) = bundle^R(r^R_x)+1$ and $bundle^R(r^R_{h^R(x)}) = bundle^R(r^R_{h^R(x)})-1$\;
		}
		
	}
	
	\ForEach{vertex $z \in \widehat{D}$, $z\neq s$, in a bottom-up fashion}
	{
        Set $\#SCC(D(z))=\#SCC(D(z)) + bundle(z)$\;
		Set $\#SCC(D(r_{d(z)})) = \#SCC(D(r_{d(z)})) + \#SCC(D(z))$\;
	}
	\ForEach{vertex $z \in \widehat{D}^R$, $z\neq s$, in a bottom-up fashion}
	{
		 Set $\#SCC(D^R(z))=\#SCC(D^R(z)) + bundle^R(z)$\;
		 Set $\#SCC(D^R(r^R_{d^R(z)})) = \#SCC(D^R(r^R_{d^R(z)})) + \#SCC(D^R(z))$\;
	}
	\caption{\textsf{SCCsDescendants}}
	\label{alg:SCCsDescendants}
\end{algorithm}

\begin{lemma}
	Given the dominator trees $D$ and $D^R$, the loop nesting trees $H$ and $H^R$, and the strong bridges of $G$, Algorithm \textsf{SCCsDescendants} runs in $O(n)$ time.
\end{lemma}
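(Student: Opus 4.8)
The plan is to verify that each line of Algorithm~\textsf{SCCsDescendants} runs in amortized (or worst-case) constant time per relevant object, so that the grand total is $O(n)$. Recall that since $G$ is strongly connected we have $m \geq n$, but we are \emph{given} $D$, $D^R$, $H$, $H^R$, and the strong bridges as input, so we never touch the $m$ edges again; every remaining structure has size $O(n)$.

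First I would dispose of the preprocessing done inside the algorithm that is not handed to us. Lines~1--5 compute $G^R$, the dominator trees, the loop nesting trees, the bridge decompositions $\mathcal D$, $\mathcal D^R$, and the compressed trees $\widehat D$, $\widehat D^R$. Strictly speaking the lemma assumes $D$, $D^R$, $H$, $H^R$ and the strong bridges are already available, so lines~2--3 are free; the bridge decomposition is obtained simply by removing from $D$ (resp.\ $D^R$) the edges $(d(v),v)$ for which $(d(v),v)$ is a strong bridge of $G$ that is a bridge of $G_s$ (resp.\ $G_s^R$), which is an $O(n)$ scan of the tree edges, and contracting each resulting subtree into its root to form $\widehat D$ is another $O(n)$ pass (e.g.\ a single bottom-up traversal assigning to every vertex $x$ the root $r_x$ of its tree, then relabeling parent pointers). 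So lines~2--6 cost $O(n)$ in total. In the same $O(n)$ pass we store, for every vertex $x$, pointers to $r_x$ (in $\mathcal D$) and $r_x^R$ (in $\mathcal D^R$), and we preprocess $D$ and $D^R$ in $O(n)$ time for constant-time ancestor queries~\cite{domin:tarjan}, which is needed to evaluate the tests ``$h(x) \notin D(r_x)$'' and ``$h^R(x) \notin D^R(r_x^R)$'' in $O(1)$.

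Next I would analyze the first \textbf{foreach} loop (lines~7--14). It iterates over the $n$ vertices of $V$; for each vertex $x$ it performs a constant number of operations: evaluating two ancestor tests in $O(1)$ each (using the precomputed ancestor structure), looking up the already-stored roots $r_x, r_{h(x)}$ (and their reversed counterparts) in $O(1)$, and updating the $bundle$ / $bundle^R$ counters in $O(1)$. Here $h(x)$ and $h^R(x)$ are read directly from the loop nesting trees $H$ and $H^R$, which we are given. Hence this loop runs in $O(n)$ total time. I should also note $bundle$ and $bundle^R$ are arrays indexed by the $O(n)$ vertices of $\widehat D$, $\widehat D^R$, so their initialization in line~6 is $O(n)$.

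Finally I would handle the two bottom-up traversals (lines~15--22). Each visits every vertex $z$ of the compressed tree $\widehat D$ (resp.\ $\widehat D^R$) exactly once, and each visit does $O(1)$ work: one addition of $bundle(z)$ into the running count $\#SCC(D(z))$, a lookup of $d(z)$ (the parent of $z$ in $D$, available from $D$) and of its root $r_{d(z)}$ in $\mathcal D$ (precomputed), and one addition propagating the accumulated count to the parent. Since $|\widehat D|, |\widehat D^R| \le n$, both traversals cost $O(n)$. A bottom-up order can be obtained in $O(n)$ by a reverse-BFS/DFS ordering of $\widehat D$ or simply by processing vertices in decreasing order of a precomputed postorder number. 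The only mild subtlety — and the place I would be most careful — is making sure that the recurrence $\#SCC(D(z)) = bundle(z) + \sum_{y \text{ child of } z \text{ in } \widehat D}\#SCC(D(y))$ is evaluated in a valid bottom-up order and that the ``compressed edge $(d(z),z)$ corresponds to the strong bridge $(d(z),z)$'' correspondence from Lemma~\ref{lemma:bundle} lets us read off each $\#SCC(D(v))$ for a bridge $(u,v)$ in $O(1)$ afterwards; but this is bookkeeping, not a real obstacle. Summing lines~1--22 gives $O(n)$, completing the proof.
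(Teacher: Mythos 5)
Your proof is correct and follows essentially the same route as the paper's: construct the bridge decompositions and compressed trees in $O(n)$, then charge $O(1)$ per vertex in the first loop (using constant-time ancestor tests on $D$ and $D^R$) and $O(1)$ per node of $\widehat D$, $\widehat D^R$ in the two bottom-up traversals. You merely spell out the bookkeeping (precomputed roots $r_x$, array initialization, postorder processing) in more detail than the paper does.
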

\begin{proof}
	Since each dominator tree has $n-1$ edges, we can locate the bridges of $G_s$ and $G_s^R$, and construct the bridge decomposition of $D$ and $D^R$ and the compressed trees $\widehat{D}$ and $\widehat{D}^R$ in $O(n)$ time.
	During the first \textbf{foreach} loop of Algorithm \textsf{SCCsDescendants} (Lines 7-16),  we visit all vertices of $G$ and compute bundles for vertices in $\widehat{D}$ and $\widehat{D}^R$, by performing  $O(1)$ computations for each vertex: indeed, we can test parent/descendant relationships in $D$ and $D^R$ in constant time \cite{domin:tarjan}.
	The other two \textbf{foreach} loops (Lines 17--20 and 21--24) visit all vertices in $\widehat{D}$ and $\widehat{D}^R$, performing again $O(1)$ time computations per vertex.
\end{proof}

\paragraph{Computing $\#SCC(D(v) \cap D^R(u))$.}
To complete the description of our algorithm,
we still need to describe how to compute the quantity
$\#SCC(D(v) \cap D^R(u))$ for each common bridge $e=(u,v)$ of $G_s$ and $G_s^R$. Before doing that, we need to introduce some new terminology.
Let $e=(u,v)$ be a common bridge.
We say that a vertex $w$ is a \emph{common descendant} of $(u,v)$ if $w \in D(v)$ and $w \in D^R(u)$; in this case we also say that $(u,v)$ is a \emph{common bridge ancestor} of $w$.
Previously, we have been
working with the bridge decomposition ${\mathcal{D}}$ and ${\mathcal{D}}^R$ of the dominator trees $D$ and $D^R$, as defined in Section \ref{sec:dominators}. Since we need to deal now with the common bridges of $G_s$ and $G_s^R$, we define a coarser partition of $D$ and $D^R$, as follows.
After deleting all the common bridges from the dominator trees $D$ and $D^R$, we obtain the \emph{common bridge decomposition} of $D$ and $D^R$ into forests $\breve{\mathcal{D}}$ and $\breve{\mathcal{D}}^R$ (see Figure \ref{figure:CommonBridgeDecomposition}).
We denote by $\breve{D}_u$ (resp., $\breve{D}^R_u$) the tree in $\breve{\mathcal{D}}$ (resp., $\breve{\mathcal{D}}^R$) that contains vertex $u$, and by $\breve{r}_u$ (resp., $\breve{r}^R_u$) the root of $\breve{D}_u$ (resp., $\breve{D}^R_u$). Lemma \ref{lemma:ancestor-of-u} extends Lemma \ref{lemma:ancestor-of-x}.

\begin{figure}[t!]
\begin{center}
\centerline{	\includegraphics[trim={0 0 0 4cm}, clip=true, width=1.3\textwidth]{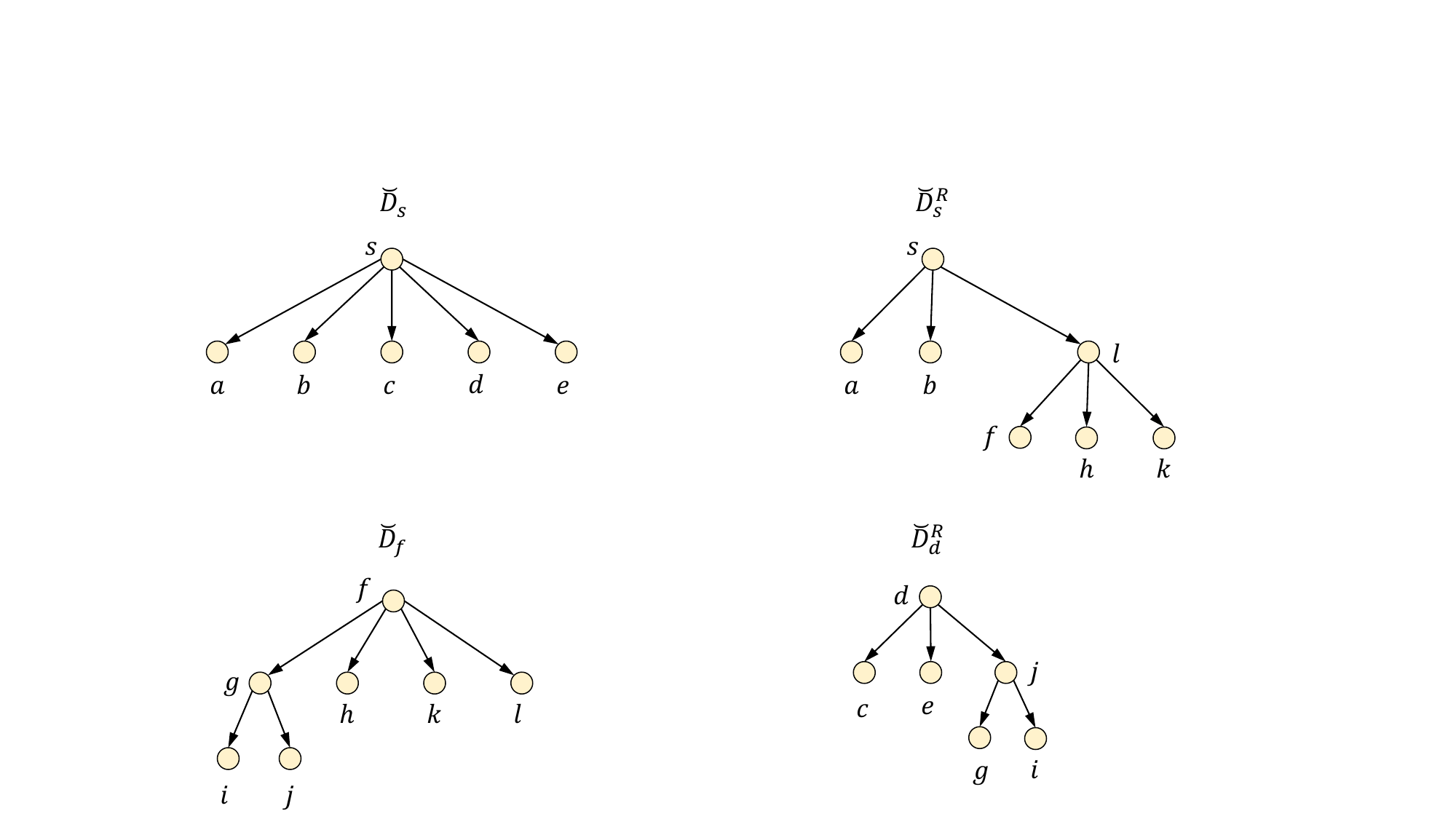}}	
	\caption{The common bridge decomposition of the dominator trees $D$ and $D^R$ of Figure \ref{figure:example1ab}. In this example the only common bridge is edge $(d,f)$.}
	\label{figure:CommonBridgeDecomposition}
\end{center}
\end{figure}

\begin{lemma}
Let $x\not= s$ be a vertex in $G$. 
Then $\breve{r}_{h(x)}$ is an ancestor of $x$ in $D$.
\label{lemma:ancestor-of-u}
\end{lemma}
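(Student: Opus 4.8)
The plan is to mirror the proof of Lemma~\ref{lemma:ancestor-of-x}, but using the common bridge decomposition $\breve{\mathcal{D}}$ in place of the finer bridge decomposition $\mathcal{D}$, and invoking Lemma~\ref{lemma:paths-through-SAP} where appropriate. First I would observe that $\breve{r}_{h(x)}$ is always an ancestor of $h(x)$ in $D$: since $\breve{D}_{h(x)}$ is a subtree of $D$ rooted at $\breve{r}_{h(x)}$ and containing $h(x)$, this is immediate from the construction of the common bridge decomposition. So it suffices to show that $\breve{r}_{h(x)}$ is an ancestor of $x$ in $D$.

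The argument proceeds by contradiction: assume $\breve{r}_{h(x)}$ is \emph{not} an ancestor of $x$ in $D$. Let $T$ be the dfs tree on which the loop nesting tree $H$ was built. By the definition of $h$, there is a path $\pi$ from $x$ to $h(x)$ in $G$ containing only descendants of $h(x)$ in $T$. On the other hand, since $\breve{r}_{h(x)}$ dominates $h(x)$ in $G_s$, all paths from $s$ to $h(x)$ pass through $\breve{r}_{h(x)}$, so $\breve{r}_{h(x)}$ is an ancestor of $h(x)$ in $T$. Now I would distinguish the two possibilities that make $(d(\breve{r}_{h(x)}), \breve{r}_{h(x)})$ a strong bridge (it is one, being a common bridge by definition of $\breve{\mathcal{D}}$): it is a bridge of $G_s$, so $d(\breve{r}_{h(x)})$ is the immediate dominator of $\breve{r}_{h(x)}$, and Lemma~\ref{lemma:partition-paths}(a) applies. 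Since $x$ is not a descendant of $\breve{r}_{h(x)}$ in $D$ (our contradiction hypothesis), Lemma~\ref{lemma:partition-paths}(a) tells us that every simple path in $G$ from $x$ to any descendant of $\breve{r}_{h(x)}$ in $D$ — in particular to $h(x)$ — must contain the strong bridge $(d(\breve{r}_{h(x)}), \breve{r}_{h(x)})$, and hence must contain the vertex $\breve{r}_{h(x)}$. Restricting $\pi$ to a simple path from $x$ to $h(x)$, we conclude $\pi$ contains $\breve{r}_{h(x)}$.

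This is the desired contradiction: $\breve{r}_{h(x)}$ is an ancestor of $h(x)$ in $T$, hence a \emph{proper} ancestor (it cannot equal $h(x)$, since if it did it would be an ancestor of $x$ in $D$ only when... more directly: if $\breve{r}_{h(x)} = h(x)$ then $h(x)$ is trivially an ancestor of itself in $D$, but $x \in D(v)$ for any $v$ on the path... — cleanest is: $\breve{r}_{h(x)}$ being an ancestor of $h(x)$ in $T$ but lying on the path $\pi$, which consists only of descendants of $h(x)$ in $T$, forces $\breve{r}_{h(x)}$ to be both an ancestor and a descendant of $h(x)$ in $T$, so $\breve{r}_{h(x)} = h(x)$; but then $\breve{r}_{h(x)} = h(x)$ is certainly an ancestor of $x$ in $D$ since $h(x) \in \mathit{loop}(h(x))$ implies every path from $s$ to $x$... ). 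The main obstacle is handling the boundary case $\breve{r}_{h(x)} = h(x)$ carefully so the contradiction is airtight; I expect this to be resolved exactly as in Lemma~\ref{lemma:ancestor-of-x}, where the same subtlety arises and the path $\pi$ consisting solely of descendants of $h(x)$ in $T$ cannot contain a strict ancestor of $h(x)$, giving the contradiction, while the case $\breve{r}_{h(x)} = h(x)$ makes the conclusion trivially true. I would therefore split the proof at the outset into the trivial case $\breve{r}_{h(x)} = h(x)$ and the main case $\breve{r}_{h(x)} \neq h(x)$, and run the contradiction argument above only in the latter.
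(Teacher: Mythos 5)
Your overall strategy is right: mirror Lemma~\ref{lemma:ancestor-of-x} but with $\breve{\mathcal{D}}$ in place of $\mathcal{D}$, and apply Lemma~\ref{lemma:partition-paths}(a) to $v = \breve{r}_{h(x)}$ and $w = x$. And you deserve credit for spotting the subtlety at $\breve{r}_{h(x)} = h(x)$, which is real. The problem is how you try to resolve it. Dismissing the case $\breve{r}_{h(x)} = h(x)$ as ``trivially true'' does not work: that case amounts to claiming that $h(x)$ is an ancestor of $x$ in $D$, and the head of a loop does \emph{not} in general dominate the other vertices of that loop (a vertex $x$ may well have a path from $s$ to $x$ that avoids $h(x)$ entirely). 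So the case split you propose leaves one branch unjustified. Your own half-sentence ``$h(x)\in\mathit{loop}(h(x))$ implies every path from $s$ to $x$\ldots'' is exactly the false step you would need.

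The fix is simpler than a case split, and it is what the paper's one-line contradiction is implicitly doing: once Lemma~\ref{lemma:partition-paths}(a) forces $\pi$ to contain the \emph{edge} $(d(\breve{r}_{h(x)}),\breve{r}_{h(x)})$, don't keep only the head $\breve{r}_{h(x)}$ of that edge -- keep the tail $d(\breve{r}_{h(x)})$. Since $\breve{r}_{h(x)}$ dominates $h(x)$, it is an ancestor of $h(x)$ in $T$, and $d(\breve{r}_{h(x)})$ is a proper dominator of $\breve{r}_{h(x)}$, hence a \emph{proper} ancestor of $\breve{r}_{h(x)}$ in $T$ and therefore a proper ancestor of $h(x)$ in $T$ -- unconditionally, whether or not $\breve{r}_{h(x)} = h(x)$. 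A proper $T$-ancestor of $h(x)$ cannot lie on $\pi$, which by the definition of the loop $\mathit{loop}(h(x))$ consists only of $T$-descendants of $h(x)$. That gives the contradiction uniformly, with no boundary case at all, and is the same route (modulo terse phrasing) that the paper takes for both Lemma~\ref{lemma:ancestor-of-x} and Lemma~\ref{lemma:ancestor-of-u}.
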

\begin{proof}
Assume by contradiction that $\breve{r}_{h(x)}$ is not an ancestor of $x$ in $D$.
Let $T$ be the dfs tree based on which $H$ was built.
By the definition of dominators all paths from $s$ to $h(x)$ contain $\breve{r}_{h(x)}$, and therefore $\breve{r}_{h(x)}$ is an ancestor of $h(x)$ in $T$.
By Lemma \ref{lemma:partition-paths}(a) all paths from $x$ to $h(x)$ in $G$ contain the strong bridge $(d(\breve{r}_{h(x)}),\breve{r}_{h(x)})$. But this is a contradiction to the fact that all paths from $x$ to $h(x)$ contain only descendants of $h(x)$ in $T$, since $\breve{r}_{h(x)}$ is an ancestor of $h(x)$ in $T$.
\end{proof}

Let $e$ and $e'$ be two common bridges. If $e$ is an ancestor of $e'$ in $D$ (resp., $D^R$) then we use the notation $e \csbdom e'$ (resp., $e \csbrdom e'$) to denote this fact.

\begin{lemma}
	\label{lemma:csb-dom}
	Let $e$, $e'$, and $e''$ be distinct common bridges such that $e \csbdom e'' \csbdom e'$ and $e' \csbrdom e$. Then $e' \csbrdom e'' \csbrdom e$.
\end{lemma}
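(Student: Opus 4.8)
The plan is to use the characterization of common bridges from Property~\ref{property:strong-bridge}: if $e=(a,b)$ is a common bridge then $a=d(b)$ and $b=d^R(a)$, so the relation $e \csbdom e'$ essentially tracks the ancestor order of the \emph{heads} of the bridges in $D$, while $e \csbrdom e'$ tracks the ancestor order of the \emph{tails} in $D^R$. Write $e=(a,b)$, $e''=(a'',b'')$, $e'=(a',b')$. The hypothesis $e \csbdom e'' \csbdom e'$ says that $b$ is a proper ancestor of $b''$ in $D$, which in turn is a proper ancestor of $b'$; equivalently $D(b') \subsetneq D(b'') \subsetneq D(b)$. The hypothesis $e' \csbrdom e$ says $b'$ is an ancestor of $b$ in $D^R$ — here one must be careful with the abuse of notation from the excerpt: $e'=(a',b')$ being an ``ancestor of $e$ in $D^R$'' means the reverse edge $(b',a')$ is an ancestor of $(b,a)$ in $D^R$, i.e. $a'$ is an ancestor of $a$ in $D^R$, i.e. $D^R(a) \subseteq D^R(a')$. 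The goal is to show $a'$ is an ancestor of $a''$ in $D^R$ which is an ancestor of $a$, i.e. $D^R(a) \subseteq D^R(a'') \subseteq D^R(a')$.

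First I would establish the position of $e''$ relative to $e$ and $e'$ as sets of vertices. Since $D(b') \subsetneq D(b'')$, the vertex $a'=d(b')$ lies in $D(b'')$ (because $a'$ is the parent of $b'$ in $D$, hence $a' \in D(b'') $ as long as $a' \neq $ an ancestor of $b''$; in fact $a'$ is a descendant of $b''$ because $b''$ is a proper ancestor of $b'$ and $a'=d(b')$ is the immediate predecessor of $b'$, which is still a descendant of $b''$). Similarly $b''$ and $a''$ lie inside $D(b)$, and $b$, $a$ lie outside $D(b'')$. Now use Lemma~\ref{lemma:partition-paths}(a) applied to the strong bridge $e''$: every simple path in $G$ from a vertex not in $D(b'')$ to a vertex in $D(b'')$ must traverse $e''=(a'',b'')$. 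Since $a' \in D(b'')$ and $a \notin D(b'')$, consider a path witnessing $a' \in D^R(a')$... rather, the cleaner route: because $D^R(a) \subseteq D^R(a')$, there is in $G$ a path from $a$ to $a'$ staying inside $D^R(a')$ — more directly, $a' = d^R$-ancestor situation gives that all paths from $a$ to $s$ in $G$ pass through... Let me instead argue: every path in $G$ from $a$ to $s$ passes through $b$ (since $b=d^R(a)$), and by $e' \csbrdom e$ every such continuation eventually... I would use that $a \in D^R(a')$ forces all paths from $a$ to $s$ to pass through $a'$ (definition of dominator in $G^R_s$, since $a'$ dominates $a$ in $G^R_s$). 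Such a path starts at $a \notin D(b'')$ and must reach $a' \in D(b'')$, so by Lemma~\ref{lemma:partition-paths}(a) for $e''$ it contains $(a'',b'')$, hence it passes through $a''$ and through $b''$.

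From ``every path in $G$ from $a$ to $s$ passes through $a''$'' we conclude that $a''$ dominates $a$ in $G^R_s$, i.e. $a''$ is an ancestor of $a$ in $D^R$, which is exactly $e'' \csbrdom e$. It remains to show $e' \csbrdom e''$, i.e. $a'$ is an ancestor of $a''$ in $D^R$, i.e. every path in $G$ from $a''$ to $s$ passes through $a'$. Here I would apply Lemma~\ref{lemma:partition-paths}(a) to the bridge $e'$: since $e \csbdom e'' \csbdom e'$ gives $D(b') \subsetneq D(b'')$, we have $a'' \notin D(b')$ (as $a''$ is a descendant of $b''$ but $b''$ is a proper ancestor of $b'$, so $a''$ could be inside or outside $D(b')$ — this needs care). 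If $a'' \in D(b')$, then $b'=d(a'')$-ancestor, and since $b'=d^R(a')$... Alternatively and more robustly: apply Lemma~\ref{lemma:partition-paths}(b) to $e'=(a',b')$ with $b'=d^R(a')$: all simple paths in $G$ from any descendant of $a'$ in $D^R$ to a vertex not in $D^R(a')$ must contain $(a',b')$, hence pass through $a'$. Since $a \in D^R(a') $ and $a \notin D^R(a)$... I want $a''$: we have $a'' $ dominates $a$ in $G^R_s$ and $a'$ dominates $a$ in $G^R_s$, so $a'$ and $a''$ are comparable in $D^R$ (both ancestors of $a$ lie on the path $D^R[s,a]$). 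If $a'$ were a strict descendant of $a''$ in $D^R$ then $e \csbrdom e''$-type relation would give, combined with the dominator-tree structure, a contradiction with $D(b') \subsetneq D(b'')$ via Lemma~\ref{lemma:partition-paths}(a) applied to $e'$ (a path from $a$ through $a''$ to $s$ avoiding... ).

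The main obstacle I anticipate is precisely this last comparability-plus-orientation step: showing $a'$ is an ancestor (not a descendant) of $a''$ in $D^R$. The right tool is to combine the two ``moreover'' clauses of Lemma~\ref{lemma:partition-paths}: one direction of each bridge forces a path fragment to exist, the other forces every path to cross the bridge, and the inclusions $D(b') \subsetneq D(b'') \subsetneq D(b)$ in $D$ together with $D^R(a) \subseteq D^R(a')$ pin down a unique consistent nesting, namely $D^R(a) \subseteq D^R(a'') \subseteq D^R(a')$. I would carry it out by assuming $a'$ is a proper descendant of $a''$ in $D^R$ and deriving a contradiction: this would mean there is a path in $G$ from some vertex of $D^R(a)\subseteq D^R(a')$ to $s$ crossing $e''$ after it has already ``committed'' past $e'$, contradicting the acyclic nesting of the common bridges guaranteed by the fact that $D$-descendant order and $D^R$-ancestor order of common bridges are reverses of each other along this chain.
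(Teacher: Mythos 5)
Your argument for $e'' \csbrdom e$ is essentially correct: since $a' \in D(b'')$ while $a \notin D(b'')$, any simple $G$-path from $a$ to $s$ must contain $a'$ (because $a'$ dominates $a$ in $G^R_s$), and its prefix from $a$ to $a'$ is a simple path from a non-descendant of $b''$ in $D$ to a descendant, hence contains $e''$ by Lemma~\ref{lemma:partition-paths}(a); so $a''$ dominates $a$ in $G^R_s$. (You should also check $a \neq a''$ and $a' \neq a''$, which follow from $b$, $b''$, $b'$ being in a proper ancestor chain in $D$, to pass from vertex ancestry to the edge relation $\csbrdom$.) The genuine gap is the second half. You correctly observe that $a'$ and $a''$ are comparable in $D^R$, both dominating $a$, but you never actually rule out that $a''$ is a proper ancestor of $a'$. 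The concluding sentence, that assuming otherwise would ``contradict the acyclic nesting of the common bridges guaranteed by the fact that $D$-descendant order and $D^R$-ancestor order of common bridges are reverses of each other along this chain,'' is circular: that reversal along the chain is precisely what Lemma~\ref{lemma:csb-dom} is trying to establish, so it cannot be invoked as a known fact.

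The missing step can be closed using exactly the observation you already made. Take any simple $G^R_s$-path $P$ from $s$ to $a$; it contains $a'$ exactly once, and the reverse of the sub-path of $P$ from $a'$ to $a$ is a simple $G$-path from $a \notin D(b'')$ to $a' \in D(b'')$, hence contains $e''$ by Lemma~\ref{lemma:partition-paths}(a); so $a''$ lies on $P$ strictly after $a'$. If $a''$ were a proper ancestor of $a'$ in $D^R$, it would also lie on the prefix of $P$ from $s$ to $a'$, making $a''$ occur twice on the simple path $P$, a contradiction. This comparability-plus-double-occurrence step is what your proposal is missing. For comparison, the paper's own proof avoids the two-stage split entirely: it notes that every simple $G$-path from $a$ (the tail of $e$) to $b'$ (the head of $e'$) must traverse $e$, $e''$, $e'$ in that order, and then argues that if $e''$ were not between $e'$ and $e$ on the $D^R$-tree path from $b'$ to $a$, one could construct a $G$-path from $a$ to $b'$ avoiding $e''$, a contradiction in a single stroke.
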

\begin{proof}
	Let $e = (u,v)$ and $e'=(w,z)$. The fact that $e \csbdom e'' \csbdom e'$ implies that all paths from $u$ to $z$ in $G$ contain $e$, $e''$ and $e'$ in that order. If $e''$ is not in the path from $z$ to $u$ in $D^R$ then there is path from $u$ to $z$ in $G$ that avoids $e''$, a contradiction.
\end{proof}

\begin{figure}[t!]
	\begin{center}
\centerline{		\includegraphics[trim={0 0 0 7cm}, clip=true, width=1.3\textwidth]{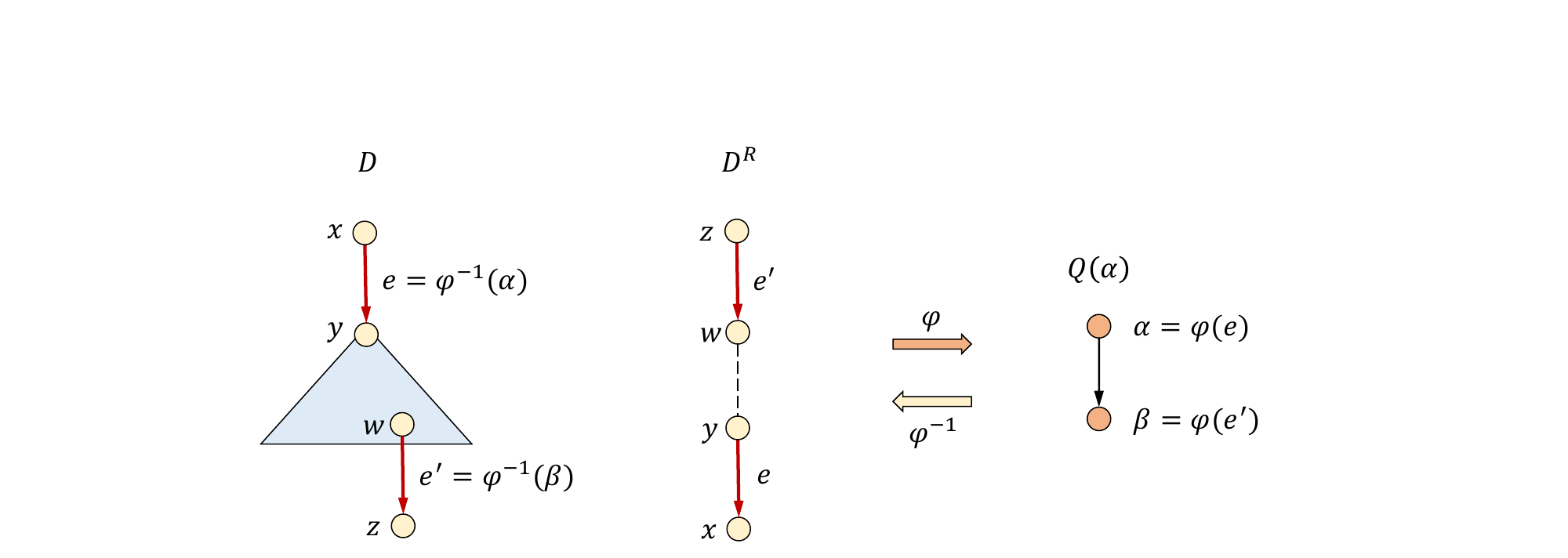}}
		\caption{An illustration of the definition of the common bridge forest  $\mathcal{Q}$.}
		\label{figure:commonDescendants}
	\end{center}
\end{figure}

We will use Lemma \ref{lemma:csb-dom}
to identify the common bridge ancestors of each vertex from a sequence of common strong bridges $e_1 \csbdom e_2 \csbdom \ldots \csbdom e_{\ell}$.
In order to have a compact representation of the relations in Lemma \ref{lemma:csb-dom} we define the \emph{common bridge forest} $\mathcal{Q}$ as follows (see Figure \ref{figure:commonDescendants}).
Forest $\mathcal{Q}$ contains a node $\varphi(e)$ for each common bridge $e$. We also define the reverse map $\varphi^{-1}$ from the nodes of $\mathcal{Q}$ to the common bridges of $G$, i.e., for any node  $\alpha$ of $\mathcal{Q}$, $\varphi^{-1}(\alpha)$ is the corresponding common bridge represented by $\alpha$.
Let $\alpha$ and $\beta$ be two distinct nodes of $\mathcal{Q}$.
Let $\varphi^{-1}(\alpha) = (x,y)$ and $\varphi^{-1}(\beta) = (w,z)$. Then $\mathcal{Q}$ contains the edge $(\alpha, \beta)$ if and only if $\breve{D}_y=\breve{D}_w$ and $\varphi^{-1}(\beta) \csbrdom \varphi^{-1}(\alpha)$. That is, there is an edge from node $\varphi(e)$ to $\varphi(e')$, where $e=(x,y)$ and $e'=(w,z)$ are common bridges, if $e$ is the common bridge that enters $\breve{D}_y=\breve{D}_w$ and $e'$ is an ancestor of $e$ in $D^R$.
To see that $\mathcal{Q}$ is indeed a forest, consider the tree $D'$ obtained from $D$ by contracting each subtree $\breve{D} \in \breve{\mathcal{D}}$ into its root.
Then, the edges of $D'$ are the common bridges. Now note that $\mathcal{Q}$ contains a node for each common bridge, and two nodes can be adjacent only if the corresponding common bridges are adjacent in $D'$.
We use the notation $Q(\alpha)$ to denote the tree in the common bridge forest $\mathcal{Q}$ that contains node $\alpha$.

\begin{lemma}
\label{lemma:common-descendants-path}
Let $\mathcal{Q}$ be the common bridge forest of $G$.
Suppose that vertex $x$ is a common descendant of a common bridge $e$ and let $e'$ be the nearest common bridge that is an ancestor of $x$ in $D$ such that $\varphi(e')\in Q(\varphi(e))$. Let $\pi= \left < \varphi(e) = \alpha_1, \alpha_2, \ldots, \alpha_{\ell} = \varphi(e') \right >$ be the path from $\varphi(e)$ to $\varphi(e')$ in $Q(\varphi(e))$.
Then $x$ is a common descendant of every bridge $\varphi^{-1}(\alpha_i)$,  $1\leq i \leq \ell$.
\end{lemma}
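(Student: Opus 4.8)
Write $e_i=\varphi^{-1}(\alpha_i)=(u_i,v_i)$ for $1\le i\le\ell$, so $e_1=e$ and $e_\ell=e'$. The plan is to first show that $\pi$ reads off a chain of common bridges that descends monotonically in $D$, namely $e_1\csbdom e_2\csbdom\cdots\csbdom e_\ell$, and then to obtain the two halves of ``common descendant'' by propagating $x\in D(v_1)$ inward from $e_\ell$ and $x\in D^R(u_1)$ inward from $e_1$, using only transitivity of the ancestor relations in $D$ and $D^R$ together with the defining relations of $\mathcal{Q}$ (equivalently, Lemma~\ref{lemma:csb-dom}).

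First I would record the situation at the endpoints. Since $x$ is a common descendant of $e$ we have $x\in D(v_1)$, so $e_1$ is an ancestor of $x$ in $D$; by hypothesis $e_\ell=e'$ is also an ancestor of $x$ in $D$. Two tree edges of $D$ that are both ancestors of the same vertex are $\csbdom$-comparable (or coincide), so $e_1$ and $e_\ell$ are comparable; and if $e_\ell\csbdom e_1$ held strictly, then $e_1$ itself would be a common bridge that is an ancestor of $x$ in $D$, lies in $Q(\varphi(e))$, and is strictly closer to $x$ in $D$ than $e_\ell$, contradicting the choice of $e'$ as the nearest such bridge. Hence $e_1\csbdom e_\ell$, with equality only in the trivial case $\ell=1$; assume $\ell\ge2$ from now on.

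The core step is to show that $\varphi(e)$ is an ancestor of $\varphi(e')$ in the tree $Q(\varphi(e))$. Two structural facts from the construction of $\mathcal{Q}$ drive this: along every edge $(\alpha,\beta)$ of $\mathcal{Q}$ we have $\varphi^{-1}(\alpha)\csbdom\varphi^{-1}(\beta)$ in $D$ and $\varphi^{-1}(\beta)\csbrdom\varphi^{-1}(\alpha)$ in $D^R$ (so $\mathcal{Q}$-ancestry implies $\csbdom$-ancestry); and the parent of a node $\varphi(f)$, where $f=(a,b)$, is $\varphi$ of the unique common bridge entering $\breve{D}_a$, i.e.\ of the closest common bridge strictly above $f$ in $D$. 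Suppose, for contradiction, that $\varphi(e)$ is not an ancestor of $\varphi(e')$ in $Q(\varphi(e))$, and let $\gamma$ be their least common ancestor and $g=\varphi^{-1}(\gamma)$; then $\gamma$ is a proper ancestor of $\varphi(e)$, so $g\csbdom e$ strictly, and $g\csbdom e'$. Let $\chi\neq\chi'$ be the children of $\gamma$ toward $\varphi(e)$ and toward $\varphi(e')$ respectively, with $c=\varphi^{-1}(\chi)$ and $c'=\varphi^{-1}(\chi')$; from $\gamma$'s two branches and $e\csbdom e'$ (not $e'\csbdom e$) one checks $\chi\neq\chi'$. By the parent description, $c$ is the closest common bridge strictly below $g$ in $D$ on the way to $c$, hence (since $g\csbdom c\csbdom e$) on the $D$-path from $g$ to $e$; similarly $c'$ is the closest common bridge strictly below $g$ on the $D$-path from $g$ to $e'$. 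But $e\csbdom e'$, so the $D$-path from $g$ to $e'$ begins with the $D$-path from $g$ to $e$, whence these two ``closest common bridges'' coincide: $c=c'$, so $\chi=\chi'$, a contradiction. Therefore $\varphi(e)$ is an ancestor of $\varphi(e')$ in $Q(\varphi(e))$, so $\pi$ runs monotonically down the parent relation of $\mathcal{Q}$; reading off the $D$-parts of its edges gives $e_1\csbdom e_2\csbdom\cdots\csbdom e_\ell$, and reading off the $D^R$-parts gives $e_\ell\csbrdom e_{\ell-1}\csbrdom\cdots\csbrdom e_1$ (which also follows from Lemma~\ref{lemma:csb-dom}).

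Finally I would propagate. Fix $i$. From $e_i\csbdom e_\ell=e'$ and the fact that $e'$ is an ancestor of $x$ in $D$, transitivity gives that $v_i$ is an ancestor of $x$ in $D$, i.e.\ $x\in D(v_i)$. From $e_i\csbrdom e_1=e$ we get that $u_i$ is an ancestor of $u$ in $D^R$, and since $x$ is a common descendant of $e$ we have $x\in D^R(u)\subseteq D^R(u_i)$. Hence $x\in D(v_i)\cap D^R(u_i)$, i.e.\ $x$ is a common descendant of $e_i=\varphi^{-1}(\alpha_i)$, for every $1\le i\le\ell$. The one genuine obstacle is the core step, namely ruling out an ``upward detour'' of $\pi$ inside $\mathcal{Q}$; this is exactly where the minimality in the choice of $e'$ and the precise way $\mathcal{Q}$ is assembled from the common bridge decomposition are used, and everything else is routine bookkeeping with the ancestor relations of $D$ and $D^R$.
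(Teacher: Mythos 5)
Your argument is correct and follows the same core chain as the paper: extract $e_1\csbdom\cdots\csbdom e_\ell$ and $e_\ell\csbrdom\cdots\csbrdom e_1$ along $\pi$, then propagate $D$-ancestry of $x$ inward from $e'=e_\ell$ and $D^R$-ancestry inward from $e=e_1$. The only difference is that you explicitly verify in your ``core step'' that $\pi$ descends monotonically in $\mathcal{Q}$ (ruling out an up-then-down detour through a least common ancestor), a detail the paper's proof takes for granted by writing each edge of $\pi$ in the form $(\alpha_i,\alpha_{i+1})$ as a parent-to-child edge of $\mathcal{Q}$.
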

\begin{proof}
All common bridges that are represented by nodes in $\pi$ are ancestors of $x$ in $D$, since for each edge $(\alpha_i, \alpha_{i+1})$ in $\mathcal{Q}$ we have that $\varphi^{-1}(\alpha_i) \csbdom \varphi^{-1}(\alpha_{i+1})$.
Moreover, for each edge $(\alpha_i, \alpha_{i+1})$ we have $\varphi^{-1}(\alpha_{i+1}) \csbrdom \varphi^{-1}(\alpha_i)$: by the fact that $e=\varphi^{-1}(\alpha_1)$ is an ancestor of $x$ in $D^R$ (since we assumed that $e$ is a common bridge ancestor of $x$), it follows that all common bridges that are represented by nodes in $\pi$ are ancestors of $x$ in $D^R$, and thus are common bridge ancestors of $x$.
\end{proof}

To compute the number $\#SCC(D(v) \cap D^R(u))$ for each common bridge $e=(u,v)$ we follow an approach similar to Algorithm {\textsf{SCCsDescendants}}.
Namely, instead of computing the quantities $\#SCC(D(v)\cap D^R(u))$ one at the time
for each common bridge $e=(u,v)$, we find for each vertex $w$ its common bridge ancestors  $e=(u,v)$ such that $H(w)$ induces a strongly connected component in $G\setminus e$ and we increment the counter of $\#SCC(D(v) \cap D^R(u))$ for all those common bridges $e=(u,v)$.
The following lemma, combined with Lemma \ref{lemma:common-descendants-path}, 
allows us to accomplish this task efficiently.

\begin{lemma}
Let $e_1, e_2, \ldots, e_k$ be the common bridges in the path
$D[\breve{r}_{h(x)},x]$
from vertex $\breve{r}_{h(x)}$ to vertex $x$ in $D$, in order of appearance in that path. If $e_1$ is not a common bridge ancestor of $x$ then no $e_j$, $1 < j \le k$, is. Otherwise, let $j$ be the largest index $1 \leq j \leq k$ such that vertex $x$ is a common descendant of $e_j$. Then $e_{j} \csbrdom e_{j-1} \csbrdom \ldots \csbrdom e_1$ and $x$ is a common descendant of every common bridge $e_i$ for $1 \leq i \le j$.
\label{lemma:common-strong-bridge-relation}
\end{lemma}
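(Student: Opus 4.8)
The entire statement reduces to a single \emph{core claim}: for every index $i$ with $1\le i<k$, if $e_{i+1}$ is a common bridge ancestor of $x$, then so is $e_i$, and moreover $e_{i+1}\csbrdom e_i$. Granting this, both parts follow by bookkeeping. For the first part, assume $e_1$ is not a common bridge ancestor of $x$ but some $e_j$, $j>1$, is; taking the smallest such $j$ gives $e_{j-1}$ not a common bridge ancestor of $x$ while $e_j$ is one, contradicting the core claim with $i=j-1$. For the second part, let $j$ be the largest index with $x$ a common descendant of $e_j$; applying the core claim in turn for $i=j-1,j-2,\dots,1$ shows that each of $e_1,\dots,e_j$ is a common bridge ancestor of $x$ and that $e_{i+1}\csbrdom e_i$ for every $i<j$, i.e.\ $e_j\csbrdom e_{j-1}\csbrdom\cdots\csbrdom e_1$. (Alternatively, once $e_1,\dots,e_j$ are known to be common bridge ancestors of $x$ and $e_j\csbrdom e_1$ has been established, the intermediate relations $e_{i+1}\csbrdom e_i$ can be extracted from Lemma~\ref{lemma:csb-dom} applied to $e_1\csbdom e_{i+1}\csbdom e_j$ with $e_j\csbrdom e_1$; this is precisely the use that lemma is tailored for.)

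To prepare the core claim I would first record two facts. Since $e_1,\dots,e_k$ are \emph{all} the common bridges on $D[\breve r_{h(x)},x]$, no common bridge lies strictly between $e_i$ and $e_{i+1}$ on that path; hence every vertex of $D[v_i,u_{i+1}]$ belongs to a single tree of the common bridge decomposition $\breve{\mathcal D}$, whose root is the common-bridge head $v_i$, so $\breve r_{u_{i+1}}=v_i$. Also, every $e_\ell$ is an ancestor of $x$ in $D$, so $x\in D(v_\ell)$ automatically, and consequently ``$e_\ell$ is a common bridge ancestor of $x$'' means exactly ``$x\in D^R(u_\ell)$'', i.e.\ ``every path from $x$ to $s$ in $G$ passes through $u_\ell$''. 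Finally, each $e_\ell=(u_\ell,v_\ell)$ being a common bridge gives $u_\ell=d(v_\ell)$ and $v_\ell=d^R(u_\ell)$, and makes available Lemma~\ref{lemma:partition-paths} and properties (1)--(4) from the proof of Lemma~\ref{lemma:above-tree-2} for $e_i$ and $e_{i+1}$.

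The core claim itself is a path-routing argument in $G$. Assuming $e_{i+1}$ is a common bridge ancestor of $x$, every path $\pi$ from $x$ to $s$ in $G$ goes through $u_{i+1}$ and, by Lemma~\ref{lemma:partition-paths}(b) applied to the bridge $e_{i+1}$ of $G^R_s$, through the edge $(u_{i+1},v_{i+1})$. I would then argue that $\pi$ is forced through $u_i$ as well: $\pi$ starts inside $D(v_i)$ and must eventually reach $s\notin D(v_i)$, and property~(3) of $e_i$ says it can re-enter $D(v_i)$ only along the strong bridge $(u_i,v_i)$; because $\breve r_{u_{i+1}}=v_i$ no common bridge separates $v_i$ from $u_{i+1}$, which together with property~(4) of $e_{i+1}$ (confining the downstream part of $\pi$ to $D^R(u_{i+1})$) prevents $\pi$ from escaping $D(v_i)$ ``for good'' before reaching $u_{i+1}$ without traversing $(u_i,v_i)$. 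This yields $x\in D^R(u_i)$, so $e_i$ is a common bridge ancestor of $x$. The same routing, applied to a path from $v_i$ to $s$, shows such a path must leave $D^R(u_{i+1})$ through $(u_{i+1},v_{i+1})$, hence passes through $u_{i+1}$; thus $v_i\in D^R(u_{i+1})$, i.e.\ $u_{i+1}$ is an ancestor of $v_i$ in $D^R$, which is precisely $e_{i+1}\csbrdom e_i$.

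The hard part is the core claim, and within it the step converting the $D$-order $e_i\csbdom e_{i+1}$ together with $x\in D^R(u_{i+1})$ into the $D^R$-facts about $e_i$: since dominance in $G_s$ and in $G^R_s$ are in general unrelated, this cannot be done by manipulating the trees $D$ and $D^R$ alone, and one must reason about concrete paths, carefully controlling where they cross the boundaries of $D(v_i)$ and $D^R(u_{i+1})$. It is essential here that $e_i$ and $e_{i+1}$ are \emph{consecutive} common bridges on $D[\breve r_{h(x)},x]$ (equivalently, that the path is started at $\breve r_{h(x)}$): without this the claim is simply false, so the argument must genuinely use $\breve r_{u_{i+1}}=v_i$. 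The only remaining subtleties are the degenerate configurations (such as $v_i=u_{i+1}$ or $v_{i+1}=x$), which are dispatched directly.
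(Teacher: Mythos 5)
Your reduction of the lemma to the ``core claim'' matches what the paper does, and the routine bookkeeping (both deriving the two parts of the lemma from it, and using $\breve r_{u_{i+1}}=v_i$ and translating ``common bridge ancestor of $x$'' into ``$x\in D^R(u_\ell)$'') is correct. The gap is in the core claim itself, and it is a real one.

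Consider your argument that every path $\pi$ from $x$ to $s$ is forced through $u_i$. You correctly observe that the prefix $\pi_1$ of $\pi$, up to and including its only use of the edge $(u_{i+1},v_{i+1})$, is confined to $D^R(u_{i+1})$. But $\pi_1$ runs from $x\in D(v_i)$ to $u_{i+1}\in D(v_i)$, so nothing compels $\pi_1$ to ever leave $D(v_i)$; and once $\pi$ has crossed $(u_{i+1},v_{i+1})$, the remaining suffix from $v_{i+1}$ to $s$ must indeed leave $D(v_i)$ at some point, but leaving $D(v_i)$ is completely unconstrained (Lemma~\ref{lemma:partition-paths} restricts only \emph{entering} $D(v_i)$, and that is the direction that would force $(u_i,v_i)$). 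So $\pi$ can perfectly well avoid $u_i$, as far as this boundary-crossing reasoning is concerned. Your derivation of $e_{i+1}\csbrdom e_i$ is also circular: to argue that a path from $v_i$ to $s$ ``must leave $D^R(u_{i+1})$ through $(u_{i+1},v_{i+1})$'' you need to know that this path starts inside $D^R(u_{i+1})$, i.e.\ $v_i\in D^R(u_{i+1})$, which is exactly what you are trying to prove.

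The missing ingredient is the loop nesting tree, which your argument never invokes. The paper first proves, using the Path Lemma on the dfs tree that generated $H$, that any path from $x$ to $u_{i+1}$ that avoids $e_i$ must stay entirely inside $D(v_i)$ and therefore forces $h(x)\in D(v_i)$ --- contradicting the fact that $v_i$ is the head of a common bridge on $D[\breve r_{h(x)},x]$, hence $D(v_i)$ is disjoint from $\breve D_{\breve r_{h(x)}}$. This is where $h(x)$, and hence the choice of starting point $\breve r_{h(x)}$, actually does the work; the observation $\breve r_{u_{i+1}}=v_i$ that you lean on is true simply by consecutiveness and does not encode the position of $h(x)$ at all. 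Once one knows that every path from $x$ to $u_{i+1}$ contains $e_i$, both $x\in D^R(u_i)$ and $v_i\in D^R(u_{i+1})$ follow by short concatenation arguments (for the first, truncate any $x$-to-$s$ path at its first visit to $u_{i+1}$; for the second, concatenate a path from $x$ to $v_i$ --- which exists and stops before $u_{i+1}$ --- with a hypothetical $v_i$-to-$s$ path avoiding $u_{i+1}$, contradicting $x\in D^R(u_{i+1})$). Without that loop-nesting step your core claim, as you have argued it, does not go through.
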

\begin{proof}

It suffices to show that if a common strong bridge $e_{\ell}$, for any $1 < {\ell} \leq j$, is a common ancestor of $x$ then $e_{\ell} \csbrdom e_{{\ell}-1}$ and $e_{{\ell}-1}$ is a common ancestor of $x$.
Let $e_{\ell-1} = (u,u')$ and $e_{\ell} = (v,v')$.
Assume by contradiction that the above statement is not true,
 i.e., either $e_{\ell}$ is not an ancestor of $e_{\ell-1}$ in $D^R$ or $e_{{\ell}-1}$ is not a common ancestor of $x$.
Then, in either case there must be a path $\pi$ in $G$ from $x$ to $v$ avoiding $e_{{\ell}-1}$.
If $\pi$ contained a vertex $x' \not\in D(u')$, then the subpath of $\pi$ from $x'$ to $v$ would include $e_{{\ell}-1}$,
due to the fact that $u$ is an ancestor of $x$ in $D$ and $v \in D(u')$. Thus, all vertices in $\pi$ are descendants of $u'$ in $D$.
Let $T$ be the dfs tree that generated $H$, and let $\mathit{pre}$ be the preorder numbering in $T$.
We claim that there is a vertex $w$ in $\pi$ such that all vertices in $\pi$ are descendants of $w$ in $T$.
The claim implies that $x \in H(w)$, so $h(x)$ is a descendant of $u'$ in $D$.
But this contradicts the fact that $h(x)$ is not a descendant of $u'$. Hence, the lemma will follow.

To prove the claim, choose $w$ to be the vertex in $\pi$ such that $pre(w)$ is minimum.
Then Lemma \ref{lemma:dfs} implies that $w$ is an ancestor of $v$ in $T$.
Let $z$ be any vertex in $\pi$. We argue that $pre(w) \leq pre(z) < pre(w)+|T(w)|$, hence $z$ is a descendant of $w$ in $\pi$.
By the choice of $w$ we have  $pre(w) \le pre(z)$, so it remains to prove the second inequality.
Suppose $pre(z) \geq pre(w)+|T(w)|$. Since $x$ is descendant of $v$ in $D$ it is also a descendant of $v$ in $T$.
So $pre(x) < pre(z)$.
By Lemma \ref{lemma:dfs}, path $\pi$ contains a common ancestor $q$ of $x$ and $z$ in $T$.
Vertex $q$ is an ancestor of $w$ in $T$, since $x \in T(w)$ and $z \not \in T(w)$.
But then $pre(q)<pre(w)$, which contradicts the choice of $w$.

Therefore, any path $\pi$ from $x$ to $v$ must contain the common bridge $e_{\ell-1}$.
We conclude that $e_{\ell} \csbrdom e_{\ell-1}$.
\end{proof}

We are now ready to describe our algorithm for computing the number $\#SCC(D(v) \cap D^R(u))$ for each common bridge $e=(u,v)$.
That is, the number of strongly connected components in the subgraph induced by $D(v) \cap D^R(u)$, which, by Corollary \ref{cor:nscc}, is equal to
the number of strongly connected components in $G\setminus e$ that are composed only by common descendants of the common bridge $e$.
Throughout the algorithm, we maintain a list of pairs $L$ of the form $\langle x,y\rangle$, where $x$ and $y$ are vertices in $G$: the pair $\langle x,y\rangle$ will correspond to a path in the common bridge forest $\mathcal{Q}$, pinpointed by $x$ and $y$, on which we wish to identify the common bridge ancestors of vertex $y$.
The common bridge ancestors of any vertex $z$ in $G$ will be computed with the help of Lemma~\ref{lemma:common-strong-bridge-relation}.
Specifically, we use $\mathcal{Q}$ to locate the endpoints of each sequence of common bridges that satisfy Lemma~\ref{lemma:common-descendants-path}, and will propagate this information to the whole sequence in a second phase.
We note that, as a special case, such a sequence may consist of only one common bridge.
To accomplish this task, we will maintain for each node $\alpha$ of $\mathcal{Q}$
(corresponding to common bridge  $\varphi^{-1}(\alpha)$ of $G_s$ and $G_s^R$) two values, denoted respectively by $start(\alpha)$ and $end(\alpha)$, defined as follows.
The value $start(\alpha)$ (resp., $end(\alpha)$)
stores the number of strongly connected components that contain only common descendants of a bridge contained in a sequence (or sequences) of common bridges starting (resp., ending) at bridge $\varphi^{-1}(\alpha)$.

The pseudocode of the algorithm is given below (see Algorithm \ref{alg:SCCsCommonDescendants}).
We first consider all vertices of $G$
one at the time (Lines 9--15).
For each vertex $x$ such that $h(x)\not \in \breve{D}_{\breve{r}_{x}}$ we add the pair $\langle h(x),x\rangle$ to the list $L$ to indicate that for each common bridge $e$ that is both in the path $D[\breve{r}_{h(x)},x]$ from $\breve{r}_{h(x)}$ to $x$ in the dominator tree $D$ and it is a common ancestor of $x$,
the set of vertices $H(x)$ induces a strongly connected component in $G \setminus e$.
By Theorem \ref{cor:scc}(c) all vertices in the strongly connected component $H(x)$ in $G \setminus e$
are also common descendants of $e$, i.e., $H(x) \subseteq D(v)\cap D^R(u)$.
After traversing all the vertices, we process the pairs that were inserted in the list $L$.
Note that there can be at most $n-1$ pairs in $L$, since
at most one pair is inserted for each vertex $x \not= s$, and no pair is inserted for $x=s$.
Furthermore, by construction, all the pairs are of the form $\langle h(x),x\rangle$.
So let $\langle h(x),x\rangle$ be a pair extracted from $L$.
We first identify, the common bridge $(w,z)$ such that $\breve{D}_w=\breve{D}_{h(x)}$ and $w$ is an ancestor of $x$ in $D$.
If $e=(w,z)$ is a common ancestor of $x$, then by Lemma~\ref{lemma:common-descendants-path}, so are all common bridges that correspond to the nodes in the path $\pi$ from $\varphi(e)$ to $\varphi(e')$ in $Q(\varphi(e))$, where $e'=(y,d^R(y))$ is the last common bridge in the path from $w$ to $x$ in $D$ such that $\varphi(e') \in Q(\varphi(e))$.
That means that $H(x)$ is a strongly connected component in $G \setminus e''$ for each common bridge $e''$ such that $\varphi(e'') \in \pi$.
To account for this, we increment $end(\varphi(e'))$ and $start(\varphi(e))$ accordingly.
In the final loop (Lines 24--33), we visit all trees in the common bridge forest $\mathcal{Q}$.
We process the nodes of each tree $Q$ of the forest $\mathcal{Q}$ in a bottom-up fashion.
When we visit a node $\alpha \in Q$ we compute $\#SCC(D(v)\cap D^R(u))$, where $e = (u,v)= \varphi^{-1}(\alpha)$, as follows. We first initialize $\#SCC(D(v)\cap D^R(u))$ to $end(\alpha)$ (Line 27).
Next (Lines 29--30),
for each child $\beta$ of $\alpha$, where $e'= \varphi^{-1}(\beta) = (u',v')$, we increment
$\#SCC(D(v)\cap D^R(u))$ by the quantity
$ \#SCC(D(v')\cap D^R(u'))-start(\beta)$.

\begin{algorithm}
	\LinesNumbered
	\DontPrintSemicolon
	\KwIn{Strongly connected digraph $G=(V,E)$}
	\KwOut{For each strong bridge $e=(x,y)$ the number $\#SCC(D(y) \cap D^R(x))$}
	\textbf{Initialization:}\;
	Compute the reverse digraph $G^R$. Select an arbitrary start vertex $s \in V$.  \;
	Compute the dominator trees $D$ and $D^R$ of the flow graphs $G_s$ and $G_s^R$, respectively.\;
	Compute the loop nesting trees $H$ and $H^R$ of the flow graphs $G_s$ and $G_s^R$, respectively.\;
	Compute the common bridge decomposition forests $\breve{\mathcal{D}}$ and the common bridge forest $\mathcal{Q}$.\;
	
	Initialize an empty list of pairs $L$.\;
	 \lForAll{nodes $\alpha \in \mathcal{Q}$} {set $end(\alpha) = start(\alpha) = 0$}
	
	\textbf{Construct list of pairs:}\;
	\ForEach{tree $\breve{D}_r$ with root $r\neq s$}
	{
	 	\ForEach{$x\in \breve{D}_r$}
	 	{
	 		\If{$h(x) \not \in \breve{D}_r$}
	 		{
	 			$L = L \cup \langle h(x),x\rangle$
	 		}
		 }
	}
	\textbf{Process pairs:}\;
	\ForEach{pair $\langle x,y\rangle\in L$}
	{
		Let $e=(w,z)$ be the common bridge such that $w \in \breve{D}_x$ and $w$ is ancestor of $y$ in $D$\;
		\If{$w$ is an ancestor of $y$ in $D^R$}
		{
            Let $e'$ be nearest common bridge that is an ancestor of $y$ in $D$ with $\varphi(e') \in Q(\varphi(e))$\;
			Set $end(\varphi(e')) = end(\varphi(e'))+ 1$ and $start(\varphi(e)) =start(\varphi(e))+ 1$
		}
	}
	
	\ForEach{tree $Q$ in $\mathcal{Q}$}
	{
		\ForEach{node $\alpha \in Q$, in a bottom-up fashion}
		{
			Let $e = \varphi^{-1}(\alpha) = (u,v)$\;
            Set $\#SCC(D(v)\cap D^R(u))=end(\alpha)$\;
			\ForEach{edge $(\alpha,\beta)$ in $Q$}
			{
                Let $e' = \varphi^{-1}(\beta)=(u',v')$\;
                Set $\#SCC(D(v)\cap D^R(u)) = \#SCC(D(v)\cap D^R(u))+ \#SCC(D(v')\cap D^R(u'))-start(\beta)$\;
			}
		}
	}
	\caption{\textsf{SCCsCommonDescendants}}
	\label{alg:SCCsCommonDescendants}
\end{algorithm}

At the end of the algorithm, $\#SCC(D(v)\cap D^R(u))$ contains the number of the strongly connected components in $G \setminus e$ that include only common descendants  of the common bridge $e = (u,v)$, as shown by the following lemma.

\begin{lemma}
Algorithm \textsf{SCCsCommonDescendants} is correct.
\end{lemma}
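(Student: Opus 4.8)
The plan is to establish correctness of Algorithm \textsf{SCCsCommonDescendants} by verifying that, for each common bridge $e=(u,v)$, the final value of $\#SCC(D(v)\cap D^R(u))$ equals the number of strongly connected components of $G\setminus e$ that consist entirely of common descendants of $e$. By Corollary~\ref{cor:nscc}(c) this is exactly the quantity we need. The strategy is to count the contribution of each vertex $x\neq s$ to the counters, using the structural characterization of strongly connected components from Theorem~\ref{cor:scc}: the strongly connected components of $G\setminus e$ contained in $D(v)\cap D^R(u)$ are precisely the sets $H(x)$ for vertices $x$ with $h(x)\notin D(v)$ (equivalently $x\in D(v)\cap D^R(u)$ and $h(x)\notin D(v)$), and such an $H(x)$ is a strongly connected component of $G\setminus e$ for exactly those common bridges $e$ that are \emph{common bridge ancestors} of $x$ lying on the path $D[\breve r_{h(x)},x]$.

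First I would fix a vertex $x$ with $h(x)\notin\breve D_{\breve r_x}$ (the pairs $\langle h(x),x\rangle$ added to $L$ are exactly these), and identify the set $S_x$ of common bridges $e=(u,v)$ for which $H(x)$ induces a strongly connected component of $G\setminus e$. Using Lemma~\ref{lemma:subtree} together with Theorem~\ref{cor:scc}(c), $S_x$ is the set of common bridges that are common bridge ancestors of $x$ and lie on $D[\breve r_{h(x)},x]$. By Lemma~\ref{lemma:common-strong-bridge-relation}, if $e_1,\dots,e_k$ are the common bridges on that path in order, then $S_x=\{e_1,\dots,e_j\}$ for the largest index $j$ with $x$ a common descendant of $e_j$ (and $S_x=\emptyset$ if $e_1$ is not a common bridge ancestor of $x$, which corresponds exactly to the failure of the test ``$w$ is an ancestor of $y$ in $D^R$'' on line~19). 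Moreover, by Lemma~\ref{lemma:common-descendants-path} the images $\varphi(e_1),\dots,\varphi(e_j)$ form precisely the path $\pi$ in $Q(\varphi(e_1))$ from $\varphi(e_1)=\varphi(e)$ to $\varphi(e_j)=\varphi(e')$, where $e'$ is the nearest common bridge ancestor of $x$ in $D$ with $\varphi(e')\in Q(\varphi(e))$ — which is exactly the $e'$ computed on line~20. So for this pair the algorithm must add $+1$ to the counter of every common bridge whose $\mathcal{Q}$-node lies on $\pi$.

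The core of the argument is then to show that the ``path increment'' is implemented correctly by recording $+1$ at the two endpoints of $\pi$ and propagating in the bottom-up sweep of $\mathcal{Q}$ on lines~24--33. I would prove the invariant: after processing all pairs, for each node $\alpha\in\mathcal{Q}$ with $e=\varphi^{-1}(\alpha)=(u,v)$,
\[
\#SCC(D(v)\cap D^R(u))\ =\ \sum_{x}\bigl[\varphi(e)\in\pi_x\bigr],
\]
where the sum is over vertices $x\neq s$ with $h(x)\notin\breve D_{\breve r_x}$ and $\pi_x$ is the $\mathcal{Q}$-path associated to $x$ (empty when $x$ has no relevant common bridge ancestor). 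This follows by induction on $\mathcal{Q}$ bottom-up: a contribution of $x$ whose path $\pi_x$ starts at $\alpha$ and ends at a strict descendant $\beta_0$ of $\alpha$ is counted at $\alpha$ because it is counted at every child $\beta$ on the way to $\beta_0$ (via $\#SCC(D(v')\cap D^R(u'))$ on line~30) and \emph{un}-counted exactly once by the $-start(\beta)$ term for the child $\beta$ on that path; a contribution whose path ends at $\alpha$ is picked up directly via $end(\alpha)$ on line~27; and a contribution whose path lies entirely below $\alpha$ is passed up unchanged. The telescoping of the $start/end$ values along each path $\pi_x$ is the crux: one must check that each such path contributes $+1$ to the counter of every node on it and $0$ to every other node, which amounts to the identity that the sum of $+1$'s at descendants minus the sum of $start$-subtractions along the tree equals the indicator of membership in $\pi_x$.

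The main obstacle I expect is the bookkeeping in this last step — precisely, verifying that a path $\pi_x$ that "branches off" below a node $\alpha$ (i.e.\ $\alpha\notin\pi_x$ but $\pi_x$ intersects the subtree $Q(\alpha)$) contributes a net zero at $\alpha$. This requires that the $-start(\beta)$ correction on line~30 is applied for exactly the right child $\beta$, and relies on the fact (from the definition of $\mathcal{Q}$ and Lemma~\ref{lemma:csb-dom}) that the $\mathcal{Q}$-paths $\pi_x$ are "downward-closed prefixes from their top endpoint" within a single tree of $\mathcal{Q}$, so that $start$ is charged to the unique top node of $\pi_x$ and $end$ to its unique bottom node, with the two lying on a single root-to-node path of $Q$. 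Once this structural fact is in hand, the induction is routine, and combining the invariant with Corollary~\ref{cor:nscc}(c) completes the proof.
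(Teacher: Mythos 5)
Your proposal is correct and follows essentially the same route as the paper: both single out, for each vertex $x$ with $h(x)\notin\breve D_{\breve r_x}$, the set of common bridges $e$ on $D[\breve r_{h(x)},x]$ that are common bridge ancestors of $x$, invoke Lemma~\ref{lemma:common-strong-bridge-relation} and Lemma~\ref{lemma:common-descendants-path} to identify this set with a contiguous ancestor--descendant path in the common bridge forest $\mathcal{Q}$, and then account for these contributions by charging $start$/$end$ at the path endpoints and telescoping in the bottom-up sweep. Your explicit invariant $\#SCC(D(v)\cap D^R(u))=\sum_x[\varphi(e)\in\pi_x]$ is a slightly more formalized framing of the same bookkeeping argument the paper gives, but it is not a different method.
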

\begin{proof}
For each vertex $x$ such that $h(x)\not \in \breve{D}_{x}$ the set $H(x)$ is a strongly connected component that contains only vertices that are common descendants of a strong bridge $e$ only if $e$ lies in the path $D[\breve{r}_{h(x)},x]$ and $e$ is a common bridge ancestor of $x$.
We prove that we correctly identify each such common bridge $e=(u,v)$ for each vertex $x$ and we account for $H(x)$ in $\#SCC(D(v) \cap D^R(u))$.
If $h(x) \not \in \breve{D}_{x}$ then by Lemma \ref{lemma:number-of-SCCs-ancestors}, $H(x)$ induces a strongly connected component in $G \setminus e'$ for all bridges $e'$ in $D[\breve{r}_{h(x)},x]$.
Therefore, to identify for which common bridges $e=(u,v)$ the set $H(x)$ induces a strongly connected component in $G \setminus e$ and $H(x) \subseteq D(v)\cap D^R(u)$, it is sufficient to check which of the common bridges in $D[\breve{r}_{h(x)},x]$ are common bridge ancestors of vertex $x$.
By Lemma \ref{lemma:common-strong-bridge-relation} if there is a bridge $e'=(w,z)$ in $D[\breve{r}_{h(x)},x]$ that is a common bridge ancestor of $x$ then all the bridges in the subpath $D[\breve{r}_{h(x)},z]$
are common bridge ancestors of $x$, and there is a path in $\mathcal{Q}$ from $\varphi(e'')$ to $\varphi(e')$ containing all common bridges in $D[\breve{r}_{h(x)},z]$, where $e''$ is the first bridge in $D[\breve{r}_{h(x)},x]$.
Let $e_1,e_2,\ldots,e_{k}$, be the common bridges in $D[\breve{r}_{h(x)},x]$, and let $e_{\ell}$ be the nearest common bridge that is an ancestor of $x$ in $D$ such that $\varphi(e_{\ell})\in Q(\varphi(e_1))$.
If $e_1$ is a common bridge ancestor of $x$, then by Lemmata \ref{lemma:common-descendants-path} and \ref{lemma:common-strong-bridge-relation} so are all the common bridges in the path from $\varphi(e_1)$ to $\varphi(e_{\ell})$.
If $e_1$ is not a common
bridge
ancestor of $x$, then by Lemma \ref{lemma:common-strong-bridge-relation} no other common bridge in $D[\breve{r}_{h(x)},x]$ is a common
bridge
ancestor of $x$.
Thus,
by simply testing whether $e_1$ is a common
bridge
ancestor of $x$
 we can determine whether all the common bridges $e_1,e_2,\ldots,e_{\ell}$,
are common
bridge
ancestors of $x$ and update their counter $\#SCC(D(v_i) \cap D^R(u_i))$, for $1\leq i\leq\ell$, where $e_i=(u_i,v_i)$.
We mark this relation by adding the pair $\langle h(x),x\rangle$ to the list $L$: the related update of the counters
$\#SCC(D(v_i) \cap D^R(u_i))$
 will be done during the second phase (Lines 17--33) when the pairs in the list $L$ will be processed.

To complete the proof, we need to show that the pairs in $L$ are handled correctly during the second phase of the algorithm.
Consider a pair $\langle x,y\rangle \in L$. Then, it must be $x=h(y)$, and by Lemma~\ref{lemma:ancestor-of-u} we have that $\breve{r}_{x}$ is an ancestor of $y$ in $D$.
Now we have to locate the common bridge ancestors of $y$ that are in the path $D[\breve{r}_x, y]$, and increase the count of the strongly connected components that contain only common descendants by one, since $H(y)$ induces a strongly connected component after deleting each one of them.
To find the common bridge ancestors of $y$, the algorithm locates the bridge $e=(w,z)$ that is an ancestor of $y$ in $D$ such that $\breve{D}_w=\breve{D}_x$.
If $e$ is also an ancestor of $y$ in $D^R$, it follows that $e$ is a common bridge ancestor of $y$.
Hence, Lemma~\ref{lemma:common-descendants-path} and  Lemma~\ref{lemma:common-strong-bridge-relation} imply that the common bridge ancestors we seek correspond to the nodes in the path $Q[\varphi(e), \varphi(e')]$, where $e'$ is the nearest common bridge that is an ancestor of $y$ in $D$ such that $\varphi(e')\in Q(\varphi(e))$.
This common bridge $e'$ is located, and then we need to increase by one the count of the strongly connected components that contain only common descendants of the common bridges in $Q[\varphi(e),\varphi(e')]$.
We do this by incrementing $end(\varphi^{-1}(e'))$ and $start(\varphi^{-1}(e))$ by one.
The actual number of strongly connected components that contain only common descendants of a common bridge $e=(u,v)$, i.e., $\#SCC(D(v) \cap D^R(u))$ is computed in the bottom-up traversal of $Q$. When we visit a node $\alpha \in \mathcal{Q}$, where $e=\varphi^{-1}(\alpha)=(u,v)$,
we first initialize its number $\#SCC(D(v) \cap D^R(u))$ to $end(\alpha)$. Next,  for each child $\varphi(e')$ of $\alpha$ in $\mathcal{Q}$,
we increment this number by $\#SCC(D(v') \cap D^R(u'))- start(\varphi(e'))$.
This way, we increase by one the count of strongly connected components that contain only common descendants of a common bridge if and only if it corresponds to a node in the path $Q[\varphi(e), \varphi(e')]$, as required.
\end{proof}

\begin{lemma}
	Given the dominator trees $D$ and $D^R$, the loop nesting trees $H$ and $H^R$, and the strong bridges of $G$, Algorithm \textsf{SCCsCommonDescendants} runs in $O(n)$ time.
\end{lemma}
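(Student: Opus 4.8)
The plan is to follow the same template as the running-time analysis of Algorithm \textsf{SCCsDescendants}: I would show that, after an $O(n)$-time preprocessing of the \emph{given} structures ($D$, $D^R$, $H$, $H^R$, and the list of strong bridges), every primitive operation performed inside the three phases of Algorithm \textsf{SCCsCommonDescendants} takes $O(1)$ worst-case time, and then bound the total number of such operations by $O(n)$ via a straightforward charging argument.

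\emph{Preprocessing.} First I would preprocess $D$ and $D^R$ to support constant-time ancestor/descendant queries~\cite{domin:tarjan}; since each dominator tree has $n-1$ edges this costs $O(n)$. Using the given list of strong bridges I can identify the common bridges (the bridges shared by $G_s$ and $G_s^R$; there are at most $n-1$ of them, as each is an edge of $D$), delete them from $D$ and $D^R$ to form the common bridge decompositions $\breve{\mathcal D}$ and $\breve{\mathcal D}^R$, and record $\breve r_x$ and $\breve r^R_x$ for every vertex $x$ by a single traversal of each dominator tree. Contracting each block of $\breve{\mathcal D}$ to its root yields the tree $D'$ whose edges are exactly the common bridges, and from $D'$ the common bridge forest $\mathcal Q$ is built in $O(n)$ time: it has one node per common bridge, and (as argued right after its definition) two of its nodes can be adjacent only when the corresponding common bridges are adjacent in $D'$, so $\mathcal Q$ has at most $n-1$ nodes and hence at most $n-1$ edges. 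Finally I would equip $D'$ and $\mathcal Q$ with $O(n)$-time, $O(1)$-query auxiliary structures (e.g.\ level-ancestor structures) and precompute, for every vertex, its nearest common-bridge ancestor in $D$ and in $D^R$; these let us implement in $O(1)$ time the two nontrivial lookups inside the pair-processing loop --- locating the common bridge $e=(w,z)$ with $w\in\breve D_x$ that is an ancestor of $y$ in $D$, and locating the nearest common bridge $e'$ that is an ancestor of $y$ in $D$ with $\varphi(e')\in Q(\varphi(e))$. Establishing that these path lookups are $O(1)$-supportable after linear preprocessing is the only genuinely delicate point of the analysis; everything else is routine counting.

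\emph{Bounding the three phases.} To bound the pair-construction loop (lines 9--16) I would observe that it is a double loop over the trees of $\breve{\mathcal D}$ and their vertices, hence touches each vertex of $G$ exactly once, performing one ancestor/descendant test and at most one insertion into $L$ per vertex; it therefore runs in $O(n)$ time and inserts at most one pair per vertex $x\neq s$, so $|L|\le n-1$. The pair-processing loop (lines 17--23) then iterates over these at most $n-1$ pairs, and by the preprocessing above each iteration costs $O(1)$ (finding $e$, testing the $D^R$-ancestry condition, finding $e'$, and incrementing $end$ and $start$), for $O(n)$ total. Finally, I would analyse the last loop (lines 24--33) as a bottom-up traversal of $\mathcal Q$: a bottom-up ordering of $\mathcal Q$ is computed in $O(n)$ time, each node $\alpha$ is charged $O(1)$ for its own initialization, and the innermost loop over the edges $(\alpha,\beta)$ of $\mathcal Q$ is charged $O(1)$ to each edge of $\mathcal Q$; since $\mathcal Q$ has at most $n-1$ nodes and hence at most $n-1$ edges, this phase also runs in $O(n)$ time. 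Summing the $O(n)$ preprocessing with the $O(n)$ cost of the three phases gives the claimed $O(n)$ bound.
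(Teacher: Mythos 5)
Your overall accounting of the three phases (pair construction visits each vertex once, the pair list has size at most $n-1$, the final loop is charged to nodes and edges of $\mathcal{Q}$) matches the paper and is correct. The gap is exactly in the piece you flag as ``the only genuinely delicate point'': you assert that a level-ancestor structure on $D'$ and a precomputed ``nearest common-bridge ancestor in $D$ and in $D^R$'' suffice to answer the lookups on lines~18 and~20 in $O(1)$ time, but you neither justify this nor supply a correct reduction. For line~18 a level-ancestor query on $D'$ does work, since the sought bridge is the unique edge leaving $\breve{D}_{h(y)}$ on the path to $y$. The problem is line~20. By Lemma~\ref{lemma:common-strong-bridge-relation}, $e'$ is the \emph{deepest} common bridge $(u',v')$ with $y \in D(v') \cap D^R(u')$, i.e.\ the deepest common bridge that is a common bridge ancestor of $y$ in \emph{both} $D$ and $D^R$ \emph{jointly}; this is not captured by a level-ancestor query (which retrieves an ancestor at a given depth, not the deepest ancestor satisfying a predicate), nor by the ``nearest common-bridge ancestor in $D$'' and ``nearest common-bridge ancestor in $D^R$'' taken separately. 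If instead you mean to precompute the joint quantity for every vertex, then that precomputation \emph{is} the hard step, and you have silently moved the difficulty into the preprocessing without explaining how to do it in $O(n)$ total.

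The paper closes exactly this hole with an offline, batched technique rather than with an online $O(1)$-query structure: it forms two lists of triples keyed by $\breve{D}$-tree root (respectively, by root of the relevant $\mathcal{Q}$-tree) and by preorder number in $D$, one coming from the pairs in $L$ and one from the common bridges, bucket-sorts them in $O(n)$, merges them, and reads off the answer to each line-18 and line-20 lookup as the nearest preceding bridge triple in the merged list. That sort-and-merge is the actual content of the running-time proof; replacing it with ``equip $D'$ and $\mathcal{Q}$ with $O(1)$-query structures'' leaves the central step unproved. To repair the argument you should either adopt the paper's offline bucket-sort-and-merge, or give an explicit $O(n)$-preprocessing, $O(1)$-query data structure for the query ``deepest common-bridge ancestor of $y$ in both $D$ and $D^R$'' (which amounts to a static ``last node on a root-to-$y$ path before a marked edge'' query on $D'$, and is not solved by level ancestors alone).
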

\begin{proof}
Since each dominator tree has $n-1$ edges, we can locate the common bridges on $D$ and construct the common bridge decomposition $\breve{\mathcal{D}}$ in $O(n)$ time. The common bridge forest $\mathcal{Q}$ can also be constructed in $O(n)$ time, since it only requires a constant-time ancestor/descendant test for $D^R$~\cite{domin:tarjan} in order to identify the common bridges that are adjacent in $\mathcal{Q}$. If the dominator trees, the loop nesting trees, and the strong bridges are available, then the initialization phase of  Algorithm \textsf{SCCsCommonDescendants} (Lines 1--7) can be implemented in a total of $O(n)$ time.

Now we turn to the main loop of the algorithm (Lines 8--15) where we visit all vertices and insert the appropriate pairs into the list $L$.
It requires $O(n)$ time since it visits every vertex once and performs constant-time computations per vertex.
Here again, we use a constant-time ancestor/descendant test for both $D$ and $D^R$.
Next, we consider the last phase (Lines 16--33). As we already mentioned, $L$ contains at most $n-1$ pairs.
The \textbf{foreach} loop in Lines 17--23 requires $O(n)$ time, except for Lines 18 and 20, which we account later.
The last \textbf{foreach} loop (Lines 24--33) takes $O(n)$ time since it visits each node of $\mathcal{Q}$ only once and iterates over its children, performing only constant-time computations per child.

To complete the proof, we need to specify how to compute efficiently on Line 18 and on Line 20
the appropriate common bridges $e$ and $e'$, respectively.
We identify the common bridge $e=(w,d^R(w))$ that is an ancestor of $y$ in $D$ such that $\breve{D}_w=\breve{D}_x$.  (Recall that $h(y)=x$.) In order to locate efficiently the appropriate common bridge for every pair, we compute them together in a pre-processing step as follows. First, we perform a preorder traversal of $D$ and assign to each vertex $u$ a preorder number $pre(u)$. Then, we create two lists of triples, $A$ and $B$. List $A$ contains the triple $\langle \breve{r}_x, pre(y), 1\rangle$ for each pair $\langle x,y\rangle\in L$. List $B$ contains the triple $\langle\breve{r}_u, pre(u), 0\rangle$ for each common bridge $(u,v)$. We sort both lists in increasing order in $O(n)$ time by bucket sort and then merge them.
(Without loss of generality, we can assume that vertices of $G$ are integers from $1$ to $n$). Let $C$ be the resulting list.
We divide $C$ into sublists $C(r)$, where $r$ is the first vertex of each triple in $C(r)$. Consider a triple $\langle r, pre(y), 1\rangle$ that corresponds to the pair $\langle x,y\rangle\in L$, where $\breve{r}_x=r$.
The desired common bridge $(w,d^R(w))$ corresponds to the last triple of the form $\langle r,pre(w),0\rangle$ that precedes $\langle r, pre(y), 1\rangle$. 

Now, given a vertex $y$ and the common bridge $e=(w,z)$ computed above for a pair $\langle x,y\rangle\in L$, we wish to find the node $\varphi(e') \in \mathcal{Q}$ such that $e'$ is the nearest common bridge that is an ancestor of $y$ in $D$ and $\varphi(e') \in Q(\varphi(e))$. 
First, we assign to each tree in the common bridge forest $\mathcal{Q}$ a distinct integer id number in $[1,n]$. Also, for each node $\alpha \in \mathcal{Q}$, we store the id of the tree $Q(\alpha)$ that contains $\alpha$ in a label $\mathit{TreeID}(\alpha)$. Our task now is to find the node $\varphi(e') \in \mathcal{Q}$ such that $e'$ is the last edge in the path $\breve{D}[\breve{r}_w, \breve{r}_x]$ with
$\mathit{TreeID}(\varphi(e')) = \mathit{TreeID}(\varphi(e))$.
We compute these nodes $\varphi(e')$ of $\mathcal{Q}$ for all pairs $\langle x,y\rangle\in L$ simultaneously, by executing a depth-first search traversal of $\breve{D}$.
Before executing the dfs, we associate with each vertex $\breve{r}$ of $\breve{D}$ the list of pairs $\langle x,y\rangle\in L$ with $\breve{r}_y = r$.
We denote this list by $L_{\breve{r}}$.
During the dfs, we maintain in an array value $\mathit{Last}[j]$ the node $\alpha=\phi(e)$ such that $e$ is the last common bridge in the current dfs path with $\mathit{TreeID}(\alpha)=j$.
When we visit a vertex $\breve{r}$ of $\breve{D}$, we process each pair $\langle x,y\rangle \in L_{\breve{r}}$ and compute the desired common bridge $e'$ as follows.
Let $e$ be the common bridge computed in Line 18 of Algorithm \textsf{SCCsCommonDescendants}, as described above.
First, we set $j=\mathit{TreeID}(\varphi(e))$ and $\alpha = \mathit{Last}[j]$. Then, we have $e' = \phi^{-1}(\alpha)$.
Clearly, the running time of this process is $O(n)$ as claimed.
\ignore{
Again we create two lists of triples, $A'$ and $B'$. List $A'$ contains the triple $\langle\alpha, pre(y), 1\rangle$ for each pair $\langle x,y\rangle\in L$, where $\alpha$ is the root of $Q(\varphi((w,z)))$,
where $e=(w,z)$ is the common bridge computed above.
List $B'$ contains the triple $\langle\alpha, pre(u), 0\rangle$ for each common bridge $(u,v)$, where $\alpha$ is the root of $Q(\varphi((u,v)))$. We bucket sort both lists $A'$ and $B'$ in  increasing order and merge them in a list $C'$. As before, we divide $C'$ into sublists $C'(\alpha)$, where $\alpha$ is the first vertex of each triple in $C'(\alpha)$. Consider a triple $\langle\alpha, pre(y), 1\rangle$.
The desired nearest common bridge $(z,d^R(z))$ for $y$ corresponds to the last triple of the form $\langle\alpha ,pre(z), 0\rangle$ that precedes $\langle\alpha, pre(y), 1\rangle$.
}
\end{proof}

Then we get the following result.

\begin{theorem}
Given the dominator trees $D$ and $D^R$, the loop nesting trees $H$ and $H^R$, and the strong bridges of a strongly connected digraph $G$, we can compute the number of strongly connected components after the deletion of a strong bridge in $O(n)$ time for all strong bridges.
\end{theorem}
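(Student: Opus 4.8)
The plan is to assemble the theorem from the three ingredients developed earlier in this section: the case analysis of Corollary~\ref{cor:nscc}, Algorithm \textsf{SCCsDescendants} for the quantities $\#SCC(D(v))$ and $\#SCC(D^R(u))$, and Algorithm \textsf{SCCsCommonDescendants} for the quantity $\#SCC(D(v)\cap D^R(u))$. The running-time bounds for both algorithms, given the dominator trees, the loop nesting trees, and the strong bridges, have already been established, so the remaining work is essentially bookkeeping.

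First I would note that we only need to process edges that are strong bridges of $G$: for any other edge $e$ the digraph $G\setminus e$ has exactly the same strongly connected components as $G$, and by Property~\ref{property:strong-bridge} there are at most $2n-2$ strong bridges, each arising as a bridge of $G_s$ or of $G_s^R$, i.e.\ as a non-root vertex of the bridge decompositions $\mathcal{D}$ and $\mathcal{D}^R$. Building $\mathcal{D}$ and $\mathcal{D}^R$ and classifying each strong bridge as type (a), (b) or (c) of Corollary~\ref{cor:nscc} costs $O(1)$ per bridge, hence $O(n)$ overall.

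Next I would run Algorithm \textsf{SCCsDescendants} to obtain $\#SCC(D(v))$ for every bridge $(u,v)$ of $G_s$ and $\#SCC(D^R(u))$ for every bridge $(u,v)$ of $G_s^R$, in $O(n)$ total time, and Algorithm \textsf{SCCsCommonDescendants} to obtain $\#SCC(D(v)\cap D^R(u))$ for every common bridge $(u,v)$, again in $O(n)$ total time. With these tables available, one final pass over the strong bridges suffices: for a bridge that lies in $G_s$ but not in $G_s^R$ output $\#SCC(D(v))+1$; for one in $G_s^R$ but not in $G_s$ output $\#SCC(D^R(u))+1$; and for a common bridge output $\#SCC(D(v))+\#SCC(D^R(u))-\#SCC(D(v)\cap D^R(u))+1$; each value is produced in $O(1)$ time by Corollary~\ref{cor:nscc}, so the total time is $O(n)$.

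There is no genuine obstacle left at this point — all the difficulty has been front-loaded into Corollary~\ref{cor:nscc} (which reduces counting $\#SCC_e(V)$ to counting components inside $D(v)$, $D^R(u)$ and $D(v)\cap D^R(u)$, via the structural Theorem~\ref{cor:scc}) and into the correctness/running-time analyses of the two algorithms. The only point deserving a sentence of care is the inclusion--exclusion step $\#SCC(D(v)\cup D^R(u)) = \#SCC(D(v)) + \#SCC(D^R(u)) - \#SCC(D(v)\cap D^R(u))$ for a common bridge: this is valid because, by Theorem~\ref{cor:scc}(c), no strongly connected component of $G\setminus e$ straddles the boundary between $D(v)\setminus D^R(u)$, $D^R(u)\setminus D(v)$ and $D(v)\cap D^R(u)$, so each region decomposes independently into components and a component counted in both $\#SCC(D(v))$ and $\#SCC(D^R(u))$ is precisely one contained in $D(v)\cap D^R(u)$.
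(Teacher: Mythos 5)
Your proof is correct and follows exactly the paper's route: the theorem is obtained by running Algorithms \textsf{SCCsDescendants} and \textsf{SCCsCommonDescendants} (each shown to run in $O(n)$ time given the dominator trees, loop nesting trees and strong bridges) and then combining their outputs per strong bridge via the case analysis of Corollary~\ref{cor:nscc}. The paper states this theorem as an immediate consequence of those two running-time lemmas and the corollary, which is precisely the assembly you describe.
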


This yields immediately the following corollary.

\begin{corollary}
Given a directed graph $G=(V,E)$,  we can find in worst-case time $O(m+n)$ a strong bridge $e$  in $G$ that maximizes/minimizes the total number of strongly connected components of $G\setminus e$.
\end{corollary}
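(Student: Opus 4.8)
The plan is to reduce to the strongly connected case and then combine the preceding theorem with the known linear-time routines that supply its inputs. First I would compute the strongly connected components $S_1,\dots,S_k$ of $G$ by depth-first search~\cite{dfs:t} in $O(m+n)$ time, and bucket the edges so that for each $S_i$ we have its induced subgraph, with $n_i$ vertices and $m_i$ edges; edges joining two distinct components are discarded, since such an edge lies on no cycle and is therefore never a strong bridge of $G$. Likewise any strong bridge of $G$ has both endpoints inside a single component $S_i$ and, because deleting it leaves every other component intact, is precisely a strong bridge of the strongly connected digraph $S_i$. Hence it suffices to process each $S_i$ separately.

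For each $S_i$ I would fix an arbitrary start vertex $s$, form the flow graphs $(S_i)_{s}$ and $(S_i)^R_{s}$, compute their dominator trees in $O(m_i+n_i)$ time with a linear-time dominators algorithm~\cite{domin:ahlt,dominators:bgkrtw,dominators:Fraczak2013,dominators:poset}, compute the loop nesting trees in $O(m_i+n_i)$ time~\cite{dominators:bgkrtw,st:t}, and extract the strong bridges of $S_i$ in $O(m_i+n_i)$ time via Property~\ref{property:strong-bridge} (i.e., as the bridges of the two flow graphs). Since $\sum_i (m_i+n_i)\le m+n$, all this preprocessing costs $O(m+n)$; moreover $S_i$ has at most $2n_i-2$ strong bridges, so the total number of strong bridges across all components is $O(n)$.

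Now I would invoke the previous theorem on each $S_i$: given its dominator trees, loop nesting trees, and strong bridges, it returns in $O(n_i)$ time, for every strong bridge $e$ of $S_i$, the number of strongly connected components of $S_i\setminus e$; summing over $i$ this is $O(n)$. Deleting such an $e$ from $G$ replaces $S_i$ by these pieces and leaves $S_j$ ($j\neq i$) untouched, so the number of strongly connected components of $G\setminus e$ equals $k-1+(\text{number of SCCs of }S_i\setminus e)$. Finally I would scan all $O(n)$ strong bridges, evaluate this $O(1)$ expression for each, and output one attaining the maximum (resp.\ the minimum); if $G$ has no strong bridge, any edge is a valid answer, since $G\setminus e$ then has exactly the same components as $G$. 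The total running time is $O(m+n)$.

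This corollary is essentially a repackaging of the preceding theorem with standard linear-time building blocks, so I do not expect a genuine technical obstacle. The only points needing care are the two bookkeeping observations above — that a strong bridge of $G$ lives inside one strongly connected component and that its removal perturbs only that component's contribution to the global decomposition — together with the charging argument that makes the per-component cost sum to $O(m+n)$ rather than $O(k\cdot(m+n))$.
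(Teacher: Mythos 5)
Your proof is correct and takes the same route as the paper: Section~\ref{sec:scc} reduces to the strongly connected case up front (``otherwise, we apply our algorithms to the strongly connected components of $G$''), and this corollary is then stated as an immediate consequence of the preceding theorem. You have simply spelled out the reduction — that strong bridges of $G$ live inside individual SCCs, that deleting one only perturbs that component's contribution, and that the per-component preprocessing costs sum to $O(m+n)$ — which the paper leaves implicit.
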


The algorithm given in this section can be extended
to a more general family of
functions
defined over the sizes of the strongly connected components of $G\setminus e$, for each edge $e$.
More precisely, let
$C_1,C_2,\ldots ,C_k$ be the
the strongly connected components in $G\setminus e$:
in this section we considered the specific function $f(|C_1|,|C_2|,...,|C_k|)=k$,  which computes the number of strongly connected components in $G\setminus e$ for each edge $e$. Let $e=(u,v)$ be an edge to be deleted. Our algorithm counts
independently the number of strongly connected components in $G \setminus e$ that contain vertices in $D(v)$, $D^R(u)$, $V\setminus{D(u) \cup D^R(u)}$, and $D(u) \cap D^R(v)$. Finally, the total number of strongly connected components in $G\setminus e$ is computed as $\#SCC_e(V) = \#SCC(D(v)) + \#SCC(D^R(u)) - \#SCC(D(v)\cap D^R(u))+1$. Note that we have to compute the quantity $\#SCC(D(v)\cap D^R(u))$ in order to ``cancel out'' the  strongly connected components induced by $D(v)\cap D^R(u)$, which are double counted in  $\#SCC(D(v)) + \#SCC(D^R(u))$.
This can be extended to the computation of more general functions, where this ``cancellation'' applies.

\begin{theorem}
\label{theorem:generic}
Let $\odot$ be an associative and commutative binary
operation such that its inverse operation $\odot^{-1}$ is defined and both $\odot$  and $\odot^{-1}$  are computable in constant time. Let $f(x)$ be a function defined on positive integers which can be computed in constant time. Given a strongly connected digraph $G$, we can compute in $O(m+n)$ time, for all edges $e$, the function $f(|C_1|)\odot f(|C_2|)\odot ...\odot f(|C_k|)$, where $C_1,C_2,...,C_k$ are the strongly connected components in $G \setminus e$.
\end{theorem}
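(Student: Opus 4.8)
The plan is to reuse the algorithm of Section~\ref{sec:SCCs-num} almost verbatim, replacing its integer counters by aggregates over $\odot$, and to observe that every step there relies only on the associativity, commutativity and invertibility of integer addition. As before it suffices to handle strong bridges: if $e$ is not a strong bridge then $G\setminus e$ is strongly connected, its unique strongly connected component has size $n$, and the answer is $f(n)$, computed once and returned for all $m$ such edges in $O(m)$ time. To streamline the description we write $\iota$ for the identity of $\odot$ (if $\odot$ has none, keep with each aggregate a Boolean flag recording whether it is empty and skip empty aggregates when combining).

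First I would reproduce the combination rule of Corollary~\ref{cor:nscc}. For a set $S\subseteq V$ that is a union of strongly connected components of $G\setminus e$, let $F(S)=\bigodot_{C} f(|C|)$, aggregated over the strongly connected components $C$ of $G\setminus e$ with $C\subseteq S$. By Theorem~\ref{cor:scc}, the quantity sought for a strong bridge $e=(u,v)$ equals $F(D(v))\odot f(n-|D(v)|)$ in case~(a), $F(D^R(u))\odot f(n-|D^R(u)|)$ in case~(b), and
\[
F(D(v))\odot F(D^R(u))\ \odot^{-1}\ F(D(v)\cap D^R(u))\ \odot\ f\!\left(n-|D(v)|-|D^R(u)|+|D(v)\cap D^R(u)|\right)
\]
in case~(c), where the $\odot^{-1}$ factor removes the components of $D(v)\cap D^R(u)$, which are aggregated in both $F(D(v))$ and $F(D^R(u))$, exactly as subtraction does in Corollary~\ref{cor:nscc}(c), and the last argument of $f$ is $|V\setminus(D(v)\cup D^R(u))|$ by inclusion--exclusion (this set always contains $s$, since $u\neq s$ for a common bridge, hence is a single nonempty component). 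Given $F(D(v))$, $F(D^R(u))$, $F(D(v)\cap D^R(u))$ and the three cardinalities, this is $O(1)$ per strong bridge, i.e.\ $O(n)$ overall.

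Next I would compute the aggregates. The sizes $|D(v)|$, $|D^R(u)|$, $|H(w)|$, $|H^R(w)|$ are subtree sizes in $D$, $D^R$, $H$, $H^R$, obtained by a few tree traversals in $O(n)$ time. For $F(D(v))$ and $F(D^R(u))$ I would run Algorithm~\textsf{SCCsDescendants} unchanged except that each counter becomes an element of the $\odot$-domain initialised to $\iota$, and, for every vertex $w$ with $h(w)\notin D(r_w)$, we replace $bundle(r_w)$ by $bundle(r_w)\odot f(|H(w)|)$ and $bundle(r_{h(w)})$ by $bundle(r_{h(w)})\odot^{-1}f(|H(w)|)$ (and symmetrically on $H^R$); the bottom-up pass over $\widehat{D}$ then returns $F(D(v))=\bigodot_{y\in\widehat{D}(v)}bundle(y)$. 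The proof of Lemma~\ref{lemma:bundle} goes through verbatim: a vertex $w\in D(v)$ contributes $f(|H(w)|)$ when $h(w)\notin D(v)$ (its two updates land at $r_w\in\widehat{D}(v)$ and at a proper ancestor of $v$) and $\iota$ otherwise (the two cancel, $\odot$ being invertible), and these are precisely the components of $G\setminus e$ inside $D(v)$ by Theorem~\ref{cor:scc}. Symmetrically, $F(D(v)\cap D^R(u))$ is computed by Algorithm~\textsf{SCCsCommonDescendants} with $start(\cdot)$, $end(\cdot)$ and the running totals turned into $\odot$-aggregates: when a pair $\langle h(x),x\rangle$ yields a common bridge ancestor $e=(w,z)$, we replace $end(\varphi(e'))$ by $end(\varphi(e'))\odot f(|H(x)|)$ and $start(\varphi(e))$ by $start(\varphi(e))\odot^{-1}f(|H(x)|)$, and the bottom-up traversal of the common bridge forest $\mathcal{Q}$ combines children with $\odot$ and removes the $start$ terms with $\odot^{-1}$, mirroring the original additions and subtractions; its correctness, which rests on Lemmata~\ref{lemma:number-of-SCCs-ancestors}, \ref{lemma:common-descendants-path} and~\ref{lemma:common-strong-bridge-relation}, merely rearranges and telescopes per-vertex contributions and is insensitive to the substitution. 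Finally, $|D(v)\cap D^R(u)|$ comes for free: run the same common-descendants procedure once more with the operation $+$ and $f(x)=x$ (or carry this integer alongside the $\odot$-aggregate), since $D(v)\cap D^R(u)$ is the disjoint union of the components $H(x)$ it contains.

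The step I expect to require the most care is checking that the two telescoping arguments --- the ``bundle'' accounting in $\widehat{D}$ and the $start$/$end$ accounting in $\mathcal{Q}$ --- use no property of integer addition beyond the hypotheses: associativity and commutativity let us reorder the contributions of vertices processed in arbitrary order along tree paths, and invertibility guarantees that the contribution removed at a path's starting endpoint cancels the one added at its ending endpoint; no ordering, idempotence, or further structure is invoked. Granting this, the preprocessing (dominator trees, loop nesting trees, strong bridges, bridge and common bridge decompositions, $\widehat{D}$, $\mathcal{Q}$, and the two accumulation passes, now over $\odot$) costs $O(m+n)$ as in Section~\ref{sec:SCCs-num}, the subtree sizes cost $O(n)$, and the per-strong-bridge combination costs $O(n)$ in total, giving the claimed $O(m+n)$ bound.
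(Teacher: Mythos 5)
Your proposal is correct and follows the same route the paper takes: the paper establishes $\#SCC_e(V) = \#SCC(D(v)) + \#SCC(D^R(u)) - \#SCC(D(v)\cap D^R(u)) + 1$ via Algorithms \textsf{SCCsDescendants} and \textsf{SCCsCommonDescendants}, and then (just before Theorem~\ref{theorem:generic}) observes that the same decomposition and cancellation scheme generalizes when $\langle +, -, 1\rangle$ is replaced by $\langle \odot, \odot^{-1}, f(|C|)\rangle$. The paper leaves this extension as a brief remark; you have spelled out exactly what it requires --- replacing the $\pm 1$ updates in the bundle accounting of $\widehat{D}$ and the $\mathit{start}/\mathit{end}$ accounting of $\mathcal{Q}$ by $\odot f(|H(w)|)$ and $\odot^{-1} f(|H(w)|)$, carrying subtree sizes for the $f$-arguments, and checking that the telescoping cancellations in Lemma~\ref{lemma:bundle} and in \textsf{SCCsCommonDescendants} invoke nothing beyond associativity, commutativity, and invertibility. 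Your handling of non-bridge edges (constant answer $f(n)$) and the missing-identity case are both sound minor additions.
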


The results of this section, i.e., counting the number of strongly connected components in $G\setminus e$, for all edges $e$, correspond to $\langle \odot, \odot^{-1} \rangle = \langle +, - \rangle$ and $f(x)=1$. By setting $\langle \odot, \odot^{-1} \rangle = \langle +, - \rangle$ and $f(x)=x(x-1)/2$, 
we can compute the number of strongly connected pairs in $G \setminus e$, for all edges $e$.
Hence, we obtain a linear-time algorithm to compute the \emph{most critical edge} of a directed graph with respect to strong connectivity, i.e., the edge $e$ of $G$ such that the number of strongly connected pairs of vertices in $G \setminus e$ is minimized. (See Section \ref{sec:vertices} for the vertex version of this problem, and \cite{Paudel:2018} for an alternative linear-time algorithm.) 
By setting $\langle \odot, \odot^{-1} \rangle = \langle *, / \rangle$ and $f(x)=x$, we can compute the product of the sizes of the strongly connected components in $G \setminus e$, for all edges $e$.

\subsection{Finding all the smallest and all the largest strongly connected components of $G\setminus e$}
\label{sec:minmax-scc}

Let $G$ be a strongly connected digraph, with $m$ edges and $n$ vertices. Since $G$ is strongly connected, $m\geq n$. In this section we consider the problem of answering the following aggregate query: ``Find the size of the largest/smallest strongly connected component of $G \setminus e$, for all edges $e$.'' We recall here that the size of a  strongly connected component is given by its number of vertices, so the largest component (resp., smallest component) is the one with the maximum (resp., minimum) number of vertices.
In the following, we restrict ourselves to the computation of the largest strongly connected components of $G\setminus e$, since the smallest strongly connected components can be  computed in a completely analogous fashion.
Note again that the naive solution is to compute the strongly connected components of $G \setminus e$ for all strong bridges $e$ of $G$, which takes $O(mn)$ time.
We will be able to provide a linear-time algorithm for this problem, which also gives an asymptotically optimal $O(m)$-time algorithm for the motivating biological application discussed in the introduction, i.e., finding the strong bridge $e$ that minimizes the size of the largest strongly connected component in $G\setminus e$. Once we find such a strong bridge $e$,
we can report the actual strongly connected components of $G\setminus e$ in $O(n)$ additional time by using the algorithm of Section \ref{sec:all-scc}.

Let $S\subseteq V$ be a subset of vertices of $G$. We denote by $\mathit{LSCC}(S)$ the size of the largest strongly connected component in the subgraph of $G$ induced by the vertices in $S$.
Also, for an edge $e$ of $G$, we denote by $\mathit{LSCC}_e(V)$ the size of the largest strongly connected component in $G \setminus e$. Our goal is to compute $\mathit{LSCC}_e(V)$ for every strong bridge $e$ in $G$.
Then, Theorem \ref{cor:scc} immediately implies the following:
\begin{corollary}
\label{cor:lscc}
Let $e=(u,v)$ be a strong bridge of $G$ and let $s$ be an arbitrary vertex in $G$.
The cardinality of the largest strongly connected component of $G \setminus e$ is equal to
\begin{itemize}
\item[(a)] $\max \{ \mathit{LSCC}(D(v)), |V \setminus D(v)| \}$ when $e$ is a bridge in $G_s$ but not in $G_s^R$.
\item[(b)] $\max \{ \mathit{LSCC}(D^R(u)), |V \setminus D^R(u)| \}$ when $e$ is a bridge in $G_s^R$ but not in $G_s$.
\item[(c)] $\max \{ \mathit{LSCC}(D(v)), \mathit{LSCC}(D^R(u)), |V \setminus ( D(v) \cup D^R(u) )| \}$ when $e$ is a common bridge of $G_s$ and $G_s^R$.
\end{itemize}
Moreover, $\mathit{LSCC}(D(v)) = \max_{w} \{ |H(w)| \ : \ w \in D(v) \mbox{ and } h(w) \not \in D(v) \}$ and
$\mathit{LSCC}(D^R(u)) = \max_{w} \{ |H^R(w)| \ : \ w \in D^R(u) \mbox{ and } h^R(w) \not \in D^R(u) \}$.
\end{corollary}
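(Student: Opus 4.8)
The plan is to read the corollary off Theorem~\ref{cor:scc} directly, using that the strongly connected components of $G\setminus e$ partition $V$, so $\mathit{LSCC}_e(V)$ is simply the maximum cardinality among those components. I would then go through the three cases of Theorem~\ref{cor:scc} and show that in each case every component of $G\setminus e$ lands in one of a constant number of ``buckets'', and that the largest component inside each bucket is exactly one of the quantities appearing in the statement.

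For case (a), Theorem~\ref{cor:scc} guarantees that every component $C$ of $G\setminus e$ satisfies $C\subseteq D(v)$ or $C=V\setminus D(v)$; the latter contributes a component of size $|V\setminus D(v)|$. For the former I would argue that the largest component of $G\setminus e$ contained in $D(v)$ has size exactly $\mathit{LSCC}(D(v))$: since $u=d(v)\notin D(v)$, the edge $e$ is not an edge inside $D(v)$, so the subgraph of $G\setminus e$ induced by $D(v)$ coincides with the subgraph of $G$ induced by $D(v)$, and by Lemma~\ref{lemma:partition-paths}(a) no edge of $G\setminus e$ enters $D(v)$ from $V\setminus D(v)$; hence the components of $G\setminus e$ contained in $D(v)$ are precisely the strongly connected components of $G$ restricted to $D(v)$. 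Taking the maximum with $|V\setminus D(v)|$ gives the formula, and case (b) is the mirror image in the reverse graph.

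For case (c), Theorem~\ref{cor:scc}(c) puts every component of $G\setminus e$ into one of $C\subseteq D(v)\setminus D^R(u)$, $C\subseteq D^R(u)\setminus D(v)$, $C\subseteq D(v)\cap D^R(u)$, or $C=V\setminus(D(v)\cup D^R(u))$. The components of the first and third kinds are exactly those contained in $D(v)$, so by the same argument as in case (a) their largest size is $\mathit{LSCC}(D(v))$; symmetrically the second and third kinds are the components contained in $D^R(u)$, of largest size $\mathit{LSCC}(D^R(u))$; and $V\setminus(D(v)\cup D^R(u))$ is itself a component of $G\setminus e$ by Lemma~\ref{lemma:above-tree-2}. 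Every component of $G\setminus e$ is therefore accounted for by one of these three quantities — the components inside $D(v)\cap D^R(u)$ are simply double-counted, which is harmless for a maximum — so $\mathit{LSCC}_e(V)$ equals their maximum.

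Finally, for the ``moreover'' assertion I would invoke the last sentence of Theorem~\ref{cor:scc} together with Lemma~\ref{lemma:subtree}: the components of $G\setminus e$ contained in $D(v)$ are exactly the sets $H(w)$ with $w\in D(v)$ and $h(w)\notin D(v)$, hence $\mathit{LSCC}(D(v))=\max\{|H(w)|\;:\;w\in D(v),\ h(w)\notin D(v)\}$, and likewise $\mathit{LSCC}(D^R(u))=\max\{|H^R(w)|\;:\;w\in D^R(u),\ h^R(w)\notin D^R(u)\}$. The only step needing genuine care — and the closest thing to an obstacle — is the identification of $\mathit{LSCC}(D(v))$, defined through the subgraph of $G$ induced by $D(v)$, with the size of the largest component of $G\setminus e$ sitting inside $D(v)$; this rests on the two observations that $e$ has an endpoint outside $D(v)$ and that, by Lemma~\ref{lemma:partition-paths}, $D(v)$ has no incoming edge from $V\setminus D(v)$ in $G\setminus e$. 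Everything else is a routine case check against Theorem~\ref{cor:scc}.
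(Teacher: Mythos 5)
Your proof is correct and follows the paper's own route: the paper simply asserts that Corollary~\ref{cor:lscc} ``immediately'' follows from Theorem~\ref{cor:scc} and gives no further argument. The one genuine subtlety you identify and close---that $\mathit{LSCC}(D(v))$ is defined via the subgraph of $G$ induced by $D(v)$ rather than via $G\setminus e$, and that the two notions of largest component coincide because $e$ has an endpoint outside $D(v)$ while, by Lemma~\ref{lemma:partition-paths}(a), no other edge of $G$ enters $D(v)$ from outside---is exactly the detail the paper leaves implicit, so your write-up matches its intent.
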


Now we develop an algorithm that applies Corollary \ref{cor:lscc}. Our algorithm, detailed below (see Algorithm \ref{algorithm:LSCC}),
uses the dominator and the loop nesting trees of $G_s$ and its reverse $G_s^R$, with respect to an arbitrary start vertex $s$, and computes for each strong bridge $e$ of $G$ the size of the largest strongly connected component of $G \setminus e$, denoted by $\mathit{LSCC}_e(V)$.

\begin{algorithm}
\LinesNumbered
\DontPrintSemicolon
 \KwIn{Strongly connected digraph $G=(V,E)$}
 \KwOut{Size of the largest strongly connected component of $G\setminus e$ for each strong bridge $e$}

 \textbf{Initialization:}\;
 	Compute the reverse digraph $G^R$. Select an arbitrary start vertex $s \in V$.  \;
	Compute the dominator trees $D$ and $D^R$ of the flow graphs $G_s$ and $G_s^R$, respectively.\;
	Compute the loop nesting trees $H$ and $H^R$ of the flow graphs $G_s$ and $G_s^R$, respectively.\;
	Compute the set of bridges $\mathit{Br}$ and $\mathit{Br}^R$ of the flow graphs $G_s$ and $G_s^R$, respectively.\;

 \textbf{Process bridges:}\;
 \ForEach{$e=(u,v)$ in $\mathit{Br}$ in a bottom-up order of $D$}
 {
	\eIf{$e \not \in \mathit{Br}^R$}
    {
	  Compute $\mathit{LSCC}(D(v))$\;
      Set $\mathit{LSCC}_e(V) = \max \{ \mathit{LSCC}(D(v)), |V|-|D(v)| \}$
      }
	{					
	  Compute $\mathit{LSCC}(D(v))$, $\mathit{LSCC}(D^R(u))$, and $|D(v) \cup D^R(u)|$\;
	 Set $\mathit{LSCC}_e(V) =\max\{ \mathit{LSCC}(D(v)), \mathit{LSCC}(D^R(u)), |V|-|D(v)\cup D^R(u)| \}$
   }
 }

\ForEach{$e=(u,v)$ in $\mathit{Br}^R$ in a bottom-up order of $D^R$}
{
\If{$e \not \in Br$}
{
Compute $\mathit{LSCC}(D^R(u))$\;
Set $\mathit{LSCC}_e(V) =\max\{\mathit{LSCC}(D^R(u)), |V|-|D^R(u)|\}$
 }
 }
 \caption{\textsf{LSCC}}
 \label{algorithm:LSCC}
\end{algorithm}

In order to get an efficient implementation of our algorithm, we need to specify how to compute efficiently the following quantities:

\begin{itemize}
\item[(a)] $\mathit{LSCC}(D(v))$  for every vertex $v$ such that the edge $(d(v),v)$ is a bridge in flow graph $G_s$.
\item[(b)] $\mathit{LSCC}(D^R(u))$ for every vertex $u$ such that the edge $(u,d^R(u))$ is a bridge in flow graph $G_s^R$.
\item[(c)] $|D(v)\cup D^R(u)|$ for every strong bridge $(u,v)$ of $G$ that is a common bridge of flow graphs $G_s$ and $G_s^R$.
\end{itemize}

We deal with computations of type (a) first. The computations of type (b) are analogous. We precompute for all vertices $v$ the number of their descendants in the loop nesting tree $H$, and we initialize $\mathit{LSCC}(D(v)) = 0$. Then we process $D$ in a bottom-up order.
For each vertex $v$, we store the nearest ancestor $w$ of $v$ in $D$ such that $(d(w),w)$ is a bridge in $G_s$.
Recall that we use the notation $r_v$ to refer to
the vertex $w$ with this property, and
we have $r_v = s$ if no such $w$ exists for $v$.
For every vertex $v$ in a bottom-up order of $D$  we update the current value of $\mathit{LSCC}(D(r_v))$ by setting
$$
\mathit{LSCC}(D(r_v))=\max\{\mathit{LSCC}(D(r_v)), |H(v)| \}.
$$
If $(d(v),v)$ is also a bridge in $G_s$, we update the current value of $\mathit{LSCC}(D(r_{d(v)}))$ by setting
$$
\mathit{LSCC}(D(r_{d(v)}))=\max\{\mathit{LSCC}(D(r_{d(v)})),  \mathit{LSCC}(D(v))\}.
$$
These computations take $O(n)$ time in total.

Finally, we need to specify how to compute the values of type (c), that is, we need to compute the cardinality of the union $D(v)\cup D^R(u)$ for each strong bridge $(u,v)$ that is a common bridge in both flow graphs $G_s$ and $G_s^R$.
Since $|D(v)\cup D^R(u)| = |D(v)| + |D^R(u)| - |D(v)\cap D^R(u)|$, it suffices to compute the cardinality of the intersections $D(v)\cap D^R(u)$ for all common bridges $(u,v)$.
We describe next how to compute these values in $O(n)$ time.
This gives an implementation of Algorithm \textsf{LSCC} which runs in $O(m+n)$ time in the worst case, or in $O(n)$ time in the worst case once the dominator trees, loop nesting trees and strong bridges are available.

\subsubsection*{Computing common descendants of strong bridges}
\label{sec:commonDescendants}

We now consider the problem of computing
the number of common descendants, i.e.,
the cardinality of the intersections $D(v)\cap D^R(u)$, for all common bridges $e=(u,v)$.
Let $x$ be a common descendant of a strong bridge $e=(u,v)$ (i.e., $x\in D(v)$ and $x\in D^R(u)$), and let $C$ be the strongly connected component containing $x$ in $G\setminus e$.
By Theorem \ref{cor:scc}(c), $C \subseteq D(v) \cap D^R(u)$, i.e., all vertices in the same strongly connected component $C$ of $G \setminus e$ as $x$ must also be common descendants of $e$.
Therefore, the number of common descendants of a strong bridge $e$ can be computed as the sum of the sizes of the strongly connected components containing a vertex that is a common descendant of $e$. This observation will allow us to solve efficiently our problem with a simple variation of Algorithm \textsf{SCCsCommonDescendants} from Section \ref{sec:SCCs-num}.

Recall that Algorithm \textsf{SCCsCommonDescendants} maintains for each strong bridge $e=(u,v)$ a counter $\#SCC(D(v) \cap D^R(u))$ for the number  of strongly connected components in $G\setminus e$ that contain only vertices in $D(v) \cap D^R(u)$.
The high-level idea behind the new algorithm is
to maintain for each strong bridge $e=(u,v)$ a counter for \emph{the sum of the sizes} of the strongly connected components in $G\setminus e$ that contain only vertices in $D(v) \cap D^R(u)$. We can maintain those counters  by proceeding as in Algorithm \textsf{SCCsCommonDescendants}. Exactly as before, we identify for each vertex $x$ all the common bridges $e$ after whose deletion the set $H(x)$ induces a strongly connected component containing only vertices that are common descendants of $e$.
However, this time we need to compute for each common bridge $e=(u,v)$ the total number of common descendants (i.e., vertices in $D(v) \cap D^R(u)$)
rather than the number of strongly connected components in $D(v) \cap D^R(u)$:
when we process $H(x)$ we increase the counter of the common bridge $e$ by $|H(x)|$ (rather than by 1 as before).
This is correct since by Theorem \ref{cor:scc} all vertices in $H(x)$ are common descendants of $e$.

We next describe the two small modifications needed in the pseudocode of Algorithm \textsf{SCCsCommonDescendants} to deal with our problem.
Recall that Algorithm \textsf{SCCsCommonDescendants}
added a pair
$\langle{}h(x),x\rangle{}$ to the list $L$ (Lines 9--15) to indicate that
the set $H(x)$ induces a strongly connected component in $G\setminus e$,
for each common bridge $e$ that lies in the path $D[\breve{r}_{h(x)},x]$ from $\breve{r}_{h(x)}$ to $x$ in the dominator tree $D$ and that is also a common ancestors of $x$.
This time, we will insert a triple
of the form $\langle{}h(x),x,|H(x)|\rangle{}$, to denote that the set $H(x)$ contributes $|H(x)|$ common descendants to all common bridges in $D[\breve{r}_{h(x)},x]$ that are common ancestors of $x$.
Algorithm \textsf{SCCsCommonDescendants} used the information stored in the pairs
$\langle{}h(x),x\rangle{}$ while processing the list $L$ in Lines 17--23: for each pair $\langle{}h(x),x\rangle{}$, the algorithm incremented by 1 the values $start(\varphi(e))$ and $end(\varphi(e'))$, where $e$ and $e'$ were respectively the first and last bridges in the sequence of common strong bridges in $D[\breve{r}_{h(x)},x]$ that are common ancestors of $x$.
The new algorithm will process a triple in similar fashion:
when a triple $\langle{}h(x),x,|H(x)|\rangle{}$
is extracted from the list $L$, the algorithm will increment by $|H(x)|$ the corresponding values $start(\varphi(e))$ and $end(\varphi(e'))$.
Clearly these small modifications do not affect the $O(n)$ time complexity of Algorithm \textsf{SCCsCommonDescendants}:

\begin{lemma}
Given the dominator trees $D$ and $D^R$, the loop nesting trees $H$ and $H^R$, and the common bridges of $G$, we can compute the number of the common descendants of all common strong bridges in $O(n)$ time.
\end{lemma}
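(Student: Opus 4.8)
The plan is to reuse Algorithm~\textsf{SCCsCommonDescendants} almost verbatim, replacing its ``count by one'' bookkeeping by a ``count by $|H(x)|$'' bookkeeping. First I would observe that, by Lemma~\ref{lemma:common-bridge} (equivalently Theorem~\ref{cor:scc}(c)), for a common bridge $e=(u,v)$ every strongly connected component $C$ of $G\setminus e$ that meets $D(v)\cap D^R(u)$ is entirely contained in $D(v)\cap D^R(u)$; hence the strongly connected components of $G\setminus e$ that lie inside $D(v)\cap D^R(u)$ partition $D(v)\cap D^R(u)$, and therefore
$$|D(v)\cap D^R(u)| \;=\; \sum_{C\subseteq D(v)\cap D^R(u)} |C|,$$
where the sum ranges over strongly connected components $C$ of $G\setminus e$. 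By Corollary~\ref{cor:nscc}, each such $C$ equals $H(w)$ for a vertex $w\in D(v)$ with $h(w)\notin D(v)$. These sets $H(w)$ are precisely the ones that Algorithm~\textsf{SCCsCommonDescendants} already locates and charges, with weight $1$, to the common bridge $e$; so the only change needed is to charge $|H(w)|$ in place of $1$.

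Concretely, I would modify Algorithm~\textsf{SCCsCommonDescendants} in three places. (i)~In an $O(n)$-time bottom-up pass over the loop nesting tree $H$, compute the subtree size $|H(x)|$ for every vertex $x$. (ii)~In the list-construction phase, whenever a vertex $x$ with $h(x)\notin\breve{D}_{\breve{r}_x}$ is encountered, insert into $L$ the triple $\langle h(x),x,|H(x)|\rangle$ instead of the pair $\langle h(x),x\rangle$. (iii)~In the pair-processing phase, when this triple yields (exactly as before, via the common bridge forest $\mathcal{Q}$ and Lemmata~\ref{lemma:common-descendants-path} and~\ref{lemma:common-strong-bridge-relation}) the first common bridge $e$ and the last common bridge $e'$ of the relevant sequence of common bridge ancestors of $x$, increment $start(\varphi(e))$ and $end(\varphi(e'))$ by $|H(x)|$ rather than by $1$. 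The final bottom-up traversal of $\mathcal{Q}$ is left unchanged; it now accumulates, for each node $\alpha$ with $\varphi^{-1}(\alpha)=(u,v)$, the value $|D(v)\cap D^R(u)|$ in the slot that previously held $\#SCC(D(v)\cap D^R(u))$.

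For correctness I would invoke the correctness proof of Algorithm~\textsf{SCCsCommonDescendants} essentially word for word, with the constant ``$1$'' replaced by ``$|H(x)|$'': the telescoping / ``cancellation'' argument carried out along each tree $Q$ of $\mathcal{Q}$ (initialising a node's value to $end(\alpha)$ and then adding, for each child $\beta$, the child's value minus $start(\beta)$) is purely additive, hence valid for arbitrary positive integer weights and not just unit weights, and Theorem~\ref{cor:scc}(c) guarantees that every set $H(x)$ charged to a common bridge $e=(u,v)$ satisfies $H(x)\subseteq D(v)\cap D^R(u)$, so no vertex is counted for a bridge of which it is not a common descendant, while the partition property noted above ensures that every common descendant is counted exactly once. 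For the running time I would note that the subtree-size precomputation costs $O(n)$, a triple carries only $O(1)$ extra data over a pair, and every other line is identical to Algorithm~\textsf{SCCsCommonDescendants}, whose running time has already been shown to be $O(n)$ once the dominator trees, loop nesting trees, and strong bridges are available; hence the modified algorithm also runs in $O(n)$ time. I do not anticipate a genuine obstacle; the single point deserving care is confirming that the weighted accounting along $\mathcal{Q}$ stays exact --- but since every triple contributes the same weight $|H(x)|$ both to a $start$ value and to an $end$ value, the exactness of the unit-weight bookkeeping transfers verbatim.
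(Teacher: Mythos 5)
Your proposal is correct and matches the paper's argument essentially verbatim: the paper likewise modifies Algorithm \textsf{SCCsCommonDescendants} by inserting triples $\langle h(x),x,|H(x)|\rangle$ in place of pairs and incrementing the $start(\varphi(e))$ and $end(\varphi(e'))$ values by $|H(x)|$ rather than by $1$, invoking Theorem~\ref{cor:scc}(c) to justify that each $H(x)$ charged consists entirely of common descendants. Both the weighted-telescoping correctness argument and the $O(n)$ time bound you give coincide with the paper's.
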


Hence, we have the following results.

\begin{theorem}
Given the dominator trees $D$ and $D^R$, the loop nesting trees $H$ and $H^R$, and the strong bridges of $G$, we can compute the size of the largest or the smallest strongly connected component after the deletion of a strong bridge in $O(n)$ time for all strong bridges.
\end{theorem}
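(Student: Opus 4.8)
The plan is to verify that Algorithm~\textsf{LSCC} is correct and that, given the stated preprocessed structures, each of its steps costs $O(n)$ in total. Correctness is essentially a restatement of Corollary~\ref{cor:lscc}: the two \textbf{foreach} loops together consider every strong bridge $e=(u,v)$ exactly once --- a common bridge is handled inside the first loop (the branch $e\in\mathit{Br}^R$) and is skipped by the second, which processes only bridges of $G_s^R$ that are not bridges of $G_s$ --- and in each of the three branches the algorithm evaluates precisely the expression of Corollary~\ref{cor:lscc}(a), (b) or (c). Hence the only substantive work is to show how to obtain, in $O(n)$ time overall, the ingredients of those expressions: the quantities of type (a) $\mathit{LSCC}(D(v))$ for every bridge head $v$ of $G_s$, of type (b) $\mathit{LSCC}(D^R(u))$ for every bridge head $u$ of $G_s^R$, and of type (c) $|D(v)\cup D^R(u)|$ for every common bridge $(u,v)$.

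For type (a) I would argue as in the paragraph preceding the theorem statement: precompute $|H(w)|$ for every vertex $w$ by one bottom-up pass over $H$, set $\mathit{LSCC}(D(r))=0$ for every root $r$ of the bridge decomposition $\mathcal{D}$, and then traverse $D$ bottom-up, at every vertex $v$ performing $\mathit{LSCC}(D(r_v))\leftarrow\max\{\mathit{LSCC}(D(r_v)),|H(v)|\}$ and, whenever $(d(v),v)$ is a bridge of $G_s$, propagating $\mathit{LSCC}(D(r_{d(v)}))\leftarrow\max\{\mathit{LSCC}(D(r_{d(v)})),\mathit{LSCC}(D(v))\}$. By Lemma~\ref{lemma:number-of-SCCs-ancestors} and Corollary~\ref{cor:lscc}, the strongly connected components of $G\setminus e$ that lie inside $D(v)$ (for $e=(d(v),v)$ a bridge of $G_s$) are exactly the subtrees $H(w)$ with $w\in D(v)$ and $h(w)\notin D(v)$, and since every $H(x)$ with $x\in D(v)$ is contained in one such maximal loop, a straightforward induction on the bottom-up order shows the recurrence ends with $\mathit{LSCC}(D(v))=\max_w\{|H(w)|:w\in D(v),\,h(w)\notin D(v)\}$. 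This is a constant number of tree traversals with $O(1)$ work per vertex, hence $O(n)$; type (b) is symmetric, working with $D^R$, $H^R$ and $\mathcal{D}^R$.

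For type (c) I would use $|D(v)\cup D^R(u)|=|D(v)|+|D^R(u)|-|D(v)\cap D^R(u)|$. The subtree sizes $|D(v)|$ and $|D^R(u)|$ come from the standard bottom-up subtree-size computation on $D$ and $D^R$, in $O(n)$. The only nontrivial term, $|D(v)\cap D^R(u)|$ for all common bridges, is delivered by the preceding lemma: the variant of Algorithm~\textsf{SCCsCommonDescendants} that inserts triples $\langle h(x),x,|H(x)|\rangle$ and accumulates $|H(x)|$ (rather than $1$) into the $start(\cdot)$ and $end(\cdot)$ counters outputs, for every common bridge $e=(u,v)$, the number of common descendants $|D(v)\cap D^R(u)|$ in $O(n)$ time; its correctness is the correctness proof of Algorithm~\textsf{SCCsCommonDescendants} read with ``number of components'' replaced by ``sum of component sizes'', which is legitimate because by Theorem~\ref{cor:scc}(c) every strongly connected component of $G\setminus e$ meeting $D(v)\cap D^R(u)$ is contained in $D(v)\cap D^R(u)$. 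Assembling everything, all the tables above are built in $O(n)$, each of the $O(n)$ strong bridges is then processed with $O(1)$ lookups and comparisons, so Algorithm~\textsf{LSCC} runs in $O(n)$ time; the smallest-component version is identical with every $\max$ replaced by $\min$. The one point needing care --- and the only place a reader might worry --- is precisely this type-(c) reduction, namely that replacing the unit increments of Algorithm~\textsf{SCCsCommonDescendants} by $|H(x)|$ really yields $|D(v)\cap D^R(u)|$ rather than an overcount; Theorem~\ref{cor:scc}(c) is exactly what rules out double counting.
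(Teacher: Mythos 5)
Your proof follows the paper's route exactly: case analysis by Corollary~\ref{cor:lscc}, the two bottom-up passes over $D$ and $D^R$ for $\mathit{LSCC}(\widetilde{D}(\cdot))$, inclusion--exclusion for $|D(v)\cup D^R(u)|$, and the same modification of Algorithm~\textsf{SCCsCommonDescendants} (replacing the unit increments by $|H(x)|$, justified by Theorem~\ref{cor:scc}(c) which guarantees the strongly connected components inside $D(v)\cap D^R(u)$ actually partition that set). Nothing is missing, and the argument matches the paper's proof in structure and substance.
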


\begin{corollary}
Given a strongly connected directed graph $G$,  we can find in worst-case time $O(m+n)$ a strong bridge $e$  that minimizes/maximizes the size of the largest/smallest strongly connected component in $G\setminus e$.
\end{corollary}

\section{Computing $2$-edge-connected components}
\label{sec:2ECBs}

In this section we describe a new and simpler linear-time algorithm for computing the $2$-edge-connected components of a strongly connected digraph $G$. See Figure \ref{figure:2ECB-example}.
As the algorithms in the previous sections, we will use again the dominator trees, $D$ and $D^R$, and the loop nesting trees, $H$ and $H^R$, of flow graphs $G_s$ and $G_s^R$, with respect to an arbitrary start vertex $s$.

\begin{figure}[t!]
\begin{center}
\centerline{\includegraphics[trim={0 0 0 0cm}, clip=true, width=1.4\textwidth]{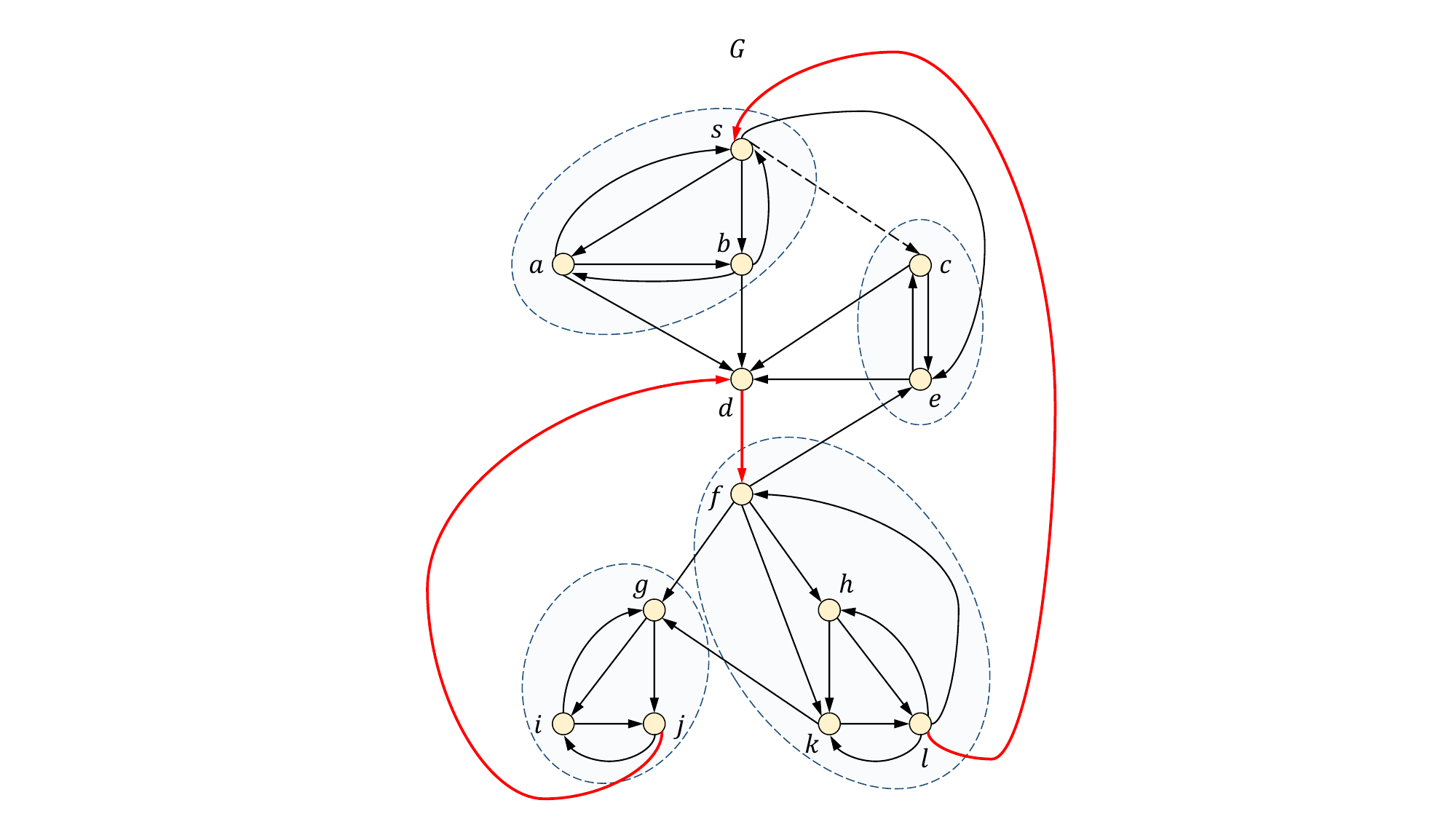}}	
\caption{The $2$-edge-connected components of the digraph of Figure \ref{figure:example1ab}. Strong bridges of $G$ are shown in red. (Better viewed in color.)}
\label{figure:2ECB-example}
\end{center}
\end{figure}

Recall the bridge decomposition
of $D$ and $D^R$ into forests $\mathcal{D}$ and $\mathcal{D}^R$, that we obtain by deleting from $D$ and $D^R$ the bridges of $G_s$ and $G_s^R$ respectively. We denote by $D_u$ (resp., $D^R_u$) the tree in $\mathcal{D}$ (resp., $\mathcal{D}^R$) containing vertex $u$, and by $r_u$ (resp., $r^R_u$) the root of $D_u$ (resp., $D^R_u$). This bridge decomposition gives an initial partition of the vertices into coarser components, as the following lemma suggests:
\begin{lemma} \emph{(\cite{2ECC:GILP:TALG})}
\label{lemma:bridge-decomposition}
Any two vertices $u$ and $v$ are $2$-edge-connected in $G$ only if $r_u = r_v$ and $r^R_u = r^R_v$.
\end{lemma}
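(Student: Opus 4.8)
The plan is to prove the contrapositive: if $r_u \neq r_v$ or $r_u^R \neq r_v^R$, then $u$ and $v$ are not $2$-edge-connected in $G$. The guiding principle is a Menger-type observation: if there were two edge-disjoint paths from $u$ to $v$, then for every edge $e$ at least one of the two avoids $e$, so $u$ still reaches $v$ in $G\setminus e$; symmetrically for the direction from $v$ to $u$. Hence it suffices to exhibit a single strong bridge whose removal destroys reachability in one of the two directions between $u$ and $v$.

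First I would record the combinatorial characterization of the bridge decomposition: since $\mathcal{D}$ is obtained from $D$ by deleting exactly the bridges of $G_s$, two vertices $u$ and $v$ lie in the same tree of $\mathcal{D}$ (equivalently $r_u = r_v$) if and only if the unique tree path of $D$ joining $u$ and $v$ contains no bridge of $G_s$. Consequently, if $r_u \neq r_v$ there is a bridge $e = (a,b)$ of $G_s$ on the $D$-path between $u$ and $v$; by Property~\ref{property:strong-bridge} this $e$ is a strong bridge of $G$, with $a = d(b)$. Since this path runs through $\ell = \mathrm{lca}_D(u,v)$ and $b$ is a proper descendant of $\ell$ on one of its two branches, exactly one of $u, v$ is a descendant of $b$ in $D$; say $v \in D(b)$ and $u \notin D(b)$ (the other case is handled symmetrically, with the roles of $u$ and $v$ interchanged).

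Now I would invoke Lemma~\ref{lemma:partition-paths}(a) with its free vertex $w$ set to $u$ and its bridge set to $e=(d(b),b)$: since $u$ is not a descendant of $b$ in $D$ while $v$ is, every simple path of $G$ from $u$ to $v$ must contain $e$. Therefore $u$ cannot reach $v$ in $G\setminus e$, which rules out two edge-disjoint $u\!\to\!v$ paths in $G$ (one of them would avoid $e$, and extracting a simple subpath would contradict the lemma). Hence $u \not\leftrightarrow_{\mathrm{2e}} v$. The case $r_u^R \neq r_v^R$ is entirely analogous on the reverse flow graph: there is a bridge of $G_s^R$, that is, an edge $(x,d^R(x))$ which is a strong bridge of $G$, lying on the $D^R$-path between $u$ and $v$, with exactly one of $u, v$ in $D^R(x)$; applying Lemma~\ref{lemma:partition-paths}(b) shows that whichever of $u, v$ lies in $D^R(x)$ cannot reach the other once this strong bridge is removed, again ruling out $2$-edge-connectivity.

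The only delicate point — and hence the main, though minor, obstacle — is bookkeeping the direction of the lost reachability across the two sub-cases for $G_s$ and the two for $G_s^R$: depending on which of $u$ and $v$ falls on the $b$-side (resp.\ the $D^R(x)$-side) of the separating strong bridge, the destroyed connection is either ``$u$ reaches $v$'' or ``$v$ reaches $u$'', and one must apply the matching part (a) or (b) of Lemma~\ref{lemma:partition-paths} with the matching choice of the free vertex. Everything else follows directly from the definition of the bridge decomposition and of $2$-edge-connectivity.
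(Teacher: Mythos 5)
Your proof is correct. The paper does not reprove Lemma~\ref{lemma:bridge-decomposition} (it is cited from~\cite{2ECB}), but your argument is exactly the one the framework invites: deleting bridges of $G_s$ from $D$ to form $\mathcal{D}$ means $r_u \neq r_v$ iff some bridge $(d(b),b)$ lies on the $D$-path between $u$ and $v$, putting exactly one endpoint in $D(b)$ (no common descendant of $b$ strictly below $\mathrm{lca}_D(u,v)$ can be an ancestor of both), and Lemma~\ref{lemma:partition-paths}(a) then makes $(d(b),b)$ a separating edge for $u$ and $v$, contradicting the existence of two edge-disjoint paths in the relevant direction; the $D^R$ case is symmetric via part~(b).
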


We briefly review the algorithm in \cite{2ECC:GILP:TALG} in order to highlight the differences in our approach.\footnote{We note that \cite{2ECC:GILP:TALG} refers to the $2$-edge-connected components of a digraph as \emph{$2$-edge-connected blocks}.}
In \cite{2ECC:GILP:TALG}, the bridge decomposition of $D$ is used in order to partition the vertices into coarse components that are in the same tree of $\mathcal{D}$. Then, for each tree $D_r \in \mathcal{D}$, it constructs an auxiliary graph $G_r$ that maintains the same $2$-edge-connectivity relation as $G$ for the vertices in $D_r$. This process is repeated in each $G_r^R$, producing a second level of auxiliary graphs. Finally, the $2$-edge-connected components of $G$ are formed by the strongly connected components of the second-level auxiliary graphs, after removing a specific strong bridge in each of them.
Similar to the algorithm in \cite{2ECC:GILP:TALG} we use the dominator trees to get an approximate partition of the vertices into coarser components. However, differently from \cite{2ECC:GILP:TALG} we consider the bridge decomposition of both $D$ and $D^R$ at the same time, as suggested by Lemma \ref{lemma:bridge-decomposition}, and
exploit the information about the hierarchy of strongly connected subgraphs provided by the two loop nesting trees. This will simplify substantially the resulting algorithm.

We define a relation among the vertices with respect to the bridge decomposition $\mathcal{D}$ of the dominator tree $D$ and the loop nesting tree $H$ of $G$ as follows.
We say that a vertex $x$ is \emph{boundary} in $H$ if $h(x) \not\in D_x$, i.e., when $x$ and its parent in $H$ lie in different trees of $\mathcal{D}$. As a special case, we also let $s$ be a boundary vertex of $H$.
The \emph{nearest boundary vertex} of $x$ in $H$, denoted by $h_x$, is the nearest ancestor of $x$ in $H$ that is a boundary vertex in $H$. Hence, if $r_x = s$ then $h_x = s$. Otherwise, $h_x$ is the unique ancestor of $x$ in $H$ such that $h_x \in D_x$ and $h(h_x) \not \in D_x$.
We define the \emph{nearest boundary vertex} of $x$ in $H^R$ similarly.
A vertex $x$ is \emph{boundary} in $H^R$ if $h^R(x) \not\in D^R_x$, i.e., when $x$ and its parent in $H^R$ lie in different trees of $\mathcal{D}^R$. Again, we let $s$ be a boundary vertex of $H^R$.
Then, the \emph{nearest boundary vertex} of $x$ in $H^R$, denoted by $h^R_x$, is the nearest ancestor of $x$ in $H^R$ that is a boundary vertex in $H^R$.

As we show below, we can compute $h_x$ and $h^R_x$, for all vertices $x$, in $O(n)$ time. First we describe how to use these concepts in order to compute the $2$-edge-connected components of $G$.  Our new algorithm, dubbed \textsf{HD2ECC}
since it uses the loop nesting trees $H$ and $H^R$, and the dominator trees $D$ and $D^R$, works as follows. We first assign a label to each vertex $x$ that specifies the location of $x$ in the dominator and loop nesting trees. Specifically we set $label(x) = \langle r_x, h_x, r^R_x, h^R_x\rangle$. These labels have the property that two vertices $x$ and $y$ are $2$-edge-connected if and only if they have exactly the same label.
Thus, we can form the $2$-edge-connected components by bucket sort in $O(n)$ time. That is, we form a list of tuples $\langle label(x),x\rangle$ and sort them lexicographically by label. The pseudocode of the algorithm is given below (see Algorithm \ref{algorithm:HD2ECC})

\begin{algorithm}
\LinesNumbered
\DontPrintSemicolon
 \KwIn{Strongly connected digraph $G=(V,E)$}
 \KwOut{The $2$-edge-connected components of $G$}
 \textbf{Initialization:}\;
 	Compute the reverse digraph $G^R$.
Select an arbitrary start vertex $s \in V$.

	Compute the dominator trees $D$ and $D^R$ of the flow graphs $G_s$ and $G_s^R$, respectively.\;
	
	Compute the loop nesting trees $H$ and $H^R$ of the flow graphs $G_s$ and $G_s^R$, respectively.\;
	
		Compute $\mathcal{D}$ (the bridge decomposition of $D$) and $\mathcal{D}^R$ (the bridge decomposition of $D^R$).\;

 \ForEach{$x \in V$}
 {Find the roots $r_x$ and $r^R_x$ in the bridge decomposition\;
  Find the nearest boundary vertices $h_x$ and $h^R_x$\;
  Set $label(x) = \langle r_x, h_x, r^R_x, h^R_x\rangle$\;
}

\textbf{Computation of $2$-edge-connected components:}\;
Sort the tuples $\langle label(x),x\rangle$ lexicographically by their labels\;
Partition the vertices into components, where $x, y \in V$ are in the same component if and only if $label(x) = label(y)$
\caption{\textsf{HD2ECC}}
\label{algorithm:HD2ECC}
\end{algorithm}

Next we prove the correctness of our algorithm.
Let $x$ and $y$ be two distinct vertices in $G$. We say that a strong bridge $e$ \emph{separates $x$ and $y$} (or equivalently that $e$ is a \emph{separating edge for $x$ and $y$}) if either all paths from $x$ to $y$ or all paths from $y$ to $x$ contain edge $e$ (i.e., $x$ and $y$ belong to different strongly connected components of $G\setminus e$).
Clearly, $x$ and $y$ are $2$-edge-connected if and only if there exists no separating edge for them.
We give two key lemmata that supplement the results from Section~\ref{sec:scc} and form the basis of our algorithm.

\begin{lemma}
\label{separators-ancestors}
Let $e$ be strong bridge of $G$ that is a separating edge for vertices $x$ and $y$. Then $e$ must appear in at least one of the paths $D[s,x]$, $D[s,y]$, $D^R[s,x]$, and $D^R[s,y]$.
\end{lemma}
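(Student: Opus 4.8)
The plan is to show the contrapositive: if a strong bridge $e=(a,b)$ appears in none of the four tree paths $D[s,x]$, $D[s,y]$, $D^R[s,x]$, $D^R[s,y]$, then $e$ does not separate $x$ and $y$, i.e., $x$ and $y$ lie in the same strongly connected component of $G\setminus e$. First I would recall from Property~\ref{property:strong-bridge} that $e$ is a strong bridge exactly when $e$ is a bridge of $G_s$ (so $a=d(b)$ and $e$ appears on $D[s,w]$ iff $w\in D(b)$) or a bridge of $G_s^R$ (so $b=d^R(a)$ and $e$ appears on $D^R[s,w]$ iff $w\in D^R(a)$), or both. Accordingly I would split into the three cases (a), (b), (c) used throughout Section~\ref{sec:scc}, invoking Theorem~\ref{cor:scc} to describe the strongly connected components of $G\setminus e$ in each case.

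In case (a), $e$ is a bridge of $G_s$ only. That $e$ lies on neither $D[s,x]$ nor $D[s,y]$ means $x\notin D(b)$ and $y\notin D(b)$, so by Theorem~\ref{cor:scc}(a) both $x$ and $y$ lie in the single component $V\setminus D(b)$ of $G\setminus e$; hence $e$ does not separate them. Case (b) is symmetric using $D^R(a)$. In case (c), $e$ is a common bridge, and the hypothesis gives $x,y\notin D(b)$ and $x,y\notin D^R(a)$, so both $x$ and $y$ lie in $V\setminus(D(b)\cup D^R(a))$, which by Theorem~\ref{cor:scc}(c) (together with Lemma~\ref{lemma:above-tree-2}) is again a single strongly connected component of $G\setminus e$. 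In every case $x$ and $y$ are in the same component of $G\setminus e$, so $e$ is not a separating edge for $x$ and $y$, completing the contrapositive.

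I would then conclude by contraposition: if $e$ separates $x$ and $y$, then $e$ must appear on at least one of $D[s,x]$, $D[s,y]$, $D^R[s,x]$, $D^R[s,y]$. The only subtlety worth spelling out is the translation between "$e$ appears on the tree path $D[s,w]$" and "$w$ is a descendant of $b$ in $D$" (and the analogous statement for $D^R$), which is immediate from the fact that $e=(d(b),b)$ is a bridge of $G_s$ and that $D$ is the dominator tree, so all $s$-to-$w$ paths contain $e$ precisely when $w\in D(b)$. Given the machinery already established, I expect no real obstacle here — the main point is simply to organize the case analysis cleanly and make sure that in the common-bridge case we are invoking the right part of Theorem~\ref{cor:scc} so that $V\setminus(D(b)\cup D^R(a))$ is recognized as one component rather than several.
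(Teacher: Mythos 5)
Your proof is correct, but it follows a genuinely different route from the paper's. The paper's proof is a direct contradiction-by-path-concatenation argument that uses only the basic properties of dominator trees: if $e=(u,v)$ appears on none of the four tree paths, then $(v,u)\notin D^R[s,x]$ gives a path from $x$ to $s$ avoiding $e$, and $(u,v)\notin D[s,y]$ gives a path from $s$ to $y$ avoiding $e$; concatenating yields a path from $x$ to $y$ avoiding $e$, and symmetrically from $y$ to $x$, contradicting separation. This is self-contained and needs none of the structural results of Section~\ref{sec:scc}. Your proof instead routes through Theorem~\ref{cor:scc}: after translating ``$e\notin D[s,w]$'' and ``$e\notin D^R[s,w]$'' into the descendant conditions $w\notin D(v)$ and $w\notin D^R(u)$, you observe that in each of the three strong-bridge cases both $x$ and $y$ fall into the unique ``outer'' strongly connected component $V\setminus D(v)$, $V\setminus D^R(u)$, or $V\setminus\bigl(D(v)\cup D^R(u)\bigr)$ of $G\setminus e$. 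Both arguments are valid; the tradeoff is that your version leans on the full SCC characterization (Lemmata~\ref{lemma:above-tree-1} and~\ref{lemma:above-tree-2} behind Theorem~\ref{cor:scc}), whereas the paper's saves that heavier machinery and gets the same conclusion from two path concatenations. Your approach is perhaps more conceptually transparent in context, since it places $x$ and $y$ explicitly in the same component, but it is less elementary than what the lemma actually needs.
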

\begin{proof}
Assume by contradiction that a strong bridge $e=(u,v)$ separates $x$ and $y$ but it does not appear in any of the paths $D[s,x]$, $D[s,y]$, $D^R[s,x]$, and $D^R[s,y]$.
The fact that $(v,u) \not\in D^R[s,x]$ implies that there is a path $\pi$ in $G$ from $x$ to $s$ that avoids $(u,v)$.
Similarly, the fact that $(u,v) \not \in D[s,y]$ implies that there is a path $\pi'$ in $G$ from $s$ to $y$ that avoids $(u,v)$.
Then $\pi \cdot \pi'$ is a path in $G$ from $x$ to $y$ that does not contain the edge $(u,v)$.
Analogously, the fact that $(v,u) \not\in D^R[s,y]$ and $(u,v) \not \in D[s,x]$ implies that there is a path in $G$ from $y$ to $x$ that does not contain the edge $(u,v)$.
This contradicts the assumption that $e$ separates $x$ and $y$, i.e., that $x$ and $y$ are not strongly connected in $G\setminus e$.
\end{proof}

\begin{lemma}
\label{bridge-relation}
Let $x$ and $y$ be vertices such that $r_x = r_y \neq s$, i.e., $x$ and $y$ are in the same tree $D_r$ of the bridge decomposition of $D$ and $D_r$ is not rooted at $s$, i.e., $r \neq s$.
A bridge $e$ that is not a descendant of $r$ in $D$ is a separating edge for $x$ and $y$ only if the bridge $(d(r), r)$ is a separating edge for $x$ and $y$.
\end{lemma}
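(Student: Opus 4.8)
The plan is to show that if a bridge $e=(a,b)$ that is not a descendant of $r$ in $D$ separates $x$ and $y$, then in fact $e$ can be ``upgraded'' to the bridge $(d(r),r)$ entering the tree $D_r$. First I would invoke Lemma~\ref{separators-ancestors}: since $e$ separates $x$ and $y$, it must lie on one of the four paths $D[s,x]$, $D[s,y]$, $D^R[s,x]$, $D^R[s,y]$. Because $e$ is a bridge of $G_s$ and not a descendant of $r$ in $D$, and since $r_x=r_y=r$ means both $x$ and $y$ are descendants of $r$ in $D$, the edge $e$ lies on $D[s,r]$, hence it is a proper ancestor of $r$ (equivalently $b$ is an ancestor of $r$, possibly $b=r$ — but $e$ not a descendant of $r$ rules out $b$ being a proper descendant, and if $b=r$ then $e=(d(r),r)$ and we are done trivially).

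The key step is then: since $e=(a,b)$ with $b$ a proper ancestor of $r$ in $D$, I want to argue that $e$ separating $x$ from $y$ forces $(d(r),r)$ to separate them as well. I would apply Lemma~\ref{lemma:partition-paths}(a) to the strong bridge $e=(a,b)$ (here $a=d(b)$): the deletion of $e$ separates $D(b)$ from $V\setminus D(b)$, and every simple path from a vertex outside $D(b)$ to a vertex inside $D(b)$ uses $e$. Now $x,y\in D_r\subseteq D(r)\subseteq D(b)$, so both $x$ and $y$ are \emph{inside} $D(b)$. Thus $e$ cannot separate them by cutting ``across'' the $D(b)$ boundary. So the only way $e$ separates $x$ and $y$ is that, say, all paths from $x$ to $y$ pass through $e$; since such a path starts and ends inside $D(b)$, it must leave $D(b)$ and come back, and by Lemma~\ref{lemma:partition-paths} every return into $D(b)$ uses the single edge $e=(d(b),b)$, and moreover must then re-enter $D(r)$ through $(d(r),r)$ (again by Lemma~\ref{lemma:partition-paths} applied to the strong bridge $(d(r),r)$, since leaving $D(r)$ and returning forces passage through $(d(r),r)$ — note $x,y\in D(r)$ and any path from $x$ to $y$ avoiding $(d(r),r)$ stays within $D(r)$). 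Hence every path from $x$ to $y$ that uses $e$ also uses $(d(r),r)$; but to conclude I need that \emph{every} path from $x$ to $y$ uses $(d(r),r)$, not merely those using $e$. For that I argue: if some path from $x$ to $y$ avoided $(d(r),r)$, it would stay inside $D(r)$ (Lemma~\ref{lemma:partition-paths}), hence inside $D(b)$, hence avoid $e$ — contradicting that $e$ separates $x,y$ via the $x$-to-$y$ direction. The symmetric argument handles the $y$-to-$x$ direction. Therefore $(d(r),r)$ separates $x$ and $y$.

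The main obstacle I anticipate is the bookkeeping around ``simple paths versus arbitrary paths'': Lemma~\ref{lemma:partition-paths} is stated for simple paths, so I must be careful that separation (defined via \emph{all} paths) interacts correctly — but this is standard, since if all \emph{simple} paths from $x$ to $y$ contain $e$ then so do all paths (any path contains a simple subpath from $x$ to $y$), and conversely we only ever need to produce one avoiding path, which we may take simple. A second, smaller obstacle is the degenerate case $r_x=r_y=s$, which is excluded by hypothesis, and the case where $e$ is a bridge of $G_s^R$ rather than $G_s$ — but the statement restricts to a bridge $e$ that ``is not a descendant of $r$ in $D$'', implicitly a bridge of $G_s$ (the $G_s^R$ case is presumably handled by a symmetric companion lemma), so I would state explicitly that we treat $e$ as a bridge of $G_s$ and note the reverse case is analogous via $D^R$. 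Finally, I should double-check the edge case $b=r$: then $(d(b),b)=(d(r),r)$ and the conclusion is immediate, so the interesting case is $b$ a \emph{proper} ancestor of $r$.
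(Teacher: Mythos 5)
Your argument is correct and takes essentially the same approach as the paper: both rest on Lemma~\ref{lemma:partition-paths}(a) applied to the bridge $(d(r),r)$, and your closing contradiction step (a path from $x$ to $y$ avoiding $(d(r),r)$ must stay inside $D(r)$ and hence avoid $e$) is precisely the contrapositive of the paper's direct observation that any path through $e=(u,v)$ contains a subpath from $v\notin D(r)$ into $D(r)$, which must therefore use $(d(r),r)$. One small caution: your preliminary localization of $e$ onto $D[s,r]$, and the consequent claim $D(r)\subseteq D(b)$, are not warranted in general --- a common bridge $e$ can be incomparable to $r$ in $D$ and still separate $x$ and $y$ through the $D^R$ side --- but this does not matter, since your final step only needs that a path confined to $D(r)$ cannot traverse $e$ (because the tail of $e$ lies outside $D(r)$), which holds in every case.
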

\begin{proof}
Let $e=(u,v)$ be a bridge that is not a descendant of $r$ in $D$ and that separates $x$ and $y$ in $G$.
Since $e=(u,v)$ separates $x$ and $y$, then either any path from $x$ to $y$ or any path from $y$ to $x$ must contain $e$. Without loss of generality, assume that any path from $x$ to $y$ contains $e$ (otherwise swap $x$ and $y$ in the following). Observe that any path from $x$ to $y$ containing edge $e=(u,v)$ must contain a path from $v$ to $y$ as a subpath. The fact that $e=(u,v)$ is not a descendant of $r$ in $D$ implies that $v$ is not a descendant of $r$ in $D$ as well. Since $r=r_y=r_x$, $y$ is a descendant of $r$ in $D$, and thus by Lemma \ref{lemma:partition-paths}, any path from $v$ to $y$ must contain the bridge $(d(r),r)$. Therefore if any path from $x$ to $y$ contains the bridge $e$, it must also contain the bridge $(d(r),r)$.
\end{proof}

\begin{theorem}
\label{theo:correct}
Algorithm \textsf{HD2ECC} is correct.
\end{theorem}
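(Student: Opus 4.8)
We must show that two distinct vertices $x$ and $y$ are $2$-edge-connected in $G$ exactly when $label(x)=label(y)$, i.e.\ when $r_x=r_y$, $r^R_x=r^R_y$, $h_x=h_y$ and $h^R_x=h^R_y$. The plan is to prove the two directions separately, each by contradiction, leaning on Theorem~\ref{cor:scc} (the description of the strongly connected components of $G\setminus e$ for a strong bridge $e$ in terms of $D,D^R,H,H^R$), on Lemma~\ref{lemma:bridge-decomposition}, and on the two new Lemmata~\ref{separators-ancestors} and~\ref{bridge-relation}. Throughout I use that $x$ and $y$ are $2$-edge-connected iff no strong bridge separates them.

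The crux is the following auxiliary claim (together with its mirror image in $G^R_s$): if $r:=r_x\neq s$, then $h_x$ is the unique ancestor of $x$ in $H$ that lies in $D(r)$ and whose $H$-parent does not lie in $D(r)$. Consequently, by Theorem~\ref{cor:scc}, the strongly connected component of $x$ in $G\setminus(d(r),r)$ equals $H(h_x)$; and, more generally, for any $\mathcal{D}$-bridge-root $b$ that is an ancestor of $r$ in $D$, the component of $x$ in $G\setminus(d(b),b)$ contained in $D(b)$ is $H(w)$ for the unique ancestor $w$ of $h_x$ in $H$ (inclusive) with $w\in D(b)$ and $h(w)\notin D(b)$ --- so it depends on $x$ only through $h_x$ (and $b$). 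To prove the claim: $h_x\in D_r$ and $h(h_x)\notin D_r$ hold by the definition of the nearest boundary vertex; to strengthen $h(h_x)\notin D_r$ to $h(h_x)\notin D(r)$, apply Lemma~\ref{lemma:ancestor-of-x} to $h_x$ so that $r_{h(h_x)}$ is an ancestor of $h_x$ in $D$, and combine $h_x\in D_r$ with the fact that $r$ is a $\mathcal{D}$-bridge-root to force $r_{h(h_x)}$ to be a \emph{proper} ancestor of $r$, whence $r$ is not an ancestor of $h(h_x)$; finally, every vertex strictly between $x$ and $h_x$ on the path of $H$ is non-boundary, hence lies in $D_r$ together with its $H$-parent, so it does not qualify. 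Uniqueness along the $H$-path follows because two distinct qualifying ancestors would give nested strongly connected components of $G\setminus(d(b),b)$.

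For the implication ``$label(x)\neq label(y)\ \Rightarrow$ not $2$-edge-connected'': if $r_x\neq r_y$ or $r^R_x\neq r^R_y$, Lemma~\ref{lemma:bridge-decomposition} gives the conclusion immediately. Otherwise $r_x=r_y=r$, $r^R_x=r^R_y$, and, say, $h_x\neq h_y$ (the case $h^R_x\neq h^R_y$ being symmetric via $G\leftrightarrow G^R$). Then $r\neq s$, the edge $(d(r),r)$ is a bridge of $G_s$ and hence a strong bridge, $x,y\in D_r\subseteq D(r)$, and by the auxiliary claim the components of $x$ and $y$ in $G\setminus(d(r),r)$ are the distinct sets $H(h_x)$ and $H(h_y)$; thus $(d(r),r)$ separates $x$ and $y$.

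For the converse, suppose $label(x)=label(y)$ and, for contradiction, that a strong bridge $e$ separates $x$ and $y$. By Lemma~\ref{separators-ancestors}, $e$ lies on one of $D[s,x],D[s,y],D^R[s,x],D^R[s,y]$; since exchanging $G$ with $G^R$ permutes the four coordinates of $label$ and exchanging $x$ with $y$ preserves the separation relation, we may assume $e\in D[s,x]$, so $e=(d(b),b)$ with $b$ an ancestor of $x$ in $D$. If $b$ is a $\mathcal{D}$-bridge-root (so $e$ is a bridge of $G_s$, possibly also of $G^R_s$), then $b$ is an ancestor of $r_x=r_y$, so $x,y\in D(b)$, and by Theorem~\ref{cor:scc} (case (a) or (c)) the component of $x$ (resp.\ $y$) in $G\setminus e$ is contained in $D(b)$ and equals some $H(w)$ which, by the auxiliary claim, depends only on $h_x=h_y$; hence the two components coincide, contradicting separation (when $b\neq r_x$ one may instead invoke Lemma~\ref{bridge-relation} to reduce to the bridge $(d(r_x),r_x)$ first). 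If $b$ is not a $\mathcal{D}$-bridge-root, then since $e$ is a strong bridge that is a tree edge of $D$, Property~\ref{property:strong-bridge} forces $e$ to be a bridge of $G^R_s$ but not of $G_s$, so $a:=d(b)$ is a $\mathcal{D}^R$-bridge-root; by Theorem~\ref{cor:scc}(b) the components of $G\setminus e$ are $V\setminus D^R(a)$ and the sets $H^R(w)$ with $w\in D^R(a)$ and $h^R(w)\notin D^R(a)$, and since $a$ is a $\mathcal{D}^R$-bridge-root and $r^R_x=r^R_y$, either both $x,y$ avoid $D^R(a)$ (and lie in $V\setminus D^R(a)$) or both lie in $D^R(a)$ and, by the mirror of the auxiliary claim, share the component $H^R(h^R_x)=H^R(h^R_y)$ --- either way $e$ does not separate $x$ and $y$, a contradiction. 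The step I expect to be hardest is the auxiliary claim, and within it the upgrade from $h(h_x)\notin D_r$ to $h(h_x)\notin D(r)$ via Lemma~\ref{lemma:ancestor-of-x}; a secondary difficulty is this last sub-case of the converse, where one must reason about a strong bridge that is simultaneously a tree edge of $D$ and a bridge of the reverse flow graph and deduce, using only $r^R_x=r^R_y$, that $x$ and $y$ lie on the same side of $D^R(a)$. Everything else is bookkeeping on top of Theorem~\ref{cor:scc} and the two new lemmata.
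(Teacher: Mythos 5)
Your proof is correct, and it follows the same high-level plan as the paper's (reduce separating strong bridges to the two dominator-tree paths $D[s,\cdot]$ and $D^R[s,\cdot]$ via Lemma~\ref{separators-ancestors}, then decide strong connectivity in $G\setminus e$ via Theorem~\ref{cor:scc} and the nearest boundary vertices $h_x,h^R_x$). The worthwhile difference is that you make explicit two steps the paper leaves compressed. First, your ``auxiliary claim'' upgrades the defining condition $h(h_x)\notin D_{r_x}$ (same \emph{component} of the bridge decomposition) to $h(h_x)\notin D(r_x)$ (full subtree of $D$); this is exactly the hypothesis that Lemma~\ref{lemma:subtree}/Theorem~\ref{cor:scc} needs, and you correctly derive it from Lemma~\ref{lemma:ancestor-of-x} plus the fact that $r_x$ is the nearest $\mathcal{D}$-root ancestor of $h_x$. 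The paper invokes Lemma~\ref{lemma:subtree} directly ``by definition of the nearest boundary vertex'' without this intermediate argument. Second, in the direction $label(x)=label(y)\Rightarrow$ $2$-edge-connected, the paper jumps from Lemma~\ref{separators-ancestors} to ``$e$ must lie in $D[s,r_x]$ or $D^R[s,r^R_x]$,'' which silently excludes a strong bridge $e=(d(b),b)$ with $b$ a non-root internal vertex of $D_{r_x}$ (so $e$ is a tree edge of $D$ on $D[s,x]$, is \emph{not} a bridge of $G_s$, but is a bridge of $G^R_s$). You treat this sub-case explicitly, arguing via Theorem~\ref{cor:scc}(b) that $r^R_x=r^R_y$ forces $x,y$ onto the same side of $D^R(a)$, and then $h^R_x=h^R_y$ forces them into the same $H^R$-component; this is exactly the reasoning the paper needs but does not spell out. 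One small imprecision: when $a\neq r^R_x$, the shared component is $H^R(w)$ for the unique qualifying $H^R$-ancestor $w$ of $h^R_x$ determined by $a$, not literally $H^R(h^R_x)$, but your general form of the auxiliary claim already says this and the conclusion is unaffected.
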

\begin{proof}
First we show that if two vertices $x$ and $y$ have the same label after the execution of Algorithm  \textsf{HD2ECC}, then they must be $2$-edge-connected in the digraph $G$.
Assume by contradiction that
$label(x) = label(y)$ but $x$ and $y$ are not 2-edge-connected.
The fact that $label(x) = label(y)$ implies that $x$ and $y$ are in the same trees in the bridge decomposition of $D$ and $D^R$.
Since $x$ and $y$ are not 2-edge-connected,
there must exist a strong bridge $e$ that separates $x$ and $y$ in $G$.
By Property~\ref{property:strong-bridge}, $e$ must be either a bridge in the flow graph $G_s$ or in the flow graph $G_s^R$ (or in both). Since $x$ and $y$ are in the same trees in the bridge decomposition of $D$ and $D^R$,
by Lemma~\ref{separators-ancestors},
the only possibility for bridge $e$ is to lie either in the path $D[s,r_x]$ or in the path $D^R[s, r^R_x]$ (or in both).
Suppose that $e$ appears in $D[s,r_x]$. Then, by Lemma~\ref{bridge-relation}, the bridge $e' = (d(r_x),r_x)$ also separates $x$ and $y$. However, since  $label(x) = label(y)$, we have that $h_x = h_y$, and thus
both $x$ and $y$ are in the subtree $H(h_x)$ of the loop nesting tree $H$. By definition of nearest boundary vertex $h_x$ and by Lemma \ref{lemma:subtree}, $H(h_x)$ must be strongly connected in $G \setminus e'$. Since both $x$ and $y$ are in $H(h_x)$, this contradicts the fact the $e'$ separates $x$ and $y$.
The case where $e$ appears in $D^R[s,r_x^R]$ is symmetric; by Lemma~\ref{bridge-relation}, the bridge $(r_x^R, d^R(r_x^R),)$ of $G_s^R$ also separates $x$ and $y$, which now contradicts the fact that $h^R_x = h^R_y$.

Next, we prove that two vertices $x$ and $y$ that are given different labels by the algorithm cannot be $2$-edge-connected. To show this, we go through a case analysis and in each case we exhibit a separating edge $e$ for $x$ and $y$.
Suppose first that $r_x \neq r_y$. Without loss of generality, assume that $r_x$ is not a descendant of $r_y$ in $D$ (otherwise swap the roles of $r_x$ and $r_y$ in what follows). Then,
by Lemma \ref{lemma:partition-paths}, every path from $x$ to $y$ passes through the strong bridge $e=(d(r_y),r_y)$, which is a separating edge for $x$ and $y$.
The case where $r^R_x \neq r^R_y$ is symmetric, and in this case $e=(r^R_y, d^R(r^R_y))$ is a separating edge for $x$ and $y$.
Assume now that $r_x=r_y$, $r^R_x=r^R_y$ and $h_x \neq h_y$. Again, without loss of generality, suppose that $h_x$ is not a descendant of $h_y$ in $H$, and let $e = (d(r_y),r_y)$.
By Lemma \ref{lemma:subtree}, $H(h_y)$ induces a strongly connected component of $G \setminus e$. But since $h_x \not\in H(h_y)$ we also have $x \not\in H(h_y)$. Thus, $x$ and $y$ are not strongly connected in $G \setminus e$, which implies that $e= (d(r_y),r_y)$ is a separating edge for $x$ and $y$. A symmetric argument shows that if $h^R_x \neq h^R_y$ then  $e=(r^R_y, d^R(r^R_y))$ is a separating edge for $x$ and $y$.
\end{proof}

We observe that, given the labels of two vertices $x$ and $y$, it is straightforward to test in constant time if $x$ and $y$ are $2$-edge-connected. Furthermore, if $x$ and $y$ are not $2$-edge-connected, we can provide in constant time a separating edge for $x$ and $y$, as specified in the proof of Theorem \ref{theo:correct}.

\begin{theorem}
Algorithm \textsf{HD2ECC} runs in $O(m+n)$ time, or in $O(n)$ time if the dominator trees, the loop nesting trees, and the bridges of flow graphs $G_s$ and $G_s^R$ are provided as input.
\label{lemma:2ECB}
\end{theorem}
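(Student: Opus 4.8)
The plan is to charge the running time of Algorithm~\textsf{HD2ECB} line by line and show that, once we set aside the (linear‑time) computation of $G^R$, of the dominator trees, and of the loop nesting trees in lines~1--5, every remaining step runs in $O(n)$ time. First I would dispose of the initialization phase. Building $G^R$ and picking a start vertex $s$ takes $O(m+n)$ time; the dominator trees $D$ and $D^R$ of $G_s$ and $G_s^R$ can be computed in $O(m+n)$ time by any known linear‑time dominators algorithm, and the loop nesting trees $H$ and $H^R$ likewise in $O(m+n)$ time. Each dominator tree has exactly $n-1$ edges, and by Property~\ref{property:strong-bridge} a strong bridge $e=(u,v)$ of $G$ is a bridge of $G_s$ precisely when $u=d(v)$ (and, symmetrically, a bridge of $G_s^R$ when $v=d^R(u)$); hence, once the strong bridges are available, identifying the bridges of $G_s$ and $G_s^R$ and deleting the corresponding tree edges to obtain the bridge decompositions $\mathcal{D}$ and $\mathcal{D}^R$ takes $O(n)$ time. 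Thus lines~1--5 run in $O(m+n)$ time, and in $O(n)$ time if $D$, $D^R$, $H$, $H^R$ and the bridges are supplied as input.

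The heart of the analysis is showing that all four components of $label(x)$ can be produced for every vertex $x$ in $O(n)$ total time. The naive idea of walking a root‑to‑$x$ path in $D$ (or an ancestor path in $H$) separately for each $x$ would cost $\Theta(n^2)$ in the worst case; instead I would propagate the needed information from parents to children in a single top‑down traversal of each tree. For the roots $r_x$: traverse $D$ from $s$ and, when the traversal follows a non‑bridge edge from $u$ to a child $v$, set $r_v=r_u$, while when it follows a bridge edge set $r_v=v$ (with $r_s=s$); this computes $r_x$ for all $x$ in $O(n)$ time, and symmetrically $r^R_x$ from $D^R$. A vertex $x\neq s$ is boundary in $H$ exactly when $h(x)$ and $x$ lie in different trees of $\mathcal{D}$, i.e.\ when $r_{h(x)}\neq r_x$, which is an $O(1)$ test per vertex given the array $r_{\cdot}$; so the set of boundary vertices of $H$ (together with $s$) is identified in $O(n)$ time, and likewise for $H^R$. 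Finally, the nearest boundary ancestors are obtained by a top‑down traversal of $H$: when visiting $x$, set $h_x=x$ if $x$ is boundary and $h_x=h_{h(x)}$ otherwise (the root $s$ of $H$ is boundary, so this is well defined), and the analogous traversal of $H^R$ yields $h^R_x$. All of this is $O(n)$ time, after which the loop of lines~6--9 assembles $label(x)=\langle r_x,h_x,r^R_x,h^R_x\rangle$ in $O(n)$ time.

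It remains to bound the final phase. Since the vertices of $G$ may be taken to be the integers $1,\dots,n$, each label is a $4$‑tuple over $\{1,\dots,n\}$, so the tuples $\langle label(x),x\rangle$ can be sorted lexicographically by four passes of bucket sort in $O(n)$ time, and a single scan of the sorted list then partitions $V$ into the $2$‑edge‑connected blocks (maximal runs of equal labels) in $O(n)$ time. Summing over all phases gives a total running time of $O(m+n)$, which drops to $O(n)$ when the dominator trees, the loop nesting trees, and the bridges of $G_s$ and $G_s^R$ are provided as input. I do not expect any genuine obstacle here: the only point that needs care is the $O(n)$ computation of the $r_x$ and $h_x$ values, and it is handled exactly by the parent‑to‑child propagation above; every other step is a direct charge of $O(1)$ work per vertex or per tree edge.
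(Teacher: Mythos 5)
Your proof is correct and follows essentially the same approach as the paper's: compute the dominator and loop nesting trees and the bridges in $O(m+n)$ time, obtain the bridge decompositions in $O(n)$ time, fill in $r_x$ and $h_x$ (and their reverse counterparts) by $O(n)$-time parent-to-child propagation over $D$ and $H$, and finish with an $O(n)$-time bucket sort of the labels. The only cosmetic difference is that the paper computes the $r_x$ values by traversing each tree of $\mathcal{D}$ separately and assigns $h_y$ while iterating over the children of each $H$-vertex, whereas you do a single top-down pass over $D$ and test the boundary condition at each vertex directly — the same $O(n)$ work in either phrasing.
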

\begin{proof}
As already mentioned, the dominator trees, the loop nesting trees, and the bridges of $G_s$ and $G_s^R$ can be computed in $O(m+n)$ time. Since the total number of bridges is at most $2n-2$, it is straightforward to obtain the bridge decomposition $\mathcal{D}$ and $\mathcal{D}^R$ of $D$ and $D^R$ in $O(n)$ time.
The roots $r_x$ can be computed easily by traversing each tree in $\mathcal{D}$ separately, and for such vertex $x$ of a tree $D_r$ with root $r$ set $r_x=r$.
We can obtain all the roots $r^R_x$ in the dominator tree $D^R$ analogously.
We next describe how to compute all the values $h_x$ in time $O(n)$.
First we construct, for each vertex $x$, a list of its children in the loop nesting tree $H$. We set $h_s=s$ and we traverse $H$ in a top-down order. For each vertex $x$ that we visit we iterate over all of its children, and for each child $y$ of $x$ we do the following. If $r_x=r_y$ then we set $h_y=h_x$, otherwise we set $h_y=y$. Clearly this can be done in $O(n)$ time.
The computation of $h^R_x$ is completely analogous.
Finally, we sort the list of tuples $\langle label(y),y\rangle$ lexicographically in $O(n)$ time by bucket sort.
\end{proof}

Similarly to the algorithm in \cite{2ECC:GILP:TALG}, we can adapt Algorithm \textsf{HD2ECC} to provide
a \emph{sparse certificate} for the $2$-edge-connected components, i.e., a subgraph of the input digraph $G$ that has $O(n)$ edges and has the same $2$-edge-connected components as $G$. However, using our framework, we can obtain a sparse certificate that has some
additional interesting properties. We discuss them in Section \ref{sec:sparse-certificate}.

\section{Pairwise $2$-edge connectivity queries}
\label{sec:other}

Let $G=(V,E)$ be a strongly connected digraph, let $x$ and $y$ be any two vertices of $G$, and let $e=(u,v)$ be a strong bridge of $G$. Recall that we say that $e$ is a \emph{separating edge for $x$ and $y$} (or equivalently that $e$ \emph{separates $x$ and $y$}) if $x$ and $y$ are not strongly connected in $G\setminus e$.
In this section we show how to extend our data structure to answer in asymptotically optimal time the following types of queries:
\begin{itemize}
\item[(a)] Test if two query vertices $x$ and $y$ are $2$-edge-connected; if not, report a separating edge for $x$ and $y$.
\item[(b)] Test whether a given edge $e$ separates two query vertices $x$ and $y$.
\item[(c)] Report all the separating edges for a given pair of vertices $x$ and $y$.
\end{itemize}

We note that the data structure of \cite{2ECC:GILP:TALG} only supports, in constant time, the queries of type (a). We can answer such queries, also in constant time, with our framework as follows.
We compute in $O(m+n)$ time $label(x) = \langle r_x, h_x, r^R_x, h^R_x\rangle$, for all vertices $x$, as in Algorithm \textsf{HD2ECC} of Section \ref{sec:2ECBs}.
Recall that $r_x$ (resp., $r_x^R$) is the root of the tree that contains $x$ in the bridge decomposition of $D$ (resp., $D^R$),
and $h_x$ (resp., $h_x^R$) is the nearest boundary vertex of $x$ in $H$ (resp., $H^R$), i.e., the nearest ancestor $z$ of $x$ in $H$ (resp., $H^R$)
such that $h(z) \not\in D_z$ (resp., $h^R(z) \not\in D_z^R$).
From Theorem \ref{theo:correct} we have that $x \leftrightarrow_{\mathrm{2e}} y$ if and only if $label(x) = label(y)$.
Thus, we can test in constant time if  $x$ and $y$ are $2$-edge-connected.
Now suppose that $x$ and $y$ are not $2$-edge-connected.
Then $label(x) \not= label(y)$. Consider that $r_x \not= r_y$.
Assume, without loss of generality, that $r_x$ is not an ancestor of $r_y$
in $D$. Then, by Lemma \ref{lemma:partition-paths}, bridge $(d(r_x), r_x)$ is a separation edge for $x$ and $y$.
Now consider that $r_x = r_y$ and $h_x \not= h_y$. By the definition of a boundary vertex we have $h_x, h_y \in D_x$ and
$h(h_x), h(h_y) \not \in D_x$. Hence, $H(h_x)$ and $H(h_y)$ are disjoint, so by Theorem \ref{cor:scc},
$x$ and $y$ are not strongly connected in $G \setminus (d(r_x), r_x)$. Thus, bridge $(d(r_x), r_x)$ is a separation edge for $x$ and $y$.
We can find a separating edge when $r_x^R \not= r_y^R$ or $h_x^R \not= h_y^R$ similarly.

Next we deal with queries of type (b) and (c).

\begin{lemma}
\label{lemma:separating_two_vertices}
Let $x$ and $y$ be two vertices, and let $w$ and $w^R$ be their nearest common ancestors in the loop nesting trees $H$ and $H^R$, respectively.
Then, a strong bridge $e=(u,v)$ is a separating edge for $x$ and $y$ if and only if one of the following conditions holds:
\begin{itemize}
\item[(1)] The edge $e$ is an ancestor of $x$ or $y$ in $D$ and $w$ is not a descendant of $e$ in $D$.
\item[(2)] The edge $e$ is an ancestor of $x$ or $y$ in $D^R$ and $w^R$ is not a descendant of $e$ in $D^R$.
\end{itemize}
\end{lemma}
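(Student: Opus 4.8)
The plan is to read off from Theorem~\ref{cor:scc} exactly which pairs of vertices end up in the same strongly connected component of $G\setminus e$, and then to translate ``different strongly connected components'' into a statement about the positions of $x$, $y$, $w$, and $w^R$ in $D$ and $D^R$. Throughout I interpret conditions (1) and (2) under the standard convention: ``$e=(u,v)$ is an ancestor of a vertex $z$ in $D$'' presupposes that $e$ is a bridge of $G_s$ (so $u=d(v)$ and $(u,v)$ is a tree edge of $D$) and is then equivalent to $z\in D(v)$; symmetrically ``ancestor of $z$ in $D^R$'' presupposes that $e$ is a bridge of $G_s^R$ (so $v=d^R(u)$) and is equivalent to $z\in D^R(u)$; and ``$w$ is not a descendant of $e$ in $D$'' means $w\notin D(v)$. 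Thus (1) reads: $e$ is a bridge of $G_s$, at least one of $x,y$ lies in $D(v)$, and $w\notin D(v)$; and (2) is the mirror statement for $G_s^R$, $D^R$, $w^R$.

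The one technical ingredient I would isolate and prove first is the following claim $(\star)$: if $e=(u,v)$ is a strong bridge with $u=d(v)$ and $w$ is the nearest common ancestor of $x$ and $y$ in the loop nesting tree $H$, then $x$ and $y$ lie in a common strongly connected component $K$ of $G\setminus e$ with $K\subseteq D(v)$ \emph{if and only if} $w\in D(v)$. For the forward direction, Theorem~\ref{cor:scc} gives $K=H(z)$ for some $z\in D(v)$ with $h(z)\notin D(v)$; since $x,y\in H(z)$, the vertex $z$ is a common ancestor of $x$ and $y$ in $H$, hence $w$ is a descendant of $z$ in $H$ and $w\in H(z)=K\subseteq D(v)$. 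For the backward direction, $w\in D(v)$ lets us apply Lemma~\ref{lemma:subtree1} to the vertex $w$, which places $H(w)$ inside some strongly connected component $K\subseteq D(v)$ of $G\setminus e$; and $x,y\in H(w)$ because $w$ is a common ancestor of $x$ and $y$ in $H$. The mirror claim $(\star^R)$, with $D^R$, $H^R$, $w^R$ and the $G_s^R$-analogue of Lemma~\ref{lemma:subtree1}, is proved identically.

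Given $(\star)$ and $(\star^R)$ both implications of the lemma are short. For ``if'', assume (1) holds (the case of (2) is symmetric), and by symmetry of $x$ and $y$ assume $x\in D(v)$. Since $x\in D(v)$, Lemma~\ref{lemma:subtree1} forces the strongly connected component of $G\setminus e$ that contains $x$ to be a subset of $D(v)$; if $y$ lay in that same component, $(\star)$ would give $w\in D(v)$, contradicting (1). Hence $x$ and $y$ lie in different strongly connected components of $G\setminus e$, i.e.\ $e$ separates them. For ``only if'', suppose $e$ separates $x$ and $y$. By Lemma~\ref{separators-ancestors}, $e$ lies on at least one of the tree paths $D[s,x]$, $D[s,y]$, $D^R[s,x]$, $D^R[s,y]$. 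If it lies on $D[s,x]$ or $D[s,y]$, then $e$ is a bridge of $G_s$ and $x\in D(v)$ or $y\in D(v)$; if in addition $w\in D(v)$ then $(\star)$ would put $x$ and $y$ in a common strongly connected component of $G\setminus e$, contradicting that $e$ separates them, so $w\notin D(v)$ and (1) holds. If instead $e$ lies on $D^R[s,x]$ or $D^R[s,y]$, the same argument with $(\star^R)$ yields (2).

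The step I expect to require the most care is $(\star)$, and with it the bookkeeping needed to keep the argument uniform --- in particular when $e$ is a common bridge, so that both $D(v)$ and $D^R(u)$ are meaningful and the strongly connected components of $G\setminus e$ refine the four regions of Theorem~\ref{cor:scc}(c). The key point is that $(\star)$ already captures everything that matters about the $D$-side (``same component, lying inside $D(v)$'' $\iff$ $w\in D(v)$), irrespective of whether that component also sits inside $D^R(u)$; consequently the proof never branches on the three cases of Property~\ref{property:strong-bridge} beyond the single appeal to Lemma~\ref{separators-ancestors}, which absorbs the case analysis.
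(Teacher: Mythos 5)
Your proof is correct and follows essentially the same route as the paper's: Lemma~\ref{separators-ancestors} is used to restrict where a separating strong bridge can lie, and the characterization is then derived from the loop-nesting-tree structure of the strongly connected components inside $D(v)$, via Lemma~\ref{lemma:subtree1}, Lemma~\ref{lemma:subtree2}, and Theorem~\ref{cor:scc}. The only difference is one of packaging: you factor out the equivalence as a standalone helper claim $(\star)$ and invoke it in both directions, whereas the paper carries out that same equivalence inline under the running assumption that the bridge lies on $D[s,x]$ or $D[s,y]$.
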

\begin{proof}
We will only prove (1), as the proof of (2) is completely analogous.
By Lemma~\ref{separators-ancestors}, a strong bridge $e=(u,v)$ that separates $x$ and $y$ lies in at least one of the paths $D[s,x]$, $D[s,y]$, $D^R[s,x]$, or $D^R[s,y]$.
To prove (1)
we assume that the strong bridge $e$ lies either in $D[s,x]$ or in $D[s,y]$, and argue that $e$ separates $x$ and $y$ if and only if their nearest common ancestor $w$ in $H$ is not a descendant
of $e$ in $D$. Let $z$ be a descendant of $v$ in $D$.
By Lemma~\ref{lemma:subtree}, all the vertices of $H(z)$ are strongly connected in $G\setminus e$. Hence, if $w$ is a descendant of $e$ then $x$ and $y$ must be strongly connected in $G\setminus e$, since $x, y \in H(w)$.

Next we prove the opposite direction.
Assume by contradiction that
the edge $e=(u,v)$ is an ancestor of $x$ or $y$ in $D$,
$w$ is not a descendant of edge $e$ but $e$ is not a separating edge for $x$ and $y$. Let $C$ be the strongly connected component of $G \setminus e$ that contains $x$ and $y$. From
Theorem \ref{cor:scc} and the assumption that $e=(u,v)$ is an ancestor of $x$ or $y$ in $D$ but not a separating edge for $x$ and $y$, we have that $C \subseteq D(v)$.
From Lemma~\ref{lemma:subtree2}, we have that there is a vertex $z \in C$ such that $C \subseteq H(z)$.
But then $x, y \in H(z)$, and thus $z$ is ancestor of $w$ in $H$ and a descendant of $e$ in $D$, which contradicts the assumption that $w \not\in D(v)$.
\end{proof}

Note that, by Lemma \ref{lemma:partition-paths}, a bridge $e$ of $G_s$ (resp., $G_s^R$) is not a separating edge for $x$ and $y$ only if it is an ancestor of both $x$ and $y$ in $D$ (resp., $D^R$).
This fact, combined with
Lemma~\ref{lemma:separating_two_vertices} suggests the following algorithm for computing, in an online fashion, all the separating edges for a given pair of query vertices $x$ and $y$.
First, we preprocess the loop nesting trees $H$ and $H^R$ in $O(n)$ time so that we can compute nearest common ancestors in constant time~\cite{nca:ht}.
To answer a reporting query for vertices $x$ and $y$, we compute their nearest common ancestor $w$ in $H$, and visit the bridges of $G_s$ that are ancestors of $x$ and $y$ in $D$ in a bottom-up order, as follows.
Starting from $x$ and $y$, we visit the bridges that are ancestors of $x$ and $y$ in $D$ until we reach a bridge $e=(u,v)$ such that $v$ is an ancestor of $w$ in $D$, or until we reach $s$ if no such bridge $e$ exists.
(As we showed above, if there is a bridge $e$ that is an ancestor of $x$ or $y$ in $D$ and is not a separating vertex for $x$ and $y$ then $e$ is an ancestor of both $x$ and $y$ in $D$.)
If $e=(u,v)$ exists then the bridges in $D[v,x] \cup D[v,y]$ are separating edges for $x$ and $y$. Otherwise
the bridges in $D[s,x] \cup D[s,y]$ are separating edges for $x$ and $y$.
We do the same for the reverse direction, i.e., compute the nearest common ancestor $w^R$ of $x$ and $y$ in $H^R$, and
visit the bridges of $G_s^R$ in a bottom-up order in $D^R$, starting from $x$ and $y$,
until we reach a bridge $e=(u,v)$ such that $v$ is an ancestor of $w^R$ in $D^R$, or until we reach $s$ if no such $e$ exists.
We finally return the union of the separating edges that we found in both directions.
To speed up this process, we can compute in a preprocessing step compressed versions of $D$ and $D^R$ that are formed by contracting, respectively, each vertex $u \in D$ into $r_u$, and each vertex $u \in D^R$ into $r^R_u$.

With the same data structure we can test in constant time, for any pair of query vertices $x$ and $y$ and a query edge $e$, if $e$ is a separating edge for $x$ and $y$. By Lemma~\ref{lemma:separating_two_vertices}, it suffices to find their nearest common ancestors, $w$ in $H$ and $w^R$ in $H^R$, and then test
whether $e$ is an ancestor of $x$ or $y$ and $w$ is not a descendant of $e$ in $D$, or whether $e$ is an ancestor of $x$ or $y$ and $w^R$ is not a descendant of $e$ in $D^R$. This gives the following theorem.

\begin{theorem}
Let $G$ be a strongly connected digraph with $m$ edges and $n$ vertices. After $O(m)$-time preprocessing, we can build
 an $O(n)$-space data structure, that can:
\begin{itemize}
\item Test if two query vertices $x$ and $y$ are $2$-edge-connected and if not report a corresponding separating edge.
\item Report all edges that separate two query vertices $x$ and $y$ in $O(k)$ time, where $k$ is the total number of separating edges reported.
For $k=0$, the time is $O(1)$.
\item Test in constant time if a query edge is a separating edge for a pair of query vertices.
\end{itemize}
\end{theorem}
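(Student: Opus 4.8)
The plan is to assemble the three claimed operations from the machinery already developed, so the proof is mostly a matter of specifying what to store in the data structure and then invoking the earlier lemmata. For the preprocessing I would compute in $O(m+n)$ time (which is $O(m)$ since $G$ is strongly connected) the dominator trees $D$ and $D^R$, the loop nesting trees $H$ and $H^R$, and the strong bridges of $G$, and then run the labeling phase of Algorithm \textsf{HD2ECB} to obtain $label(x) = \langle r_x, h_x, r^R_x, h^R_x\rangle$ for every vertex $x$ in $O(n)$ additional time (Theorem \ref{lemma:2ECB}). In the same bound I would preprocess $H$ and $H^R$ for constant-time nearest-common-ancestor queries~\cite{nca:ht}, preprocess $D$ and $D^R$ for constant-time ancestor/descendant tests~\cite{domin:tarjan}, and build the compressed trees $\widehat{D}$ and $\widehat{D}^R$ obtained by contracting each tree of the bridge decomposition into its root, recording for each vertex $x$ the bridge $(d(r_x),r_x)$ of $G_s$ (resp.\ $(r^R_x,d^R(r^R_x))$ of $G_s^R$) immediately above its block. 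All of this uses $O(n)$ space.

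For the first operation, Theorem \ref{theo:correct} gives $x \leftrightarrow_{\mathrm{2e}} y$ if and only if $label(x)=label(y)$, so the test is a single comparison of two $4$-tuples in $O(1)$ time. If the labels differ, the case analysis in the proof of Theorem \ref{theo:correct} names an explicit separating bridge, read off directly from the coordinate in which the labels first differ: $(d(r_y),r_y)$ if $r_x\neq r_y$ or $h_x\neq h_y$, and $(r^R_y,d^R(r^R_y))$ if $r^R_x\neq r^R_y$ or $h^R_x\neq h^R_y$ (after, w.l.o.g., orienting $x,y$ so that the relevant root of $y$ is not a descendant of that of $x$). Hence this query is answered in constant time.

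For the third operation I would invoke Lemma \ref{lemma:separating_two_vertices} verbatim: given $x$, $y$, and a query edge $e=(u,v)$, compute the nearest common ancestor $w$ of $x,y$ in $H$ and the nearest common ancestor $w^R$ of $x,y$ in $H^R$ in $O(1)$ time, and then test conditions (1) and (2) of the lemma, i.e.\ whether $e$ is an ancestor of $x$ or of $y$ in $D$ with $w$ not a descendant of $e$ in $D$, or whether $e$ is an ancestor of $x$ or of $y$ in $D^R$ with $w^R$ not a descendant of $e$ in $D^R$. Each of these is a constant number of ancestor/descendant tests, so the whole test runs in $O(1)$ time, and correctness is exactly Lemma \ref{lemma:separating_two_vertices}.

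The crux — and the step I expect to be the main obstacle — is the second operation: enumerating all $k$ separating edges in $O(k)$ time, and in $O(1)$ time when $k=0$, without ever touching a non-separating bridge. Here I would combine Lemma \ref{lemma:separating_two_vertices} with the observation (a consequence of Lemma \ref{lemma:partition-paths}) that a bridge of $G_s$ which is an ancestor of $x$ or $y$ in $D$ but is \emph{not} separating must be an ancestor of \emph{both} $x$ and $y$ in $D$. Starting from $x$ and from $y$ I walk up through the bridges that are ancestors in $D$, using $\widehat{D}$ so that each step jumps directly to the next bridge; every bridge met is a separating edge until we reach the first bridge $e=(u,v)$ with $v$ an ancestor of $w$ in $D$, at which point we stop, since from there on every such bridge is an ancestor of $w$ and hence, by Lemma \ref{lemma:separating_two_vertices}, non-separating. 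Thus the separating edges contributed by $G_s$ are exactly the bridges in $D[v,x]\cup D[v,y]$, or in $D[s,x]\cup D[s,y]$ if no such $e$ exists; a symmetric walk in $\widehat{D}^R$ produces the separating edges contributed by $G_s^R$, and we return the union (deduping common bridges, which can be flagged once in each walk, in $O(k)$ time). Every reported bridge and each of the two stopping bridges costs $O(1)$, and when no bridge is an ancestor of $x$ or $y$ outside the block of $w$ we detect it immediately, so the bound is $O(\max\{k,1\})$ as claimed. The only real care needed is the argument that each walk neither overshoots (reaching a non-separating bridge) nor stops early (missing a separating bridge), which is precisely what the ancestor observation above and Lemma \ref{lemma:separating_two_vertices} jointly guarantee.
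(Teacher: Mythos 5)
Your proposal matches the paper's proof essentially verbatim: the same $O(1)$ label comparison via Theorem~\ref{theo:correct} for the first query, the same application of Lemma~\ref{lemma:separating_two_vertices} with constant-time nca queries for the third, and the same bottom-up walk over the compressed trees $\widehat{D}$ and $\widehat{D}^R$ (stopping at the first bridge whose head is an ancestor of the relevant nca) for the enumeration query, justified by the same observation from Lemma~\ref{lemma:partition-paths}. Nothing to flag.
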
 
\section{Strongly connected components of $G\setminus u$}
\label{sec:vertices}

In this section we show how to extend the techniques of Section \ref{sec:scc} to $2$-vertex connectivity.
As before, we assume without loss of generality that $G$ is strongly connected, and we let $s \in V$ be an arbitrarily chosen start vertex in $G$.
Analogously to Section \ref{sec:scc}, we describe our
compact representation of the structure of all the $1$-vertex cuts (given by strong articulation points) of a strongly connected digraph $G$. In particular, we show that the four trees $D$, $D^R$,
 $H$ and $H^R$ are also sufficient to encode efficiently the decompositions that the strong articulation points induce in $G$, i.e., all the strongly connected components of $G\setminus u$, \emph{for all strong articulation points $u$} in $G$.
In particular, let $u$ be a strong articulation point in $G$. We will show how the four trees $D$, $D^R$,
 $H$ and $H^R$ can be effectively used to solve the following problems:

\begin{itemize}
\item Compute all the strongly connected components of $G\setminus u$;
\item Count the number of strongly connected components of $G\setminus y$;
 \item{Find the smallest or the largest strongly connected components of $G\setminus u$.}
\end{itemize}

Recall that a vertex $u \not= s$ is a nontrivial dominator of the flow graph $G_s$ (resp., $G^R_s$) if $u$ is not a leaf in the dominator tree $D$ (resp., $D^R$), and that we let $N$ (resp., $N^R$) denote the set of nontrivial dominators of  $G_s$ (resp., $G^R_s$).
We call a vertex $u \in N \cap N^R$ a \emph{common nontrivial dominator}.
For any vertex $u$ in $G$, we let $\widetilde{D}(u)$ (resp., $\widetilde{D}^R(u)$) denote the set of proper descendants of $u$ in $D$ (resp., $D^R$), i.e., $\widetilde{D}(u) = D(u) \setminus u$ (resp., $\widetilde{D}^R(u) = D^R(u) \setminus u$).
Also, we denote by $c(u)$ (resp., $c^R(u)$) the set of children of $u$ in $D$ (resp., $D^R$).
Clearly, $\widetilde{D}(u), c(u) \not= \emptyset$ (resp., $\widetilde{D}^R(u), c^R(u) \not= \emptyset$) if and only if either $u$ is a nontrivial dominator of $G_s$ (resp., $G_s^R$) or $u=s$.

Let $u$ be a strong articulation point of $G$. Consider the dominator relations in $G_s$ and $G^R_s$.
By Property \ref{property:strong-articulation-point} we have the following cases:
\begin{itemize}
\item[(a)] $u$ is a nontrivial dominator in $G_s$ but not in $G^R_s$, i.e., $\widetilde{D}(u) \neq \emptyset$ and $\widetilde{D}^R(u) = \emptyset$.
\item[(b)] $u$ is a nontrivial dominator in $G^R_s$ but not in $G_s$, i.e., $\widetilde{D}(u) = \emptyset$ and $\widetilde{D}^R(u) \neq \emptyset$.
\item[(c)] $u$ is a common nontrivial dominator of $G_s$ and $G^R_s$, or $u=s$, i.e., $\widetilde{D}(u) \neq \emptyset$ and $\widetilde{D}^R(u) \neq \emptyset$.
\end{itemize}

We identify the strongly connected component of $G \setminus u$ in each of these cases.
Consider first the case where $u$ is a nontrivial dominator in $G_s$ or $u=s$, i.e.,  $\widetilde{D}(u)\neq \emptyset$ and either (a) or (c) holds.
Case (b) is symmetric to (a).
By Lemma~\ref{lemma:paths-through-SAP}, the deletion of $u$ separates the proper descendants of $u$ in $D$, denoted by $\widetilde{D}(u)$, from $V \setminus D(u)$. Therefore, we can compute separately the strongly connected components of the subgraphs of $G \setminus u$ induced by $\widetilde{D}(u)$ and by $V \setminus D(u)$.
We begin with some lemmata that show how to compute the strongly connected components in the subgraph of $G \setminus u$ that is induced by $\widetilde{D}(u)$. Recall that $H(w)$ is the set of descendants of the vertex $w$ in the loop nesting tree of $G_s$.

\begin{lemma}
\label{lemma:vertices-subtree1}
Let $u \in N \cup \{s\}$.
For any vertex $w \in \widetilde{D}(u)$ the vertices in $H(w)$ are contained in a strongly connected component $C$ of $G \setminus u$ such that $C \subseteq \widetilde{D}(u)$.
\end{lemma}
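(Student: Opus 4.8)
The statement is the vertex-analogue of Lemma~\ref{lemma:subtree1}, so the plan is to mirror that proof, replacing the strong-bridge edge $e=(u,v)$ by the strong articulation point $u$ and using Lemma~\ref{lemma:paths-through-SAP} in place of Lemma~\ref{lemma:partition-paths}. Concretely, I would argue it suffices to show that whenever $w=h(x)$ for some $x\in\widetilde D(u)$ with $w\in\widetilde D(u)$, the vertices $w$ and $x$ are strongly connected in $G\setminus u$; iterating this over the loop nesting tree then gives that all of $H(w)$ lies in one strongly connected component of $G\setminus u$.

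First I would fix the dfs tree $T$ that generated $H$. Since $x,w\in\widetilde D(u)$ and $w=h(x)$, the tree path $\pi_1$ from $w$ to $x$ in $T$ consists of descendants of $u$ in $T$, hence does not pass through $u$; so $\pi_1$ is a path in $G\setminus u$. For the return direction, by the definition of the loop nesting forest there is a path $\pi_2$ from $x$ to $w$ using only descendants of $w$ in $T$; since every descendant of $w$ in $T$ is a descendant of $u$ in $T$ (because all paths from $s$ to $w$ contain $u$, as $w\in\widetilde D(u)=D(u)\setminus u$), the path $\pi_2$ also avoids $u$, so it is a path in $G\setminus u$. Next I would rule out that either $\pi_1$ or $\pi_2$ leaves $\widetilde D(u)$: if some vertex $z\notin\widetilde D(u)$ lay on $\pi_1$ (resp.\ $\pi_2$), then since $z$ is not a descendant of $u$ in $D$, Lemma~\ref{lemma:paths-through-SAP}(a) would force the subpath of $\pi_1$ from $z$ to $x$ (resp.\ of $\pi_2$ from $z$ to $w$) to contain $u$, contradicting that these paths avoid $u$. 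Hence $\pi_1,\pi_2\subseteq\widetilde D(u)$, so $w$ and $x$ are strongly connected in the subgraph of $G\setminus u$ induced by $\widetilde D(u)$.

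Applying this repeatedly along $H$ shows every pair of vertices in $H(w)$ is strongly connected in $G\setminus u$; let $C$ be the strongly connected component of $G\setminus u$ containing $H(w)$. The same Lemma~\ref{lemma:paths-through-SAP}(a) argument, applied to any vertex of $C$ and any path within $C$ from it to $w$, shows $C$ cannot contain a vertex outside $\widetilde D(u)$, so $C\subseteq\widetilde D(u)$, as claimed. I do not expect any real obstacle here: the only point requiring slight care is the observation that descendants in $T$ of a vertex in $\widetilde D(u)$ are still (tree-)descendants of $u$, which is exactly where $w\in D(u)$ (and hence $u$ dominates $w$, forcing $u$ to be a $T$-ancestor of $w$) is used; everything else is a transcription of the edge-deletion argument.
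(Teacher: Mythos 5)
Your proof is correct and mirrors the paper's own argument essentially step for step: fix the dfs tree $T$ behind $H$, use the tree path $\pi_1$ and the loop-path $\pi_2$ through $T$-descendants of $w$, observe both avoid $u$, and invoke Lemma~\ref{lemma:paths-through-SAP}(a) to confine them to $\widetilde D(u)$. The only cosmetic difference is that the paper explicitly remarks that a vertex $z\notin\widetilde D(u)$ on $\pi_1$ or $\pi_2$ forces $u\neq s$ before appealing to Lemma~\ref{lemma:paths-through-SAP}(a) (so the lemma's hypothesis that $u$ is a nontrivial dominator is met), a point you leave implicit; it is harmless since for $u=s$ the contradiction case is vacuous.
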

\begin{proof}
Let $x \in \widetilde{D}(u)$ be a vertex such that $w=h(x) \in \widetilde{D}(u)$.
We claim that $w$ and $x$ are strongly connected in $G \setminus u$. Let $T$ be the dfs tree that generated the loop nesting tree $H$.
Note that, since $w=h(x)$ and $x,w\in \widetilde{D}(u)$,
the path $\pi_1$ from $w$ to $x$ in the dfs tree $T$ avoids the vertex $u$. To show that $w$ and $x$ are strongly connected in $G \setminus u$, we exhibit a path $\pi_2$ from $x$ to $w$ that avoids the vertex $u$.
Indeed, by the definition of the loop nesting forest, there is a path $\pi_2$ from $x$ to $w$ that contains only descendants of $w$ in $T$.
Note that $\pi_2$ cannot contain $u$ since $u$ is a proper ancestor of $w$ in $T$ and all descendants of $w$ in $T$ are descendants of $u$ in $T$.
Suppose, for contradiction, that either $\pi_1$ or $\pi_2$ contains a vertex $z \notin \widetilde{D}(u)$.
Then $u \not =s$, and by Lemma~\ref{lemma:paths-through-SAP}, it follows that either the subpath of $\pi_1$ from $z$ to $x$ or the subpath of $\pi_2$ from $z$ to $w$ contains $u$, a contradiction.
This implies that every pair of vertices in $H(x)$ is strongly connected in $G \setminus u$.
Let $C$ be the strongly connected component of $G \setminus u$ that contains $H(x)$.
The same argument implies that all vertices in $C$ are proper descendants of $u$ in $D$.
\end{proof}

\begin{lemma}
\label{lemma:vertices-subtree2}
Let $u \in N \cup \{s\}$.
For every strongly connected component $C$ in $G\setminus u$ such that $C \subseteq \widetilde{D}(u)$ there is a vertex $w \in C$ that is a common ancestor in $H$ of all vertices in $C$. Moreover, $C=H(w)$. 
\end{lemma}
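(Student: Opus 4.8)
The plan is to mirror almost verbatim the proof of Lemma~\ref{lemma:subtree2}. Let $T$ be the dfs tree that generated the loop nesting tree $H$, and let $\mathit{pre}$ denote the associated preorder numbering of the vertices. Given a strongly connected component $C$ of $G \setminus u$ with $C \subseteq \widetilde{D}(u)$, I would choose $w$ to be the vertex of $C$ with the smallest preorder number in $T$, and claim that this $w$ works.

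First I would show that $w$ is an ancestor in $T$ of every vertex of $C$. Fix an arbitrary $z \in C$ with $z \neq w$. Since $C$ induces a strongly connected subgraph of $G \setminus u$, hence of $G$, there is a path $\pi$ from $w$ to $z$ all of whose vertices lie in $C$; note that $\pi$ automatically avoids $u$ because $u \notin C$. By the choice of $w$ we have $\mathit{pre}(w) < \mathit{pre}(z)$, so the Path Lemma (Lemma~\ref{lemma:dfs}) guarantees that $\pi$ passes through a common ancestor $q$ of $w$ and $z$ in $T$. As $\pi$ stays inside $C$, we get $q \in C$, and the minimality of $\mathit{pre}(w)$ forces $q = w$. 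Hence $w$ is an ancestor of $z$ in $T$; since $z$ was arbitrary, $C \subseteq T(w)$.

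The final step is to upgrade ``ancestor in $T$'' to ``ancestor in $H$''. Since $C$ is strongly connected, for each $z \in C$ there is a path from $z$ back to $w$ using only vertices of $C$, and by the previous paragraph all of those vertices are descendants of $w$ in $T$. By the definition of the loop nesting forest this means $z \in \mathit{loop}(w) = H(w)$, so $w$ is a common ancestor in $H$ of all vertices of $C$, as required.

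I expect no genuine obstacle, since the argument does not really use vertex deletion beyond the trivial observation that $u \notin C$, and it is structurally identical to Lemma~\ref{lemma:subtree2}. The only point that deserves a moment's care is the last step, namely deducing ancestry in $H$ rather than merely in $T$, which works precisely because $C$ is a strongly connected set entirely contained in the subtree $T(w)$. The hypothesis $C \subseteq \widetilde{D}(u)$ and the case distinction $u \in N \cup \{s\}$ are only contextual here; they are not invoked in the proof beyond ensuring that such a component $C$ can occur.
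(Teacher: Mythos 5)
Your proof is correct and takes essentially the same approach as the paper: both choose $w$ to be the vertex of $C$ with minimum preorder in the dfs tree $T$ that generated $H$, use the Path Lemma to conclude $w$ is an ancestor in $T$ of every vertex of $C$, and then pass to ancestry in $H$. Your final paragraph makes explicit the step from ``ancestor in $T$'' to ``ancestor in $H$'' (via $C \subseteq T(w)$ and the definition of $\mathit{loop}(w)$), which the paper leaves implicit.
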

\begin{proof}
Let $C$ be a strongly connected component that contains only proper descendants of $u$ in $D$.
Let $T$ be the dfs tree that generated $H$ and let $pre$ be the corresponding preorder numbering of the vertices.
Let $w$ be the vertex in $C$ with minimum preorder number with respect to $T$.
Consider any vertex $z \in C \setminus w$.
Since $C$ is a strongly connected component, there is a path from $w$ to $z$ that contains only vertices in $C$.
By the choice of $w$,  $\mathit{pre}(w)<\mathit{pre}(z)$, so Lemma \ref{lemma:dfs} implies that $w$ is an ancestor of $z$ in $T$.
Hence $w$ is also an ancestor of $z$ in $H$. 
Moreover, this implies $C \subseteq H(w)$.
Since $w \in \widetilde{D}(u)$, we have $H(w) \subseteq C$ by Lemma \ref{lemma:vertices-subtree1}.
Hence, $C=H(w)$.
\end{proof}

\begin{lemma}
\label{lemma:vertices-subtree}
Let $u \in N \cup \{s\}$.
Let $w$ be a vertex such that $w\in \widetilde{D}(u)$ and $h(w) \notin \widetilde{D}(u)$. Then, the subgraph induced by $H(w)$ is a strongly connected component in $G \setminus u$.
\end{lemma}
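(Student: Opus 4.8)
The plan is to obtain this lemma from the two preceding results, Lemma~\ref{lemma:vertices-subtree1} and Lemma~\ref{lemma:vertices-subtree2}, in exactly the same way that Lemma~\ref{lemma:subtree} was derived from Lemma~\ref{lemma:subtree1} and Lemma~\ref{lemma:subtree2} in the edge case. Concretely, I will show that $H(w)$ is both contained in a strongly connected component of $G\setminus u$ that lies inside $\widetilde{D}(u)$ and is \emph{maximal} among such strongly connected vertex sets, so that it constitutes a strongly connected component of $G\setminus u$ by itself.

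First I would apply Lemma~\ref{lemma:vertices-subtree1} to the vertex $w$: since $w\in\widetilde{D}(u)$, the set $H(w)$ is contained in a strongly connected component $C$ of $G\setminus u$ with $C\subseteq\widetilde{D}(u)$. Then, applying Lemma~\ref{lemma:vertices-subtree2} to this component $C$, I obtain a vertex $x\in C$ that is a common ancestor in $H$ of all vertices of $C$; in particular $C\subseteq H(x)$. To conclude that $C=H(x)$, I would invoke Lemma~\ref{lemma:vertices-subtree1} once more, this time on $x$ (which is legitimate because $x\in C\subseteq\widetilde{D}(u)$): the set $H(x)$ lies in some strongly connected component $C'$ of $G\setminus u$, and since $x$ belongs to both $C$ and $C'$ and strongly connected components are pairwise disjoint, $C'=C$; hence $H(x)\subseteq C$ and therefore $C=H(x)$.

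It remains to identify $x$ with $w$. Since $w\in C\subseteq H(x)$, the vertex $x$ is an ancestor of $w$ in $H$. If $x\neq w$, then $x$ is a proper ancestor of $w$ in $H$, so the parent $h(w)$ of $w$ lies on the path of $H$ from $x$ to $w$ and is thus a descendant of $x$ in $H$, i.e.\ $h(w)\in H(x)=C\subseteq\widetilde{D}(u)$. This contradicts the hypothesis $h(w)\notin\widetilde{D}(u)$, so $x=w$ and $C=H(w)$. Since $C$ is a strongly connected component of $G\setminus u$, so is the subgraph induced by $H(w)$.

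There is essentially no obstacle here: the argument is a direct transcription of the proof of Lemma~\ref{lemma:subtree}, now powered by Lemma~\ref{lemma:paths-through-SAP} (through Lemmas~\ref{lemma:vertices-subtree1} and~\ref{lemma:vertices-subtree2}) instead of Lemma~\ref{lemma:partition-paths}. The only point needing a little care is the equality $C=H(x)$, i.e.\ the maximality of $H(w)$, which is why Lemma~\ref{lemma:vertices-subtree1} must be applied twice (to $w$ and to $x$); the degenerate value $u=s$ requires no separate treatment, since $\widetilde{D}(s)=V\setminus\{s\}$ and both auxiliary lemmas already cover the case $u\in N\cup\{s\}$.
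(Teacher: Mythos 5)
Your proof is correct and follows the paper's own argument for this lemma (which in turn mirrors the proof of Lemma~\ref{lemma:subtree}): apply Lemma~\ref{lemma:vertices-subtree1} to get $H(w)\subseteq C\subseteq\widetilde{D}(u)$, take the common $H$-ancestor $x\in C$ from Lemma~\ref{lemma:vertices-subtree2}, and deduce $x=w$ from $h(w)\notin\widetilde{D}(u)$. You simply spell out the step the paper leaves implicit, namely that one must apply Lemma~\ref{lemma:vertices-subtree1} once more to $x$ so that $h(w)\in H(x)\subseteq\widetilde{D}(u)$ can be asserted; this is a useful clarification rather than a different route.
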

\begin{proof}
From Lemma \ref{lemma:vertices-subtree1} we have that $H(w)$ is contained in a strongly connected component $C \subseteq \widetilde{D}(u)$ of $G \setminus u$. Let $x \in C$ be the vertex that is a common ancestor in $H$ of all vertices in $C$, as stated by Lemma \ref{lemma:vertices-subtree2}. The fact that $h(w) \not\in \widetilde{D}(u)$ implies $x=w$. 
So, by Lemma \ref{lemma:subtree2}, $H(w)$ is a maximal subset of vertices that are strongly connected in $G \setminus u$. Thus $H(w)$ induces a strongly connected component of $G \setminus u$.
 \end{proof}

Next we consider the strongly connected components of the subgraph of $G \setminus u$ induced by $V \setminus D(u)$.

\begin{lemma}
\label{lemma:vertices-above-tree-1}
Let $u$ be a strong articulation point of $G$ that is a nontrivial dominator in $G_s$ but not in $G_s^R$ (i.e., $u \in N \setminus N^R$).
Let $C= V \setminus D(u)$. Then, the subgraph induced by $C$ is a strongly connected component of $G \setminus u$.
\end{lemma}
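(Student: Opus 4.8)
The plan is to mimic the proof of the edge-deletion analogue, Lemma~\ref{lemma:above-tree-1}, replacing every use of Lemma~\ref{lemma:partition-paths} by the corresponding use of Lemma~\ref{lemma:paths-through-SAP}. Write $G_C$ for the subgraph of $G\setminus u$ induced by $C=V\setminus D(u)$. Since $u\in N$ we have $u\neq s$, hence $s$ is not a descendant of $u$ in $D$ and so $s\in C$; this vertex will serve as a hub. Two things must be shown: that $G_C$ is strongly connected, and that it is a maximal strongly connected subgraph of $G\setminus u$.

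First I would record one observation that does all the work. Since $u\in N$, Lemma~\ref{lemma:paths-through-SAP}(a) applies with any vertex of $C$ in the role of $w$: every simple path of $G$ from a vertex outside $D(u)$ to a vertex of $D(u)$ passes through $u$. Taking the contrapositive, and using that a contiguous sub-path of a simple path is simple, any simple path of $G$ that starts in $C$ and avoids $u$ never meets $D(u)$, i.e.\ it stays inside $C$. In particular no vertex of $C$ reaches any vertex of $\widetilde{D}(u)$ in $G\setminus u$. Since the vertex set of $G\setminus u$ is the disjoint union $C\cup\widetilde{D}(u)$, this already shows that no strongly connected component of $G\setminus u$ contains vertices of both parts; hence it suffices to prove that $G_C$ is strongly connected, and, as $s\in C$, it is enough to exhibit for every $w\in C$ a path from $s$ to $w$ and a path from $w$ to $s$ lying entirely in $C$.

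For the path from $s$ to $w$: since $w\notin D(u)$, $u$ does not dominate $w$ in $G_s$, so $G$ has a simple $s$-to-$w$ path avoiding $u$, and by the observation it stays in $C$. The path from $w$ to $s$ is the point where the hypothesis $u\notin N^R$ enters: because $u$ is a leaf of $D^R$, its only descendant in $D^R$ is $u$ itself, and $w\in C$ forces $w\neq u$, so $u$ is not an ancestor of $w$ in $D^R$, i.e.\ $u$ does not dominate $w$ in $G_s^R$; equivalently $G$ has a simple $w$-to-$s$ path avoiding $u$, which again starts in $C$ and avoids $u$ and hence stays in $C$. This proves $G_C$ is strongly connected, and together with maximality it completes the proof.

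I do not expect a genuine obstacle here; the conceptual content is light. The two points that need care are bookkeeping: making sure the paths are chosen simple before invoking Lemma~\ref{lemma:paths-through-SAP}(a) (so that the relevant sub-paths are simple and the lemma is applicable), and using the asymmetry of the hypothesis correctly — the condition $u\notin N^R$, i.e.\ $\widetilde{D}^R(u)=\emptyset$, is exactly what guarantees that $u$ does not lie on every backward path from a vertex of $C$ to $s$, which is the vertex analogue of the condition $v\neq d^R(u)$ used in Lemma~\ref{lemma:above-tree-1}.
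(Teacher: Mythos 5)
Your proof is correct and follows essentially the same route as the paper's: use Lemma~\ref{lemma:paths-through-SAP}(a) to show that no vertex of $C$ is strongly connected in $G\setminus u$ to a vertex of $\widetilde{D}(u)$, then route every pair of vertices in $C$ through $s$ by exhibiting an $s$-to-$w$ and a $w$-to-$s$ path avoiding $u$, with the hypothesis $u\notin N^R$ (equivalently $D^R(u)=\{u\}$) invoked exactly where the paper invokes it, to guarantee the backward path. Your unified ``observation'' (a simple path starting in $C$ and avoiding $u$ stays in $C$) is just a cleaner packaging of the contrapositive argument the paper applies twice, and your explicit attention to simplicity of the subpaths is a small but legitimate tightening.
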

\begin{proof}
By Lemma~\ref{lemma:paths-through-SAP}(a), a vertex in $C$ cannot be strongly connected to a vertex in $\widetilde{D}(u)$ in $G \setminus u$.
Thus, it remains to show that the vertices in $C$ are strongly connected in $G\setminus u$.
Note that by the definition of subset $C$ we have that $s \in C$. Now it suffices to show that for any vertex $w \in C$, digraph $G$ has a path $\pi$ from $s$ to $w$ and a path $\pi'$ from $w$ to $s$ containing only vertices in $C$.
Suppose, for contradiction, that all paths in $G$ from $s$ to $w$ contain a vertex in $D(u)$. Then, Lemma~\ref{lemma:paths-through-SAP} implies that all paths from $s$ to $w$ contain $u$, which contradicts the fact that $w \not \in D(u)$. We use a similar argument for the paths from $w$ to $s$. If all such paths contain a vertex in $D^R(u)$, then by Lemma~\ref{lemma:paths-through-SAP} we have that all paths from $w$ to $s$ contain $u$. This implies that $u$ is a nontrivial dominator in $G_s^R$, clearly a contradiction.
\end{proof}

Finally we deal with the more complicated case (c).

\begin{lemma}
\label{lemma:vertices-above-tree-2}
Let $u$ be a strong articulation point of $G$ that is a common nontrivial dominator of $G_s$ and $G_s^R$ (i.e., $u \in N \cap N^R$).
Let $C = V  \setminus \big( D(u) \cup D^R(u) \big)$.
Then, the subgraph induced by $C$ is a strongly connected component of $G \setminus u$.
\end{lemma}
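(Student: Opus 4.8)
The plan is to mirror the proof of Lemma~\ref{lemma:above-tree-2}, replacing the strong bridge $e=(u,v)$ by the strong articulation point $u$, the set $D(v)$ by $\widetilde{D}(u)$, the set $D^R(u)$ by $\widetilde{D}^R(u)$, and every appeal to Lemma~\ref{lemma:partition-paths} by an appeal to Lemma~\ref{lemma:paths-through-SAP}. First I would record the following four structural facts, which follow from the definition of dominators in $G_s$ and $G_s^R$ together with Lemma~\ref{lemma:paths-through-SAP}: (1) every vertex in $V\setminus D(u)$ is reachable from $s$ by a path that avoids $u$; (2) every vertex in $V\setminus D^R(u)$ reaches $s$ by a path that avoids $u$; (3) there is no edge $(x,y)$ with $x\notin D(u)$ and $y\in\widetilde{D}(u)$ (in particular none from $C$ to $\widetilde{D}(u)$), since such an edge would let a path from $s$ reach a vertex dominated by $u$ while avoiding $u$; and (4) symmetrically, there is no edge $(x,y)$ with $x\in\widetilde{D}^R(u)$ and $y\notin D^R(u)$ (in particular none from $\widetilde{D}^R(u)$ to $C$). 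Note also that $u\ne s$ because $u\in N\cap N^R$, hence $s\in C$.

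Next I would show that any strongly connected component $K$ of $G\setminus u$ that meets $C$ is contained in $C$. Since $G\setminus u$ has vertex set $C\sqcup(\widetilde{D}(u)\cup\widetilde{D}^R(u))$, a component $K$ containing both a vertex of $\widetilde{D}(u)$ and a vertex of $C$ would force a path inside $G\setminus u$ that enters $\widetilde{D}(u)$ through an edge forbidden by (3); symmetrically a vertex of $\widetilde{D}^R(u)$ in $K$ is excluded by (4). Hence $K\subseteq C$.

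The core of the proof is then to show that for every $x\in K$ the induced subgraph $G_C$ on $C$ contains a path from $s$ to $x$ and a path from $x$ to $s$; this forces all of $C$ to be strongly connected in $G_C$, hence in $G\setminus u$, and since $K\subseteq C$ is a strongly connected component we conclude $K=C$, proving the lemma. For the path from $s$ to $x$: by (1) pick a simple path $\pi$ from $s$ to $x$ in $G$ avoiding $u$; Lemma~\ref{lemma:paths-through-SAP}(a) forces $\pi$ to avoid $\widetilde{D}(u)$ (a prefix ending in $\widetilde{D}(u)$ would have to contain $u$), and then taking the last vertex $z$ of $\pi$ in $\widetilde{D}^R(u)$ (if one exists) and its successor $w$, the choice of $z$ together with the fact that $\pi$ already avoids $\widetilde{D}(u)\cup\{u\}$ puts $w\in C$, so the edge $(z,w)$ violates (4) --- a contradiction --- whence $\pi$ lies entirely in $C$. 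The path from $x$ to $s$ is obtained by the symmetric argument, using (2), Lemma~\ref{lemma:paths-through-SAP}(b), the \emph{first} vertex of the path in $\widetilde{D}(u)$, and property (3).

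I expect the only real subtlety --- and the main place to be careful --- to be the bookkeeping between $D(u)$ and $\widetilde{D}(u)$ (and likewise $D^R(u)$ versus $\widetilde{D}^R(u)$): since $u$ is deleted, the ``above'' component is $V\setminus(D(u)\cup D^R(u))$ rather than $V\setminus(\widetilde{D}(u)\cup\widetilde{D}^R(u))$, and one must check that $C\subseteq V\setminus D(u)$ and $C\subseteq V\setminus D^R(u)$ so that properties (1)--(4) apply as stated, and that the boundary-edge arguments never accidentally involve $u$ itself (they do not, because every endpoint considered lies in $C$ or in a set of \emph{proper} descendants). Everything else is a direct transcription of the edge-deletion argument.
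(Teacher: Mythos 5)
Your proposal is correct and is essentially the paper's own proof: the paper likewise establishes the four properties via Lemma~\ref{lemma:paths-through-SAP}, deduces that any strongly connected component meeting $C$ stays in $C$, and then runs the same last-vertex/first-vertex argument on a path from (respectively to) $s$ avoiding $u$ to conclude it lies entirely in $G_C$. The minor cosmetic difference --- you phrase the first reduction as ``$\pi$ avoids $\widetilde{D}(u)$'' whereas the paper says ``$\pi$ avoids $D(u)$'' --- is immaterial, since $\pi$ already avoids $u$.
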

\begin{proof}
By the fact that $u$ is a strong articulation point and by Lemma \ref{lemma:paths-through-SAP}, we have that the following properties hold in $G$:
\begin{itemize}
\item[(1)]  There is a path from $s$ to any vertex in $V \setminus D(u)$ that does not contain $u$.
\item[(2)]  There is a path from any vertex in $V \setminus D^R(u)$ to $s$ that does not contain $u$.
\item[(3)] There is no edge $(x,y)$ such that $x \notin D(u)$ and $y \in \widetilde{D}(u)$. In particular, since $C \subseteq V \setminus D(u)$, there is no edge $(x,y)$ such that $x \in C$ and $y \in \widetilde{D}(u)$.
\item[(4)] Symmetrically, there is no edge $(x,y)$ such that $x \in \widetilde{D}^R(u)$ and $y \not \in D^R(u)$. In particular, since $C \subseteq V \setminus D^R(u)$, there is no edge $(x,y)$ such that $x \in \widetilde{D}^R(u)$ and $y\in C$.
\end{itemize}
Let $K$ be a strongly connected component in $G \setminus u$ such that $K \cap C \neq \emptyset$. By properties (3) and (4) we have that $K$ contains no vertex in $V \setminus C = D(u) \cup D^R(u)$. Thus, $K\subseteq C$. Let $G_C$ be the subgraph of $G$ induced by the vertices in $C$. We will show that for any vertex $x \in C$ the digraph $G_C$ contains a path from $s$ to $x$ and a path from $x$ to $s$. This implies that all vertices in $C$
are strongly connected in $G_C$, and hence also in $G \setminus u$. Since $K \subseteq C$ is a strongly connected component of $G\setminus u$, it must be $K = C$ thus yielding the lemma.

First we argue about the existence of a path from $s$ to $x \in C$ in $G_C$. Let $\pi$ be a path from $s$ to $x$ that does not contain $u$.
Property (1) guarantees that such a path exists.
Also, Lemma~\ref{lemma:paths-through-SAP} implies that $\pi$ does not contain a vertex in $D(u)$.
It remains to show that $\pi$ also avoids $\widetilde{D}^R(u)$.
Assume, for contradiction, that $\pi$ contains a vertex $z \in \widetilde{D}^R(u)$.
Choose $z$ to be the last such vertex in $\pi$.
Since $x \notin {D}^R(u)$ we have that $z \neq x$.
Let $w$ be the successor of $z$ in $\pi$.
From the fact that $\pi$ does not contain vertices in $D(u)$ and by the choice of $z$ it follows that $w \in C$. But then, edge $(z,w)$ violates property (4), a contradiction.
We conclude that path $\pi$ also exists in $G_C$ as claimed.

The argument for the existence of a path from $x \in C$ to $s$ in $G_C$ is symmetric.
Let $\pi$ be a path from $x$ to $s$ that does not contain $u$.
From property (2) and Lemma~\ref{lemma:paths-through-SAP} we have that such a path exists and does not contain a vertex in $D^R(u)$.
Now we show that $\pi$ also avoids $\widetilde{D}(u)$.
Assume, for contradiction, that $\pi$ contains a vertex $z \in \widetilde{D}(u)$.
Choose $z$ to be the first such vertex in $\pi$.
Since $x \notin \widetilde{D}(u)$ we have that $z \neq x$.
Let $w$ be the predecessor of $z$ in $\pi$.
From the fact that $\pi$ does not contain vertices in $D^R(u)$ and by the choice of $z$ it follows that $w \in C$.
So, edge $(w,z)$ violates property (3), a contradiction. Hence path $\pi$ also exists in $G_C$.
\end{proof}

\begin{lemma}
\label{lemma:common-nontrivial}
Let $u$ be a strong articulation point of $G$ that is a common nontrivial dominator of $G_s$ and $G_s^R$.
Let $C$ be a strongly connected component of $G \setminus u$ that contains a vertex in $\widetilde{D}(u) \cap \widetilde{D}^R(u)$. Then, $C \subseteq \widetilde{D}(u) \cap \widetilde{D}^R(u)$.
\end{lemma}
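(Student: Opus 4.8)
The plan is to follow the same route as the proof of Lemma~\ref{lemma:common-bridge}, with the strong articulation point $u$ playing the role of the common bridge and Lemma~\ref{lemma:paths-through-SAP} replacing Lemma~\ref{lemma:partition-paths}. It suffices to show that no vertex $x \in \widetilde{D}(u) \cap \widetilde{D}^R(u)$ is strongly connected in $G \setminus u$ to any vertex $y \notin \widetilde{D}(u) \cap \widetilde{D}^R(u)$: since $C$ is a strongly connected component of $G \setminus u$ (so in particular $u \notin C$) that contains some vertex of $\widetilde{D}(u) \cap \widetilde{D}^R(u)$, this immediately forces $C \subseteq \widetilde{D}(u) \cap \widetilde{D}^R(u)$.

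To organize the case analysis, I would first note that, because $D(u) = \widetilde{D}(u) \cup \{u\}$ and $D^R(u) = \widetilde{D}^R(u) \cup \{u\}$, the vertex set $V \setminus \{u\}$ is partitioned into the four pieces $\widetilde{D}(u) \cap \widetilde{D}^R(u)$, $\widetilde{D}(u) \setminus \widetilde{D}^R(u)$, $\widetilde{D}^R(u) \setminus \widetilde{D}(u)$, and $C' := V \setminus \big( D(u) \cup D^R(u) \big)$. Fixing $x$ and $y$ as above, and noting $y \neq u$ since $u \notin C$, the vertex $y$ must lie in one of the last three pieces.

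Then I would dispatch the three cases. If $y \in C'$, then Lemma~\ref{lemma:vertices-above-tree-2} tells us that $C'$ induces a strongly connected component of $G \setminus u$; since $x \in \widetilde{D}(u) \subseteq D(u)$ we have $x \notin C'$, so $x$ and $y$ lie in different strongly connected components. If $y \in \widetilde{D}(u) \setminus \widetilde{D}^R(u)$, then (as $y \neq u$) $y$ is not a descendant of $u$ in $D^R$, whereas $x \in \widetilde{D}^R(u)$ is; applying Lemma~\ref{lemma:paths-through-SAP}(b) with $w = y$ shows that every simple path in $G$ from $x$ to $y$ contains $u$, hence no such path survives in $G \setminus u$ and $x,y$ are not strongly connected there. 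The case $y \in \widetilde{D}^R(u) \setminus \widetilde{D}(u)$ is symmetric, using Lemma~\ref{lemma:paths-through-SAP}(a) to conclude that every path from $y$ to $x$ in $G$ passes through $u$.

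I do not anticipate a genuine obstacle: the argument is essentially the verbatim vertex analogue of Lemma~\ref{lemma:common-bridge}. The only points that require a little care are the bookkeeping that $V\setminus\{u\}$ splits into exactly these four sets (so that the relevant objects are the \emph{proper}-descendant sets $\widetilde{D}(u)$, $\widetilde{D}^R(u)$ rather than $D(u)$, $D^R(u)$), and the observation that ruling out strong connectivity between $x$ and any vertex outside the intersection, combined with $u \notin C$, is enough to confine $C$ inside $\widetilde{D}(u) \cap \widetilde{D}^R(u)$.
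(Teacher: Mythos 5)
Your proof is correct and follows essentially the same route as the paper's: show that any vertex in $\widetilde{D}(u) \cap \widetilde{D}^R(u)$ cannot be strongly connected in $G \setminus u$ to any vertex outside, handling the complement set via Lemma~\ref{lemma:vertices-above-tree-2} and the two asymmetric difference sets via Lemma~\ref{lemma:paths-through-SAP}. You merely spell out the applications of Lemma~\ref{lemma:paths-through-SAP}(a) and (b) in more detail than the paper does, which is fine.
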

\begin{proof}
Consider any two vertices $x$ and $y$ such that $x \in \widetilde{D}(u) \cap \widetilde{D}^R(u)$ and
 $y \not\in \widetilde{D}(u) \cap \widetilde{D}^R(u)$.
We claim that $x$ and $y$ are not strongly connected in $G \setminus u$, which implies the lemma.
To prove the claim, note that by  Lemma \ref{lemma:vertices-above-tree-2}, $x$ is not strongly connected in $G \setminus u$ with any vertex in $V \setminus \big ( \widetilde{D}(u) \cup \widetilde{D}^R(u)  \big)$.
Hence, we can assume that $y \in \widetilde{D}(u) \setminus \widetilde{D}^R(u)$ or $y \in \widetilde{D}^R(u) \setminus \widetilde{D}(u)$. In either case, $x$ and $y$ are not strongly connected in $G \setminus u$ by Lemma \ref{lemma:paths-through-SAP}.
\end{proof}

The following theorem summarizes the results of Lemmata \ref{lemma:vertices-subtree}, \ref{lemma:vertices-above-tree-1}, \ref{lemma:vertices-above-tree-2}, and \ref{lemma:common-nontrivial}.

\begin{theorem}
\label{theorem:vertices-scc}
Let $u$ be a strong articulation point of $G$, and let $s$ be an arbitrary vertex in $G$.
Let $C$ be a strongly connected component of $G \setminus u$. Then one of the following cases holds:
\begin{itemize}
\item[(a)] If $u$ is a nontrivial dominator in $G_s$ but not in $G_s^R$ then either $C \subseteq \widetilde{D}(u)$ or $C = V \setminus D(u)$.
\item[(b)] If $u$ is a nontrivial dominator in $G_s^R$ but not in $G_s$ then either $C \subseteq \widetilde{D}^R(u)$ or $C = V \setminus D^R(u)$.
\item[(c)] If $u$ is a common nontrivial dominator of $G_s$ and $G_s^R$ then either $C \subseteq \widetilde{D}(u) \setminus \widetilde{D}^R(u)$, or $C \subseteq \widetilde{D}^R(u) \setminus \widetilde{D}(u)$, or $C \subseteq \widetilde{D}(u) \cap \widetilde{D}^R(u)$, or $C = V \setminus \big( D(u) \cup D^R(u) \big)$.
\item[(d)] If $u = s$ then $C \subseteq \widetilde{D}(u)$.
\end{itemize}
Moreover, if  $C \subseteq \widetilde{D}(u)$ (resp., $C \subseteq \widetilde{D}^R(u)$) then $C=H(w)$ (resp., $C=H^R(w)$) where $w$ is a vertex in $\widetilde{D}(u)$ (resp., $\widetilde{D}^R(u)$) such that $h(w) \not \in \widetilde{D}(u)$ (resp.,  $h^R(w) \not \in \widetilde{D}^R(u)$).
\end{theorem}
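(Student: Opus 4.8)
The plan is to treat the theorem as an assembly of Lemmata~\ref{lemma:vertices-subtree}, \ref{lemma:vertices-above-tree-1}, \ref{lemma:vertices-above-tree-2}, and \ref{lemma:common-nontrivial}, organized by the case split supplied by Property~\ref{property:strong-articulation-point}. First I would invoke Property~\ref{property:strong-articulation-point} to conclude that a strong articulation point $u$ falls into exactly one of the cases (a) $\widetilde{D}(u)\neq\emptyset$, $\widetilde{D}^R(u)=\emptyset$; (b) $\widetilde{D}(u)=\emptyset$, $\widetilde{D}^R(u)\neq\emptyset$; or (c) $\widetilde{D}(u)\neq\emptyset$ and $\widetilde{D}^R(u)\neq\emptyset$, the last of which subsumes $u=s$ since then $D(s)=D^R(s)=V$. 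The core structural fact I would extract from Lemma~\ref{lemma:paths-through-SAP} is that, whenever $\widetilde{D}(u)\neq\emptyset$, no vertex of $V\setminus D(u)$ can reach a vertex of $\widetilde{D}(u)$ in $G\setminus u$, so the strongly connected components of $G\setminus u$ partition into those lying entirely inside $\widetilde{D}(u)$ and those lying entirely inside $V\setminus D(u)$; and symmetrically for $D^R$ when $\widetilde{D}^R(u)\neq\emptyset$.

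For case (a) I would combine this $D$-split with Lemma~\ref{lemma:vertices-above-tree-1}, which states that $V\setminus D(u)$ itself induces a single strongly connected component; hence any component is either $\subseteq\widetilde{D}(u)$ or equals $V\setminus D(u)$. Case (b) is the mirror image via $D^R$. For case (c) I would intersect the $D$-split with the $D^R$-split: a component lies in one of $\widetilde{D}(u)\cap\widetilde{D}^R(u)$, $\widetilde{D}(u)\setminus\widetilde{D}^R(u)$, $\widetilde{D}^R(u)\setminus\widetilde{D}(u)$, or $V\setminus(D(u)\cup D^R(u))$, where I use $u\notin\widetilde{D}(u)\cup\widetilde{D}^R(u)$ to rewrite the relevant intersections $\widetilde{D}(u)\cap(V\setminus D^R(u))$ and $(V\setminus D(u))\cap\widetilde{D}^R(u)$. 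Lemma~\ref{lemma:vertices-above-tree-2} identifies the last of these four sets as a single strongly connected component, and Lemma~\ref{lemma:common-nontrivial} confirms that the first one is a union of strongly connected components. Case (d), $u=s$, is then simply the specialization of (c) in which $D(s)=D^R(s)=V$, so $\widetilde{D}(s)=\widetilde{D}^R(s)=V\setminus\{s\}$ and every component is contained in $\widetilde{D}(s)$.

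For the ``moreover'' part I would take any component $C\subseteq\widetilde{D}(u)$ (the $\widetilde{D}^R(u)$ case being symmetric), let $w\in C$ be the common ancestor in $H$ of all of $C$ furnished by Lemma~\ref{lemma:vertices-subtree2}, so that $C\subseteq H(w)$, and then apply Lemma~\ref{lemma:vertices-subtree1} to get $H(w)\subseteq C$, whence $C=H(w)$. Finally I would argue $h(w)\notin\widetilde{D}(u)$: otherwise Lemma~\ref{lemma:vertices-subtree1} applied to $h(w)$ would put $H(h(w))$ inside a strongly connected component of $G\setminus u$ contained in $\widetilde{D}(u)$, which (since $w\in H(h(w))\cap C$) must be $C=H(w)$, contradicting $H(w)\subsetneq H(h(w))$. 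Equivalently, once such a $w$ with $h(w)\notin\widetilde{D}(u)$ has been pinned down, one may cite Lemma~\ref{lemma:vertices-subtree} directly.

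I expect the bookkeeping in case (c) to be the main obstacle: one must verify precisely that the four listed sets form a partition of $V\setminus\{u\}$ and that each is closed under the strong-connectivity relation of $G\setminus u$. The delicate points are the handling of $u$ itself when passing between $D(u)$ and $\widetilde{D}(u)$, and the fact that the separations provided by Lemma~\ref{lemma:paths-through-SAP} are one-directional, so that ruling out a component straddling $\widetilde{D}(u)\setminus\widetilde{D}^R(u)$ and $\widetilde{D}^R(u)\setminus\widetilde{D}(u)$ genuinely requires both the $D$-split and the $D^R$-split. Everything else is a direct appeal to the preceding lemmata.
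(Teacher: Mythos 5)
Your proposal is correct and coincides with the paper's route: the paper itself states that the theorem "summarizes the results of Lemmata \ref{lemma:vertices-subtree}, \ref{lemma:vertices-above-tree-1}, \ref{lemma:vertices-above-tree-2}, and \ref{lemma:common-nontrivial}," and you assemble exactly those pieces under the case split from Property~\ref{property:strong-articulation-point}, with the same reduction of case (d) to the specialization $D(s)=D^R(s)=V$ and the same use of Lemmata~\ref{lemma:vertices-subtree1}/\ref{lemma:vertices-subtree2} (equivalently Lemma~\ref{lemma:vertices-subtree}) for the ``moreover'' clause.
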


In the remainder of this section we consider the effect of removing a strong articulation point $u \not =s$. If $u = s$ then we can
compute in $O(n)$ time the properties related to $G\setminus s$ for all the problems considered in this section (i.e., finding all strongly connected components of $G\setminus s$, counting the number of strongly connected components of $G\setminus s$, finding the smallest/largest strongly connected component in $G\setminus s$), as suggested by Theorem \ref{theorem:vertices-scc}.
Indeed, for each child $x$ of $s$ in $H$, the vertices in $H(x)$ induce a strongly connected component in $G\setminus s$. In particular, we note the following:

\begin{corollary}
\label{cor:start-vertex}
Let $G$ be a strongly connected digraph, and let $s$ be an arbitrary start vertex of $G$.
Then, $s$ is a strong articulation point of $G$ if and only if $s$ has at least two children in $H$.
\end{corollary}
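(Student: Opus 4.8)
The plan is to read the claim off directly from Theorem~\ref{theorem:vertices-scc}(d) together with the structure of the loop nesting tree $H$. First I would record the trivial but crucial observation that, since $G_s$ is a flow graph with start vertex $s$, the vertex $s$ dominates every vertex of $G$; hence $D(s) = V$ and $\widetilde{D}(s) = V \setminus \{s\}$. Consequently, for a vertex $w \ne s$, the condition $h(w) \notin \widetilde{D}(s)$ is equivalent to $h(w) = s$, i.e.\ to $w$ being a child of $s$ in $H$ (recall that $H$ is a tree rooted at $s$ because $G$ is strongly connected, as noted in Section~\ref{sec:loop-nesting}).

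Next I would show that the strongly connected components of $G \setminus s$ are exactly the sets $H(w)$ with $w$ ranging over the children of $s$ in $H$. One direction is Lemma~\ref{lemma:vertices-subtree} applied with $u = s$: for every vertex $w$ with $w \in \widetilde{D}(s)$ and $h(w) \notin \widetilde{D}(s)$ — that is, for every child $w$ of $s$ in $H$ — the subgraph induced by $H(w)$ is a strongly connected component of $G \setminus s$. For the converse, Theorem~\ref{theorem:vertices-scc}(d), together with its ``Moreover'' clause, states that every strongly connected component $C$ of $G \setminus s$ has the form $C = H(w)$ for some $w \in \widetilde{D}(s)$ with $h(w) \notin \widetilde{D}(s)$, i.e.\ for some child $w$ of $s$ in $H$. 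Since the subtrees of $H$ rooted at the children of $s$ partition $V \setminus \{s\}$, this correspondence is a bijection, so the number of strongly connected components of $G \setminus s$ equals the number of children of $s$ in $H$.

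Finally I would conclude the equivalence: $G$ is strongly connected, hence has a single strongly connected component, so removing $s$ increases the number of strongly connected components if and only if $G \setminus s$ has at least two of them, which by the previous step happens precisely when $s$ has at least two children in $H$. (When $n \ge 2$ the vertex $s$ always has at least one child in $H$, so ``increases'' is the correct comparison and the boundary case is harmless.) I do not expect a genuine obstacle: all the work is already contained in Lemma~\ref{lemma:vertices-subtree} and Theorem~\ref{theorem:vertices-scc}, and the only points requiring a moment of care are the identification $\widetilde{D}(s) = V \setminus \{s\}$ and the fact that $H$ is a tree rooted at $s$ for strongly connected $G$, both immediate from the definitions in Section~\ref{sec:definitions}.
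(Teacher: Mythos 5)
Your proof is correct and follows essentially the same route the paper takes: read off case~(d) of Theorem~\ref{theorem:vertices-scc} (together with Lemma~\ref{lemma:vertices-subtree} specialized to $u=s$) to see that the strongly connected components of $G\setminus s$ are exactly the subtrees $H(w)$ for $w$ a child of $s$ in $H$, and then count. The paper compresses this into a one-line remark preceding the corollary; you have merely spelled out the same observation, including the helpful identification $\widetilde{D}(s)=V\setminus\{s\}$.
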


\subsection{Finding all strongly connected components of $G\setminus u$}
\label{sec:vertices-all-scc}

In this section we show how to exploit  Theorem \ref{theorem:vertices-scc}  to answer the following reporting query in a strongly connected graph $G$:
``Return all the strongly connected components of $G \setminus u$, for each vertex $u$.''. As it was previously mentioned, we only need to consider the case where $u$ is a strong articulation point and $u \not= s$.
Note that the simple-minded solution is to compute from scratch the strongly connected components of $G \setminus u$, for each strong articulation point $u$, which takes $O(mn)$ time. After a linear-time preprocessing, our algorithm will report the vertices of each strongly connected component of $G \setminus u$ in asymptotically optimal $O(n)$ time. Thus, we can output in a total of $O(m+np)$ worst-case time the strongly connected components of $G\setminus u$, for each strong articulation point $u$, where $p$ is the total number of strong articulation points in $G$.

We do this as follows. First, we process the dominator trees $D$ and $D^R$ in $O(n)$ time, so that we can test the ancestor/descendant relation in each tree in constant time~\cite{domin:tarjan}. Next we answer the query by executing a preorder traversal of the loop nesting trees $H$ and $H^R$. During these traversals, we will assign a label $scc(v)$ to each vertex $v$ that specifies the strongly connected component of $v$ in $G\setminus u$: namely, each strongly connected component will consist of vertices with the same label. We initialize $scc(v)=v$ for all vertices $v \neq u$.
Then we execute a preorder traversal of $H$ which will identify the strongly connected components of all vertices in $G\setminus u$, except for vertices in $\widetilde{D}^R(u) \setminus \widetilde{D}(u)$ and for vertex $u$; the strongly connected components of vertices in $\widetilde{D}^R(u) \setminus \widetilde{D}(u)$ will be discovered during a preorder traversal of $H^R$.
During our preorder traversals of $H$ and $H^R$ we will update $\mathit{scc}(v)$ for all vertices $v\notin \{s, u\}$. 
Throughout, we will always have $\mathit{scc}(s)=s$ while $\mathit{scc}(u)$ will be undefined.
When we visit a vertex $w \not\in \widetilde{D}^R(u) \setminus \widetilde{D}(u)$, $w\neq s$ (note that both $u$ and $s$ are excluded from this set), we test if the condition $(w \in \widetilde{D}(u) \wedge h(w) \not\in \widetilde{D}(u))$ holds.
If it does, then the label of $w$ remains $scc(w)=w$, otherwise we set $scc(w) = scc(h(w))$.
This process assigns $scc(x)=s$ to all vertices $x \in C = V \setminus \big ( D(u) \cup D^R(u) \big )$.
Also, all vertices $x \in H(w)$, where $w \in \widetilde{D}(u)$ and $h(w) \not\in \widetilde{D}(u)$ are assigned $scc(x) = w$.
Finally, we need to assign appropriate labels for the vertices in $\widetilde{D}^R(u) \setminus \widetilde{D}(u)$ (note again that both $u$ and $s$ are excluded).
We do that by executing a similar procedure on $H^R$.
This time, when we visit a vertex $w \in \widetilde{D}^R(u) \setminus \widetilde{D}(u)$ we test if the condition $(w \in \widetilde{D}^R(u) \wedge h^R(w) \not\in \widetilde{D}^R(u))$ holds.
If it does, then the label of $w$ remains $scc(w)=w$, otherwise we set $scc(w) = scc(h^R(w))$.
At the end of this process we have that all vertices $x \in H^R(w)$, such that $w \in \widetilde{D}^R(u)$ and $h(w) \not\in \widetilde{D}^R(u)$, are assigned $scc(x) = w$.
Thus, by Theorem~\ref{theorem:vertices-scc}, we have assigned correct labels to all vertices. Figure \ref{figure:example1V} shows the result of a reporting query for the digraph of Figure \ref{figure:example1ab} after the removal of vertex $f$.

\begin{figure}[t!]
\begin{center}
\centerline{\includegraphics[trim={0 0 0 0cm}, clip=true, width=1.4\textwidth]{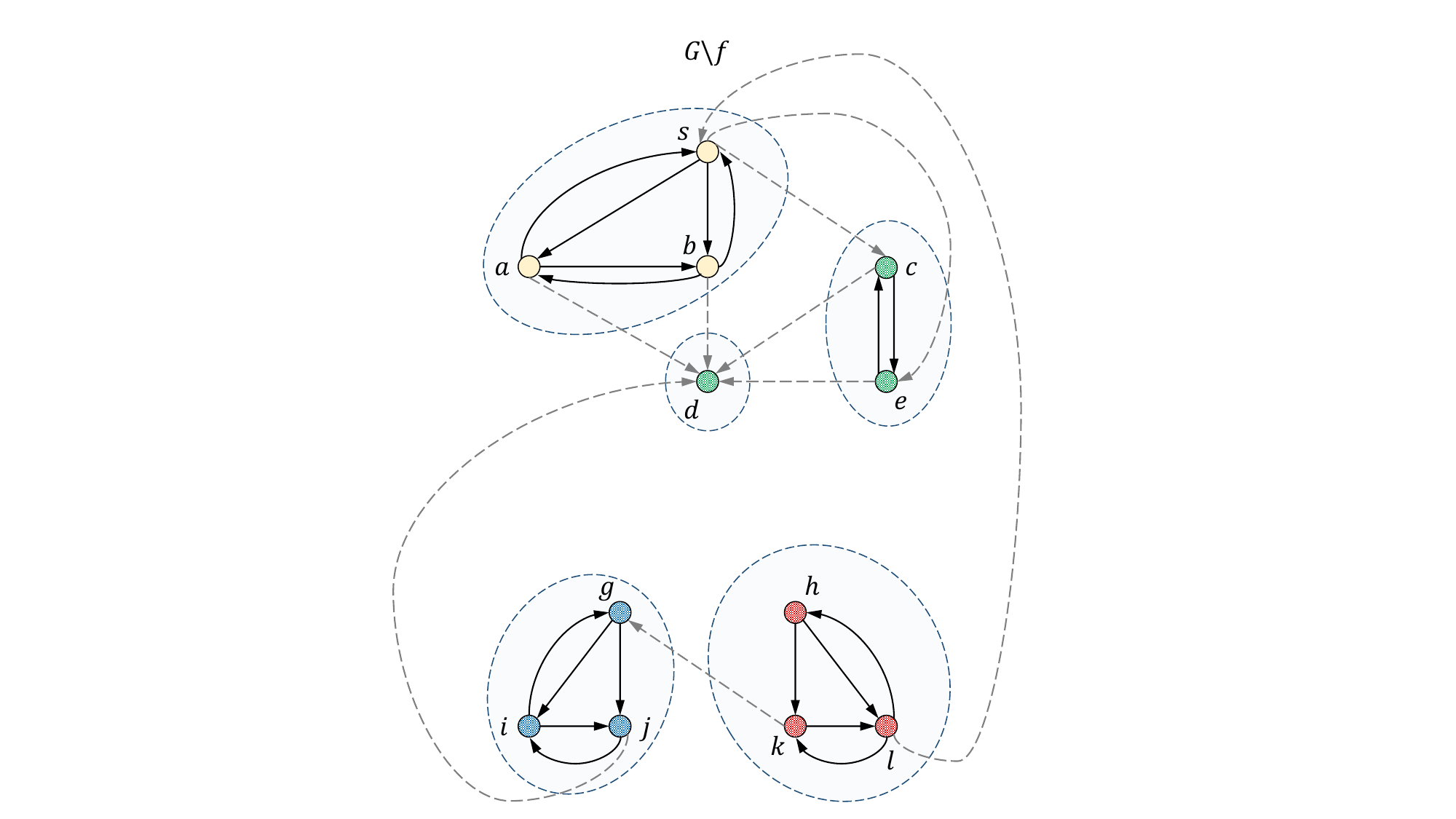}	}
\caption{The strongly connected components of $G \setminus f$ where $G$ is the graph of Figure \ref{figure:example1ab}.}
\label{figure:example1V}
\end{center}
\end{figure}

\begin{theorem}
	\label{theorem:all-scc-v}
	Let $G$ be a strongly connected digraph with $n$ vertices and $m$ edges. We can preprocess $G$ in $O(m)$ time and construct an $O(n)$-space data structure so that, given
	a vertex $u$ of $G$, we can report in $O(n)$ time all the strongly connected components of $G\setminus u$.
\end{theorem}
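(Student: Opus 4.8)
The plan is to realize the data structure as the two dominator trees together with the two loop nesting trees, and to realize the query by exactly the two-pass labelling procedure described immediately before the statement, whose correctness reduces entirely to Theorem~\ref{theorem:vertices-scc}. \textbf{Preprocessing.} Fix an arbitrary start vertex $s$. In $O(m)$ time I would compute the dominator trees $D$ and $D^R$ of the flow graphs $G_s$ and $G_s^R$ (by one of the linear-time algorithms cited in Section~\ref{sec:dominators}), the loop nesting trees $H$ and $H^R$ with respect to the dfs trees that generated $D$ and $D^R$, and an $O(n)$-space oracle for each of $D$ and $D^R$ answering ancestor/descendant queries in $O(1)$ time~\cite{domin:tarjan}; I would also record, for every vertex $w$, its parent $h(w)$ in $H$ and $h^R(w)$ in $H^R$. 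Each of the four trees has $n$ vertices and each oracle uses $O(n)$ space, so the whole structure occupies $O(n)$ space. From $D$ and $D^R$ one reads off, via Property~\ref{property:strong-articulation-point}, which vertices are strong articulation points; if the query vertex $u$ is not one, then $G\setminus u$ has the same strongly connected components as $G$ restricted to $V\setminus\{u\}$ and is output trivially, and the case $u=s$ is handled directly in $O(n)$ time as in Corollary~\ref{cor:start-vertex}. So we may assume $u\neq s$ is a strong articulation point.

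\textbf{Query.} I would run the two-pass labelling verbatim: initialize $\mathit{scc}(v)=v$ for all $v\neq u$; perform a preorder traversal of $H$, and at each visited vertex $w\notin\widetilde D^R(u)\setminus\widetilde D(u)$ with $w\neq s$, leave $\mathit{scc}(w)=w$ if $w\in\widetilde D(u)$ and $h(w)\notin\widetilde D(u)$, and otherwise set $\mathit{scc}(w)=\mathit{scc}(h(w))$; then perform a preorder traversal of $H^R$, updating only the vertices $w\in\widetilde D^R(u)\setminus\widetilde D(u)$ by the symmetric rule. Because a preorder traversal visits the parent before the child, the assignment $\mathit{scc}(w)=\mathit{scc}(h(w))$ always reads an already-finalized value; each vertex is processed with $O(1)$ ancestor queries in $D$ or $D^R$ and a single parent lookup in $H$ or $H^R$, so both passes, and the final grouping of vertices by their $\mathit{scc}$-label, take $O(n)$ time in total.

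\textbf{Correctness} is where Theorem~\ref{theorem:vertices-scc} does the work. In the region $\widetilde D(u)$ the theorem states that every component $C$ of $G\setminus u$ contained in $\widetilde D(u)$ equals $H(w)$ for the unique $w\in C$ with $h(w)\notin\widetilde D(u)$; the $H$-pass keeps exactly these $w$ as their own representatives and propagates their label down $H(w)$, so $w$ and nothing else ends up labelled $w$. In the complementary ``above'' region Theorem~\ref{theorem:vertices-scc} identifies a single component ($V\setminus D(u)$ in case~(a), $V\setminus D^R(u)$ in case~(b), $V\setminus(D(u)\cup D^R(u))$ in case~(c)); every such vertex fails the test $w\in\widetilde D(u)$, so the label $\mathit{scc}(s)=s$ is carried along $H$ to all of them. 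The only vertices untouched by the $H$-pass are $u$ (deliberately omitted) and those in $\widetilde D^R(u)\setminus\widetilde D(u)$, which occur only in case~(c) and which the theorem partitions into components $H^R(w)$ with $h^R(w)\notin\widetilde D^R(u)$; the symmetric $H^R$-pass labels precisely these correctly. Hence the classes of equal $\mathit{scc}$-label are exactly the strongly connected components of $G\setminus u$, and the theorem follows.

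\textbf{Main obstacle.} The mathematical content is entirely packaged in Theorem~\ref{theorem:vertices-scc}, so what remains is bookkeeping, and the only genuinely fiddly point is to verify, case by case, that the two passes split the responsibility over $V\setminus\{u\}$ with neither gaps nor overlaps: in particular that whenever $w\in\widetilde D(u)$ fails the $H$-test its parent $h(w)$ still lies inside $\widetilde D(u)$ (so propagation never escapes the correct region), and symmetrically for the $H^R$-pass restricted to $\widetilde D^R(u)\setminus\widetilde D(u)$. This is the ``laminar inside a loop'' reasoning underlying Lemma~\ref{lemma:vertices-subtree1} and Lemma~\ref{lemma:common-nontrivial}, applied to each of the four cases of Theorem~\ref{theorem:vertices-scc}.
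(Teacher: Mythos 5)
Your proposal is correct and follows essentially the same route as the paper: the same $O(m)$ preprocessing (dominator trees $D,D^R$ with constant-time ancestor queries, loop nesting trees $H,H^R$), the same two-pass preorder labelling with the test $w\in\widetilde D(u)\wedge h(w)\notin\widetilde D(u)$ (and its reverse counterpart on $\widetilde D^R(u)\setminus\widetilde D(u)$), and correctness drawn directly from Theorem~\ref{theorem:vertices-scc}. The only additions beyond the paper's text are the explicit handling of non-articulation points and of $u=s$, which the paper also notes separately, so there is nothing to flag.
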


\begin{corollary}
	\label{corollary:all-scc-v}
Let $G$ be a strongly connected digraph with $n$ vertices, $m$ edges and $p$ strong articulation points. We can output the strongly connected components of $G\setminus u$, for all strong articulation points $u$ in $G$, in a total of $O(m+np)$ worst-case time.
\end{corollary}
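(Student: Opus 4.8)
The plan is to derive the corollary directly from Theorem~\ref{theorem:all-scc-v}, essentially by amortizing its preprocessing cost over all the strong articulation points. First I would run once the $O(m)$-time preprocessing of Theorem~\ref{theorem:all-scc-v}: fix an arbitrary start vertex $s$, build the flow graphs $G_s$ and $G_s^R$, compute their dominator trees $D$, $D^R$ and loop nesting trees $H$, $H^R$, and preprocess $D$ and $D^R$ so that ancestor/descendant queries take $O(1)$ time~\cite{domin:tarjan}. By Property~\ref{property:strong-articulation-point}, the set of strong articulation points of $G$ with $u\neq s$ consists exactly of the nontrivial dominators of $G_s$ together with those of $G_s^R$, i.e.\ the internal (non-leaf) vertices of $D$ and of $D^R$, which can be enumerated in $O(n)$ time; whether $s$ itself is a strong articulation point is decided in $O(n)$ time by Corollary~\ref{cor:start-vertex}. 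Hence after $O(m)$ total time we have the data structure of Theorem~\ref{theorem:all-scc-v} and an explicit list of the $p$ strong articulation points.

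Next I would iterate over these $p$ vertices. For each strong articulation point $u\neq s$, invoke the reporting query of Theorem~\ref{theorem:all-scc-v}, i.e.\ perform the two preorder traversals of $H$ and $H^R$ with the labeling rule described in Section~\ref{sec:vertices-all-scc} (initialize $\mathit{scc}(v)=v$, then update $\mathit{scc}(w)=\mathit{scc}(h(w))$ or $\mathit{scc}(w)=\mathit{scc}(h^R(w))$ according to the membership tests in $\widetilde{D}(u)$ and $\widetilde{D}^R(u)$); by Theorem~\ref{theorem:vertices-scc} this correctly labels all strongly connected components of $G\setminus u$ and costs $O(n)$ time, since it touches only the $O(n)$ nodes of $D$, $D^R$, $H$, $H^R$ and does $O(1)$ work per node. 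For the remaining case $u=s$, Theorem~\ref{theorem:vertices-scc}(d) says the strongly connected components of $G\setminus s$ are precisely the subtrees $H(x)$ for the children $x$ of $s$ in $H$, which are read off in $O(n)$ time. Summing, the total time is $O(m) + p\cdot O(n) = O(m+np)$, and the space stays $O(n)$ because every query reuses the same trees and simply overwrites the $\mathit{scc}$ array.

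I do not expect a genuine obstacle here: all the real work is already contained in Theorem~\ref{theorem:vertices-scc} and Theorem~\ref{theorem:all-scc-v}. The one point that deserves an explicit remark in the write-up is that the per-query cost is $O(n)$ rather than $O(m+n)$: this is because the query never inspects the edge set $E$ of $G$, operating solely on the dominator and loop nesting trees, each of which has $n$ vertices, with constant-time ancestor tests in $D$, $D^R$ and constant-time parent lookups in $H$, $H^R$. Given that, the bound $O(m+np)$ follows, and it is asymptotically tight since $\Omega(m)$ is needed to read the graph and $\Omega(np)$ to output the components for $p$ articulation points in the worst case.
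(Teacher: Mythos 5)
Your proof is correct and is essentially the argument the paper intends: Corollary~\ref{corollary:all-scc-v} follows immediately from Theorem~\ref{theorem:all-scc-v} by running the $O(m)$-time preprocessing once and then invoking the $O(n)$-time reporting query for each of the $p$ strong articulation points (enumerated from $D$, $D^R$ via Property~\ref{property:strong-articulation-point}, with $s$ itself handled via Corollary~\ref{cor:start-vertex}), for a total of $O(m+np)$. The only thing I would trim is the claimed $\Omega(np)$ worst-case output lower bound at the end: it is plausible but not argued (and not needed), so either justify it with a concrete family of digraphs or drop the ``asymptotically tight'' remark from the proof.
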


\subsection{Counting the number of strongly connected components of $G\setminus u$}
\label{sec:vertices-SCCs}

Let $G=(V,E)$ be a strongly connected graph digraph.
In this section we consider the problem of computing the total number of strongly connected components obtained after the deletion of any vertex from $G$.
Note that we need to consider only vertices that are strong articulation points in $G$; indeed if $u$ is not a strong articulation point, then $G\setminus u$ has exactly the same strongly connected components as $G$.
Specifically, we show that by applying a \emph{vertex-splitting} transformation, we can reduce in $O(n)$ time this problem to the computation of the number of the strongly connected components obtained after the deletion of any strong bridge.
Using our results from Section~\ref{sec:SCCs-num}, this provides a linear-time algorithm for computing the total number of the strongly connected components obtained after the deletion of a single vertex, for all vertices in $G$.
Similarly to Section~\ref{sec:SCCs-num}, our bound is tight and it improves sharply over the simple-minded $O(mn)$ solution, which computes from scratch the strongly connected components of $G \setminus u$ for each strong articulation point $u$.

Let $G=(V,E)$ be the input strongly connected digraph, and let $s \in V$ be an arbitrary start vertex.
For the reduction, we construct a new graph $\overline{G}=(\overline{V},\overline{E)}$ that results from $G$ by applying the following transformation.
For each strong articulation point $x \not=s$ of $G$ (i.e., nontrivial dominator in $G_s$ or $G^R_s$), we add an auxiliary vertex $\overline{x} \in \overline{V}$ and add the auxiliary edge $(\overline{x},x)$. Then, we replace each edge $(u, x)$ entering $x$ in $G$ with an edge $(u,\overline{x})$ entering $\overline{x}$ in $\overline{G}$. 
The edges outgoing from $x$ in $G$ remain outgoing from $x$ in $\overline{G}$.
Note that this transformation maintains the strong connectivity of $G$. Call the vertices of $V$ ordinary.
The resulting graph $\overline{G}$ has at most $n-1$ auxiliary vertices and at most $n-1$ auxiliary edges (since the start vertex $s$ is not split).
Hence, $|\overline{V}| \le 2n-1$ and $|\overline{E}| \le m+n-1$. See Figure \ref{figure:vertex-splitting-a} for an example.

\begin{figure}[t!]
	\begin{center}
\centerline{		\includegraphics[trim={0 0 0 0cm}, clip=true, width=1.1\textwidth]{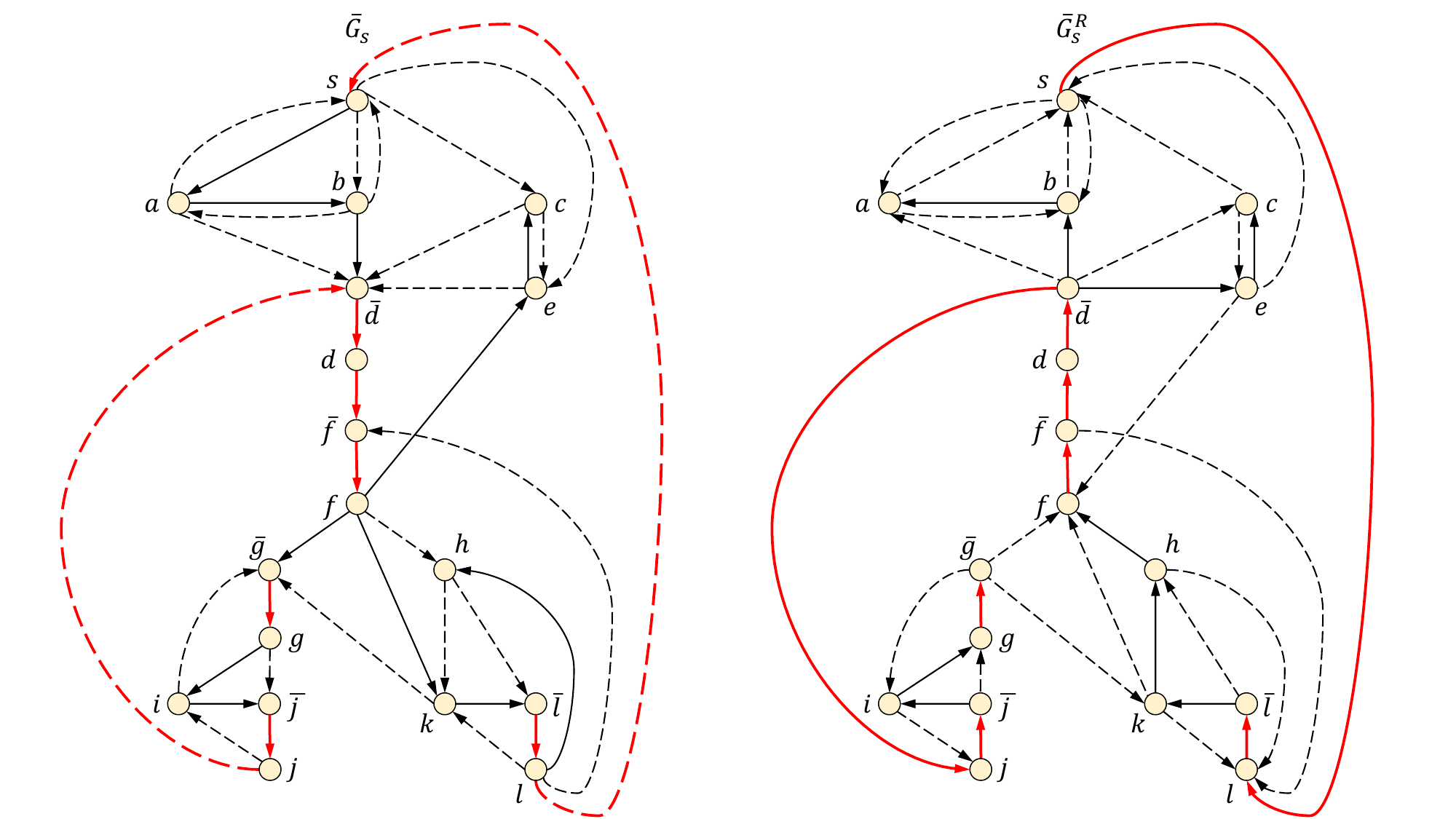}	}
		\caption{Flow graph $\overline{G}_s$ that results from the flow graph $G_s$ of Figure \ref{figure:example1ab} after splitting the nontrivial dominators $d$, $f$, $g$, $j$ and $l$; flow graph $\overline{G}^R_s$ is obtained from $\overline{G}_s$ after reversing the direction of all edges.
The solid edges in $\overline{G}_s$ and  $\overline{G}^R_s$ are the edges of depth first search trees with root $s$.
Strong bridges of $\overline{G}$ are shown red. (Better viewed in color.)}
		\label{figure:vertex-splitting-a}
	\end{center}
\end{figure}

Let $\overline{D}$ and $\overline{D}^R$ be the dominator trees of $\overline{G}_s$ and $\overline{G}^R_s$, respectively.
The following lemma states the correspondence between the strong articulation points in $G$ (except for $s$) and the strong bridges in $\overline{G}$. See Figure \ref{figure:vertex-splitting-bc}.

\begin{figure}[t!]
	\begin{center}
\centerline{		\includegraphics[trim={0 0 0 0cm}, clip=true, width=1.1\textwidth]{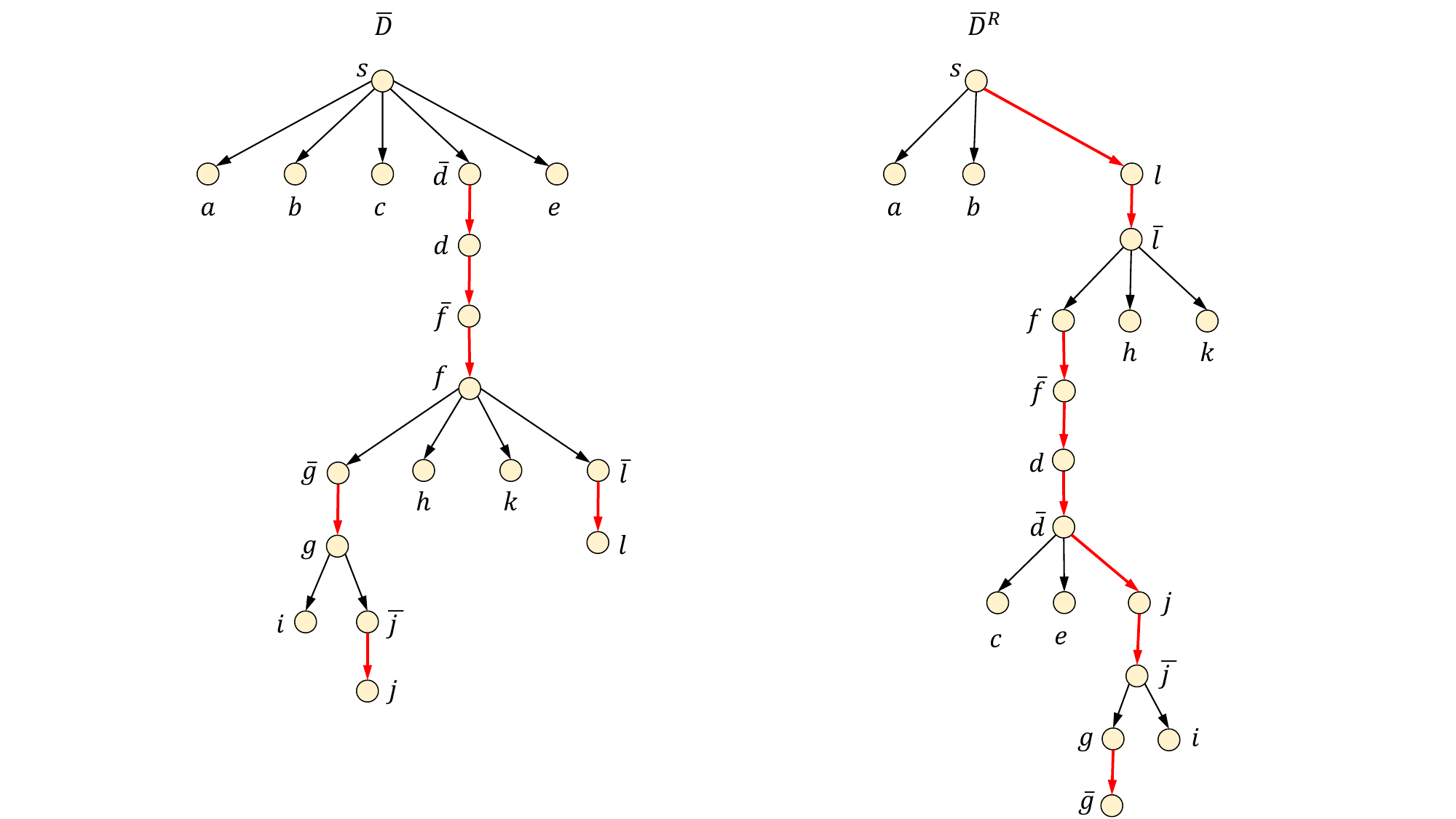}}
\centerline{	        \includegraphics[trim={0 0 0 5cm}, clip=true, width=1.1\textwidth]{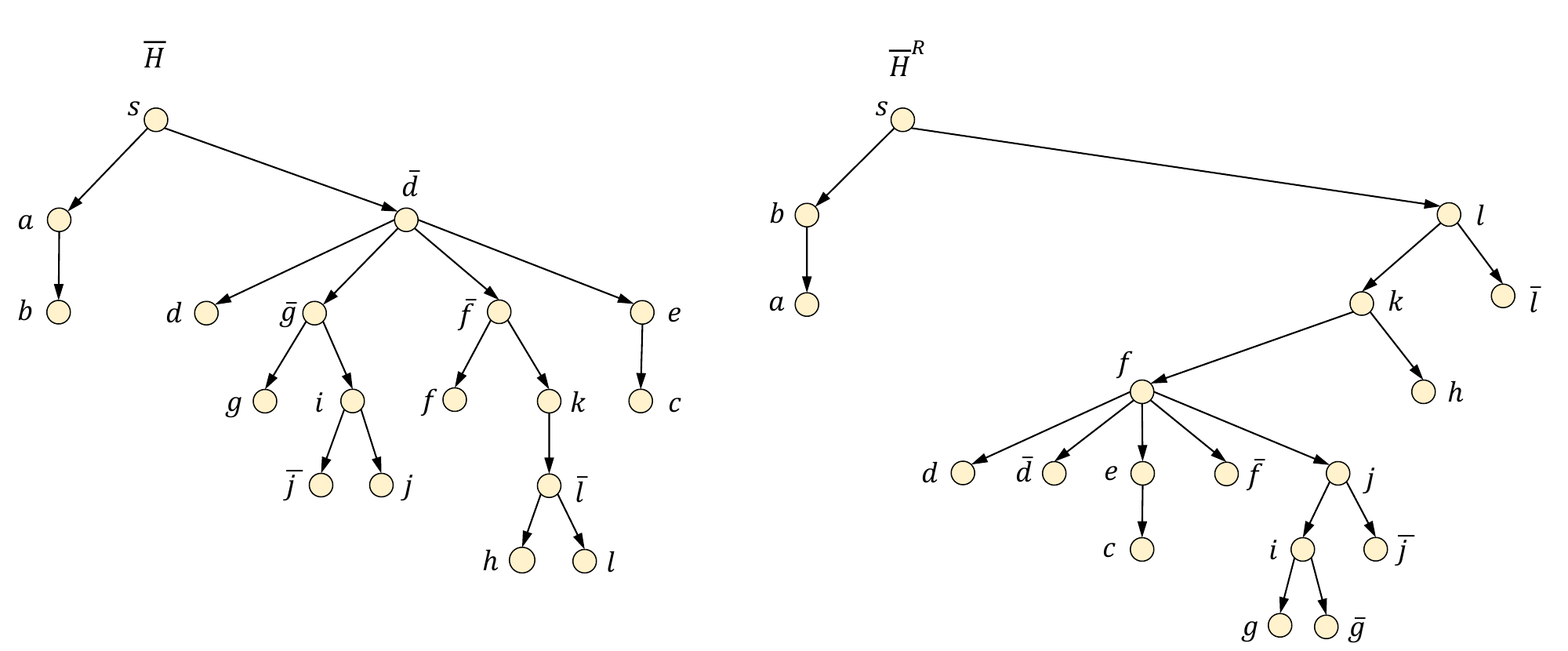}}
		\caption{Dominator trees $\overline{D}$ and $\overline{D}^R_s$, and loop nesting trees $\overline{H}$ and $\overline{H}^R$, of the flow graphs $\overline{G}_s$ and $\overline{G}^R_s$ of Figure \ref{figure:vertex-splitting-a}, respectively. The loop nesting forests are computed with respect to the dfs trees shown in Figure \ref{figure:vertex-splitting-a}.}
		\label{figure:vertex-splitting-bc}
	\end{center}
	\vspace{1.5cm}
\end{figure}

\begin{lemma}
Let $x \not =s$ be a strong articulation point of digraph $G$. Then the following hold:
	\begin{itemize}
		\item[(i)] Let $x$ be a nontrivial dominator of $G_s$ (resp., $G^R_s$).
		Then the auxiliary edge $(\overline{x},x)$ is a strong bridge of $\overline{G}$ and a bridge
		of $\overline{G}_s$ (resp., $\overline{G}^R_s$).
		\item[(ii)] For any ordinary vertex $u \in V \setminus x$, we have that $u \in D(x)$ (resp., $u\in D^R(x)$) if and only if $u \in \overline{D}(x)$ (resp., $u \in \overline{D}^R(\overline{x})$).
		\item [(iii)] All vertices in a strongly connected component of $G\setminus x$ are ordinary vertices in a strongly connected component of $\overline{G} \setminus (\overline{x},x)$.
	\end{itemize}
	\label{lemma:vertex-to-edge-reduction}
\end{lemma}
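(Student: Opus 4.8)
The plan is to extract a single structural fact about the split graph $\overline{G}$ — call it the \emph{path correspondence} — and then read off (i)--(iii) from it with routine bookkeeping. The key fact is that in $\overline{G}$ every split vertex $x$ has exactly one incoming edge, namely $(\overline{x},x)$, and every auxiliary vertex $\overline{x}$ has exactly one outgoing edge, again $(\overline{x},x)$. Hence in any walk of $\overline{G}$, any occurrence of $\overline{x}$ other than the last vertex is immediately followed by $x$, and any occurrence of $x$ other than the first is immediately preceded by $\overline{x}$. I would formalize two maps: \emph{projecting} a walk of $\overline{G}$ by collapsing every block ``$\overline{y},y$'' to ``$y$'', which returns a walk of $G$ with the same endpoints (when these are ordinary or $s$) and the same subsequence of ordinary vertices; and \emph{lifting} a walk of $G$ by replacing each edge $(a,y)$ whose head $y$ is split with the detour $(a,\overline{y}),(\overline{y},y)$, which returns a walk of $\overline{G}$ with the same endpoints and the same ordinary-vertex subsequence, and which visits $\overline{y}$ only if the original walk visits $y$. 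The consequence I want to record is: for a walk of $\overline{G}$ between two vertices that are ordinary or $s$, using the edge $(\overline{x},x)$, visiting $\overline{x}$, and visiting $x$ are all equivalent, and equivalent to the projected walk of $G$ visiting $x$. Since domination and ``being separated by an edge'' may be tested over walks rather than simple paths, this is all the machinery needed.

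For (i), I would note that since $x\neq s$ is a strong articulation point it is split, so in $\overline{G}$ the vertex $x$ has unique incoming edge $(\overline{x},x)$; therefore every path from $s$ to $x$ in $\overline{G}_s$ ends with $(\overline{x},x)$, making it a bridge of $\overline{G}_s$, and symmetrically $\overline{x}$ has unique outgoing edge $(\overline{x},x)$, making it a bridge of $\overline{G}^R_s$. Property~\ref{property:strong-bridge} then gives that $(\overline{x},x)$ is a strong bridge of $\overline{G}$, and in particular the ``resp.''\ statement holds in either case. For (ii), I would fix an ordinary $u\neq x$ and use the path correspondence: the paths from $s$ to $u$ in $G$ and in $\overline{G}$ realize the same sets of ordinary vertices, so every $s$-to-$u$ path in $G$ contains $x$ iff every $s$-to-$u$ path in $\overline{G}$ contains $x$, which by the degree facts is equivalent to every such path containing $\overline{x}$; this yields $u\in D(x)\iff u\in\overline{D}(x)$. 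Running the same argument on the reverse graphs, i.e.\ on paths from $u$ to $s$, yields $u\in D^R(x)\iff u\in\overline{D}^R(\overline{x})$, the asymmetry $x$ versus $\overline{x}$ being exactly the built-in asymmetry of the splitting.

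For (iii), let $C$ be a strongly connected component of $G\setminus x$; its vertices lie in $V$ and differ from $x$, hence are ordinary. For $u,v\in C$ I would take paths from $u$ to $v$ and from $v$ to $u$ in $G$ avoiding $x$ and lift them; a lifted walk visits $x$ only if the original does, so the lifts avoid $x$, hence also $\overline{x}$, hence the edge $(\overline{x},x)$. Thus $C$ is strongly connected in $\overline{G}\setminus(\overline{x},x)$ and lies in a single strongly connected component $\overline{C}$ of it. For the reverse inclusion I would observe that $x$ has in-degree $0$ in $\overline{G}\setminus(\overline{x},x)$, so $\{x\}$ is its own component there and $x\notin\overline{C}$; then for any ordinary $w\in\overline{C}$, projecting paths between $w$ and a fixed $u\in C$ in $\overline{G}\setminus(\overline{x},x)$ — a walk of $\overline{G}$ missing $(\overline{x},x)$ cannot reach $x$ unless $x$ is its start vertex — gives paths in $G$ avoiding $x$, so $w\in C$. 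Hence the ordinary part of $\overline{C}$ is exactly $C$, which proves (iii) (and slightly more).

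The only point that needs care is the bookkeeping in the path correspondence: one must repeatedly invoke that $x$ and $\overline{x}$ are traversed together in $\overline{G}$ — a consequence of the unit in-/out-degrees — so that ``being dominated by $x$'' and ``being separated by $(\overline{x},x)$'' transfer verbatim between $G$ and $\overline{G}$. Once that is set up cleanly, (i)--(iii) are all short, and I do not anticipate any genuinely hard step.
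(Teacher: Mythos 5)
Your proof is correct and follows essentially the same route as the paper: exploit the degree-1 structure around the split pair $(\overline{x},x)$ and translate paths between $G$ and $\overline{G}$ to transfer domination and strong-connectivity information. You are in fact somewhat more careful than the paper itself — you make the projection/lifting correspondence explicit (which the paper only invokes implicitly via ``the corresponding path''), you supply both directions of the biconditional in~(ii) where the paper's argument is terse, and you notice that $(\overline{x},x)$ is unconditionally a bridge of both $\overline{G}_s$ and $\overline{G}^R_s$, a slightly stronger fact than the conditional version stated in~(i).
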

\begin{proof}
We prove (i) and (ii) only for the case where $x$ is a nontrivial dominator in $G_s$ since the case where $x$ is a nontrivial dominator in $G^R_s$ is completely analogous.
Note that $x$ has indegree one in $\overline{G}$, and hence $(\overline{x},x)$ is a strong bridge of $\overline{G}$.
By the fact that $\overline{G}$ is strongly connected, there is at least one path from $s$ to $x$. All such paths must contain $(\overline{x},x)$, so $(\overline{x},x)$ is
a bridge in $\overline{G}_s$.
From the above we have that $\overline{x}$ is the parent of $x$ in $\overline{D}$. This implies that any ordinary vertex $u \in D(x) \setminus x$
is a descendant of $x$ also in $\overline{D}(x)$. Thus, (i) and (ii) follow.
	
Now we prove (iii). Assume that two vertices $u$ and $v$, $u \not= v \not= x$, are in the same strongly connected component in $G \setminus x$ but not in the same strongly connected component in $\overline{G} \setminus (\overline{x},x)$.
The fact that $u$ and $v$ are not strongly connected in $\overline{G}\setminus (\overline{x},x)$ implies that either all paths from $u$ to $v$ contain $(\overline{x},x)$ or all paths from $v$ to $u$ contain $(\overline{x},x)$.
Without loss of generality, assume that all paths from $u$ to $v$ in $\overline{G}$ contain the strong bridge $(\overline{x},x)$.
Since $u$ and $v$ are in the same strongly connected component in $G\setminus x$, there is a path $\pi$ from $u$ to $v$ in $G$ that avoids all edges incoming to $x$ and all edges outgoing to $x$.
That means in $\overline{G} \setminus (\overline{x},x)$ the corresponding path $\overline{\pi}$ of $\pi$ avoids all incoming edges to $\overline{x}$ and all outgoing edges from $x$.
This is a contradiction to the fact that all paths from $u$ to $v$ in $\overline{G}$ contain $(\overline{x},x)$.
Now suppose that in a strongly connected component in $\overline{G} \setminus (\overline{x},x)$, there are two ordinary vertices $u$ and $v$, $u \not= v \not= x$, that are in different strongly connected components in $G \setminus x$.
Therefore, there is a path $\overline{\pi}$ in $\overline{G}$ that avoids $(\overline{x},x)$, and thus vertices $\overline{x}$ and $x$ by construction.
Then the corresponding path $\pi$ of $\overline{\pi}$ in $G$ avoids $x$, a contradiction.
\end{proof}

The next corollary is an immediate consequence of Lemma \ref{lemma:vertex-to-edge-reduction}.

\begin{corollary}
\label{corollary:vertex-to-edge-reduction-1}
Let $x$ be a nontrivial dominator of the flow graph $G_s$ (resp., $G_s^R$).
The number of strongly connected components in $G\setminus x$ is equal to the number of strongly connected components
that contain at least one ordinary vertex $u$ in $\overline{G}\setminus (\overline{x},x)$, where $u \not = x$.
Moreover, for every strongly connected component $C$ in $G \setminus x$ with $C \subseteq D(x)$ (resp., $C \subseteq D^R(x)$), it holds that
$C \subseteq \overline{D}(x)$ (resp., $C \subseteq \overline{D}^R(\overline{x})$) in $\overline{G}\setminus (\overline{x},x)$.
\end{corollary}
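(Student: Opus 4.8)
The plan is to obtain the corollary directly from Lemma~\ref{lemma:vertex-to-edge-reduction}; no new graph-theoretic argument is required, only a careful bookkeeping of the correspondence between the strongly connected components of $G\setminus x$ and those of $\overline{G}\setminus(\overline{x},x)$.

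For the first statement I would set up a bijection $f$ from the strongly connected components of $G\setminus x$ to the strongly connected components of $\overline{G}\setminus(\overline{x},x)$ that contain at least one ordinary vertex $u\neq x$. Given a strongly connected component $C$ of $G\setminus x$ (note $x\notin C$, since $x$ is not even a vertex of $G\setminus x$), Lemma~\ref{lemma:vertex-to-edge-reduction}(iii) guarantees that every vertex of $C$ is an ordinary vertex of $\overline{G}$ and that all of them belong to a single strongly connected component of $\overline{G}\setminus(\overline{x},x)$; call it $f(C)$. Since $C$ is nonempty and any vertex of $C$ is ordinary and different from $x$, $f(C)$ indeed lies in the target set. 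Surjectivity of $f$ is immediate: if $K$ is a strongly connected component of $\overline{G}\setminus(\overline{x},x)$ containing an ordinary vertex $u\neq x$, then $u$ lies in some strongly connected component $C$ of $G\setminus x$, and $f(C)$ is the strongly connected component of $\overline{G}\setminus(\overline{x},x)$ through $u$, hence $f(C)=K$. For injectivity, suppose $f(C)=f(C')=K$. The second half of the proof of Lemma~\ref{lemma:vertex-to-edge-reduction}(iii) shows that any two ordinary vertices of $K$ other than $x$ are strongly connected in $G\setminus x$; therefore the set $S$ of all ordinary vertices of $K$ distinct from $x$ is contained in a single strongly connected component of $G\setminus x$. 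Since $C\subseteq S$, $C'\subseteq S$, and $C,C'$ are maximal, we get $C=C'$ (in fact both equal $S$). This yields a bijection, and hence the equality of the two counts.

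For the second statement, let $C$ be a strongly connected component of $G\setminus x$ with $C\subseteq D(x)$. Since $x\notin C$, we have $C\subseteq D(x)\setminus\{x\}=\widetilde{D}(x)$, so every $u\in C$ is an ordinary vertex with $u\neq x$ and $u\in D(x)$; Lemma~\ref{lemma:vertex-to-edge-reduction}(ii) then gives $u\in\overline{D}(x)$, whence $C\subseteq\overline{D}(x)$. The case $C\subseteq D^R(x)$ is symmetric, using the reverse flow graphs and $\overline{D}^R(\overline{x})$ in place of $\overline{D}(x)$.

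I expect the whole argument to be essentially routine once Lemma~\ref{lemma:vertex-to-edge-reduction} is available; the one point deserving care will be that both directions of part (iii) of the lemma must be invoked — one for well-definedness and surjectivity of $f$, the other for its injectivity — and that the codomain of $f$ must be restricted to the components of $\overline{G}\setminus(\overline{x},x)$ containing an ordinary vertex other than $x$, so that components having no such vertex, such as the singletons $\{x\}$ and $\{\overline{x}\}$ (both of which become trivial strongly connected components of $\overline{G}$ once the edge $(\overline{x},x)$ is deleted), are correctly disregarded.
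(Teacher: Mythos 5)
Your proof is correct and is essentially the paper's own argument made explicit: the paper simply declares the corollary ``an immediate consequence'' of Lemma~\ref{lemma:vertex-to-edge-reduction} without spelling out the bijection, and you fill in exactly the bookkeeping that is left implicit there. The one point worth flagging is that you rightly observe the \emph{statement} of Lemma~\ref{lemma:vertex-to-edge-reduction}(iii) only asserts one direction (SCCs of $G\setminus x$ land inside SCCs of $\overline{G}\setminus(\overline{x},x)$), while injectivity of your map needs the converse; you resolve this by invoking the second half of the lemma's \emph{proof}, which does establish the converse. That is a sound reading, but it relies on proof content rather than a stated result — if one wanted a self-contained derivation, the cleaner fix is to observe that part (iii) is, in the intended sense, an ``if and only if'' (the two paragraphs of its proof establish exactly the two directions), and to cite it as such.
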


Given a flow graph $G_s=(V,E,s)$ of a strongly connected digraph $G$, Algorithm \textsf{SCCsDescendants}
of Section \ref{sec:SCCs-num}
computes the numbers $\#SCC(D(v))$ and $\#SCC(D^R(u))$ for each strong bridge $e=(u,v)$, i.e., the number of the strongly connected components that contain only vertices in $D(v)$ and $D^R(u)$, respectively.
On the other hand, Algorithm \textsf{SCCsCommonDescendants} of Section \ref{sec:SCCs-num} computes the numbers $\#SCC(D(v)\cap D^R(u))$ for each strong bridge $e=(u,v)$, i.e., the number of the strongly connected components that contain only vertices in $D(v) \cap D^R(u)$.
After computing the quantities  $\#SCC(D(v))$, $\#SCC(D^R(u))$, and $\#SCC(D(v)\cap D^R(u))$, by Corollary \ref{cor:nscc} we are able to answer queries on the number of the strongly connected components obtained after the deletion of any strong bridge $e=(u,v)$.

As suggested by Corollary \ref{corollary:vertex-to-edge-reduction-1}, we can apply Algorithms \textsf{SCCsDescendants} and \textsf{SCCsCommonDescendants} of Section \ref{sec:SCCs-num} on $\overline{G}$, in order to compute the total number of strongly connected components after any vertex deletion in $G$.
To do that, we have to guarantee that for a strong bridge $(\overline{x},x)$ such that $x$ is a nontrivial dominator of $G_s$ or $G_s^R$, only the strongly connected components that contain at least one ordinary vertex other than $x$ are accounted for.
We can accomplish this task by slightly modifying Algorithms \textsf{SCCsDescendants} and \textsf{SCCsCommonDescendants}.
To avoid confusion, we describe the modifications with respect to the notation that we used in Section \ref{sec:SCCs-num}, despite the fact that we execute the modified algorithm on the graph $\overline{G}$ that results from $G$ after vertex-splitting.

As we already mentioned, we need to count the number of strongly connected components, after any edge deletion, that contain at least one ordinary vertex.
However, for a bridge $(\overline{x},x)$ in $\overline{G}_s$ such that $x$ is a nontrivial dominator in $G_s$, we should ignore $x$ in $\overline{G}_s \setminus (\overline{x},x)$ since we are interested in the strong connectivity among vertices in $V \setminus x$.
Note that since $(\overline{x},x)$ is the only incoming edge to $x$ in $\overline{G}$, $\{x\}$ induces a singleton strongly connected component in $\overline{G} \setminus (\overline{x},x)$.

In Lines 8--11 (resp., Lines 12--15) of Algorithm \textsf{SCCsDescendants}, we update the values $bundle(r_x)$ and $bundle(r_{h(x)})$ (resp., $bundle^R(r^R_x)$ and $bundle^R(r^R_{h^R(x)})$) for a vertex $x$ such that $h(x)\notin D(r_x)$ (resp., $h^R(x)\notin D^R(r^R_x)$).
This indicates that for each bridge $e$ in the path $D[r_{h(x)},x]$ (resp., in the path $D^R[r^R_{h^R(x)},x]$), the graph $G \setminus e$ contains the strongly connected component $H(x)$ (resp., $H^R(x)$).
This fact follows from Theorem \ref{cor:scc}.
On a final step, namely in Lines 17--20 (resp., Lines 21--24), Algorithm \textsf{SCCsDescendants} propagates the $bundle$ values of the vertices over $D$ (resp., $D^R$) to compute the correct number $\#SCC(D(v))$ (resp., $\#SCC(D^R(u))$) for each strong bridge $e = (u,v)$.
We modify Line 8 (resp., Line 12) of the algorithm to take into account whether or not $H(x)$ (resp., $H^R(x)$) contains at least one ordinary vertex other than $v$.\footnote{This is indeed necessary. Consider for example the case where $G$ is directed cycle. After vertex-splitting, $\overline{G}$ is also a directed cycle, and the deletion of any edge makes each vertex a singleton strongly connected component.}
Thus, if for the vertex $x$ the set $H(x)$ (resp., $H^R(x)$) contains no ordinary vertex, then we do not update the values $bundle(r_x)$ and $bundle(r_{h(x)})$ (resp., $bundle^R(r^R_x)$ and $bundle^R(r^R_{h^R(x)})$).
As a special case, if $x$ is a nontrivial dominator in $D$ and $d(x)$ is the auxiliary vertex that resulted after applying vertex-splitting to $x$, we update the values $bundle(r_{d(x)})$ and $bundle(r_{h(x)})$, instead of $bundle(r_x)$ and $bundle(r_{h(x)})$, to avoid counting $\{x\}$ as a strongly connected component in $\overline{G} \setminus (d(x),x)$ for the reason explained earlier.
Note that this last case is not necessary in the reverse direction: for a strong bridge $(x,d^R(x))$, $x$ is the auxiliary vertex that resulted after applying vertex-splitting to $d^R(x)$,
and therefore the singleton strongly connected component $\{x\}$ is not counted.
This modified version of Algorithm \textsf{SCCsDescendants} still runs in $O(n)$ time. Indeed, we can precompute the number of ordinary vertices in $H(x)$, for all $x \in V$, in $O(n)$ time by executing a single bottom-up traversal on $H$, where we propagate the information about the number of ordinary vertices that are descendants of a vertex to its parent in $H$.

We also modify Algorithm \textsf{SCCsCommonDescendants}.
After vertex-splitting a nontrivial dominator $x$ into the strong bridge $(\overline{x},x)$, neither $\overline{x}$ nor $x$ is a common descendant of $(\overline{x},x)$.
Thus, we do not need to treat either $\overline{x}$ or $x$ as a special case as we did in the modification of the Algorithm \textsf{SCCsDescendants} shown above.
Recall that  Algorithm \textsf{SCCsCommonDescendants} works with the common bridge decomposition $\widetilde{\mathcal{D}}$ of $D$.
Now, we need to count only the number of strongly connected components that contain at least one ordinary vertex.
In Lines 11--13, Algorithm \textsf{SCCsCommonDescendants} adds the pair $\langle{}h(x),x\rangle{}$ to the list $L$ for each vertex $x$ such that $h(x)\notin D(\breve{r}_x)$.
This pair stores implicitly the information
that for each common bridge $e=(u,v)$ in the path $D[r_{h(x)}, x]$, where $e$ is a common bridge ancestor of $x$,
the vertices in $H(x) \subseteq D(v)\cap D^R(u)$ induce a strongly connected component in $G \setminus e$.
Indeed, this follows from Lemma \ref{lemma:common-strong-bridge-relation}.
Algorithm \textsf{SCCsCommonDescendants} then processes the pairs in Lines 17--33 in order to assign to each common bridge $e=(u,v)$ the number of
strongly connected components that contain common descendants of $e$ in $D$ and $D^R$, i.e.,  $\#SCC(D(v)\cap D^R(u))$.
We modify Algorithm \textsf{SCCsCommonDescendants} as follows: for each vertex $x$ such that $h(x)\notin D(\breve{r}_x)$, the modified algorithm
adds the pair $\langle h(x),x\rangle$ to the list $L$  only if $H(x)$ contains at least one ordinary vertex.
As in the case of Algorithm \textsf{SCCsDescendants},  we need only to precompute the number of ordinary vertices that are descendants of each vertex $u$ in $H$, which can be done in $O(n)$. Thus, the modified version of Algorithm \textsf{SCCsCommonDescendants}, executed on $\overline{G}$, still runs in $O(n)$ time.

Finally, we observe that we do not actually need to compute explicitly the digraphs $\overline{G}$ and $\overline{G}^R$.
Algorithms \textsf{SCCsDescendants} and \textsf{SCCsCommonDescendants} require as input the dominator trees $\overline{D}$ and $\overline{D}^R$, the loop nesting trees $\overline{H}$ and $\overline{H}^R$, and a list of the
bridges of $\overline{G}_s$ and $\overline{G}^R_s$. All this information can be computed directly from
the dominator trees ${D}$ and ${D}^R$, the loop nesting trees ${H}$ and ${H}^R$, and the
bridges of ${G}_s$ and ${G}^R_s$.
The bridges of $\overline{G}_s$ and $\overline{G}^R_s$ can be obtained by simply adding  all the auxiliary edges $(\overline{x},x)$ of $\overline{G}$ to the
bridges of $G_s$ and $G^R_s$.
Also, Lemma \ref{lemma:vertex-to-edge-reduction} implies that we can immediately construct $\overline{D}$ and $\overline{D}^R$ from $D$ and $D^R$, respectively, via vertex-splitting.
For the construction of $\overline{D}^R$ we have to exchange the role of $x$ and $\overline{x}$ for
each nontrivial dominator $x$. 
That is, we add the auxiliary edge $(x,\overline{x})$ and replace each edge $(x,u)$ leaving $x$ in $D^R$ with an edge $(\overline{x},u)$ leaving $\overline{x}$ in $\overline{D}^R$.
See Figure \ref{figure:vertex-splitting-bc}.
We now describe how to construct the loop nesting tree $\overline{H}$ from $H$ in $O(n)$ time.
Let $T$ be the dfs tree that generated $H$, and let $\overline{T}$ be the tree after setting $\overline{t}(\overline{x})=t(x)$ and $\overline{t}(x)=\overline{x}$ for every nontrivial dominator $x$.
The tree $\overline{T}$ will be the dfs tree of $\overline{G}_s$ that corresponds to $\overline{H}$.
It can be easily seen that $\overline{T}$ is a valid dfs traversal of $\overline{G}_s$.
The construction of $\overline{H}$ takes place in two phases. Initially we set $\overline{h}(x) = h(x)$ for every ordinary vertex $x$, so $\overline{H}=H$.
In the first phase, for every auxiliary vertex $\overline{x}$ we set $\overline{h}(\overline{x}) = h(x)$, and if $x$ has proper descendants in $H$ we set $\overline{h}(x) = \overline{x}$.
We now justify this first phase.
By construction, $\overline{x}$ is the parent of $x$ in $\overline{T}$.
Thus, the nearest ancestor $w$ of $x$ in $T$, to whom $x$ has path using only vertices in $T(w)$, is also the nearest ancestor of $\overline{x}$ with this property, thus $\overline{h}(\overline{x})=h(x)$. (We deal with the case where $\overline{h}(\overline{x})$ is an ordinary vertex of a nontrivial dominator in the second phase.)
Additionally, if $x$ has proper descendants in $H$ it means that there is a path from any $y \in H(x)$ to $x$ using only descendants of $x$ in $T$.
The corresponding path after vertex-splitting is a path from $y$ to $\overline{x}$ that contains only vertices in $\overline{T}(\overline{x})$, since $x \in \overline{T}(\overline{x})$ and $y \in \overline{T}(x)$.
Hence, there is a path from $x$ to $\overline{x}$ containing only vertices in $\overline{T}(\overline{x})$, so $\overline{h}(x) = \overline{x}$.
In the second phase of the algorithm, for every vertex $u$ such that $\overline{h}(u)$ is an ordinary vertex $x$ in $\overline{G}_s$, where $x$ is a nontrivial dominator, we set $\overline{h}(u)=\overline{x}$. This is correct since all incoming edges to $x$ in $G$ are incoming edges to $\overline{x}$ in $\overline{G}$ and vertex $\overline{x}$ is the parent of $x$ in $\overline{T}$.
The analogous procedure can be applied to construct
$\overline{H}^R$ from $H^R$, by swapping the roles of $\overline{x}$ and $x$, similar to the construction of $\overline{D}^R$.

Given the dominator trees $D$ and $D^R$, the loop nesting trees $H$ and $H^R$, and the strong bridges of $G$, the above reduction runs in $O(n)$ time.
Each of the trees $\overline{D}$, $\overline{D}^R$, $\overline{H}$, and $\overline{H}^R$ has at most $2n$ vertices, which implies that
we can compute the number of the strongly connected components after any vertex deletion in total $O(n)$ time, by
the modified Algorithms \textsf{SCCsDescendants} and \textsf{SCCsCommonDescendants}. The next theorem summarizes the result.

\begin{theorem}
Given the dominator trees $D$ and $D^R$, the loop nesting trees $H$ and $H^R$, and the common bridges of $G$, we can compute the number of the strongly connected components after the deletion of a strong articulation point in $O(n)$ time for all strong articulation points.
\end{theorem}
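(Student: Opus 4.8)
The plan is to assemble the pieces built up throughout Section~\ref{sec:vertices-SCCs}: treat the start vertex as a trivial special case, reduce the deletion of a strong articulation point $u\ne s$ to the deletion of an auxiliary strong bridge via vertex-splitting, run the (modified) edge-deletion counting algorithms of Section~\ref{sec:SCCs-num} on the split graph, and check that every ingredient is constructible and executable in $O(n)$ time.

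First I would dispose of the case $u=s$. By Theorem~\ref{theorem:vertices-scc}(d) the strongly connected components of $G\setminus s$ are exactly the sets $H(x)$ over the children $x$ of $s$ in $H$; hence $\#SCC_s(V)$ is simply the number of children of $s$ in $H$ (cf.\ Corollary~\ref{cor:start-vertex}), which is read off in $O(n)$ time. For a strong articulation point $u\ne s$, the core step is the reduction of Lemma~\ref{lemma:vertex-to-edge-reduction} and Corollary~\ref{corollary:vertex-to-edge-reduction-1}: in the split digraph $\overline{G}$ each such $u$ corresponds to the auxiliary strong bridge $(\overline{u},u)$, and $\#SCC_u(V)$ equals the number of strongly connected components of $\overline{G}\setminus(\overline{u},u)$ that contain at least one ordinary vertex different from $u$. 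Crucially I would not materialize $\overline{G}$: as explained in the section, the dominator trees $\overline{D},\overline{D}^R$, the loop nesting trees $\overline{H},\overline{H}^R$, and the bridge lists of $\overline{G}_s,\overline{G}^R_s$ are all obtainable in $O(n)$ time from $D,D^R,H,H^R$ and the strong bridges of $G$ --- $\overline{D}$ (resp.\ $\overline{D}^R$) by inserting each auxiliary vertex immediately above (resp.\ below) the nontrivial dominator it splits, the bridge lists by adjoining the auxiliary edges, and $\overline{H},\overline{H}^R$ by the two-phase relinking described earlier, using the relabelled dfs tree $\overline{T}$.

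Next I would run the modified Algorithms~\textsf{SCCsDescendants} and \textsf{SCCsCommonDescendants} on these $\overline{G}$-structures. They produce, for every strong bridge $(u,v)=(\overline{x},x)$, the quantities $\#SCC(\overline{D}(v))$, $\#SCC(\overline{D}^R(u))$, and $\#SCC(\overline{D}(v)\cap\overline{D}^R(u))$, each now counting only components that contain an ordinary vertex; by Corollary~\ref{cor:nscc} applied inside $\overline{G}$ these combine through the usual inclusion--exclusion (with the extra $+1$ for the component above the bridge) to yield $\#SCC_x(V)$. Each of $\overline{D},\overline{D}^R,\overline{H},\overline{H}^R$ has at most $2n$ vertices; the only additional preprocessing is one bottom-up pass over $H$ and one over $H^R$ to tabulate, for every vertex, the number of ordinary descendants; and every parent and ancestor/descendant test on $\overline{D}$ and $\overline{D}^R$ is $O(1)$~\cite{domin:tarjan}. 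Hence the whole computation runs in $O(n)$ time for all strong articulation points simultaneously.

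The step I expect to be the main obstacle --- indeed essentially the only substantive one --- is verifying the correctness of the ``ordinary vertex'' bookkeeping inside the two modified algorithms: one must confirm that skipping a vertex $x$ whose loop $H(x)$ (resp.\ $H^R(x)$) contains no ordinary vertex, together with the special handling of a split nontrivial dominator (updating $bundle(r_{d(x)})$ rather than $bundle(r_x)$, so that the singleton $\{x\}$ is never counted as a component of $\overline{G}\setminus(d(x),x)$, and symmetrically for the reverse direction), precisely realizes the condition ``at least one ordinary vertex $\ne x$'' of Corollary~\ref{corollary:vertex-to-edge-reduction-1} in each of the cases (a)--(c) of Theorem~\ref{theorem:vertices-scc}. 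Everything else is a direct reuse of the already-proved correctness and $O(n)$ bounds of the algorithms in Section~\ref{sec:SCCs-num}.
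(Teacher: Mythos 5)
Your proposal follows essentially the same route as the paper: dispose of $u=s$ via Theorem~\ref{theorem:vertices-scc}(d), apply the vertex-splitting reduction of Lemma~\ref{lemma:vertex-to-edge-reduction} and Corollary~\ref{corollary:vertex-to-edge-reduction-1}, build $\overline{D},\overline{D}^R,\overline{H},\overline{H}^R$ in $O(n)$ time without materializing $\overline{G}$, and run the ordinary-vertex-aware modifications of \textsf{SCCsDescendants} and \textsf{SCCsCommonDescendants} on trees of size $O(n)$. The only nit is that the special bundle-redirection handling the singleton $\{x\}$ is needed only in the forward direction (in $\overline{D}^R$ the tail of the auxiliary bridge is already the auxiliary vertex), not ``symmetrically'' as you suggest, but this detail does not affect the argument.
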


This yields immediately the following corollary.

\begin{corollary}
Given a directed graph $G=(V,E)$,  we can find in worst-case time $O(m+n)$ a strong articulation point $v$  in $G$ that maximizes/minimizes the total number of strongly connected components of $G\setminus v$.
\end{corollary}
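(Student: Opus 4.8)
The plan is to obtain this corollary as a thin wrapper around the preceding theorem together with linear-time preprocessing. First I would reduce to the strongly connected case: if $G$ is not strongly connected, compute its strongly connected components in $O(m+n)$ time and observe that deleting a vertex $v$ leaves every strongly connected component of $G$ not containing $v$ intact, so the problem decomposes over the strongly connected components of $G$. It then suffices to solve it within each strongly connected component (while remembering the number of components lying outside) and aggregate the answers; this costs $O(m+n)$ in total, since the components are vertex-disjoint and edge-disjoint.

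So assume $G$ is strongly connected. Fix an arbitrary start vertex $s$. In $O(m+n)$ time I would compute the reverse digraph $G^R$; the dominator trees $D$ of $G_s$ and $D^R$ of $G_s^R$ using a linear-time dominators algorithm~\cite{domin:ahlt,dominators:bgkrtw}; the loop nesting trees $H$ and $H^R$~\cite{dominators:bgkrtw,st:t}; the strong bridges of $G$, hence its common bridges, via Property~\ref{property:strong-bridge}; and the strong articulation points via Property~\ref{property:strong-articulation-point}, which are exactly the nontrivial dominators of $D$ and $D^R$ and are read off the two trees directly. With all of this in hand, the preceding theorem computes, in $O(n)$ additional time, the number of strongly connected components of $G\setminus v$ for every strong articulation point $v\neq s$ simultaneously. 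The remaining case $v=s$ is handled as in the discussion preceding Corollary~\ref{cor:start-vertex}: by Theorem~\ref{theorem:vertices-scc}(d) the strongly connected components of $G\setminus s$ are precisely the subtrees $H(x)$ over the children $x$ of $s$ in $H$, so the count equals the number of children of $s$ in $H$, obtainable in $O(n)$ time, and by Corollary~\ref{cor:start-vertex} the vertex $s$ is a strong articulation point exactly when this number is at least $2$. Finally I would scan the $O(n)$ candidate values, one per strong articulation point (including $s$ if applicable), and return a vertex attaining the maximum (respectively, the minimum). The total time is $O(m+n)$.

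The corollary itself involves no genuine difficulty beyond this bookkeeping; the only point that needs care is to confirm that the hypotheses demanded by the preceding theorem — the dominator trees, the loop nesting trees, and the common bridges of $G$ — are all produced within the $O(m+n)$ budget, and that the vertex-splitting reduction of Section~\ref{sec:vertices-SCCs} underlying that theorem runs in $O(n)$ time once those structures are available (as argued there: $\overline{D}$, $\overline{D}^R$, $\overline{H}$, and $\overline{H}^R$ are obtained from $D$, $D^R$, $H$, and $H^R$ by local $O(n)$-time surgery, and the bridges of $\overline{G}_s$ and $\overline{G}^R_s$ are just the bridges of $G_s$ and $G^R_s$ augmented with the split edges). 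Given that, everything reduces to a constant number of linear passes over the graph and the trees.
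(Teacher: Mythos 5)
Your proof is correct and follows the same approach as the paper, which treats the corollary as an immediate consequence of the preceding theorem. You have simply made explicit the bookkeeping that the paper leaves implicit: the reduction to the strongly connected case, the $O(m+n)$-time preprocessing of the dominator trees, loop nesting trees, and strong bridges, the separate $O(n)$-time handling of $v=s$ via Theorem~\ref{theorem:vertices-scc}(d) and Corollary~\ref{cor:start-vertex}, and the final linear scan over candidates.
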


As for the case of edge deletions, our results can be extended to the computation of more general functions:

\begin{theorem}
\label{theorem:generic-v}
Let $\odot$ be an associative and commutative binary operation such that its inverse operation $\odot^{-1}$ is defined and both $\odot$  and $\odot^{-1}$  are computable in constant time. Let $f(x)$ be a function defined on positive integers which can be computed in constant time. Given a strongly connected digraph $G$, we can compute in $O(m+n)$ time, for all vertices $v$, the function $f(|C_1|)\odot f(|C_2|)\odot ...\odot f(|C_k|)$, where $C_1,C_2,...,C_k$ are the strongly connected components in $G \setminus v$.
\end{theorem}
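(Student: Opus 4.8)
The plan is to reduce the vertex-deletion problem to the edge-deletion problem already solved in Theorem~\ref{theorem:generic}, using the vertex-splitting transformation of Section~\ref{sec:vertices-SCCs} in exactly the way it was used there for the special case $\langle\odot,\odot^{-1}\rangle=\langle +,-\rangle$, $f\equiv 1$. Recall that $\overline{G}$ is obtained from $G$ by replacing every strong articulation point $x\neq s$ with an edge $(\overline{x},x)$, that $\overline{G}$ remains strongly connected, and that by Lemma~\ref{lemma:vertex-to-edge-reduction} and Corollary~\ref{corollary:vertex-to-edge-reduction-1} the strongly connected components of $G\setminus x$ are in one-to-one correspondence with the strongly connected components of $\overline{G}\setminus(\overline{x},x)$ that contain at least one ordinary vertex other than $x$. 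As shown in Section~\ref{sec:vertices-SCCs}, the whole reduction, including $\overline{D}$, $\overline{D}^R$, $\overline{H}$, $\overline{H}^R$ and the bridge lists of $\overline{G}_s$ and $\overline{G}^R_s$, can be produced in $O(n)$ time from the corresponding objects for $G$ (which are themselves computable in $O(m+n)$ time). Thus it suffices to run a suitably modified version of the generic edge-deletion algorithm of Theorem~\ref{theorem:generic} on $\overline{G}$, which has $O(m+n)$ vertices and edges.

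The only new ingredient over the counting case is that the generic algorithm evaluates $f$ on the \emph{sizes} of the components, and we must ensure the sizes it uses are those in $G\setminus x$ rather than those in $\overline{G}\setminus(\overline{x},x)$. The key identity I would establish is: an auxiliary vertex $\overline{y}$ (with $y\neq x$) lies in the same component $C'$ of $\overline{G}\setminus(\overline{x},x)$ as the ordinary vertex $y$ precisely when the component $C$ of $y$ in $G\setminus x$ is non-trivial, and in that case $C'=C\cup\{\overline{y}:y\in C \text{ a strong articulation point of }G\}$, so that $|C|$ equals the number of \emph{ordinary} vertices of $C'$; the same counting holds for the unique ``outer'' component $V\setminus(D(x)\cup D^R(x))$. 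Consequently, wherever the generic algorithm of Theorem~\ref{theorem:generic} (run on $\overline{G}$) would feed $f$ the size of a component $H(w)$, or of $V\setminus(D(v)\cup D^R(u))$, it should instead feed $f$ the number $\nu(w)$ of ordinary vertices in that set. All the values $\nu(\cdot)$ are obtained by one bottom-up pass over $\overline{H}$ and $\overline{H}^R$ in $O(n)$ time, exactly as the ``number of ordinary vertices in $H(x)$'' quantity is precomputed in Section~\ref{sec:vertices-SCCs}. Two families of components of $\overline{G}\setminus(\overline{x},x)$ do not correspond to any component of $G\setminus x$ and must be suppressed: a singleton $\{\overline{y}\}$ consisting of a lone auxiliary vertex has $\nu=0$ and is simply never recorded (as in the $\#SCC$ modification), while the singleton $\{x\}$ of the split vertex itself (which is ordinary, so $\nu=1$) is suppressed by the same bundle-redirection special case already described in Section~\ref{sec:vertices-SCCs}; re-expressed for a general $\odot$, this redirection performs the $\odot$-contribution and the $\odot^{-1}$-contribution of $f(1)$ at the same node $r_{d(x)}$, so that they cancel and $\{x\}$ never affects the output for the bridge $(\overline{x},x)$. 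With these substitutions, the aggregation over the four regions $\widetilde{D}(v)$, $\widetilde{D}^R(u)$, $\widetilde{D}(v)\cap\widetilde{D}^R(u)$, $V\setminus(D(v)\cup D^R(u))$ treated by the algorithm, and the cancellation via $\odot^{-1}$ of the doubly-counted region $\widetilde{D}(v)\cap\widetilde{D}^R(u)$, carry over verbatim from Theorem~\ref{theorem:generic} (equivalently, from the treatment of $\#SCC$ in Section~\ref{sec:vertices-SCCs}). Since $\odot$ and $\odot^{-1}$ and $f$ are computed in constant time, the running time is that of the edge-deletion algorithm on $\overline{G}$ plus the $O(n)$-time reduction, i.e. $O(m+n)$.

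I expect the main obstacle to be the careful verification of the size identity: that the ordinary part of each component of $\overline{G}\setminus(\overline{x},x)$ is exactly a component of $G\setminus x$, and that $C'\setminus C$ consists of precisely the auxiliary copies of the strong articulation points inside $C$ and of nothing else — in particular that no auxiliary copy $\overline{y}$ of a vertex outside $C$ enters $C'$. This is where the structure of $\overline{G}$ (the indegree-one property at each split vertex together with the reachability statements of Lemma~\ref{lemma:vertex-to-edge-reduction} and Theorem~\ref{cor:scc} applied to $\overline{G}$) has to be used. Everything else is a restatement of Theorem~\ref{theorem:generic} and of the modifications already spelled out in Section~\ref{sec:vertices-SCCs}, so once the size identity is in place the $O(m+n)$ bound is immediate.
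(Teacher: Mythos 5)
Your reduction via vertex-splitting, your use of ordinary-vertex counts $\nu(\cdot)$ in place of the sizes on $\overline{G}$, and your ``key identity'' relating the ordinary part of a component $C'$ of $\overline{G}\setminus(\overline{x},x)$ to the corresponding component $C$ of $G\setminus x$ are all correct and are exactly what the paper leaves implicit when it states Theorem~\ref{theorem:generic-v} as the analogue of Section~\ref{sec:vertices-SCCs}. The one place where your account goes wrong is the re-expression of the bundle-redirection for a general $\odot$: you write that both the $\odot$-contribution and the $\odot^{-1}$-contribution of $f(1)$ are applied \emph{at the same node} $r_{d(x)}$ ``so that they cancel,'' but the modification in Section~\ref{sec:vertices-SCCs} actually moves only the $\odot$-contribution to $r_{d(x)}$ and keeps the $\odot^{-1}$-contribution at $r_{h(x)}$. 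If both were placed at $r_{d(x)}$ they would annihilate the contribution of $\{x\}$ for \emph{every} bridge, not just $(\overline{x},x)$. That is incorrect: for the other bridges $(\overline{y},y)$ lying on the path $\overline{D}[r_{h(x)},r_{d(x)}]$, the set $\{x\}$ is a bona-fide singleton strongly connected component of $G\setminus y$, and its factor $f(1)$ must appear in the aggregate. (Take $G$ a directed cycle: after deleting any vertex every remaining vertex is its own component, so each split vertex $x$ contributes a $f(1)$ to every bridge above $\overline{x}$, and the ``cancel everywhere'' version would drop all of them.) The actual modification only \emph{shortens} the bundle so that its $\odot$-endpoint sits strictly above the auxiliary edge $(\overline{x},x)$, removing exactly that one bridge from the range while leaving the contribution intact for the rest. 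Apart from this one misstatement, which you should restate to match the cited bundle-redirection, the proof follows the paper's intended route.
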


Note, in particular, that by setting $\langle \odot, \odot^{-1} \rangle = \langle +, - \rangle$ and $f(x)=x(x-1)/2$ we can compute the number of strongly connected pairs in $G \setminus u$, for all vertices $u$.
Hence, we obtain a linear-time algorithm to compute the \emph{most critical node} of a directed graph with respect to strong connectivity, i.e., the vertex $u$ of $G$ such that the number of strongly connected pairs of vertices in $G \setminus u$ is minimized. 
Based on our framework, \cite{Paudel:2018} presented and evaluated an alternative linear-time algorithm for computing the most critical node of a digraph that avoids vertex-spitting.

\subsection{Finding all the smallest and all the largest strongly connected components of $G\setminus u$}

\label{sec:vertices-minmax-scc}

Let $G$ be a strongly connected digraph. In this section we consider the problem of answering the following aggregate query: ``Find the size of the largest/smallest strongly connected component of $G \setminus u$, for all vertices $u$.'' Here ``size'' refers to the number of vertices in a strongly connected component, so the largest component (resp., smallest component) is the one with maximum (resp., minimum) number of vertices. Note again that the naive solution is to compute the strongly connected components of $G \setminus u$ for all strong articulation points $u$ of $G$, which takes $O(mn)$ time.
We  provide a linear-time algorithm for this problem.
Once we find such a strong articulation point $u$ that minimizes (resp., maximizes) the size of the largest (resp., smallest) strongly connected component of $G\setminus u$,
we can report the actual strongly connected components of $G\setminus u$ in $O(n)$ additional time by using the algorithm of Section \ref{sec:vertices-all-scc}.

Recall that for a subset of vertices $X \subseteq V$ we denote by $\mathit{LSCC}(X)$ the size of the largest strongly connected component in the subgraph of $G$ induced by $X$.
Also, for a strong articulation point $u$, we denote by $\mathit{LSCC}_u(V)$ the size of the largest strongly connected component of $G\setminus u$.
Theorem \ref{theorem:vertices-scc} immediately implies the following:
\begin{corollary}
\label{cor:vertices-lscc}
Let $u$ be a strong articulation point of $G$, and let $s$ be an arbitrary vertex in $G$.
The cardinality of the largest strongly connected component of $G \setminus u$ is equal to
\begin{itemize}
\item[(a)] $\max \{ \mathit{LSCC}(\widetilde{D}(u)), |V \setminus D(u)| \}$ when $u$ is a nontrivial dominator in $G_s$ but not in $G_s^R$.
\item[(b)] $\max \{ \mathit{LSCC}(\widetilde{D}^R(u)), |V \setminus D^R(u)| \}$ when $u$ is a nontrivial dominator in $G_s^R$ but not in $G_s$.
\item[(c)] $\max \{ \mathit{LSCC}(\widetilde{D}(u)), \mathit{LSCC}(\widetilde{D}^R(u)), |V \setminus\big( D(u) \cup D^R(u) \big)| \}$ when $u$ is a common nontrivial dominator of $G_s$ and $G_s^R$.
\item[(d)] $\mathit{LSCC}(\widetilde{D}(u))$ when $u=s$.
\end{itemize}
Moreover, $\mathit{LSCC}(\widetilde{D}(u)) = \max_{w} \{ |H(w)| \ : \ w \in \widetilde{D}(u) \mbox{ and } h(w) \not \in \widetilde{D}(u) \}$ and
$\mathit{LSCC}(\widetilde{D}^R(u)) = \max_{w} \{ |H^R(w)| \ : \ w \in \widetilde{D}^R(u) \mbox{ and } h^R(w) \not \in \widetilde{D}^R(u) \}$.
\end{corollary}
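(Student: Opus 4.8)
The plan is to obtain this corollary as a direct reading of Theorem~\ref{theorem:vertices-scc}, which already classifies every strongly connected component of $G \setminus u$ according to the type of the strong articulation point $u$, together with the structural Lemmata~\ref{lemma:vertices-above-tree-1}, \ref{lemma:vertices-above-tree-2}, and \ref{lemma:common-nontrivial}. The largest strongly connected component of $G\setminus u$ is the maximum, over all strongly connected components of $G\setminus u$, of their sizes, so it suffices to check, in each of the four cases (a)--(d), that every admissible component size is bounded by the stated maximum and that each term in that maximum is actually realized.

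First I would handle the ``descendant'' pieces, which also proves the \emph{Moreover} part. Since $u \notin \widetilde{D}(u)$, the subgraph of $G\setminus u$ induced by $\widetilde{D}(u)$ coincides with the subgraph of $G$ induced by $\widetilde{D}(u)$; by Theorem~\ref{theorem:vertices-scc} (together with Lemmata~\ref{lemma:vertices-subtree} and \ref{lemma:vertices-subtree2}) the strongly connected components of $G\setminus u$ contained in $\widetilde{D}(u)$ partition $\widetilde{D}(u)$ and each one equals $H(w)$ for a vertex $w\in\widetilde{D}(u)$ with $h(w)\notin\widetilde{D}(u)$; hence these are exactly the strongly connected components of $G[\widetilde{D}(u)]$, so their largest size is $\mathit{LSCC}(\widetilde{D}(u)) = \max_{w}\{|H(w)| : w\in\widetilde{D}(u),\ h(w)\notin\widetilde{D}(u)\}$. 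The symmetric argument applies to $\widetilde{D}^R(u)$ and $H^R$. Next I would handle the ``above tree'' piece in each case: in case (a), Lemma~\ref{lemma:vertices-above-tree-1} shows $V\setminus D(u)$ induces a single component of size $|V\setminus D(u)|$, and Theorem~\ref{theorem:vertices-scc}(a) rules out any other type, giving $\mathit{LSCC}_u(V)=\max\{\mathit{LSCC}(\widetilde{D}(u)),|V\setminus D(u)|\}$; case (b) is symmetric; in case (c), Lemma~\ref{lemma:vertices-above-tree-2} contributes the single component $V\setminus(D(u)\cup D^R(u))$, while by Theorem~\ref{theorem:vertices-scc}(c) (using Lemma~\ref{lemma:common-nontrivial} to fold the $\widetilde{D}(u)\cap\widetilde{D}^R(u)$ case into both $\widetilde{D}(u)$ and $\widetilde{D}^R(u)$) every remaining component has size at most $\mathit{LSCC}(\widetilde{D}(u))$ or $\mathit{LSCC}(\widetilde{D}^R(u))$, and conversely all three kinds occur; and in case (d), Theorem~\ref{theorem:vertices-scc}(d) says every component lies in $\widetilde{D}(u)$, so $\mathit{LSCC}_u(V)=\mathit{LSCC}(\widetilde{D}(u))$.

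There is no serious obstacle here — the statement is essentially a bookkeeping consequence of Theorem~\ref{theorem:vertices-scc}. The only point requiring a little care is the case (c) maximum: the three index sets $\widetilde{D}(u)$, $\widetilde{D}^R(u)$, and $V\setminus(D(u)\cup D^R(u))$ are not pairwise disjoint (they overlap on $\widetilde{D}(u)\cap\widetilde{D}^R(u)$, which is counted inside both $\mathit{LSCC}(\widetilde{D}(u))$ and $\mathit{LSCC}(\widetilde{D}^R(u))$), but since we take a maximum rather than a sum this overcounting is harmless; one just needs Lemma~\ref{lemma:common-nontrivial} to ensure no component straddles these sets. The analogous transparency argument — that restricting $G\setminus u$ to $\widetilde{D}(u)$ reproduces $G[\widetilde{D}(u)]$ and that the partition supplied by Theorem~\ref{theorem:vertices-scc} loses no component — takes care of the rest.
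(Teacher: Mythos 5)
Your proposal is correct and follows exactly the route the paper intends: Corollary~\ref{cor:vertices-lscc} is stated in the paper as an immediate consequence of Theorem~\ref{theorem:vertices-scc} with no proof supplied, and your write-up simply unfolds that implication case by case (with the needed side remark that the SCCs of $G\setminus u$ contained in $\widetilde{D}(u)$ coincide with those of $G[\widetilde{D}(u)]$, and that in case~(c) the overlap of $\widetilde{D}(u)$ and $\widetilde{D}^R(u)$ is harmless under a maximum). No gap; the argument matches the paper's intent.
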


Now we develop an algorithm that applies Corollary \ref{cor:vertices-lscc}. Our algorithm, detailed below (see Algorithm \ref{algorithm:LSCC-V} for pseudocode), uses the dominator and the loop nesting trees of $G$ and its reverse $G^R$, with respect to a start vertex $s$, and computes
for each strong articulation point $u$ of $G$ the size of the largest strongly connected component of $G \setminus u$, denoted by $\mathit{LSCC}_u(V)$.
A small modification of the algorithm is able to compute the size of the smallest strongly connected component of $G \setminus u$.

\begin{algorithm}
	\LinesNumbered
	\DontPrintSemicolon
	\KwIn{Strongly connected digraph $G=(V,E)$}
	\KwOut{Size of the largest strongly connected component of $G\setminus u$ for each strong articulation point $u$ }
	
	\textbf{Initialization:}\;
	
		Compute the reverse digraph $G^R$.
Select an arbitrary start vertex $s \in V$. 	

	Compute the dominator trees $D$ and $D^R$ of the flow graphs $G_s$ and $G_s^R$, respectively.\;
	
	Compute the loop nesting trees $H$ and $H^R$ of the flow graphs $G_s$ and $G_s^R$, respectively.\;
	
	Compute the sets of nontrivial dominators $N$ and $N^R$ of the flow graphs $G_s$ and $G_s^R$, respectively.\;

\textbf{Process start vertex:}\;
\If{$s$ is a strong articulation point}
{
Compute $\mathit{LSCC}(\widetilde{D}(s))$\;
Set $\mathit{LSCC}_s(V) =\mathit{LSCC}(\widetilde{D}(s))$	
}
	\textbf{Process nontrivial dominators:}\;
	\ForEach{$u \in N$ in a bottom-up order of $D$}
	{
		\eIf{$u \not\in N^R$}
		{
			Compute $\mathit{LSCC}(\widetilde{D}(u))$\;
			Set $\mathit{LSCC}_u(V) = \max \{ \mathit{LSCC}(\widetilde{D}(u)), |V|-|D(u)| \}$
		}
		{					
						Compute $\mathit{LSCC}(\widetilde{D}(u))$, $\mathit{LSCC}(\widetilde{D}^R(u))$, and $|D(u) \cup D^R(u)|$\;
			Set $\mathit{LSCC}_u(V) =\max\{ \mathit{LSCC}(\widetilde{D}(u)), \mathit{LSCC}(\widetilde{D}^R(u)), |V|-|D(u)\cup D^R(u)| \}$
		}
	}
	
	\ForEach{$u \in N^R$ in a bottom-up order of $D^R$}
	{
		\If{$u \not\in N$}
		{
			Compute $\mathit{LSCC}(\widetilde{D}^R(u))$\;
			Set $\mathit{LSCC}_u(V) =\max\{\mathit{LSCC}(\widetilde{D}^R(u)), |V|-|D^R(u)|\}$
		}
	}
	\caption{\textsf{LSCC-V}}
		\label{algorithm:LSCC-V}
\end{algorithm}

As already mentioned, by Corollary \ref{cor:vertices-lscc} we can compute $\mathit{LSCC}_s(V)$ in $O(n)$ time by simply taking the maximum $|H(x)|$
among all children $x$ of $s$ in $H$.
So, in order to get an efficient implementation of our algorithm, we need to specify how to compute efficiently the following quantities:
\begin{itemize}
	\item[(a)] $\mathit{LSCC}(\widetilde{D}(u))$  for every nontrivial dominator $u$ of the flow graph $G_s$.
	\item[(b)] $\mathit{LSCC}(\widetilde{D}^R(u))$ for every nontrivial dominator $u$ of the flow graph $G_s^R$.
	\item[(c)] $|D(u)\cup D^R(u)|$ for every vertex $u$ that is a common nontrivial dominator of $G_s$ and $G_s^R$.
\end{itemize}

We deal with computations of type (a) first. The computations of type (b) are analogous. We precompute for all vertices $u$ the number of their descendants in the loop nesting tree $H$, and we initialize $\mathit{LSCC}(\widetilde{D}(u)) = 0$.
For every vertex $u\not =s$ in a bottom-up order of $D$ we update the current value of $\mathit{LSCC}(\widetilde{D}(d(u)))$ by setting
$$
\mathit{LSCC}(\widetilde{D}(d(u)))=\max\{\mathit{LSCC}(\widetilde{D}(d(u))), \mathit{LSCC}(\widetilde{D}(u)), |H(u)| \}.
$$
This computation takes $O(n)$ time in total.

Finally, we need to specify how to compute the values of type (c), i.e., how to compute the cardinality of the union $D(u)\cup D^R(u)$ for a strong articulation point $u \neq s$ that is a common nontrivial dominator of both flow graphs $G_s$ and $G_s^R$.
By the equality $|D(u)\cup D^R(u)| = |D(u)| + |D^R(u)| - |D(u)\cap D^R(u)| = |D(u)| + |D^R(u)| - ( |\widetilde{D}(u)\cap \widetilde{D}^R(u)| + 1)$, it suffices to show how to compute the cardinality of the intersection $\widetilde{D}(u)\cap \widetilde{D}^R(u)$.
We do this via a reduction to computing $D(v)\cap D^R(u)$ for all common bridges $(u,v)$.
This reduction uses the vertex-splitting reduction of Section \ref{sec:vertices-SCCs} and implies an $O(n)$-time solution for computing the values of type (c) with the algorithm
of Section~\ref{sec:minmax-scc}.
By doing that, we will obtain a linear-time implementation of Algorithm \textsf{LSCC-V}.

\subsubsection*{Counting the common descendants of strong articulation points}
\label{sec:vertices-common-descendants}

In this section we consider the problem of computing the number of common descendants of each strong articulation point $u \not= s$.
Specifically, given a flow graph $G_s$ of a strongly connected digraph $G$, and the dominator trees $D$ and $D^R$ of $G_s$ and $G_s^R$, respectively, we wish to compute the number of the vertices that are proper descendants of $u$ in both $D$ and $D^R$, i.e., $|\widetilde{D}(u) \cap \widetilde{D}^R(u)|$.
We show that by \emph{vertex-splitting} we can reduce in $O(n)$ time this problem to the problem of computing the number of the common descendants after the deletion of any strong bridge (Section
\ref{sec:minmax-scc}).

We recall here the vertex-splitting reduction of Section \ref{sec:vertices-SCCs}.
We add an auxiliary vertex $\overline{x}$ for each nontrivial dominator $x$ of $G_s$ or $G^R_s$,  together with the auxiliary edge $(\overline{x},x)$. Then, we replace each edge $(u, x)$ entering $x$ in $G$ with an edge $(u,\overline{x})$ entering $\overline{x}$.
Let $\overline{G}$ be the resulting digraph. Also let $\overline{D}$ and $\overline{D}^R$ be the dominator trees of $\overline{G}$ and $\overline{G}^R$, respectively.
The next corollary is an immediate consequence of Lemma \ref{lemma:vertex-to-edge-reduction}.

\begin{corollary}
\label{corollary:vertex-to-edge-reduction-2}
Let $x \not= s$ be a strong articulation point of $G$.
The number of common descendants $w \not= x$ of $x$ in $D$ and $D^R$ is equal to the number of ordinary common descendants in $\overline{D}$ and $\overline{D}^R$
of the corresponding common bridge $(\overline{x},x)$ of $\overline{G}_s$ and $\overline{G}^R_s$.
\end{corollary}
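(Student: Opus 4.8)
The plan is to read off the corollary directly from Lemma~\ref{lemma:vertex-to-edge-reduction}, since that lemma already encapsulates the structural correspondence between a split vertex $x$ and the auxiliary edge $(\overline{x},x)$. First I would fix the roles of the two endpoints of $(\overline{x},x)$: by construction $x$ has indegree exactly one in $\overline{G}$, its unique incoming edge being $(\overline{x},x)$, so $(\overline{x},x)$ is a bridge of $\overline{G}_s$; symmetrically $\overline{x}$ has indegree exactly one in $\overline{G}^R$, its unique incoming edge being $(x,\overline{x})$, so (under the standard abuse of notation of Section~\ref{sec:dominators}) $(\overline{x},x)$ is also a bridge of $\overline{G}^R_s$. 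Hence $(\overline{x},x)$ is the common bridge referred to in the statement, in agreement with Lemma~\ref{lemma:vertex-to-edge-reduction}(i). For this common bridge, tail $=\overline{x}$ and head $=x$, so its set of common descendants is $\overline{D}(x)\cap\overline{D}^R(\overline{x})$, and its \emph{ordinary} common descendants are the vertices $w\in V$ with $w\in\overline{D}(x)$ and $w\in\overline{D}^R(\overline{x})$ (recall that the ordinary vertices of $\overline{G}$ are exactly the vertices of $V$).

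The main step is then to match this set, over the common ground set $V$, with the set of common descendants $w\neq x$ of $x$ in $D$ and $D^R$, i.e.\ with $\widetilde{D}(x)\cap\widetilde{D}^R(x)$. By Lemma~\ref{lemma:vertex-to-edge-reduction}(ii), for every ordinary $w\in V\setminus\{x\}$ one has $w\in D(x)\iff w\in\overline{D}(x)$ and $w\in D^R(x)\iff w\in\overline{D}^R(\overline{x})$, so the two sets agree on $V\setminus\{x\}$. It remains only to check that the vertex $x$ itself is not an ordinary common descendant of $(\overline{x},x)$; but $x$ is the parent of $\overline{x}$ in $\overline{D}^R$ (again because $\overline{x}$ has indegree one in $\overline{G}^R$), hence $x\notin\overline{D}^R(\overline{x})$, so $x$ is excluded on the $\overline{G}$ side just as it is excluded on the $G$ side. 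Taking cardinalities of the two (now identical) sets yields the corollary.

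I do not expect any substantive obstacle here: the corollary truly is an immediate consequence of Lemma~\ref{lemma:vertex-to-edge-reduction}, and the only point requiring care is the bookkeeping of which endpoint of $(\overline{x},x)$ instantiates ``descendant in $\overline{D}$'' versus ``descendant in $\overline{D}^R$'', together with verifying that both the non-ordinary vertex $\overline{x}$ and the ordinary vertex $x$ are consistently excluded from the count on the split graph. Once the role assignment head $=x$, tail $=\overline{x}$ is fixed, parts (i) and (ii) of the lemma do all the work.
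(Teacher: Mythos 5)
Your proof is correct and matches the paper's intent: the paper gives no explicit proof, stating only that the corollary is an immediate consequence of Lemma~\ref{lemma:vertex-to-edge-reduction}, and you fill in exactly the expected details. In particular, you correctly identify that the common-descendant set of $(\overline{x},x)$ is $\overline{D}(x)\cap\overline{D}^R(\overline{x})$, apply part (ii) of the lemma over $V\setminus\{x\}$, and note that $x$ itself (being the parent of $\overline{x}$ in $\overline{D}^R$) is excluded on the split-graph side just as it is by the hypothesis $w\neq x$.
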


Note that a strong articulation point $u\not=s$ of $G$ is not a common descendant of the corresponding common bridge $(\overline{u},u)$, so our reduction indeed computes
$|\widetilde{D}(u) \cap \widetilde{D}^R(u)|$, as required.
Corollary \ref{corollary:vertex-to-edge-reduction-2} suggests that we can apply the algorithm of Section
\ref{sec:minmax-scc} on $\overline{G}$ in order to compute the number of common descendants in $D$ and $D^R$ of all common nontrivial dominators in $D$ and $D^R$.
We need to modify the algorithm of Section
\ref{sec:minmax-scc} in order to make sure that we count the number of ordinary common descendants of each common strong bridge $(\overline{x},x)$ of $\overline{G}$.
To avoid confusion, we describe the modification with respect to the notation used in Section
\ref{sec:minmax-scc}, despite the fact that we execute the modified algorithm on digraph $\overline{G}$.
Let $u$ be a vertex such that $h(u) \notin
\breve{D}_{\breve{r}_u}$.
By Lemma \ref{lemma:number-of-SCCs-ancestors}, the set of vertices $H(u)$ induces a strongly connected component in $G \setminus e$, for each bridge $e$ in the path $D[r_{h(u)},u]$.
In the algorithm of Section
\ref{sec:minmax-scc}, we add $|H(u)|$ to the number of the common descendants of each common bridge that is in the path $D[\breve{r}_{h(u)},u]$ and is a common ancestor of $u$.
This is done by inserting the triple $\langle h(u),u,|H(u)| \rangle$ into a list $L$.
The algorithm then processed the triples in $L$, and computes the number of common descendants for each strong bridge.
Thus, it is sufficient to replace $H(u)$ with the number of ordinary vertices in $H(u)$ in such a triple  $\langle h(u),u,|H(u)| \rangle$.
Note that we can precompute the number of the ordinary vertices in $H(u)$, for all $u$, in $O(n)$ time as we described in Section \ref{sec:vertices-SCCs}.
Moreover we can compute $\overline{D}$ and $\overline{H}$ in $O(n)$ time, as shown in Section \ref{sec:vertices-SCCs}.

\begin{lemma}
Given the dominator trees $D$ and $D^R$, the loop nesting trees $H$ and $H^R$, and the common bridges of $G$, we can compute the number of the common descendants of each vertex in $O(n)$ total time.
\end{lemma}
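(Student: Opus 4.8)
The plan is to reduce the problem to computing the number of common descendants of all \emph{common strong bridges}, which we already know how to do in $O(n)$ time (the algorithm of Section~\ref{sec:minmax-scc}), by means of the vertex-splitting transformation of Section~\ref{sec:vertices-SCCs}. By Corollary~\ref{corollary:vertex-to-edge-reduction-2}, for every strong articulation point $u \ne s$ the number of common descendants $w \ne u$ of $u$ in $D$ and $D^R$ — that is, $|\widetilde{D}(u) \cap \widetilde{D}^R(u)|$ — equals the number of \emph{ordinary} common descendants of the corresponding common bridge $(\overline{u},u)$ of $\overline{G}_s$ and $\overline{G}^R_s$; and for a vertex $u$ that is not a common nontrivial dominator, $\widetilde{D}(u)\cap\widetilde{D}^R(u)=\emptyset$, so the answer is $0$ and is obtained automatically.

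First I would build, in $O(n)$ time and without materializing $\overline{G}$ explicitly, the split-graph structures directly from the given input: the dominator trees $\overline{D}$ and $\overline{D}^R$, the loop nesting trees $\overline{H}$ and $\overline{H}^R$, and the list of common bridges of $\overline{G}$. As shown in Section~\ref{sec:vertices-SCCs}, $\overline{D}$ and $\overline{D}^R$ are obtained from $D$ and $D^R$ by splitting each nontrivial dominator $x$ into $\overline{x} \to x$ (swapping the roles of $x$ and $\overline{x}$ when building $\overline{D}^R$), $\overline{H}$ and $\overline{H}^R$ are obtained by the two-phase relinking procedure described there, and the common bridges of $\overline{G}$ are precisely the common bridges of $G$ together with all auxiliary edges $(\overline{x},x)$. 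Each of these steps runs in $O(n)$ time, and $|\overline{V}| \le 2n-1$.

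Next I would invoke the algorithm of Section~\ref{sec:minmax-scc} on $\overline{G}$ with one modification. That algorithm, for each vertex $u$ with $h(u) \notin \breve{D}_{\breve{r}_u}$, inserts the triple $\langle h(u), u, |H(u)| \rangle$ into its list $L$ to record that $H(u)$ contributes $|H(u)|$ common descendants to every common bridge on the path $D[\breve{r}_{h(u)}, u]$ that is a common bridge ancestor of $u$ (justified by Lemma~\ref{lemma:number-of-SCCs-ancestors} combined with Lemma~\ref{lemma:common-strong-bridge-relation}). Since we want to count only ordinary common descendants — so that neither $\overline{u}$ nor $u$ is counted among the common descendants of $(\overline{u},u)$ and the output matches $|\widetilde{D}(u) \cap \widetilde{D}^R(u)|$ rather than $|D(u) \cap D^R(u)|$ — I would replace the third component of the triple by the number of ordinary vertices in $H(u)$. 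These counts are precomputed for all $u$ in $O(n)$ time by one bottom-up pass over $H$ that propagates, from each vertex to its parent, the number of ordinary descendants (exactly as in Section~\ref{sec:vertices-SCCs}). The remaining processing of $L$ and the bottom-up accumulation over the common bridge forest $\mathcal{Q}$ are unchanged and still take $O(n)$ time, giving the claimed bound. Correctness follows because, by Theorem~\ref{cor:scc}(c), every strongly connected component of $\overline{G} \setminus (\overline{u},u)$ containing a common descendant consists entirely of common descendants, so summing the ordinary-vertex sizes of these components over $(\overline{u},u)$ yields exactly $|\widetilde{D}(u) \cap \widetilde{D}^R(u)|$.

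The step I expect to require the most care is verifying that the ``ordinary vertices only'' restriction is threaded correctly through the whole algorithm: that the singleton component $\{u\}$ created in $\overline{G} \setminus (\overline{u},u)$ is never counted (it is excluded automatically because $H(u)$ need not be replaced for $u$ itself, and $u$ has in-degree one in $\overline{G}$), and that no component is double-counted or dropped when the triples in $L$ are distributed along the paths of $\mathcal{Q}$. This, however, is a routine adaptation of the correctness argument already given for Algorithm~\textsf{SCCsCommonDescendants} and its variant in Section~\ref{sec:minmax-scc}.
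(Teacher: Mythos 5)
Your proposal is correct and follows essentially the same route as the paper: vertex-splitting via the construction of Section~\ref{sec:vertices-SCCs}, appeal to Corollary~\ref{corollary:vertex-to-edge-reduction-2} to transfer common descendants of a strong articulation point $u$ to ordinary common descendants of the common bridge $(\overline{u},u)$, the $O(n)$-time in-place construction of $\overline{D}$, $\overline{D}^R$, $\overline{H}$, $\overline{H}^R$ without building $\overline{G}$ explicitly, and the one-line modification of the Section~\ref{sec:minmax-scc} algorithm that replaces $|H(u)|$ by the number of ordinary vertices in $H(u)$ (precomputed by a bottom-up pass). The only slight imprecision is in your aside about why the singleton $\{u\}$ is not counted: the cleaner reason (the one implicit in the paper) is that $u \notin \overline{D}^R(\overline{u})$ since $\overline{u}$ is a child of $u$ in $\overline{D}^R$, so $u$ is simply not a common descendant of $(\overline{u},u)$; this does not affect the validity of your argument.
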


This gives the following results.

\begin{theorem}
Given the dominator trees $D$ and $D^R$, the loop nesting trees $H$ and $H^R$, and
the strong bridges of $G$, we can compute the size of the largest or the smallest strongly connected
component after the deletion of a strong articulation point in $O(n)$ time for all strong articulation points.
\end{theorem}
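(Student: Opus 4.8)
The plan is to assemble the pieces that have already been established and show that the query for the largest (or smallest) strongly connected component after a strong-articulation-point deletion reduces, in $O(n)$ time, to quantities we already know how to compute. Concretely, by Corollary \ref{cor:vertices-lscc} it suffices, for every strong articulation point $u \neq s$, to compute the three quantities $\mathit{LSCC}(\widetilde{D}(u))$, $\mathit{LSCC}(\widetilde{D}^R(u))$, and $|D(u) \cup D^R(u)|$, together with the trivial case $u = s$, which we already argued is handled in $O(n)$ time by scanning the children of $s$ in $H$ and taking $\max_x |H(x)|$. So I would structure the proof as: (i) the $u=s$ case is immediate; (ii) computations of type (a) and (b) are done by a single bottom-up pass over $D$ (resp.\ $D^R$), exactly as described just above in the text, using the precomputed subtree sizes $|H(w)|$ in the loop nesting tree and the recurrence $\mathit{LSCC}(\widetilde{D}(d(u))) = \max\{\mathit{LSCC}(\widetilde{D}(d(u))), \mathit{LSCC}(\widetilde{D}(u)), |H(u)|\}$, which is correct by the second part of Corollary \ref{cor:vertices-lscc}; (iii) computations of type (c) reduce to computing $|\widetilde{D}(u) \cap \widetilde{D}^R(u)|$ via the identity $|D(u) \cup D^R(u)| = |D(u)| + |D^R(u)| - (|\widetilde{D}(u) \cap \widetilde{D}^R(u)| + 1)$, and $|D(u)|$, $|D^R(u)|$ are just subtree sizes in the dominator trees.

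The key remaining step is computing $|\widetilde{D}(u) \cap \widetilde{D}^R(u)|$ for all common nontrivial dominators $u$. Here I would invoke the vertex-splitting reduction of Section \ref{sec:vertices-SCCs}: construct $\overline{G}$ by splitting each nontrivial dominator $x$ into $\overline{x} \to x$, so that by Lemma \ref{lemma:vertex-to-edge-reduction} and Corollary \ref{corollary:vertex-to-edge-reduction-2} the common descendants $w \neq x$ of $x$ in $D, D^R$ correspond exactly to the ordinary common descendants of the common bridge $(\overline{x}, x)$ in $\overline{D}, \overline{D}^R$. Then apply the algorithm of Section \ref{sec:minmax-scc} for counting common descendants of common strong bridges, modified so that when a set $H(u)$ is processed we add the number of \emph{ordinary} vertices in $H(u)$ (carried in the triple $\langle h(u), u, |H(u)|\rangle$) rather than $|H(u)|$ itself. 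As already noted, the number of ordinary vertices in $H(u)$ for all $u$ is precomputable in $O(n)$ by one bottom-up traversal of $H$, and $\overline{D}$, $\overline{D}^R$, $\overline{H}$, $\overline{H}^R$, and the bridges of $\overline{G}_s, \overline{G}^R_s$ are all constructible from the corresponding structures of $G$ in $O(n)$ time.

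For the running-time bookkeeping I would observe that every phase touches each vertex of $G$ (or of $\overline{G}$, which has at most $2n$ vertices) a constant number of times, performing only constant-time ancestor/descendant tests in $D$ and $D^R$ (available in $O(1)$ after $O(n)$ preprocessing \cite{domin:tarjan}) and constant-time arithmetic; the bucket sorts inside the common-descendants subroutine are over $O(n)$ elements. Hence the total time is $O(n)$, given the dominator trees, loop nesting trees, and strong bridges as input. The smallest-component variant is symmetric: replace $\max$ by $\min$ in the bottom-up recurrences and take the appropriate minimum in the analogue of Corollary \ref{cor:vertices-lscc}, with no change to the common-descendant counting.

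The main obstacle, and the part that needs the most care in the write-up, is verifying correctness of the vertex-splitting modification for type (c): one must check that splitting does not create spurious common descendants, that the singleton component $\{x\}$ introduced in $\overline{G} \setminus (\overline{x}, x)$ is correctly excluded by counting only ordinary vertices, and that the correspondence $u \in D(x) \iff u \in \overline{D}(x)$ (and the $D^R$ analogue with $\overline{x}$) is exact for ordinary $u$ — all of which is exactly what Lemma \ref{lemma:vertex-to-edge-reduction} and Corollary \ref{corollary:vertex-to-edge-reduction-2} give us, so the proof is really a matter of citing them cleanly and confirming that the modified Section \ref{sec:minmax-scc} algorithm computes $|\widetilde{D}(u) \cap \widetilde{D}^R(u)|$ rather than $|D(v) \cap D^R(u)|$ for the split edges.
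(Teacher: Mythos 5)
Your proposal follows essentially the same route as the paper: it invokes Corollary~\ref{cor:vertices-lscc}, handles the start vertex separately, computes $\mathit{LSCC}(\widetilde{D}(u))$ and $\mathit{LSCC}(\widetilde{D}^R(u))$ via the same bottom-up recurrence over $D$ and $D^R$ using the precomputed $|H(\cdot)|$ sizes, and reduces $|D(u)\cup D^R(u)|$ to $|\widetilde{D}(u)\cap\widetilde{D}^R(u)|$ via the vertex-splitting reduction of Section~\ref{sec:vertices-SCCs}, citing Lemma~\ref{lemma:vertex-to-edge-reduction} and Corollary~\ref{corollary:vertex-to-edge-reduction-2} and the ordinary-vertex modification of the common-descendant counting algorithm exactly as the paper does. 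The argument and the $O(n)$ accounting match the paper's proof.
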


\begin{corollary}
Given a strongly connected directed graph $G$,  we can find in worst-case time $O(m+n)$ a  strong articulation point $v$ that minimizes/maximizes the size of the largest / smallest strongly connected component in $G\setminus v$.
\end{corollary}

\section{Computing $2$-vertex-connected components}
\label{sec:2VCBs}

\newcommand{\vr}{\leftrightarrow_{\mathrm{vr}}}

We can use our framework to obtain a new, simpler, linear-time algorithm for computing the $2$-vertex-connected components of a digraph. As in \cite{2VCB}, we do this by computing as an intermediary the following relation\footnote{We note that \cite{2VCB} refers to the $2$-vertex-connected components of a digraph as \emph{$2$-vertex-connected blocks}.}.
Vertices $x$ and $y$ are said to be \emph{vertex-resilient},
denoted by $x \leftrightarrow_{\mathrm{vr}} y$,
if the removal of any vertex different from $x$ and $y$
leaves $x$ and $y$ in the same strongly connected component. We define a \emph{vertex-resilient component}
of a digraph $G=(V,E)$ as a maximal subset $B \subseteq V$ such that $x \leftrightarrow_{\mathrm{vr}} y$
for all $x, y \in B$. See Figure \ref{figure:VRB-example}.
As a (degenerate) special case, a vertex-resilient component might consist of a singleton vertex only, in which case we have a \emph{trivial vertex-resilient component}.
Here, we consider only nontrivial vertex-resilient components, and since there is no danger of ambiguity, we will call them simply vertex-resilient components.

\begin{figure}[t!]
\begin{center}
\centerline{\includegraphics[trim={0 0 0 0cm}, clip=true, width=1.2\textwidth]{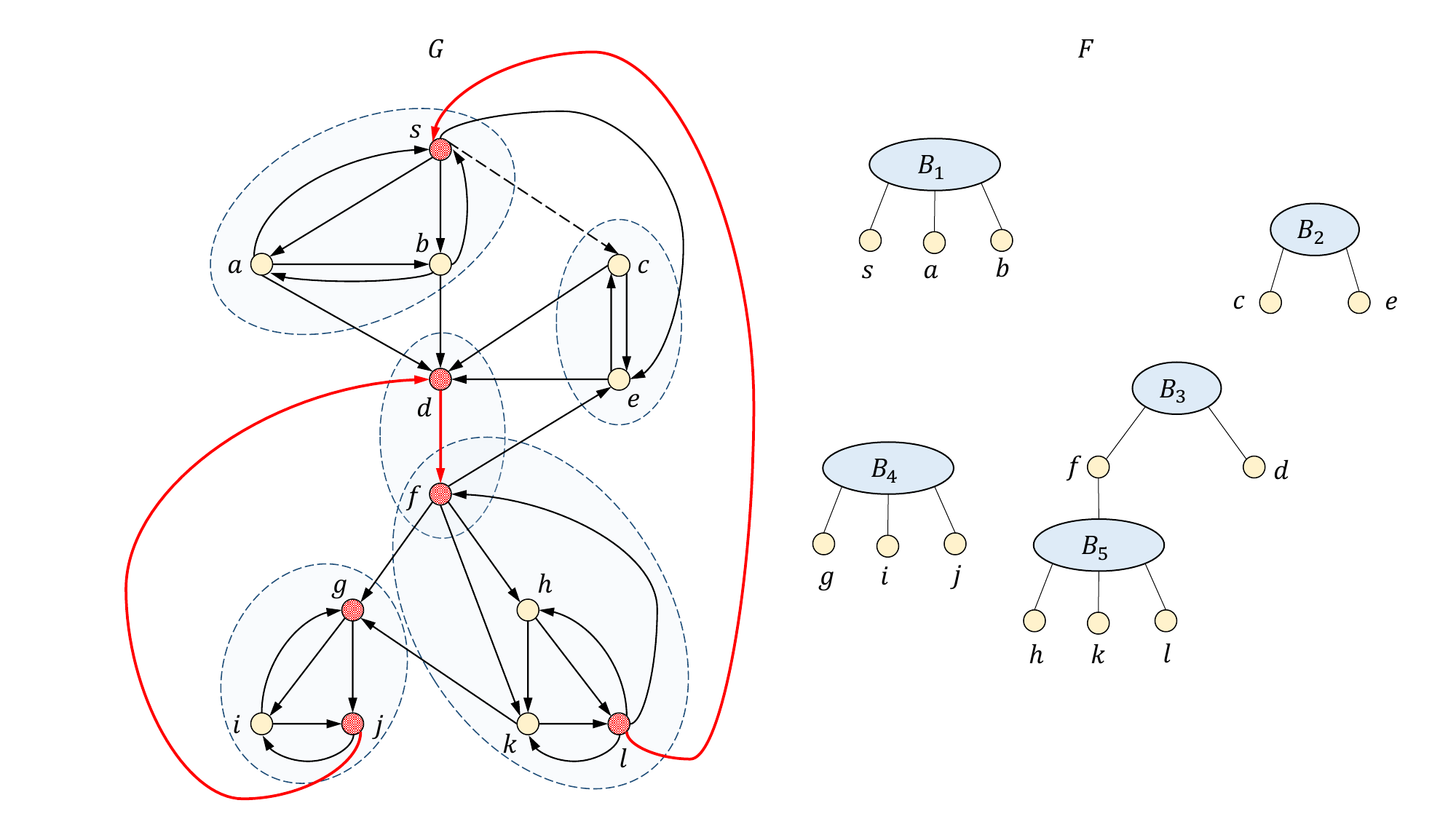}	}
\caption{The vertex-resilient components of the digraph of Figure \ref{figure:example1ab}, and the corresponding component forest. Strong bridges and strong articulation points of $G$ are shown red. (Better viewed in color.)}
\label{figure:VRB-example}
\end{center}
\end{figure}

The following lemma states that we can compute the $2$-vertex-connected components from the $2$-edge-connected components and the vertex-resilient components.
Moreover, this computation can be done in $O(n)$ time~\cite{2VCB}.

\begin{lemma} \emph{(\cite{2VCB})}
\label{cor:2vc-resilient}
For any two distinct vertices $x$ and $y$, $x \leftrightarrow_{\mathrm{2v}} y$ if and only if $x \leftrightarrow_{\mathrm{vr}} y$ and $x \leftrightarrow_{\mathrm{2e}} y$.
\end{lemma}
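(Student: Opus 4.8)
The plan is to prove the two implications separately, handling the easy direction first and isolating the one genuinely delicate case in the other.

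\textbf{The ($\Rightarrow$) direction.} Assume $x \leftrightarrow_{\mathrm{2v}} y$ and fix two internally vertex-disjoint paths $P_1,P_2$ from $x$ to $y$, which we may take to be distinct, together with two such paths $Q_1,Q_2$ from $y$ to $x$. I would first observe that $P_1$ and $P_2$ are automatically \emph{edge}-disjoint: a common edge $(a,b)$ would put $a,b \in V(P_1)\cap V(P_2)=\{x,y\}$, and since in a simple path from $x$ to $y$ only $x$ can be a tail and only $y$ a head, this forces $(a,b)=(x,y)$ and hence $P_1=P_2$, contradicting distinctness. The same applies to $Q_1,Q_2$, giving $x \leftrightarrow_{\mathrm{2e}} y$. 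For vertex-resilience, remove any vertex $z\notin\{x,y\}$; such a $z$ can be an internal vertex of at most one of $P_1,P_2$, so at least one of them survives in $G\setminus z$, and likewise for $Q_1,Q_2$, so $x$ and $y$ remain in the same strongly connected component of $G\setminus z$. Hence $x \leftrightarrow_{\mathrm{vr}} y$.

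\textbf{The ($\Leftarrow$) direction.} Assume $x \leftrightarrow_{\mathrm{vr}} y$ and $x \leftrightarrow_{\mathrm{2e}} y$; it suffices to exhibit two internally vertex-disjoint $x$-to-$y$ paths, since the $y$-to-$x$ case is symmetric (both hypotheses are symmetric in $x$ and $y$). I would split on whether $(x,y)\in E$. If $(x,y)\notin E$, then vertex-resilience says precisely that no single vertex $z\notin\{x,y\}$ destroys all $x$-to-$y$ paths, so the minimum $x$-to-$y$ vertex cut has size at least $2$ (it is nonzero because $G$, restricted to the strongly connected component of $x$, still contains an $x$-to-$y$ path), and Menger's theorem in its vertex form for digraphs — applicable here exactly because $(x,y)$ is not an edge — yields two internally vertex-disjoint $x$-to-$y$ paths. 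If $(x,y)\in E$, then vertex-resilience alone gives nothing, because the single edge $(x,y)$ already avoids every $z\notin\{x,y\}$; here I would use $x \leftrightarrow_{\mathrm{2e}} y$ instead. Taking two edge-disjoint $x$-to-$y$ paths, they cannot both be the single edge $(x,y)$ (they would then coincide), so there is an $x$-to-$y$ path $P$ with at least two edges; since a simple $x$-to-$y$ path that uses the edge $(x,y)$ must be that edge alone, $P$ avoids $(x,y)$, and then $P$ together with the single edge $(x,y)$ is a pair of internally vertex-disjoint $x$-to-$y$ paths, as required.

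Before any of this I would dispose of the trivial case in which $x$ and $y$ do not lie in the same strongly connected component of $G$: then none of the three relations can hold, so the equivalence is vacuous. The main obstacle is the $(x,y)\in E$ subcase of the ($\Leftarrow$) direction: one must recognize that vertex-resilience is powerless when $x$ and $y$ are adjacent, that this is exactly the configuration where the vertex form of Menger's theorem degenerates, and that the extra hypothesis $x \leftrightarrow_{\mathrm{2e}} y$ is precisely what supplies a second, internally vertex-disjoint route; everything else is a routine unwinding of the definitions and an invocation of Menger's theorem.
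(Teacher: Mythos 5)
Your proof is correct. Note, though, that Lemma~\ref{cor:2vc-resilient} is quoted from \cite{2VCB}; the present paper gives no proof of its own, so there is nothing here to compare against line by line.

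Evaluating the argument on its own merits: both directions are sound. In the forward direction you rightly observe that internally vertex-disjoint implies edge-disjoint for two distinct $x$-to-$y$ paths (the only candidate shared edge would be $(x,y)$ itself, collapsing both paths to that edge), and that removing any single $z\notin\{x,y\}$ kills at most one path in each pair. In the reverse direction you identify the one genuinely delicate point — the vertex form of Menger's theorem is unavailable when $(x,y)\in E$, and vertex-resilience is vacuous there — and correctly supply the missing second route from 2-edge-connectivity: an edge-disjoint companion to $(x,y)$ is necessarily a simple $x$-to-$y$ path of length at least two avoiding $(x,y)$, hence internally vertex-disjoint from the edge $(x,y)$. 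The preliminary reduction to the case where $x$ and $y$ share a strongly connected component is correctly handled (all three relations fail in its absence, since deleting vertices or edges cannot create new connections). This is the standard elementary argument via Menger, and I would expect it to coincide in substance with the proof in \cite{2VCB}.
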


\begin{lemma} \emph{(\cite{2VCB})}
\label{lemma:vertex-resilient-necessary}
Let $G=(V,E)$ be a strongly connected digraph, and let $s \in V$ be an arbitrary start vertex.
Any two vertices $x$ and $y$ are vertex-resilient only if the following condition holds in both $D$ and $D^R$:
$x$ and $y$ are either siblings or one is the parent of the other.
\end{lemma}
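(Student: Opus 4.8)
The plan is to prove the contrapositive: if $x$ and $y$ violate the stated ancestor/sibling condition in $D$ (the case of $D^R$ being symmetric, by working in $G^R$), then there is a vertex $w\notin\{x,y\}$ whose removal puts $x$ and $y$ in different strongly connected components, so $x$ and $y$ are not vertex-resilient. The only way the condition can fail in $D$ is that the tree paths $D[s,x]$ and $D[s,y]$ diverge at some vertex and then continue for at least one more edge on at least one side; concretely, let $z$ be the nearest common ancestor of $x$ and $y$ in $D$, and suppose without loss of generality that $z\neq x$ and $z$ is not the parent of $x$ in $D$. Then the child $c$ of $z$ on the path $D[z,x]$ is a proper ancestor of $x$ with $c\neq x$, and $c$ is not an ancestor of $y$ (since $z$ is the nearest common ancestor). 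Set $w=c$.

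The key step is to argue that $w=c$ separates $x$ and $y$ after deletion, i.e.\ that every path from $x$ to $y$ in $G$ passes through $w$. First I would observe that $w\neq s$ (it is a proper descendant of $z$, hence of $s$, unless $z=s$, in which case $z$ is still a common ancestor and the argument runs with $z$ playing the role of the branch point; I should handle the $z=s$ subcase explicitly, noting $w$ is then a child of $s$). Since $w=c$ is a proper ancestor of $x$ in $D$ and $w\neq x$, $w$ is a \emph{proper dominator} of $x$, i.e.\ $x\in\widetilde{D}(w)$; thus $w$ is a nontrivial dominator of $G_s$ and in particular a strong articulation point of $G$ by Property~\ref{property:strong-articulation-point}. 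Now apply Lemma~\ref{lemma:paths-through-SAP}(a) with this $w$: since $y$ is not a descendant of $w$ in $D$, every simple path in $G$ from $y$ to any descendant of $w$ in $D$ — in particular to $x$ — must contain $w$. Hence every path from $y$ to $x$ in $G$ goes through $w$, so in $G\setminus w$ there is no path from $y$ to $x$, and $x,y$ lie in different strongly connected components of $G\setminus w$. Since $w\notin\{x,y\}$, this contradicts vertex-resilience, completing the case.

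The symmetric case ($z\neq y$ and $z$ not the parent of $y$) is identical with the roles of $x$ and $y$ swapped, and the analogous statement for $D^R$ follows by applying the same argument to the reverse graph $G^R$ (using Lemma~\ref{lemma:paths-through-SAP}(b) and $D^R$ in place of $D$). The main obstacle I anticipate is purely bookkeeping: cleanly identifying the branch vertex and the separating vertex $w$ in all degenerate configurations — $x$ an ancestor of $y$ but not its parent, $y$ an ancestor of $x$ but not its parent, $x$ and $y$ incomparable, and the possibility that the branch point is $s$ itself — and checking in each that the chosen $w$ is genuinely a proper descendant of a common ancestor, is distinct from both $x$ and $y$, and is a proper dominator of one of them so that Lemma~\ref{lemma:paths-through-SAP} applies. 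Once the case analysis is organized around "let $z$ be the nearest common ancestor and let $w$ be the child of $z$ toward the vertex that is not a child of $z$," each case reduces to a single invocation of Lemma~\ref{lemma:paths-through-SAP}, so no real computation is involved.
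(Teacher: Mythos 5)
The paper states this lemma without proof, citing it directly from \cite{2VCB}, so there is no in-text proof to compare against. Your argument is correct on its own terms and uses exactly the machinery the present paper deploys for the analogous facts (e.g.\ in Lemmata~\ref{vertices-separators-ancestors} and~\ref{SAP-relation}): observe that the condition ``siblings or one is the parent of the other'' is equivalent to ``both $x$ and $y$ lie in $\{z\}\cup c(z)$ where $z=\mathrm{nca}_D(x,y)$,'' so if it fails then (after swapping $x$ and $y$ if necessary) the child $w$ of $z$ on the path $D[z,x]$ is a nontrivial dominator in $G_s$ distinct from $x,y,s$, with $x\in D(w)$ and $y\notin D(w)$; Lemma~\ref{lemma:paths-through-SAP}(a) then forces every path from $y$ to $x$ through $w$, so $w$ is a separating vertex and $x\not\vr y$. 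Your handling of the degenerate cases ($z=s$, one of $x,y$ an ancestor of the other) is also fine, and the $D^R$ half follows symmetrically from Lemma~\ref{lemma:paths-through-SAP}(b), as you say.
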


Recall that, for any vertex $u$, $c(u)$ (resp., $c^R(u)$) denotes the set of children of $u$ in $D$ (resp., $D^R$). For any pair of vertices $u$ and $v$ we
let $c(u,v) = ( c(u) \cup \{u\} ) \cap ( c^R(v) \cup \{v\} )$. That is,
set $c(u,v)$ contains all vertices in $c(u) \cap c^R(v)$. Also, if $u=v$ or $u \in c^R(v)$ then $u \in c(u,v)$, and if $v \in c(u)$ then $v \in c(u,v)$.
We can compute all nonempty $c(u,v)$ sets in $O(n)$ time~\cite{LuigiGILP15}. The following corollary is an immediate consequence of Lemma \ref{lemma:vertex-resilient-necessary}.

\begin{corollary}
\label{corollary:vertex-resilient-necessary}
Let $G=(V,E)$ be a strongly connected digraph, and let $s \in V$ be an arbitrary start vertex.
Any two vertices $x$ and $y$ are vertex-resilient only if they are located in a common set $c(u,v)$.
\end{corollary}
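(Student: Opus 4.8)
The plan is to deduce Corollary~\ref{corollary:vertex-resilient-necessary} directly from Lemma~\ref{lemma:vertex-resilient-necessary} by a short case analysis on the relative positions of $x$ and $y$ in the two dominator trees $D$ and $D^R$. First I would fix two vertices $x$ and $y$ that are vertex-resilient and invoke Lemma~\ref{lemma:vertex-resilient-necessary}: in $D$, either $x$ and $y$ are siblings, or one is the parent of the other; and the same dichotomy holds in $D^R$. So I would introduce $u$ as the parent of $x$ in $D$ and $u'$ as the parent of $y$ in $D$ (well-defined if $x,y\neq s$; the case where one of them is $s$ is handled separately, since $s$ is the root and can only be a parent, not a sibling or child). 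The goal is to exhibit a single pair $(u,v)$ such that both $x$ and $y$ lie in $c(u,v) = (c(u)\cup\{u\})\cap(c^R(v)\cup\{v\})$.

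The key step is to read off the right ``common parent'' in each tree from the dichotomy. In $D$: if $x$ and $y$ are siblings, let $u$ be their common parent, so $x,y\in c(u)\subseteq c(u)\cup\{u\}$; if instead $x$ is the parent of $y$ (the other direction symmetric), let $u = x$, so that $x\in\{u\}$ and $y\in c(u)$, hence both $x,y\in c(u)\cup\{u\}$. In all subcases of the $D$-dichotomy we obtain a vertex $u$ with $\{x,y\}\subseteq c(u)\cup\{u\}$. Running the identical argument in $D^R$ yields a vertex $v$ with $\{x,y\}\subseteq c^R(v)\cup\{v\}$. Intersecting, $\{x,y\}\subseteq \big(c(u)\cup\{u\}\big)\cap\big(c^R(v)\cup\{v\}\big) = c(u,v)$, which is exactly the claim. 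Finally I would double-check the definition of $c(u,v)$ given in the paper: it is phrased as ``contains all vertices in $c(u)\cap c^R(v)$'', plus $u$ when $u=v$ or $u\in c^R(v)$, plus $v$ when $v\in c(u)$ — I would verify that this set-theoretic description is genuinely $\big(c(u)\cup\{u\}\big)\cap\big(c^R(v)\cup\{v\}\big)$, so that membership of $x$ and $y$ really follows, and note that whenever $c(u,v)$ contains two distinct vertices it is automatically nonempty.

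The only real subtlety — and the place I expect to spend a sentence or two of care — is the bookkeeping about whether the distinguished vertex used as ``$u$'' (resp. ``$v$'') is itself $x$, $y$, or their common parent, and checking that in each of these mixed cases the membership conditions attached to $u$ and $v$ in the definition of $c(u,v)$ are actually met (e.g. when $u=x$ is the parent of $y$ in $D$ and $v=y$ is the parent of $x$ in $D^R$, one needs $u\in c^R(v)$, which is precisely the statement that $x$ is a child of $y$ in $D^R$ — consistent with the chosen $D^R$-subcase). Modulo this routine verification the corollary is immediate; it is purely a restatement of Lemma~\ref{lemma:vertex-resilient-necessary} in terms of the $c(u,v)$ sets, so no new graph-theoretic input is needed.
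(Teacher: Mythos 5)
Your argument is correct and is exactly the intended unpacking of the paper's one-line remark that the corollary is ``an immediate consequence'' of Lemma~\ref{lemma:vertex-resilient-necessary} together with the definition $c(u,v)=(c(u)\cup\{u\})\cap(c^R(v)\cup\{v\})$. The final paragraph's worry about the ``mixed'' cases is unnecessary once you work with that intersection formula directly (rather than the verbal gloss), since membership of $x$ and $y$ in each factor is immediate from the corresponding $D$- or $D^R$-dichotomy; otherwise the proof matches the paper's intent.
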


Let $x$ and $y$ be two distinct vertices in $G$. We say that a strong articulation point $u$ \emph{separates $x$ and $y$} (or equivalently that $u$ is a \emph{separating vertex for $x$ and $y$})
if all paths from $x$ to $y$ or all paths from $y$ to $x$ contain vertex $u$ (i.e., $x$ and $y$ belong to different strongly connected components of $G\setminus u$).
By definition,  $x$ and $y$ are vertex-resilient if and only if there exists no separating vertex for them.
Our new algorithm for the vertex-resilient components is based on Theorem \ref{theorem:vertices-scc}, together with some lemmata, stated below,
that identify the strong articulation points that are candidate
separating vertices of a given pair of vertices.

\begin{lemma}
\label{vertices-separators-ancestors}
Let $u$ be a strong articulation point that is a separating vertex for vertices $x$ and $y$. Then $u$ must appear in at least one of the paths $D[s,x]$, $D[s,y]$, $D^R[s,x]$, and $D^R[s,y]$.
\end{lemma}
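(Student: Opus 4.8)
The plan is to mimic almost verbatim the proof of Lemma~\ref{separators-ancestors}, replacing edges with vertices and "strong bridge" with "strong articulation point", and using Lemma~\ref{lemma:paths-through-SAP} in place of Lemma~\ref{lemma:partition-paths}. So I would argue by contradiction: suppose a strong articulation point $u$ separates $x$ and $y$, yet $u$ does not lie on any of the four tree paths $D[s,x]$, $D[s,y]$, $D^R[s,x]$, $D^R[s,y]$.

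First I would extract the consequences of $u$ not being on these paths. Since $u \notin D[s,y]$, i.e. $u$ is not a dominator of $y$ in $G_s$, there is a path $\pi'$ from $s$ to $y$ in $G$ avoiding $u$. Symmetrically, since $u$ is not a dominator of $x$ in $G^R_s$, there is a path $\pi$ from $x$ to $s$ in $G$ avoiding $u$. (Here one has to be a little careful about the endpoints: $u \neq s$ because $s$ is on every such path trivially; and if $u$ equals $x$ or $y$ then it is by definition not a separating vertex for the pair $\{x,y\}$, so we may assume $u \notin \{s,x,y\}$, which makes the concatenations below genuinely "internal".) Concatenating, $\pi \cdot \pi'$ is a walk from $x$ to $y$ in $G$ not passing through $u$, hence contains a path from $x$ to $y$ avoiding $u$. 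Analogously, $u \notin D^R[s,y]$ gives a path from $y$ to $s$ avoiding $u$, and $u \notin D[s,x]$ gives a path from $s$ to $x$ avoiding $u$; their concatenation yields a path from $y$ to $x$ avoiding $u$. Thus $x$ and $y$ remain strongly connected in $G \setminus u$, contradicting the assumption that $u$ separates them.

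Concretely, the key steps in order are: (1) reduce to the case $u \notin \{s,x,y\}$; (2) translate "$u \notin D[s,w]$" into "$u$ is not a dominator of $w$ in $G_s$" and hence "there is an $s$-to-$w$ path in $G$ avoiding $u$" (by definition of dominator), and similarly "$u \notin D^R[s,w]$" into "there is a $w$-to-$s$ path in $G$ avoiding $u$"; (3) form the two concatenated walks and pass to simple paths; (4) conclude the contradiction with strong connectivity of $G \setminus u$.

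The only mild subtlety — and the closest thing to an obstacle — is the handling of the trivial dominators $s$ and the vertices $x,y$ themselves: the tree paths $D[s,x]$ etc. always contain $s$ and their endpoints, so the hypothesis "$u$ does not appear on any of these paths" already forces $u \notin \{s,x,y\}$; I would spell this out so the concatenation argument is clean and the obtained path genuinely avoids $u$ as an internal vertex. Everything else is a direct transcription of the edge-version proof, so no new ideas are needed beyond invoking Lemma~\ref{lemma:paths-through-SAP} where appropriate.
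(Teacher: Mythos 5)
Your proof is correct and is essentially identical to the paper's: both argue by contradiction, translate "$u$ is not on $D[s,\cdot]$ or $D^R[s,\cdot]$" into the existence of paths avoiding $u$ (directly from the definition of dominator), and concatenate to exhibit $u$-avoiding paths between $x$ and $y$ in both directions. Your extra care about $u\notin\{s,x,y\}$ is a reasonable detail (the paper only notes the $u=s$ case), and you end up not actually needing Lemma~\ref{lemma:paths-through-SAP}, just as the paper's proof does not.
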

\begin{proof}
The lemma trivially holds  for $u=s$. So assume, by contradiction,
that a strong articulation point $u \not= s$ separates $x$ and $y$ but it does not appear in any of the paths $D[s,x]$, $D[s,y]$, $D^R[s,x]$, and $D^R[s,y]$.
The fact that $u \not\in D^R[s,x]$ implies that there is a path $\pi$ in $G$ from $x$ to $s$ that avoids $u$.
Similarly, the fact that $u \not \in D[s,y]$ implies that there is a path $\pi'$ in $G$ from $s$ to $y$ that avoids $u$.
Then $\pi \cdot \pi'$ is a path in $G$ from $x$ to $y$ that does not contain vertex $u$.
Analogously, the fact that $u \not\in D^R[s,y]$ and $u \not \in D[s,x]$ implies that there is a path in $G$ from $y$ to $x$ that does not contain $u$.
This contradicts the assumption that $u$ separates $x$ and $y$, i.e., that $x$ and $y$ are not strongly connected in $G\setminus u$.
\end{proof}

\begin{lemma}
\label{SAP-relation}
Let $x$ and $y$ be vertices such that $x \not= s$ and $y$ is either a sibling of $x$ or a child of $x$ in $D$. Also, let $w=d(x)$.
A strong articulation point $u$ that is not a descendant of $w$ in $D$ is a separating vertex for $x$ and $y$ only if $w$ is a separating vertex for $x$ and $y$.
\end{lemma}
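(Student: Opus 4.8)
The plan is to mimic the proof of Lemma~\ref{bridge-relation}, replacing the edge-separation argument by the vertex-separation argument of Lemma~\ref{lemma:paths-through-SAP}. So suppose $u$ is a strong articulation point that is not a descendant of $w=d(x)$ in $D$, and suppose $u$ separates $x$ and $y$. By definition of separating vertex, either every path from $x$ to $y$ contains $u$ or every path from $y$ to $x$ contains $u$; without loss of generality assume the former (if it is the latter, the roles of $x$ and $y$ are interchanged in the argument that follows, using the fact that in the two relevant cases $x$ and $y$ are interchangeable — they are siblings — or one must be slightly more careful; see below). The goal is to show that then $w$ is also a separating vertex for $x$ and $y$, i.e., every path from $x$ to $y$ (or every path from $y$ to $x$) contains $w$.

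First I would reduce to a statement about reaching $y$. Since every path $x \rightsquigarrow y$ contains $u$, such a path must contain a suffix that is a path from $u$ to $y$. Now I invoke Lemma~\ref{lemma:paths-through-SAP}: here $u$ is a strong articulation point, and I want to argue that $u \in D$ is positioned so that reaching $y$ from $u$ forces passing through $w$. The key observation is that $u$ is \emph{not} a descendant of $w$ in $D$, hence $u \notin D(w) \supseteq D(x)$, while $y$ is a descendant of $w$ in $D$: indeed $w=d(x)$, and $y$ is either a sibling of $x$ (so $d(y)=d(x)=w$, thus $y \in D(w)$) or a child of $x$ (so $y \in D(x) \subseteq D(w)$). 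So $u \notin D(w)$ and $y \in D(w)$. Then I would apply Lemma~\ref{lemma:partition-paths}-style reasoning in its vertex analogue (Lemma~\ref{lemma:paths-through-SAP}), but note the subtlety: Lemma~\ref{lemma:paths-through-SAP} is about paths forced through a \emph{strong articulation point}, whereas here I want paths forced through $w$, which need not be a strong articulation point. The cleaner tool is Lemma~\ref{lemma:partition-paths}(a): since $(d(w),w)$ — wait, $w$ need not be the head of a bridge either. The right statement is simply the definition of the dominator tree applied to $G_s$: every path from $s$ to $y$ passes through every ancestor of $y$ in $D$, in particular through $w$. But I need a path from $u$ to $y$, not from $s$.

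So the actual argument I would run: consider any path $\pi$ from $x$ to $y$. Since $u$ separates and (WLOG) lies on every $x\rightsquigarrow y$ path, $\pi$ visits $u$; let $\pi'$ be the suffix of $\pi$ from the last occurrence of $u$ to $y$. I claim $\pi'$ must contain $w$. Suppose not. Then prepend to $\pi'$ any path from $s$ to $u$ that avoids $w$ — such a path exists because $u \notin D(w)$, so $w$ does not dominate $u$, so some $s\rightsquigarrow u$ path avoids $w$. Concatenating gives an $s \rightsquigarrow y$ path avoiding $w$, contradicting the fact that $w$ dominates $y$. Hence $\pi'$, and therefore $\pi$, contains $w$. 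Since $\pi$ was an arbitrary $x\rightsquigarrow y$ path, $w$ lies on every path from $x$ to $y$, so $w$ separates $x$ and $y$. The main obstacle is handling the WLOG cleanly: if instead every $y\rightsquigarrow x$ path contains $u$, I would repeat the argument with $x$ in the role previously played by $y$ — but this requires $x \in D(w)$, which holds since $w=d(x)$ so $x \in D(w)$, and it requires some $s\rightsquigarrow u$ path avoiding $w$, which again holds since $u\notin D(w)$; so the symmetric case goes through identically, giving that $w$ lies on every $y\rightsquigarrow x$ path, hence $w$ separates $x$ and $y$. I would double-check that $x\neq s$ (given) is used only to guarantee $w=d(x)$ is well-defined.
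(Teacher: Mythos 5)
Your proof is correct and follows essentially the same route as the paper: reduce to the suffix of an $x\rightsquigarrow y$ (or $y\rightsquigarrow x$) path starting at $u$, and use the fact that $u\notin D(w)$ while the endpoint lies in $D(w)$ to force $w$ onto every such suffix. The paper closes this step by simply invoking Lemma~\ref{lemma:paths-through-SAP}(a) (which is applicable since $u\notin D(w)$ forces $w\neq s$, and $w=d(x)$ is then a nontrivial dominator of $G_s$), whereas you reprove that subclaim inline via the $s\rightsquigarrow u\rightsquigarrow y$ concatenation argument; the two are equivalent.
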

\begin{proof}
Let $u$ be a strong articulation point that is not a descendant of $w$ in $D$ and that separates $x$ and $y$ in $G$.
Since $u \not \in D(w)$ we have that $w \not =s$.
The fact that $u$ separates $x$ and $y$ implies that all paths from $x$ to $y$ or all paths from $y$ to $x$ contain $u$.
Without loss of generality, assume that all paths from $x$ to $y$ contain $u$ (otherwise swap $x$ and $y$ in the following).
Observe that any path from $x$ to $y$ containing $u$ must contain a path from $u$ to $y$ as a subpath.
The fact that $u$ is not a descendant of $w$ in $D$ and Lemma \ref{lemma:paths-through-SAP} imply that any path from $u$ to $y$ must contain $w$.
Therefore if any path from $x$ to $y$ contains $u$, it must also contain $w$.
\end{proof}

Our algorithm uses an additional data structure from \cite{2VCB}, the \emph{component forest} $F=(V_F, E_F)$ of $G$, defined as follows. The vertex set $V_F$ consists of the vertices in $V$ and also contains one \emph{component node} for each vertex-resilient component of $G$. The edge set $E_F$ consists of the edges $\{x, B\}$ for every vertex $x \in V$ and every component $B$ such that $x \in B$.

\begin{lemma} \emph{(\cite{2VCB})}
\label{lemma:blocks-number}
The number of vertex-resilient components in a digraph $G$ with $n$ vertices is at most $n-1$. Moreover, the total number of vertices in all vertex-resilient components is at most $2n-2$.
\end{lemma}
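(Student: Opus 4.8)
The plan is to bound the vertex-resilient blocks by relating them to the structure of the block forest $F$. First I would observe that by Corollary \ref{corollary:vertex-resilient-necessary}, every vertex-resilient block $B$ is contained in a single set $c(u,v)$ for some pair $(u,v)$, and by the definition of $c(u,v)$ via the children sets $c(u)$ and $c^R(v)$, each vertex $x$ belongs to at most one such set $c(u,v)$ that can contain more than one vertex — indeed, the parent of $x$ in $D$ determines $u$ and the parent of $x$ in $D^R$ determines $v$ uniquely (up to the self-inclusion cases handled in the definition). The key structural point is that $F$ is a forest: two distinct vertex-resilient blocks $B_1 \neq B_2$ can share at most one vertex, since if they shared two vertices $x, y$ then $x \leftrightarrow_{\mathrm{vr}} y$ would hold and maximality of blocks would force $B_1 = B_2$. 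This immediately makes $F$ acyclic (a cycle in $F$ would require two blocks sharing two vertices), hence a forest on $V_F = V \cup \{\text{block nodes}\}$.

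Next I would count edges in $F$. Each nontrivial vertex-resilient block $B$ is a block node of degree $|B| \geq 2$ in $F$, and contributes exactly $|B|$ edges $\{x,B\}$. So the number of edges of $F$ equals $\sum_B |B|$, where the sum is over all (nontrivial) vertex-resilient blocks. Since $F$ is a forest on at most $n + (\text{number of blocks})$ nodes, its number of edges is at most $n + (\text{number of blocks}) - 1$. Writing $k$ for the number of blocks, this gives $\sum_B |B| \leq n + k - 1$. Combining with $|B| \geq 2$ for each block yields $2k \leq \sum_B |B| \leq n + k - 1$, hence $k \leq n - 1$, which is the first claim. Substituting back, $\sum_B |B| \leq n + (n-1) - 1 = 2n - 2$, which is the second claim.

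The main obstacle — really the only nonroutine point — is establishing rigorously that two distinct vertex-resilient blocks intersect in at most one vertex, which is what makes $F$ a forest and drives the whole count. This follows from maximality of the blocks together with the fact that the vertex-resilient relation, restricted to a set on which it holds pairwise, behaves well enough: if $x,y \in B_1 \cap B_2$ with $B_1 \neq B_2$, then $x \leftrightarrow_{\mathrm{vr}} y$, but this alone does not immediately contradict maximality unless we know $B_1 \cup B_2$ is itself vertex-resilient, which is generally false. The correct argument is instead purely graph-theoretic: a connected component of $F$ cannot contain a cycle, because a cycle in $F$ alternates between original vertices and block nodes and would exhibit two block nodes $B_1, B_2$ both adjacent to two common original vertices $x, y$, i.e. $\{x,y\} \subseteq B_1 \cap B_2$ with $B_1 \neq B_2$; and two distinct maximal blocks cannot both contain the same pair $\{x,y\}$ precisely because containment of that pair in a block is equivalent (for the block containing $x$, say) to a membership condition that is uniquely determined — this is exactly the content that Corollary \ref{corollary:vertex-resilient-necessary} localizes each pair into a single $c(u,v)$. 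I would make this precise by noting that the block containing a pair $\{x,y\}$ that are vertex-resilient is unique: it is the block of $x$, and $x$ lies in a unique nontrivial block (if $x$ lay in two nontrivial blocks $B_1, B_2$, pick $y \in B_1$, $z \in B_2$ with $y \neq z$; this already needs care, so the cleanest route is to take as the definition-level fact that the nontrivial vertex-resilient blocks partition the non-isolated vertices, which is immediate once one checks that $\leftrightarrow_{\mathrm{vr}}$, though not transitive in general, has the property that the blocks — maximal pairwise-vertex-resilient sets — are pairwise disjoint among the non-singletons, an argument given in \cite{2VCB}). With that in hand, $F$ is a forest and the edge count above goes through verbatim.
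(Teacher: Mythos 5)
The paper itself does not prove this lemma; it cites it from~\cite{2VCB}. Your high-level strategy---show the block forest $F$ is acyclic, then count edges---is the right one, and your closing arithmetic is exactly correct: $F$ has $n+k$ nodes ($k$ the number of nontrivial blocks) and $\sum_B|B|\ge 2k$ edges, so the forest bound $\sum_B|B|\le n+k-1$ gives $2k\le n+k-1$, hence $k\le n-1$, and then $\sum_B|B|\le 2n-2$. But there is a genuine gap at the step you yourself flag as the crux, namely why $F$ is acyclic, and the way you try to shortcut it ends in a false statement.

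Two specific problems. First, ``two blocks share at most one vertex'' rules out only $4$-cycles in the bipartite graph $F$; a longer cycle $v_1 - B_1 - v_2 - B_2 - \cdots - v_\ell - B_\ell - v_1$ with all $B_i$ distinct and all $v_i$ distinct has each consecutive pair in exactly one common block, and is not forbidden by that fact alone. Second, and more seriously, the fallback you invoke---that nontrivial vertex-resilient blocks ``partition the non-isolated vertices,'' i.e., are pairwise disjoint---is simply false. Blocks do overlap, precisely at strong articulation points, just as biconnected components of an undirected graph overlap at cut vertices; if blocks were disjoint the block forest would be a trivial disjoint union of stars and the lemma would follow from $\sum_B|B|\le n$, and there would be no reason for the block-\emph{forest} machinery at all. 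The single observation that actually drives both acyclicity facts is this: a separating vertex $w$ for a pair $z_1,z_2$ satisfies $w\notin\{z_1,z_2\}$, so if some third vertex $y\notin\{w,z_1,z_2\}$ has $y\leftrightarrow_{\mathrm{vr}}z_1$ and $y\leftrightarrow_{\mathrm{vr}}z_2$, then $z_1$ and $z_2$ lie in the same strongly connected component of $G\setminus w$, a contradiction. Applied with the two shared witnesses $x,y\in B_1\cap B_2$ (whichever of $x,y$ differs from $w$ serves as the third vertex) this shows every $z_1\in B_1$, $z_2\in B_2$ are vertex-resilient, so $B_1\cup B_2$ is pairwise vertex-resilient and maximality forces $B_1=B_2$. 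Applied along an alleged $F$-cycle of length $\ge 6$, the same witness argument yields $v_1\leftrightarrow_{\mathrm{vr}}v_3$ and, iterating, all $v_i$ pairwise vertex-resilient, which collapses the $B_i$ to a single block. That closes the gap; as written, your proof does not.
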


Note that, by Corollary \ref{corollary:vertex-resilient-necessary}, each vertex-resilient component is contained in a $c(u,v)$ set. Thus, the $c(u,v)$ sets
define a ``coarse'' component forest, that our algorithm will refine later by using the loop nesting trees $H$ and $H^R$, as Theorem \ref{theorem:vertices-scc} indicates.
The fact that the $c(u,v)$ sets can be represented by a component forest of size $O(n)$
follows from the fact that these sets can be constructed by applying the \emph{split}
operation of  \cite{2VCB} to the sets $c(u) \cup \{u\}$, for each vertex $u$ that is not a leaf in $D$.
(See \cite{2VCB} for the details.)

In order to refine the initial component forest, we will use the following operation from \cite{2VCB}.
Let $\mathcal{B}$ be a set of components, let $\mathcal{S}$ be a partition of a set $U \subseteq V$, and let $x$ be a vertex not in $U$.

\begin{description}
	\item[\emph{refine}$(\mathcal{B}, \mathcal{S}, x)$:] For each component $B \in \mathcal{B}$, substitute $B$ by the sets  $B \cap (S \cup \{ x \} )$ of size at least two, for all $S \in \mathcal{S}$.
\end{description}

This operation can be executed in time that is linear in the total number of elements in all sets of $\mathcal{B}$ and $\mathcal{S}$:
\begin{lemma}\emph{(\cite{2VCB})}
\label{lemma:refine}
Let $M$ be the total number of elements in all sets of $\mathcal{B}$ ($M = \sum_{B \in \mathcal{B}}|B|$), and let $K$ be the number of elements in $U$. Then, the operation $\mathit{refine}(\mathcal{B}, \mathcal{S}, x)$ can be executed in $O(M+K)$ time.
\end{lemma}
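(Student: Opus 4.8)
The plan is to give a bucketing algorithm whose running time is proportional to the total size of its input, i.e., to $M+K$. First I would spend $O(K)$ time scanning the partition $\mathcal{S}$ and storing, for every element $y \in U$, a pointer $\sigma(y)$ to (a representative of) the part $S \in \mathcal{S}$ that contains it; this is possible since $\mathcal{S}$ is a partition of $U$, so $\sum_{S \in \mathcal{S}} |S| = K$. I also allocate, once, a global array of buckets indexed by the parts of $\mathcal{S}$, all initially empty.

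Next I would process the blocks of $\mathcal{B}$ one at a time. For a block $B$, I scan its elements: an element $y \in U$ is appended to the bucket of $\sigma(y)$, and if that bucket was previously empty I record it in a ``touched'' list for $B$; an occurrence of $x$ in $B$ is recorded by a boolean flag; every other element of $B$ is simply discarded, since it lies in no set $S \cup \{x\}$. After the scan I iterate over the touched buckets: for each such bucket, corresponding to a part $S$, the current bucket contents are exactly $B \cap S$; I append $x$ if the flag is set, obtaining $B \cap (S \cup \{x\})$, and I output this set as a new block provided it has size at least $2$. Finally I empty all the touched buckets (so the array is clean for the next block) and reset the flag. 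Since I only ever inspect and clear buckets whose indices actually occurred among the elements of $B$, the work devoted to $B$ is $O(|B|)$ plus the time to emit its output pieces.

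For the time bound I would argue as follows. The preprocessing and the one-time array allocation cost $O(K)$. Each block $B$ costs $O(|B|)$ for the scan and the cleanup, so $\sum_{B \in \mathcal{B}} O(|B|) = O(M)$. It remains to bound the total size of the emitted blocks: for a fixed $B$ the pieces $B \cap (S \cup \{x\})$ over the touched parts $S$ are pairwise disjoint except possibly for $x$, so their total size is at most $|B|$ plus the number of distinct parts touched by $B$, and the latter is at most $|B|$; hence the output generated from $B$ has size $O(|B|)$ and the total output has size $O(M)$. Summing all contributions yields the claimed $O(M+K)$ bound.

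I expect the only real subtlety — and the step I would be most careful about — to be avoiding any dependence on $|\mathcal{S}|$ or $|\mathcal{B}|$ when managing the buckets: a naive re-initialization of the whole bucket array between blocks would cost $\Omega(|\mathcal{S}|)$ per block. The ``touched list'' trick above resolves this, at the price of a short correctness check that (i) $x$ is inserted into every nonempty piece before the size test is applied, and (ii) elements of $B$ outside $U \cup \{x\}$ are correctly dropped; both are immediate from the definition of $\mathit{refine}$.
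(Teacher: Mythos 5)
Your proof is correct and is the standard bucketing argument with a ``touched list'' so that cleanup between blocks is charged to the block just scanned rather than to $|\mathcal{S}|$; this is essentially how the cited source establishes the bound (the present paper states the lemma without proof, attributing it to \cite{2VCB}). The one step you gloss over, and should make explicit, is how membership of a block element in $U$ is tested in $O(1)$: the lookup structure holding $\sigma(\cdot)$ must be an array indexed by vertex id, and avoiding an $\Omega(n)$ initialization per call requires either the usual uninitialized-array/timestamp trick or a one-time allocation amortized over all calls to $\mathit{refine}$ inside Algorithm~\textsf{HDVRB}; both are routine, but the bound as stated is for a single call, so the trick (or amortization) is actually load-bearing.
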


The algorithm needs to locate the components that contain a specific vertex, and, conversely, the vertices that are contained in a specific component. To enable this, we store the adjacency lists of the current component forest $F$.
Note that $F$ is bipartite, so the adjacency list of a vertex $v$ stores the components that contain $v$, and the adjacency list of a component node $B$ stores the vertices in $B$.
Initially $F$ contains one component for each $c(u,v)$ set.
These components are later refined by executing a sequence of $\mathit{refine}$ operations.
The $\mathit{refine}$ operation maintains the invariant that $F$ is a forest~\cite{2VCB}, so it follows that the total number of vertices and edges in $F$ is $O(n)$
throughout the execution of the algorithm.
Moreover, when we execute a $\mathit{refine}$ operation we can update the adjacency lists of $F$, while maintaining the bounds given in Lemmas \ref{lemma:refine}.

Our new algorithm, dubbed \textsf{HDVRC}
since it uses the loop nesting trees $H$ and $H^R$, and the dominator trees $D$ and $D^R$, works as follows (see Algorithm \ref{fig:HDVRB} below).
After defining the initial components, in Lines 6--7, we perform a ``forward pass'' which processes $D$ and $H$.
During this pass, we visit the non-leaf vertices of $D$ in bottom-up order.
For each such vertex $u$,  Line 11 computes a partition $\mathcal{S}$ of $c(u)$, such that each set $S \in \mathcal{S}$ contains a subset of children of $u$ in $D$ that are strongly connected in $G \setminus u$.
Finally, we need to find the components that may contain vertices that are vertex-resilient with $u$. This is done in the loop of Lines 14--21.
After the completion of this forward pass, we execute a ``reverse pass'' that performs the analogous computations in $D^R$ and $H^R$.

\begin{algorithm}
	\LinesNumbered
	\DontPrintSemicolon
	\KwIn{Strongly connected digraph $G=(V,E)$}
	\KwOut{The vertex-resilient components of $G$}
	
    \textbf{Initialization:}\;

 	Compute the reverse digraph $G^R$.
Select an arbitrary start vertex $s \in V$.

	Compute the dominator trees $D$ and $D^R$ of the flow graphs $G_s$ and $G_s^R$, respectively.\;

	Compute the loop nesting trees $H$ and $H^R$ of the flow graphs $G_s$ and $G_s^R$, respectively.\;
		
     \textbf{Initialize component forest:}\;
     Compute the sets $c(u,v)$ for all pairs of vertices $u$ and $v$.\;
     Initialize the component forest $F$ to contain one component for each set $c(u,v)$ with at least two vertices.\;

	\textbf{Forward direction:}\;
    \ForEach{$u \in N\cup \{s\}$ in a bottom-up order of $D$}
	{
        Find the set of components $\mathcal{B}$ that contain at least two vertices in $c(u) \cup \{ u\}$\;
        Compute the collection of vertex subsets $\mathcal{S} = \{ H(v) \cap c(u) : h(v) \notin \widetilde{D}(u) \wedge v \in c(u)\}$\;
        Execute $\mathit{refine}(\mathcal{B}, \mathcal{S}, u)$\;
\If{$u \not= s$}
   {
        \ForEach{$B \in \mathcal{B}$  such that $u \in B$}
		{
           Choose an arbitrary vertex $v \not=u$ in $B$\;
           Compute the nearest common ancestor $w$ of $u$ and $v$ in $H$\;
           \If{$w \not\in c(d(u))$} {
                        Set $B = B \setminus u$\;
                        \lIf{$|B|=1$}{delete $B$ from $F$}
                }
		}
   }
 }
    \textbf{Reverse direction:}\;
	\ForEach{$u \in N^R \cup \{s\}$ in a bottom-up order of $D^R$}
	{
        Find the set of components $\mathcal{B}$ that contain at least two vertices in $c^R(u) \cup \{ u\}$\;
        Compute the collection of vertex subsets $\mathcal{S} = \{ H^R(v) \cap c^R(u) : h^R(v) \notin \widetilde{D}^R(u) \wedge v \in c^R(u)\}$\;
		Execute $\mathit{refine}(\mathcal{B}, \mathcal{S}, u)$\;
\If{$u \not= s$}
 {
        \ForEach{$B \in \mathcal{B}$ such that $u \in B$}
		{
            Choose an arbitrary vertex $v \not= u$  in $B$\;
            Compute the nearest common ancestor $w^R$ of $u$ and $v$ in $H^R$\;
             \If{$w^R \not\in c^R(d^R(u))$} {	
                       Set $B = B \setminus u$\;
                       \lIf{$|B|=1$}{delete $B$ from $F$}
               }
	}
 }
}
	\caption{\textsf{HDVRC}}
	\label{fig:HDVRB}
\end{algorithm}

Next, we prove the correctness of our new algorithm.

\begin{theorem}
Algorithm \textsf{HDVRC} is correct.
\end{theorem}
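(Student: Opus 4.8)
The plan is to show that, once \textsf{HDVRB} terminates, two distinct vertices $x$ and $y$ lie in a common block of $F$ if and only if $x \vr y$; given this, the fact that the final block forest faithfully encodes the (possibly non-transitive) vertex-resilient relation follows from the $\mathit{split}/\mathit{refine}$ framework of \cite{2VCB} exactly as there, since by Corollary~\ref{corollary:vertex-resilient-necessary} every vertex-resilient pair already lies in a common initial set $c(u,v)$, and Lemmas~\ref{lemma:blocks-number} and~\ref{lemma:refine} control the size of $F$ throughout. I would organize the argument around the two implications, mirroring the proof of Theorem~\ref{theo:correct} but with Theorem~\ref{theorem:vertices-scc} in place of Theorem~\ref{cor:scc}, Lemma~\ref{vertices-separators-ancestors} in place of Lemma~\ref{separators-ancestors}, and Lemma~\ref{SAP-relation} in place of Lemma~\ref{bridge-relation}.

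Before the case analysis I would isolate one structural fact, call it the \emph{child--ancestor property}: if $p \neq s$, $q \in c(p)$, and $a$ is an ancestor of $q$ in $H$ with $a \in \widetilde{D}(p)$, then $a \in c(p)$. Its proof imitates that of Lemma~\ref{lemma:ancestor-of-x}: by the definition of the loop nesting tree there is a path from $q$ to $a$ using only $T$-descendants of $a$, whereas $d(a)$ is a $T$-ancestor of $a$; since $d(a) \notin \mathrm{dom}(q)$ unless $d(a) = p$, one can route a path from $s$ to $q$ avoiding $d(a)$, which forces $d(a)$ onto the loop path — impossible unless $d(a) = p$. Three consequences follow, and they are precisely what the algorithm relies on: (i) $\mathcal{S} = \{H(v) \cap c(u) : v \in c(u),\, h(v) \notin \widetilde{D}(u)\}$ is a partition of $c(u)$, and by Theorem~\ref{theorem:vertices-scc} each part is the intersection of $c(u)$ with a strongly connected component of $G \setminus u$; (ii) for a child $v$ of $u$ in $D$, the vertex $d(u)$ separates $u$ and $v$ if and only if the nearest common ancestor of $u$ and $v$ in $H$ does not lie in $c(d(u))$, obtained by applying Theorem~\ref{theorem:vertices-scc} to $G \setminus d(u)$ and identifying the component of each of $u$ and $v$ with $H(\cdot)$ of its boundary $H$-ancestor relative to $d(u)$; and (iii) this nearest-common-ancestor test returns the same verdict for every $v$ in a block produced by the $\mathit{refine}$ at $u$, so testing a single arbitrary representative is sound.

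For the ``if'' direction I would verify that no operation separates a pair with $x \vr y$. A $\mathit{refine}(\mathcal{B},\mathcal{S},u)$ does not: if $u \notin \{x,y\}$ then $x$ and $y$ are in the same component of $G \setminus u$, hence by (i) in the same set $S \cup \{u\}$; if $u \in \{x,y\}$, say $u = x$, then $y \in c(u)$ lies in the unique part $S$ containing it and $x, y$ survive together in $B \cap (S \cup \{x\})$. The removal step does not separate them either: if $u \in B$, then the chosen representative $v$ satisfies $u \vr v$, so $d(u)$ does not separate $u$ and $v$, and by (ii) the nearest common ancestor of $u$ and $v$ in $H$ lies in $c(d(u))$, so $u$ is kept. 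The reverse pass is symmetric. For the ``only if'' direction I would argue contrapositively. If $x$ and $y$ are not vertex-resilient yet share an initial $c(u_0,v_0)$, then in $D$ they are siblings or one is the parent of the other, and likewise in $D^R$; let $u$ be a separating vertex for them. By Lemma~\ref{vertices-separators-ancestors}, $u$ lies on one of $D[s,x]$, $D[s,y]$, $D^R[s,x]$, $D^R[s,y]$; assume it lies on a $D$-path (the $D^R$ case is handled by the reverse pass). Using Lemma~\ref{SAP-relation}, up to replacing $u$ by a separating ancestor, I would reduce to the case where $u$ is the common $D$-ancestor $w$ of $x$ and $y$ (when they are $D$-siblings) or $u = d(x)$ (when $x = d(y)$). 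In the first case $w \notin \{x,y\}$ and the $\mathit{refine}$ at $w$ places $x$ and $y$ in different parts of $\mathcal{S}$ by (i); in the second case the removal step while processing $x$ deletes $x$ from the current block by (ii)--(iii). A routine but lengthy enumeration of the remaining sub-cases, including $u = s$ via Theorem~\ref{theorem:vertices-scc}(d) and Corollary~\ref{cor:start-vertex}, finishes the proof.

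The step I expect to be the main obstacle is exactly the child--ancestor property together with consequences (ii)--(iii): pinning down why the single nearest-common-ancestor-in-$H$ test at $d(u)$ captures precisely the separating vertices not already eliminated by earlier $\mathit{refine}$ steps, and why its answer is invariant over an entire block. Once these are in hand, the rest is a faithful transcription of the $2$-edge-connected argument of Theorem~\ref{theo:correct} into the vertex setting, using Theorem~\ref{theorem:vertices-scc} and the $\mathit{refine}$ machinery of \cite{2VCB}.
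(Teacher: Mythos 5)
Your plan is correct and follows essentially the same route as the paper: two directions, mirroring Theorem~\ref{theo:correct} with Theorem~\ref{theorem:vertices-scc}, Lemma~\ref{vertices-separators-ancestors}, and Lemma~\ref{SAP-relation} substituted for their $2$-edge analogues, and with the \emph{refine} machinery of \cite{2VCB} supplying the bookkeeping. The one place you go beyond the paper's own exposition is in isolating the ``child--ancestor property'' and its consequences (ii)--(iii) up front: the paper's correctness proof uses exactly this fact when it asserts ``$x,y\in H(z)$ implies $z\in c(w)$'' and when it treats the nca test with an arbitrary representative $v\in B$ as interchangeable with the nca of $x$ and $y$, but it only records the underlying structural lemma afterwards, as Lemma~\ref{lemma:siblings-loop-nesting-forest} and Corollary~\ref{corollary:siblings-loop-nesting-forest}, for use in the running-time analysis (Theorem~\ref{theorem:HDVRB}). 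Your sketch of the child--ancestor property should also note, for completeness, that the sub-case where the $H$-ancestor $a$ lies in $\widetilde{D}(q)$ is vacuous (it would make $q$ and $a$ mutual $T$-ancestors), and that the invariance over a block in (iii) follows because each part $S=H(z)\cap c(u)$ has $u\notin H(z)$, so the nca of $u$ with any $v\in H(z)$ equals the nca of $u$ with $z$; with these details filled in, the proposal is a complete proof.
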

\begin{proof}
First we show that if two vertices $x$ and $y$ are vertex-resilient, then they are included in the same component in the resulting component forest $F$.
This is true after the initialization in Lines 6--7, as indicated by Corollary \ref{corollary:vertex-resilient-necessary}.
Now consider the two possible cases for the location of $x$ and $y$ in $D$: (a) $x$ and $y$ are siblings in $D$, or (b) $x=d(y)$.
We argue that $x$ and $y$ are not separated by the refine operation in Line 12.
Let $w=d(x)$. Since $x \vr y$, $x$ and $y$ are not separated by any vertex $u \in V \setminus \{x,y\}$. In particular,
$x$ and $y$ are strongly connected in $G \setminus u$ for any vertex $u$ in $D[s,w]$.
Then, Lemma~\ref{lemma:vertices-subtree} implies that there is a vertex $z$ such that $z \in \widetilde{D}(w)$ and $h(z) \notin \widetilde{D}(w)$, for which $x,y \in H(z)$.
Thus, in case (a),  the intersection operation in Line 11 will include $x$ and $y$ in the same set $S \in \mathcal{S}$, so $x$ and $y$ will remain in the same component after the refine operation in Line 12.
Now consider case (b). Let $S \in \mathcal{S}$ be the subset of children of $x$ that contains $y$, computed in Line 11.
After the refine operation in Line 12, a component $B=S \cup \{x\}$ will be created. The fact that $x, y \in H(z)$ implies $z \in c(w)$.
Therefore, the nearest common ancestor of $x$ and $y$ in $H$ is also a child of $w$, so $u$ is not removed from $B$ in Line 18.
By the symmetric arguments we have that $x$ and $y$ are also not separated into different components during the pass in the reverse direction.

We prove now that if $x$ and $y$ are not vertex-resilient then they will be located in different components at the end of the algorithm.
From Lemma \ref{vertices-separators-ancestors}, we have that $x$ and $y$ have a separating vertex $u$
that appears in at least one of the paths $D[s,x]$, $D[s,y]$, $D^R[s,x]$, and $D^R[s,y]$.
If $x$ and $y$ are neither siblings nor one is the parent of the other in $D$ or $D^R$, then by Corollary~\ref{corollary:vertex-resilient-necessary} they are not vertex-resilient. So $x$ and $y$ are
correctly placed in different components during the initialization in Lines 6--7.
Thus, we assume that $x$ and $y$
are either siblings of one is the parent of the other in both $D$ and $D^R$.
Consider first the case where $u$ is in $D[s,x] \cup D[s,y]$. Suppose, without loss of generality, that $y \not= d(x)$. Let $w=d(x)$.
Then $w$ is an ancestor of both $x$ and $y$ in $D$, so $u$ is in the path $D[s,w]$.
Then, Lemma~\ref{SAP-relation} implies that $w$ is a separating vertex for $x$ and $y$, so $x$ and $y$ lie in different strongly connected components in $G \setminus w$.
So, by Lemma~\ref{lemma:vertices-subtree}, $x$ and $y$ cannot be in the same subtree $H(z)$, where $z$ is a proper descendant of $w$ in $D$ such that $h(z) \notin \widetilde{D}(w)$.
Suppose $x$ and $y$ are siblings.
In this case, $x$ and $y$ will be in separate sets in the collection $\mathcal{S}$ computed in Line 11, so they will be placed in different components after the execution of the refine operation in Line 12.
Now suppose $x=d(y)$. Then, the nearest common ancestor of $y$ and $x$ in $H$ is not a proper descendant of $w$, hence $x$ will be removed from $B$ in Line 18.

Thus, we conclude that if $u$ is in $D[s,x] \cup D[s,y]$ then $x$ and $y$ are not contained in the same component. The case where $u$ is in $D^R[s,x] \cup D^R[s,y]$ is symmetric, so the theorem follows.
\end{proof}

Next we show that Algorithm \textsf{HDVRC} can be implemented in linear time. To do so, we will use the following lemma that will help us compute efficiently the collections $\mathcal{S}$ in Lines 11 and 27.

\begin{lemma}
Let $D$ and $H$ be the dominator tree and the loop nesting forest, respectively, of a flow graph $G_s$.
Suppose $x\neq s$ is a vertex such that $h(x)$ is not a sibling of $x$ in $D$.
Then no vertex in the path $H[s,h(x)]$ is a sibling of $x$ in $D$.
\label{lemma:siblings-loop-nesting-forest}
\end{lemma}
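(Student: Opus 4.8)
The plan is to isolate and prove the following claim, and then apply its contrapositive: \emph{if $d(x)$ is a proper ancestor of $h(x)$ in the dfs tree $T$, then $d(h(x))=d(x)$, so $h(x)$ is a sibling of $x$ in $D$.} The only property of $\mathit{loop}(h(x))=H(h(x))$ that I will need is that it induces a strongly connected subgraph of $G$, that it contains both $h(x)$ and $x$, and that all of its vertices are descendants of $h(x)$ in $T$; hence $G$ has a path from $h(x)$ to $x$ and a path from $x$ to $h(x)$ that stay inside the set of descendants of $h(x)$ in $T$.

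To prove the claim, I would first show that $d(x)$ dominates $h(x)$: concatenating an arbitrary $s\to h(x)$ path with the $h(x)\to x$ path above gives an $s\to x$ walk, which must contain $d(x)$; since $d(x)$ is a proper $T$-ancestor of $h(x)$ it is not a descendant of $h(x)$ in $T$, so it is disjoint from the second part of the walk and therefore lies on the $s\to h(x)$ part — and this holds for every $s\to h(x)$ path. Next set $q:=d(h(x))$. As $d(x)$ is a proper dominator of $h(x)$ (note $d(x)\neq h(x)$, being a proper $T$-ancestor of it), $d(x)$ is an ancestor of $q$ in $D$. If $q\neq d(x)$, then $q$ is a proper $T$-ancestor of $h(x)$, hence again not a descendant of $h(x)$ in $T$; if $q$ did not dominate $x$, an $s\to x$ path avoiding $q$ followed by the $x\to h(x)$ path inside $\mathit{loop}(h(x))$ would be an $s\to h(x)$ walk avoiding $q$, contradicting that $q$ dominates $h(x)$. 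So $q$ dominates $x$; since $q\neq x$, $q$ is a proper dominator of $x$ and thus an ancestor of $d(x)$ in $D$. Being ancestors of each other in the tree $D$, $q$ and $d(x)$ coincide, contradicting $q\neq d(x)$. Hence $q=d(x)=d(h(x))$, and since $h(x)\neq x$ this means $h(x)$ is a sibling of $x$ in $D$, proving the claim. I expect this second half — upgrading ``$d(x)$ dominates $h(x)$'' to ``$d(x)=d(h(x))$'', i.e.\ pinning down the \emph{immediate} dominator — to be the technically delicate step, as it is where the interplay between the dominator tree and the dfs tree is used most carefully.

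To finish, assume $h(x)$ is not a sibling of $x$ in $D$. By the claim, $d(x)$ is not a proper ancestor of $h(x)$ in $T$; since $d(x)$ and $h(x)$ are both ancestors of $x$ in $T$ they are $T$-comparable, so $h(x)$ is a (weak) ancestor of $d(x)$ in $T$. Now let $y$ be any vertex on the path $H[s,h(x)]$. If $y=h(x)$, it is not a sibling of $x$ by hypothesis. Otherwise $y$ is a proper ancestor of $h(x)$ in $H$, hence a proper ancestor of $h(x)$ in $T$ (each parent step in $H$ moves to a proper $T$-ancestor), hence a proper ancestor of $d(x)$ in $T$; then $d(x)$ is a proper $T$-descendant of $y$, so $d(x)$ does not dominate $y$, whence $d(y)\neq d(x)$ and $y$ is not a sibling of $x$ in $D$. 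Thus no vertex of $H[s,h(x)]$ is a sibling of $x$ in $D$, as required.
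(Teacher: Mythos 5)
Your proof is correct, and it takes a genuinely different route from the paper's. The paper first shows (assuming $h(x)$ is not a sibling of $x$) that $d(x)$ is not a proper dominator of $h(x)$, invoking Lemma~\ref{lemma:paths-through-SAP} in the case where $h(x)$ lies strictly below $d(x)$ in $T$; it then extends from $h(x)$ to a proper $H$-ancestor $w$ of $h(x)$ by concatenating an $s\to h(x)$ path avoiding $d(x)$ with an $h(x)\to w$ path inside $\mathit{loop}(w)$, concluding that $d(x)$ cannot dominate $w$. You instead prove a sharper intermediate claim: if $d(x)$ is a proper $T$-ancestor of $h(x)$, then in fact $d(h(x))=d(x)$. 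Rather than merely ruling out dominance, you pin down the immediate dominator, and you derive this directly from the definition of dominance and the strong connectivity of $\mathit{loop}(h(x))$, without relying on Lemma~\ref{lemma:paths-through-SAP}. The payoff is a cleaner finish: the hypothesis forces $h(x)$ to be a weak $T$-ancestor of $d(x)$, so every proper $H$-ancestor $y$ of $h(x)$ is a proper $T$-ancestor of $d(x)$; since dominators are $T$-ancestors, $d(x)$ cannot dominate any such $y$, and no second path construction is needed. The price is that your intermediate claim — upgrading ``$d(x)$ dominates $h(x)$'' to ``$d(x)=d(h(x))$'' — requires the extra two-step dominance argument you correctly flagged as the delicate part; the paper avoids this by settling for the weaker non-dominance fact and paying for it later with another path-concatenation. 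Both proofs are about the same total length; yours localizes all the path work in one place and makes the second half purely order-theoretic, which is arguably tidier.
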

\begin{proof}
We show first that $d(x)$ is not a proper dominator of $h(x)$ in $G_s$.
Let $T$ be the dfs tree that generated $H$.
By the definition of the loop nesting forest, we have that $h(x)$ is a proper ancestor of $x$
in $T$. Also, we have that $d(x)$ is a proper ancestor of $x$ in $T$~\cite{domin:lt}.
If $h(x)$ is an ancestor of $d(x)$ in $T$, then clearly $d(x)$ is not a proper dominator of $h(x)$.
Now assume that $h(x)$ is in $T(d(x),x]$. By the definition of $h(x)$, there is a path $\pi$
in $G_s$ from $x$ to $h(x)$ such that, for all vertices $x'$ in $\pi$, $x' \in H(h(x))$.
Suppose, for contradiction, that
$h(x)$ is dominated by $d(x)$. Let $y$ be the child of $d(x)$ that is an ancestor of $h(x)$ in $D$.
Since $h(x)$ is not a sibling of $x$ in $D$, $y \not= h(x)$. Also, $y \not= x$ by the fact that
$h(x)$ is a proper ancestor of $x$ in $T$.
Then, Lemma \ref{lemma:paths-through-SAP} implies that $\pi$ contains $y$, so $y \in H(h(x))$.
But since $y$ is a proper dominator of $h(x)$, it is a proper ancestor of $h(x)$ in $T$, a contradiction.

Let $w$ be any vertex in the path $H[s,h(x))$. We show that $d(x)$ is not a proper dominator of $w$ in $G_s$, which implies the lemma.
This trivially holds if $d(x)$ is a descendant of $w$ in $T$. So suppose that $d(x)$ is not a descendant of $w$ in $T$.
Then $h(x) \not= d(x)$ because $h(x)$ is a descendant of $w$ and therefore a proper descendant of $d(x)$ in $T$.
Since $h(x) \in H(w)$, by the definition of the loop nesting forest we have that there is a path $\pi'$
from $h(x)$ to $w$ that contains only descendants of $w$ in $T$.  Then $d(x) \not\in \pi'$.
Since $d(x) \not= h(x)$, from the paragraph above we have that $d(x)$ is not a dominator of $h(x)$.
Then, there is a path $\pi''$ in $G_s$ from $s$ to $h(x)$ that avoids $d(x)$.
Hence $\pi'' \cdot \pi'$ is a path from $s$ to $w$ in $G_s$ that avoids $d(x)$, so $d(x)$ is not a dominator of $w$.
This means that $w$ and $x$ are not siblings in $D$.
\end{proof}

\begin{corollary}
Let $D$ and $H$ be the dominator tree and the loop nesting forest, respectively, of a flow graph $G_s$.
Suppose $x$ is a vertex in $c(u)$ such that $h(x) \not\in \widetilde{D}(u)$. Then the subgraph of $H$ induced by the vertices in
$H(x) \cap c(u)$ is a tree rooted at $x$.
\label{corollary:siblings-loop-nesting-forest}
\end{corollary}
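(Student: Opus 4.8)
The plan is to reduce the whole statement to a single structural fact: the vertex set $S = H(x)\cap c(u)$ is \emph{ancestor-closed inside $H(x)$}, i.e.\ for every $y\in S$ with $y\neq x$ the $H$-parent $h(y)$ again lies in $S$. Granting this, for every $y\in S$ the entire path $H[x,y]$ stays in $S$ (walk up from $y$ to $x$ one parent at a time), so the subgraph of $H$ induced by $S$ is connected; being a subgraph of the forest $H$ it is acyclic, hence a tree, and it is rooted at $x$ because $x\in S$ and, by definition of $H(x)$, every element of $S$ is a descendant of $x$ in $H$. Thus the corollary follows once ancestor-closedness is established.

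First I would dispose of the easy half. Fix $y\in S$ with $y\neq x$; then $y$ is a proper descendant of $x$ in $H$, so $x$ lies on the path from the root of $H$ to $y$, and therefore its predecessor $h(y)$ on that path is a (possibly non-proper) descendant of $x$, i.e.\ $h(y)\in H(x)$. It remains to prove $h(y)\in c(u)$, that is, that $h(y)$ is a child of $u$ in $D$.

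The substantive step is to invoke Lemma~\ref{lemma:siblings-loop-nesting-forest} with $y$ playing the role of its vertex. The prerequisites are routine to check: $y\neq s$ since $y\in c(u)$ has parent $u$ in $D$; and $x,y$ are siblings in $D$ because $d(x)=u=d(y)$ and $x\neq y$ (this is also consistent with the corollary's standing assumption, since $h(x)\notin\widetilde{D}(u)$ in particular means $h(x)$ is not a child of $u$, hence not a sibling of $x$). Suppose, for contradiction, that $h(y)$ is \emph{not} a child of $u$, i.e.\ $h(y)$ is not a sibling of $y$ in $D$. Then Lemma~\ref{lemma:siblings-loop-nesting-forest} applied to $y$ yields that no vertex on the path $H[s,h(y)]$ is a sibling of $y$ in $D$. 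But $x$ lies on $H[s,h(y)]$: indeed $h(y)\in H(x)$, so either $x=h(y)$ or $x$ is a proper ancestor of $h(y)$ in $H$, and in both cases $x$ is on $H[s,h(y)]$. Since $x$ is a sibling of $y$ in $D$, this is a contradiction. Hence $h(y)\in c(u)$, which completes the argument.

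I expect no genuine obstacle: all the graph-theoretic content is carried by Lemma~\ref{lemma:siblings-loop-nesting-forest}, and the only care needed is the bookkeeping around applying it — verifying $y\neq s$, that $x$ and $y$ are genuinely siblings in $D$, and that $x$ really lies on $H[s,h(y)]$, including the degenerate case $x=h(y)$ (in which $h(y)\in c(u)$ holds immediately because $x\in c(u)$).
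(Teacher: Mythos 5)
Your proof is correct and takes exactly the route the paper has in mind: the paper's own proof of this corollary is the single word ``Immediate'' pointing to Lemma~\ref{lemma:siblings-loop-nesting-forest}, and your argument simply unpacks that. You correctly reduce the claim to ancestor-closedness of $H(x)\cap c(u)$ inside $H(x)$, verify the lemma's hypotheses for $y$ (namely $y\neq s$, and $h(y)$ not a sibling of $y$ under the contradiction assumption), and note that $x$, being an $H$-ancestor of $h(y)$ and a $D$-sibling of $y$, contradicts the lemma's conclusion. The bookkeeping (including the degenerate case $h(y)=x$) is handled cleanly.
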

\begin{proof}
Immediate from Lemma \ref{lemma:siblings-loop-nesting-forest}.
\end{proof}

\begin{theorem}
Algorithm \textsf{HDVRC} runs in $O(m+n)$ time, or in $O(n)$ time if the dominator trees $D$ and $D^R$, and the loop nesting trees $H$ and $H^R$ are given as input.
\label{theorem:HDVRB}
\end{theorem}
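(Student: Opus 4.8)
The plan is to split the running time into the standard preprocessing and the two passes of Algorithm \textsf{HDVRB}. First I would note that computing $G^R$, choosing $s$, and computing the dominator trees $D,D^R$, the loop nesting trees $H,H^R$, and the bridges of $G_s,G_s^R$ can all be done in $O(m+n)$ time by the linear-time algorithms cited earlier (\cite{domin:ahlt,dominators:bgkrtw,st:t,Italiano2012}); hence it suffices to show that everything after this point runs in $O(n)$ time, which simultaneously yields the ``$O(n)$ if the trees are given'' claim. I would then recall that the sets $c(u,v)$, over all pairs $u,v$, are computed in total $O(n)$ time \cite{LuigiGILP15}, that the initial block forest $F$ built from them has $O(n)$ vertices and edges, and that, as already observed before the algorithm, the $\mathit{refine}$ operation preserves the invariant that $F$ is a forest, so by Lemma~\ref{lemma:blocks-number} the forest $F$ keeps $O(n)$ total size (and its bipartite adjacency lists can be updated by $\mathit{refine}$ in time proportional to the number of elements processed). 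Finally, the loop nesting trees $H$ and $H^R$ are preprocessed in $O(n)$ time so that the nearest-common-ancestor queries in lines 16 and 32 take $O(1)$ time each \cite{nca:ht}.

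Next I would bound the forward pass; the reverse pass is symmetric. I would split its cost into the following pieces and bound each by $O(n)$: (i) the NCA queries (lines 14--21): $O(1)$ per query and at most one query per (vertex,block) incidence of $F$, hence $O(n)$ overall; (ii) computing, for every non-leaf $u$ of $D$, the collection $\mathcal{S}=\{H(v)\cap c(u): v\in c(u),\ h(v)\notin\widetilde{D}(u)\}$ of line 11: here I would invoke Corollary~\ref{corollary:siblings-loop-nesting-forest}, which shows that these sets are precisely the subtrees of $H$ rooted at the children $v$ of $u$ with $h(v)\notin\widetilde{D}(u)$, and therefore form a partition of $c(u)$ that can be read off by a single scan of $c(u)$ together with $O(1)$-time ancestor tests in $D$ \cite{domin:tarjan}; since $\sum_u|c(u)|=n-1$, this is $O(n)$ in total; (iii) the $\mathit{refine}(\mathcal{B},\mathcal{S},u)$ calls of line 12: by Lemma~\ref{lemma:refine} the call at $u$ costs $O(M_u+K_u)$, where $K_u=\sum_{S\in\mathcal{S}}|S|=|c(u)|$, so the $K_u$ part sums to $O(n)$; (iv) locating the relevant block families $\mathcal{B}$ in line 9 and removing $u$ from blocks in line 18, whose cost is dominated by $M_u$ plus the block-forest updates, again charged against $F$.

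The main obstacle is bounding $\sum_u M_u$, the total size of the block families $\mathcal{B}$ touched across all vertices of the pass, since the same block can be examined at more than one vertex $u$. I would dispose of this exactly as in \cite{2VCB}: if a block $B\in\mathcal{B}$ survives the $\mathit{refine}$ at $u$ without change, charge $|B|$ to the sets of $\mathcal{S}$ at $u$ (already accounted for in (ii)--(iii)); otherwise the $\mathit{refine}$ (or the removal of $u$ in line 18, or the deletion of a singleton block) strictly decreases either some $|B|$ or the number of blocks of $F$, and since $F$ only shrinks and starts with $O(n)$ total size, the total charge of this second type is also $O(n)$. Putting the two passes together with the $O(m+n)$ preprocessing and the final read-off of the vertex-resilient blocks from $F$ (linear in the size of $F$) gives the claimed $O(m+n)$ bound, and $O(n)$ when the dominator and loop nesting trees are supplied.
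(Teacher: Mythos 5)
The outline of your preprocessing and of items (i)--(iii) follows the paper's structure, but the crucial step --- bounding $\sum_u M_u$, the aggregate size of the block families $\mathcal{B}$ gathered in lines~10 and~26 --- is handled differently, and your version has a real gap. You amortize by charging $|B|$ to ``a strict decrease of either some $|B|$ or the number of blocks of $F$,'' and close the charge with ``$F$ only shrinks and starts with $O(n)$ total size.'' That last sentence is false: $\mathit{refine}(\mathcal{B},\mathcal{S},u)$ places the vertex $u$ into \emph{every} nonsingleton piece of every $B\in\mathcal{B}$ that contains $u$, so the number of vertex--block incidences in $F$ (and the degree of $u$) can go \emph{up} at each step. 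Once the potential can increase, the charging no longer closes unless you separately bound the total amount by which $F$ can grow, which you never do. (One can make an amortization work --- each $\mathit{refine}$ at $u$ creates at most $|\mathcal{S}|\le|c(u)|$ new incidences, all on $u$, so the total ever created is $O(n)$ --- but that is a different and more delicate argument than the one you wrote.)

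The paper sidesteps amortization entirely. Because the initial blocks are $c(u,v)$ sets and $\mathit{refine}$ only shrinks blocks, the condition of Lemma~\ref{lemma:vertex-resilient-necessary} is preserved: at every moment of the forward pass each block is a subset of some $c(w)\cup\{w\}$. Consequently any block $B$ with two or more elements in $c(u)\cup\{u\}$ must satisfy $B\subseteq c(u)\cup\{u\}$, so $M_u\le |c(u)|+1$ and $\sum_u M_u=O(\sum_u|c(u)|)=O(n)$ by a purely local bound. The paper also notes that each vertex's adjacency list in $F$ is scanned at most twice per pass, once when $u=d(v)$ and once when $u=v$, which controls the cost of even the blocks that are touched but rejected. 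Both of these observations are absent from your argument and are what make the $O(n)$ bound go through cleanly. A smaller issue: your description of line~11 as ``a single scan of $c(u)$ together with $O(1)$ ancestor tests'' does not say how a vertex of $c(u)$ is assigned to its set $H(v)\cap c(u)$; the paper instead performs, once overall, a preorder traversal of $H$ that follows only those tree edges whose endpoints are siblings in $D$, which is what actually realizes Corollary~\ref{corollary:siblings-loop-nesting-forest} in $O(n)$ total time.
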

\begin{proof}
The initialization, including the computation of the $G^R$, takes $O(m+n)$ time.
The computation of the
dominator and loop nesting trees also takes $O(m+n)$ time by~\cite{dominators:bgkrtw}.
Given the dominator trees $D$ and $D^R$, we can compute the $c(u,v)$ sets in $O(n)$ time~\cite{LuigiGILP15}, and
at the same time initialize the component forest $F$.

We consider now the time required to locate the appropriate set of components in Lines 10 and 26.
In Line 10 (resp., Line 26), this is done by iterating over the adjacency list of each vertex $v\in c(u)$ (resp., $v\in c^R(u)$) in the component forest $F$, and marking the component nodes that are visited at least twice.
Now we bound the total number of vertices and nodes that we traverse in the component forest during this process.
During the \textbf{foreach} loop in the ``forward'' (resp., ``reverse'') direction, we traverse every adjacency list of a vertex $v$ at most twice; the first time when $u=d(v)$ (resp., $u=d^R(v)$) and the second time when $u=v$.
Also, for a given $u$, each component in the set $\mathcal{B}$ computed, in Line 10 (resp., Line 26), contains vertices in $c(u) \cup \{u\}$ (resp., $c^R(u) \cup \{u\}$).
So the total number of vertices in the components of $\mathcal{B}$ in Line 10 (resp., Line 26) is at most $|c(u)|+1$ (resp., $|c^R(u)|+1$).
Since, at any given time, $F$ contains at most $n-1$ components, it follows that each \textbf{foreach} loop (both ``forward'' and ``reverse'') runs in $O(n)$ time.

Now we need to show how to perform the operations in Lines 11 and 27 in $O(n)$ total time.
Corollary~\ref{corollary:siblings-loop-nesting-forest} suggests an efficient way to compute
$H(x) \cap c(u)$ for all children $x$ of $u$ such that $h(x) \not\in \widetilde{D}(u)$.
While we are processing $u$, we find its children $x$ that satisfy the condition $h(x) \not \in \widetilde{D}(u)$.
For each such child $x$, we do a preorder traversal $H(x)$ as follows. When we are at a vertex $x$ we visit only the children of $x$ in $H$ that
are siblings of $x$ in $D$. By Corollary~\ref{corollary:siblings-loop-nesting-forest}, a vertex is in $H(x) \cap c(u)$ if and only if it is visited during this traversal of $H(x)$.
Notice also that, in all these traversals, we visit every vertex and every edge of $H$ only once.
Thus, this procedure takes $O(n)$ time in total.
We apply the symmetric process for the computation of the sets $H^R(x) \cap c^R(u)$ in Line 27.

Now we account for the contribution of the $\mathrm{refine}$ operations.
As we already mentioned, for a given $u$, the total number of vertices in the components computed in Line 10
(resp., Line 26) is at most $|c(u)|+1$ (resp., $|c^R(u)|+1$).
Also, since in Line 11 (resp., 27) set $\mathcal{S}$ is a partition of $c(u)$ (resp., $c^R(u)$), the total number of the vertices that appear in this partition (which is set $U$ in Lemma~\ref{lemma:refine}) is $|c(u)|$ (resp., $|c^R(u)|$). Thus, by Lemma~\ref{lemma:refine}, each refine operation in Lines 11 and 27 takes time $O(|c(u)|)$ and $O(|c^R(u)|)$, respectively, which is $O(n)$ in total.

Finally, since the component forest contains at most $n-1$ components at any given time, the loops in Lines 14--21 and 30--37 execute at most $2(n-1)$ nearest common ancestor (nca) calculations. So we can compute all the required nca's in $O(n)$ time by \cite{nca:ht}. In fact, in both the ``forward'' and the ``reverse'' direction loop, we can precompute the nca's for $n-1$ pairs of vertices, where each pair consists of a vertex $v \not= s$ and its parent $u=d(v)$ in $D$. This way we can use an \emph{offline} nca algorithm~\cite{dominators:bgkrtw}.
The bound follows.
\end{proof}

We remark that we can extend Algorithm \textsf{HDVRC} in the same way as for \textsf{HD2ECC} (see Section \ref{sec:2ECBs}),
so that we obtain a sparse certificate for the $2$-vertex-connected components, i.e., a subgraph of the input digraph $G$ that has $O(n)$ edges and has the same $2$-vertex-connected components as $G$.
See Section \ref{sec:sparse-certificate}.

\section{Pairwise $2$-vertex connectivity queries}
\label{sec:vertices-other}

Let $G=(V,E)$ be a strongly connected graph, let $x$ and $y$ be any two vertices of $G$, and let $u$ be a strong articulation point of $G$. Recall that we say that $u$ is a \emph{separating vertex for $x$ and $y$} (or equivalently that $u$ \emph{separates $x$ and $y$}) if $x$ and $y$ are not strongly connected in $G\setminus u$.
In this section, we extend our approach in order to answer in asymptotically optimal time the following types of queries:
\begin{itemize}
\item[(a)] Test if two query vertices $x$ and $y$ are $2$-vertex-connected; if not, report a separating vertex or a separating edge for $x$ and $y$.
\item[(b)] Test whether a given vertex $u$ separates two query vertices $x$ and $y$.
\item[(c)] Report all the separating vertices for a given pair of vertices $x$ and $y$.
\end{itemize}

We note that the data structure of \cite{2VCB} only supports queries of type (a), which it answers in constant time by applying Lemma \ref{cor:2vc-resilient}.
We can use the same approach here and test if $x \leftrightarrow_{\mathrm{2e}} y$ and $x \leftrightarrow_{\mathrm{vr}} y$ separately.
The former test can be done by the algorithm of Section \ref{sec:other}, which also reports a separating edge in case $x$ and $y$ are not $2$-edge-connected.
It remains to test if  $x \leftrightarrow_{\mathrm{vr}} y$, and provide a separating vertex for $x$ and $y$ if the outcome of this test is negative.
As in \cite{2VCB} we can use the component forest $F$, computed in $O(m)$ time by Algorithm \textsf{HDVRC} of Section \ref{sec:2VCBs}, in order to test in constant time if two query vertices $x$ and $y$ are
vertex-resilient. To do this, we view $F$ as a forest of rooted trees by choosing an arbitrary vertex as the root of each tree.
Then $x \leftrightarrow_{\mathrm{vr}} y$ if and only if $x$ and $y$ are siblings or one is the grandparent of the other.
So we can perform this test in constant time simply by storing the parent of each vertex in $F$.
If $x$ and $y$ are not vertex-resilient then we can provide in constant time a separating vertex for $x$ and $y$ by adapting the query of type (c) so that it terminates as soon as it finds the first separating vertex for $x$ and $y$.

Next we deal with queries of type (b) and (c).

\begin{lemma}
	\label{lemma:vertices-separating_two_vertices}
Let $x$ and $y$ be two vertices, and let $w$ and $w^R$ be their nearest common ancestors in $H$ and $H^R$, respectively.
Then, a strong articulation point $u$ is a separating vertex for $x$ and $y$ if and only if one of the following holds:
	\begin{itemize}
		\item[(1)] $u$ is an ancestor of $x$ or $y$ in $D$ and $w$ is not a proper descendant of $u$ in $D$.
		\item[(2)] $u$ is an ancestor of $x$ or $y$ in $D^R$ and $w^R$ is not a  proper descendant of $u$ in $D^R$.
	\end{itemize}
\end{lemma}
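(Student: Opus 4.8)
The plan is to prove Lemma~\ref{lemma:vertices-separating_two_vertices} as the vertex-analogue of Lemma~\ref{lemma:separating_two_vertices}, following the same two-direction structure and invoking the vertex-connectivity counterparts of the edge-connectivity lemmata used there. As with the edge case, it suffices to prove statement~(1), since (2) is obtained by applying the same argument to the reverse graph $G^R$ (the roles of $D$, $H$ are played by $D^R$, $H^R$, and ``path from $x$ to $y$'' swaps with ``path from $y$ to $x$''). By Lemma~\ref{vertices-separators-ancestors}, any strong articulation point $u$ that separates $x$ and $y$ must lie on one of $D[s,x]$, $D[s,y]$, $D^R[s,x]$, $D^R[s,y]$; so for the purpose of (1) we may restrict attention to the case where $u$ lies on $D[s,x]$ or $D[s,y]$, i.e., $u$ is an ancestor of $x$ or of $y$ in $D$.

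First I would prove the ``if'' direction is really the contrapositive: assuming $u$ is an ancestor of $x$ or $y$ in $D$ and $w$ \emph{is} a proper descendant of $u$ in $D$, I show $u$ does not separate $x$ and $y$. Since $u$ is a strong articulation point that is an ancestor of $x$ or $y$ in $D$, $u$ is a nontrivial dominator of $G_s$ (it has a proper descendant in $D$), so we are in case (a) or (c) of the vertex analysis and $\widetilde{D}(u)\neq\emptyset$. Because $w$ is a proper descendant of $u$ in $D$, we have $w\in\widetilde{D}(u)$. Let $z$ be the ancestor of $w$ in $H$ with $z\in\widetilde{D}(u)$ and $h(z)\notin\widetilde{D}(u)$ (such $z$ exists since $w\in\widetilde{D}(u)$ and $\widetilde{D}(u)$ is a union of such subtrees, by Lemma~\ref{lemma:vertices-subtree1}). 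By Lemma~\ref{lemma:vertices-subtree}, $H(z)$ induces a strongly connected component of $G\setminus u$. Since $w\in H(z)$ and $x,y$ are descendants of their nearest common ancestor $w$ in $H$, we get $x,y\in H(z)$, hence $x$ and $y$ are strongly connected in $G\setminus u$ — so $u$ is not a separating vertex for $x$ and $y$.

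Next I would prove the ``only if'' direction: if $u$ is a separating vertex for $x$ and $y$ that is an ancestor of $x$ or $y$ in $D$, then $w$ is not a proper descendant of $u$ in $D$. Assume for contradiction that $w\in\widetilde{D}(u)$. Let $C$ be the strongly connected component of $G\setminus u$ containing both... wait — $u$ separates $x$ and $y$, so they lie in \emph{different} components of $G\setminus u$; instead let $C_x$ be the component containing $x$. Since $u$ is an ancestor of $x$ in $D$ and $u$ is a nontrivial dominator of $G_s$, by Theorem~\ref{theorem:vertices-scc}(a)/(c) either $C_x\subseteq\widetilde{D}(u)$ or $C_x$ lies ``above'' $u$; but $x\in\widetilde{D}(u)$ forces $C_x\subseteq\widetilde{D}(u)$, and by the ``moreover'' clause $C_x=H(z)$ for the vertex $z\in\widetilde{D}(u)$ with $h(z)\notin\widetilde{D}(u)$. (If instead $u$ is an ancestor of $y$ but not of $x$, swap the roles of $x$ and $y$; if $u$ is an ancestor of both, the argument applies to whichever of $x,y$ lies in a subtree, and since $w\in\widetilde{D}(u)$ both $x$ and $y$ are descendants of $w$, hence in $\widetilde{D}(u)$, so both fall under the same analysis.) Now $w\in\widetilde{D}(u)$ and $x,y\in H(w)$: by Lemma~\ref{lemma:vertices-subtree1} the subtree $H(w)$ is contained in a single component of $G\setminus u$, so $x$ and $y$ are strongly connected in $G\setminus u$, contradicting the assumption that $u$ separates them. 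Hence $w\notin\widetilde{D}(u)$.

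The main obstacle I anticipate is bookkeeping the case split cleanly: $u$ may be an ancestor in $D$ of $x$ only, of $y$ only, or of both, and one must be careful that ``$w$ is a proper descendant of $u$'' already implies $x,y\in\widetilde{D}(u)$ (so ``$u$ ancestor of $x$ or $y$'' and ``$w$ properly below $u$'' together put everything inside $\widetilde{D}(u)$, where Lemma~\ref{lemma:vertices-subtree1}, Lemma~\ref{lemma:vertices-subtree2} and Lemma~\ref{lemma:vertices-subtree} apply verbatim). A secondary subtlety, exactly as in Lemma~\ref{lemma:separating_two_vertices}, is handling the degenerate situation $u=s$: then $u$ has no proper ancestor and $\widetilde{D}(s)\neq\emptyset$, case (d) of Theorem~\ref{theorem:vertices-scc} gives $C\subseteq\widetilde{D}(s)$ for every component $C$ of $G\setminus s$, and condition (1) reads ``$s$ is an ancestor of $x$ or $y$ in $D$ (always true) and $w$ is not a proper descendant of $s$ in $D$'', i.e. $w=s$; the same two arguments go through with $\widetilde{D}(s)$ in place of $\widetilde{D}(u)$, using Lemma~\ref{lemma:vertices-subtree1} and Lemma~\ref{lemma:vertices-subtree}. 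Everything else is a direct transcription of the edge-case proof with ``deleting edge $e=(d(r),r)$'' replaced by ``deleting vertex $u$'', $D(v)$ replaced by $\widetilde{D}(u)$, Lemma~\ref{lemma:subtree} replaced by Lemma~\ref{lemma:vertices-subtree}, and Theorem~\ref{cor:scc} replaced by Theorem~\ref{theorem:vertices-scc}.
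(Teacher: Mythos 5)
Your two paragraphs prove the \emph{same} implication twice. Paragraph~1 shows \emph{``}if $u$ is an ancestor of $x$ or $y$ in $D$ and $w$ \emph{is} a proper descendant of $u$, then $u$ does not separate $x$ and $y$''\emph{}; paragraph~2, by contradiction, shows \emph{``}if $u$ separates $x,y$ and is an ancestor of $x$ or $y$ in $D$, then $w$ is not a proper descendant of $u$''. These are contrapositives of one another, and together (with Lemma~\ref{vertices-separators-ancestors}) they give only one direction of the equivalence: \emph{separation implies that (1) or (2) holds}. The ``wait --- $u$ separates $x$ and $y$, so they lie in different components'' mid-sentence correction in paragraph~2 is symptomatic: you started writing a proof of the other direction but then discovered you had installed the wrong hypothesis.

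What is genuinely missing is the converse implication: \emph{if (1) holds then $u$ separates $x$ and $y$}, i.e.\ if $u$ is an ancestor of $x$ or $y$ in $D$ and $w$ is \emph{not} a proper descendant of $u$ in $D$, then $u$ is a separating vertex. The paper proves this by contradiction: suppose additionally that $u$ is \emph{not} separating. If $u$ is an ancestor in $D$ of only one of $x,y$ then $u$ trivially separates them by Lemma~\ref{lemma:paths-through-SAP}, so assume $u$ is an ancestor of both. Then $x,y$ lie in a common strongly connected component $C$ of $G\setminus u$; since $x\in C\cap\widetilde{D}(u)$, Theorem~\ref{theorem:vertices-scc} forces $C\subseteq\widetilde{D}(u)$. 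By Lemma~\ref{lemma:vertices-subtree2} (together with Lemma~\ref{lemma:vertices-subtree1}) there is $z\in C$ with $C=H(z)$. Then $x,y\in H(z)$ makes $z$ a common ancestor of $x$ and $y$ in $H$, hence an ancestor of the \emph{nearest} common ancestor $w$, so $w\in H(z)=C\subseteq\widetilde{D}(u)$, contradicting the hypothesis that $w$ is not a proper descendant of $u$. Without this block, you have an implication but not the stated ``if and only if''; notice that this direction is precisely what justifies, in Section~\ref{sec:vertices-other}, that every ancestor of $x$ or $y$ in $D$ strictly below the stopping point is a separating vertex that must be reported.
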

\begin{proof}
We will only prove the first case, since the second is completely symmetric.
By Lemma~\ref{vertices-separators-ancestors}, a strong articulation point $u$ that separates $x$ and $y$ lies in at least one of the paths $D[s,x]$, $D[s,y]$, $D^R[s,x]$, or $D^R[s,y]$.
So, for the proof of (1), we assume that $u$ is in $D[s,x]$ or $D[s,y]$. Now we argue that $u$ separates $x$ and $y$ if and only if $w$ is not a proper descendant of $u$ in $D$. Let $z$ be a descendant of $u$ in $D$.
From Lemma~\ref{lemma:vertices-subtree}, we have that the vertices of $H(z)$ are strongly connected in $G\setminus u$. Hence, if $w$ is a descendant of $u$ in $D$, then $x$ and $y$ are strongly connected in $G\setminus u$, since $x, y \in H(w)$.

Now we prove the opposite direction.
Assume, for contradiction, that $u$ is an ancestor of $x$ and $y$ in $D$,
$w$ is not a proper descendant of $u$ but $u$ is not a separating vertex for $x$ and $y$. Let $C$ be the strongly connected component of $G \setminus u$ that contains $x$ and $y$.
From Theorem \ref{theorem:vertices-scc} and the assumption that
$u$ is an ancestor of $x$ and $y$ in $D$ but not a separating vertex for $x$ and $y$, we have that $C \subseteq \widetilde{D}(u)$.
By Lemma \ref{lemma:vertices-subtree2},  we have that there is a vertex $z \in C$ such that $C = H(z)$.
But then $x, y \in H(z)$, so $z$ is ancestor of $w$ in $H$ and a proper descendant of $u$ in $D$.
This contradicts the assumption that $w \not\in \widetilde{D}(u)$.
\end{proof}

Note that, by Lemma \ref{lemma:paths-through-SAP}, an ancestor $u$ of $x$ or $y$ in $D$ (resp., $D^R$) is not a separating vertex for $x$ and $y$ only if it is a common ancestor of $x$ and $y$ in $D$ (resp., $D^R$). So, the application of Lemma~\ref{lemma:vertices-separating_two_vertices} gives the following algorithm for computing, in an online fashion, all the separating vertices for a given pair of query vertices $x$ and $y$. To that end, we can preprocess $H$ and $H^R$ in $O(n)$ time so we can compute nearest common ancestors in constant time~\cite{nca:ht}.
To answer a reporting query for vertices $x$ and $y$, we compute their nearest common ancestor $w$ in $H$, and visit the ancestors of $x$ and $y$ in $D$ in a bottom-up order, as follows.
Starting from $x$ and $y$, we visit the ancestors of $x$ and $y$ in $D$ until we reach a vertex $u$ that is a proper ancestor of $w$ in $D$, or until we reach $s$ if no such $u$ exists.
(As we showed above, if there is an ancestor $u$ of $x$ or $y$ in $D$ that is not a separating vertex for $x$ and $y$ then $u$ is a common ancestor of $x$ and $y$ in $D$.)
If $u$ exists then the vertices in $D(u,d(x)] \cup D(u,d(y)]$ are separating vertices for $x$ and $y$.
(We let $D(w,z]$ be empty if $w=z$.)
Otherwise
the vertices in $D[s,d(x)] \cup D[s,d(y)]$ are separating vertices for $x$ and $y$.
We do the same for the reverse direction, i.e.,
we compute the nearest common ancestor $w^R$ of $x$ and $y$ in $H^R$
and visit the ancestors of $x$ and $y$ in a bottom-up order in $D^R$,
until we reach a vertex $u$ that is a proper ancestor of $w^R$ in $D^R$, or until we reach $s$ if no such $u$ exists.
We finally return the union of the separating vertices that we found in both directions.

With the same data structure we can test in constant time, for any pair of query vertices $x$ and $y$ and a query vertex $u$, if $u$ is a separating vertex for $x$ and $y$. By Lemma~\ref{lemma:vertices-separating_two_vertices}, all we have to do is to find their nearest common ancestors, $w$ in $H$ and $w^R$ in $H^R$, and then test
if $u$ is an ancestor of $x$ or $y$ and $w$ is not a proper descendant of $u$ in $D$, or if $u$ is an ancestor of $x$ or $y$ and $w^R$ not a proper descendant of $u$ in $D^R$.

\begin{theorem}
Let $G$ be a strongly connected digraph with $m$ edges and $n$ vertices. After $O(m)$-time preprocessing, we can build
 an $O(n)$-space data structure, that can:
\begin{itemize}
\item Test if two query vertices $x$ and $y$ are $2$-vertex-connected and if not report a corresponding separating vertex or a corresponding separating edge.
\item Report all vertices that separate two query vertices $x$ and $y$ in $O(k)$ time, where $k$ is the total number of separating vertices reported.
 For $k=0$, the time is $O(1)$.
\item Test in constant time if a query vertex is a separating vertex for a pair of query vertices.
\end{itemize}
\end{theorem}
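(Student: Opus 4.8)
The plan is to build the data structure by combining three ingredients that are already available from earlier sections with two classical auxiliary structures, and to route the three kinds of queries through Lemmata~\ref{cor:2vc-resilient}, \ref{vertices-separators-ancestors} and~\ref{lemma:vertices-separating_two_vertices}. The preprocessing would consist of: the reverse digraph $G^R$; the dominator trees $D$, $D^R$ and the loop nesting trees $H$, $H^R$ of $G_s$ and $G_s^R$; the block forest $F$ of the vertex-resilient blocks, produced by Algorithm~\textsf{HDVRB} (Theorem~\ref{theorem:HDVRB}); the $O(1)$-query data structure for $2$-edge connectivity of Section~\ref{sec:other}; a linear-time preprocessing of $D$ and $D^R$ for constant-time ancestor/descendant tests~\cite{domin:tarjan}; and a linear-time preprocessing of $H$ and $H^R$ for constant-time nearest-common-ancestor queries~\cite{nca:ht}. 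Since $G$ is strongly connected, $m\ge n$, so all of this takes $O(m+n)=O(m)$ time and $O(n)$ space.

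For the $2$-vertex-connectivity test I would use Lemma~\ref{cor:2vc-resilient}: $x\leftrightarrow_{\mathrm{2v}}y$ iff $x\leftrightarrow_{\mathrm{2e}}y$ and $x\leftrightarrow_{\mathrm{vr}}y$. The first relation is decided in $O(1)$ by the structure of Section~\ref{sec:other}, which also returns a separating edge on failure; the second is decided in $O(1)$ by rooting each tree of $F$ arbitrarily, storing the parent pointer of each node, and checking whether $x$ and $y$ are siblings in $F$ or one is the grandparent of the other. If $x$ and $y$ fail the vertex-resilient test, a separating vertex is produced by running the reporting procedure for~(c) below and stopping at the first vertex it emits. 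For query~(b), given $x$, $y$ and a candidate $u$, I would compute $w=\mathrm{nca}_H(x,y)$ and $w^R=\mathrm{nca}_{H^R}(x,y)$ and directly check the two alternatives of Lemma~\ref{lemma:vertices-separating_two_vertices} using a constant number of ancestor tests in $D$ and $D^R$.

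For query~(c) I would compute $w$ and $w^R$ and then climb, in $D$, from both $x$ and $y$: every vertex met on such a climb other than $x$ or $y$ is a proper ancestor of $x$ (or $y$) in $D$, hence a non-leaf of $D$, hence a nontrivial dominator of $G_s$, hence (unless it is a non-articulation $s$) a strong articulation point; by Lemma~\ref{lemma:vertices-separating_two_vertices} it is a separating vertex precisely while $w$ is not its proper descendant in $D$. So I would output each such vertex and keep climbing, halting at the first vertex that is a proper ancestor of $w$, or at $s$. The symmetric pair of climbs is carried out in $D^R$ against $w^R$. Climbing from both endpoints is necessary because a separating vertex that is an ancestor of $y$ but not of $x$ in $D$ is only caught by the climb from $y$; that the reported set is exactly the set of separating vertices follows from Lemma~\ref{vertices-separators-ancestors} (every separating vertex lies on $D[s,x]$, $D[s,y]$, $D^R[s,x]$ or $D^R[s,y]$) together with Lemma~\ref{lemma:vertices-separating_two_vertices}. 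Duplicates among the four climbs are removed by marking visited vertices and clearing the marks at the end, at a cost of $O(1)$ per reported vertex.

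The one delicate point is the time bound for~(c). Since, along any ancestor chain in $D$, if $w$ is a proper descendant of one vertex it is a proper descendant of every vertex above it, each climb is a (possibly empty) run of separating vertices followed by exactly one non-separating vertex; hence the four climbs visit at most $k+O(1)$ vertices and report exactly $k$ of them, which is $O(k)$. When $k=0$ we have $x\leftrightarrow_{\mathrm{vr}}y$, so $d(x)$ is not separating and therefore $w$ is already a proper descendant of $d(x)$; each climb then stops after a single step and the query runs in $O(1)$. Notice that, unlike the edge case of Section~\ref{sec:other}, no compressed copy of $D$ or $D^R$ is needed here, precisely because every internal dominator-tree vertex is a strong articulation point. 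I expect the substantive effort to be bookkeeping rather than anything structural: treating the degenerate cases $x=s$, $y=s$, $x$ an ancestor of $y$, $w=s$, or a non-articulation $s$, so that the climbs start and stop correctly and never emit $x$, $y$, or a non-separating $s$; the structural content is entirely carried by Lemmata~\ref{vertices-separators-ancestors} and~\ref{lemma:vertices-separating_two_vertices}.
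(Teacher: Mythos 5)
Your proposal is correct and follows essentially the same route as the paper: query (a) via Lemma~\ref{cor:2vc-resilient} combined with the $2$-edge structure of Section~\ref{sec:other} and a rooted block forest $F$ from \textsf{HDVRB}; queries (b) and (c) via the nca's $w,w^R$ in $H,H^R$ and the characterization in Lemma~\ref{lemma:vertices-separating_two_vertices}, with bottom-up climbs in $D$ and $D^R$ from both $x$ and $y$ that stop at the first proper ancestor of $w$ (resp.\ $w^R$) or at $s$, justified by Lemma~\ref{vertices-separators-ancestors} and by the monotonicity of proper-descendance along an ancestor chain. The only addition you make that the paper leaves implicit is the explicit remark that, unlike the edge case, no compressed dominator tree is needed because every proper ancestor of $x$ in $D$ other than $s$ is automatically a nontrivial dominator and hence a strong articulation point, so the climb never traverses a ``dead'' stretch.
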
 
\section{Sparse certificate}
\label{sec:sparse-certificate}

Our framework also provides a \emph{sparse certificate} ${\cal C}(G)$ for the $2$-connectivity relations of the input digraph $G$.
That is, ${\cal C}(G)$ is a strongly connected spanning subgraph of $G$ with $O(n)$ edges, such that it maintains:
(i) the $1$-connectivity cuts of $G$ and the decompositions induced by those cuts,
and (ii) the $2$-edge-connected and $2$-vertex-connected components of $G$.
Hence, for any edge $e$ (resp., vertex $v$) of $G$, the strongly connected components of
${\cal C}(G) \setminus e$ (resp., ${\cal C}(G) \setminus v$) are identical to the strongly connected components of
$G \setminus e$ (resp., $G \setminus v$).

The construction of this sparse certificate is provided by Algorithm \textsf{SparseCertificate} below.
It uses the notion of \emph{divergent spanning trees}~\cite{DomCert:TALG} of a flow graph $G_s$, together with edges that define
a loop nesting tree of $G_s$.
Here, a spanning tree $T$ of a flow graph $G_s$ is a tree with root $s$ that contains a path from $s$ to $v$ for all vertices $v$. Two spanning trees $T_1$ and $T_2$ rooted at $s$ are \emph{divergent} if for all vertices $v$, the paths from $s$ to $v$ in $T_1$ and $T_2$ share only the dominators of $v$. Every flow graph $G_s$ has two such spanning trees, and they can be computed in linear time \cite{DomCert:TALG}. Moreover, the computed spanning trees are \emph{maximally edge-disjoint}, i.e., the only edges they have in common are the bridges of $G_s$.

\begin{algorithm}
\LinesNumbered
\DontPrintSemicolon
 \KwIn{Strongly connected digraph $G=(V,E)$}
 \KwOut{A sparse strongly connected spanning subgraph ${\cal C}(G)$ of $G$ that maintains the same $1$-connectivity cuts and the same decompositions induced by
the $1$-connectivity cuts of $G$, and the $2$-edge and $2$-vertex-connected components of $G$.
}

 \textbf{Initialization:}\;
 	Compute the reverse digraph $G^R$. Select an arbitrary start vertex $s \in V$.  \;

 \textbf{Process flow graph $G_s$:}\;
	Compute two divergent spanning trees of $G_s$. Let $E_1$ be the set of edges of those spanning trees.\;
	Compute a loop nesting tree $H$ of $G_s$. Let $E_2$ be edges of $G_s$ that define $H$.\;

 \textbf{Process reverse flow graph $G^R_s$:}\;
	Compute two divergent spanning trees of $G^R_s$. Let $E_3$ be the set of edges of those spanning trees.\;
    Compute a loop nesting tree $H$ of $G_s$. Let $E_4$ be edges of $G^R_s$ that define $H^R$.\;

 \textbf{Postprocessing:}\;
	Reverse the direction of the edges in $E_3$ and $E_4$.\;
 	Insert into ${\cal C}(G)$ the edges of the set $E_1 \cup E_2 \cup E_3 \cup E_4$.\;

 \caption{\textsf{SparseCertificate}}
 \label{algorithm:SparseCertificate}
\end{algorithm}

Now we prove that Algorithm \textsf{SparseCertificate} is correct. Since ${\cal C}(G)$ contains two divergent spanning trees of $G_s$, it follows from \cite{DomCert:TALG} that flow graph ${\cal C}_s(G)$ has the same dominator tree as $G_s$.
Also, in order to maintain the same loop nesting tree $H$ of $G_s$
we include in ${\cal C}(G)$ the dfs trees of $G_s$ that defines $H$, together with at most $n-1$
additional edges that generate the loops of $G_s$.
These can be found while doing, for each vertex $x$, a backward search from $x$ in $G_s$
that
visits only the children of $x$ in $H$, as explained in~\cite{dominators:Fraczak2013,st:t}.
We perform the corresponding steps in $G^R_s$, so ${\cal C}_s^R(G)$ has the same dominator tree and
the same loop nesting tree as $G^R_s$.
Since the four trees $D$ and $D^R$, and $H$ and $H^R$ are the same for both
$G_s$ and ${\cal C}_s(G)$, each of our algorithms computes on input ${\cal C}(G)$
the same output as for input $G$. Moreover, it is easy to see
that ${\cal C}(G)$ contains $O(n)$ edges of $G$.
This proves the correctness of Algorithm \textsf{SparseCertificate}.
Regarding its running time, by \cite{dominators:bgkrtw,DomCert:TALG} we have that
all steps of the algorithm take linear time.
Hence, we obtain the following result.

\begin{theorem}
\label{theorem:sparse-certificate}
Given a strongly connected directed graph $G$ with $m$ edges and $n$ vertices, we can compute in $O(m+n)$ time a sparse certificate that maintains the same $1$-connectivity cuts and the same decompositions induced by the $1$-connectivity cuts of $G$, together with the $2$-edge and $2$-vertex-connected components of $G$.
\end{theorem}

We also remark that the sparse certificates of \cite{2ECC:GILP:TALG} and \cite{2VCB} maintain, respectively, only the $2$-edge and the $2$-vertex-connected components.
Such sparse certificates, including our new construction, can be used to obtain fast approximation algorithms for computing sparse $2$-connectivity preserving subgraphs~\cite{GIKPP:TCS,2vcb:jaberi15}.
Our new sparse certificate also provides an alternative way to achieve the bounds stated in Theorems \ref{theorem:all-scc} and \ref{theorem:all-scc-v}.
That is, we can use ${\cal C}(G)$ as our data structure, so that given a query edge $e$ (resp., vertex $v$), we can report the strongly connected components of
${\cal C}(G) \setminus e$ (resp., ${\cal C}(G) \setminus v$) instead of the strongly connected components of
$G \setminus e$ (resp., $G \setminus v$); the properties of our sparse certificate guarantee that the reported output
is correct.
Also, since ${\cal C}(G)$ has $O(n)$ edges, these computations take $O(n)$ time.

\section{Conclusions}
\label{sec:conclusions}

In this paper, we have investigated some basic problems related to the strong connectivity and $2$-connectivity of directed graphs, by considering the effect of edge and vertex deletions on their strongly connected components. Let $G$ be a directed graph with $m$ edges and $n$ vertices.
We have presented a collection of
$O(n)$-space data structures that, after $O(m+n)$-time preprocessing, can solve efficiently several problems, including
reporting in $O(n)$ worst-case time all the strongly connected components obtained after deleting a single edge (resp., a single vertex) in $G$;
computing the total number of strongly connected components obtained after deleting a single edge (resp., a single vertex) in total worst-case $O(n)$ time for all edges (resp., for all vertices);
computing the size of the largest and of the smallest strongly connected components obtained after deleting a single edge (resp., a single vertex) in total worst-case $O(n)$ time for all edges (resp., for all vertices).
After $O(m+n)$-time preprocessing, we can also build an $O(n)$-space data structure that can
 answer efficiently basic $2$-edge and $2$-vertex connectivity queries on directed graphs. All our bounds are asymptotically tight.

Our work raises some new
and perhaps intriguing questions.
First, using the algorithmic framework developed in this paper, we can find in linear time
an edge or a vertex whose removal minimizes/maximizes several properties of the resulting strongly connected components, such as their number, or their largest or their smallest size.
Can our approach be used to find in linear time an edge (resp., a vertex) whose removal optimizes some more complex properties of the resulting strongly connected components?
In particular,
can we find an edge $e$ (resp., a vertex $v$) that minimizes/maximizes a given function $f(|C_1|,|C_2|,...,|C_k|)$, where $C_1$, $C_2$, $\ldots$, $C_k$ are the strongly connected components of $G\setminus e$ (resp.,  $G\setminus v$), and $f(x_1,x_2,\ldots,x_k)$ can be computed in time $O(k)$?
We showed how to achieve an $O(m+n)$ time bound for some special cases of functions in Theorems \ref{theorem:generic} and \ref{theorem:generic-v}.
For general functions $f(x_1,x_2,\ldots,x_k)$, the algorithms presented in this paper achieve $O(m+nb)$ (resp., $O(m+np)$) worst-case time to solve this problem, where $b$ (resp., $p$) is the number of strong bridges (resp., strong articulation points) in the input digraph.
Can we solve this problem in linear time in case of general functions?
Second, can the algorithms given in this paper be used for designing efficient approximation algorithms for other hard optimization problems, such as critical edge (resp.,  vertex) detection problems? In those problems, one is interested in removing a set of $k$ edges (resp., vertices) so as to minimize the
pairwise connectivity of the resulting graph.
Third, the dynamic maintenance of $2$-connectivity properties in directed graphs deserves further attention. After this work, we have been able to apply the algorithmic framework developed in this paper to the incremental maintenance of the $2$-edge- and the $2$-vertex-connected
components of directed graphs \cite{GIN16:ICALP,GIN18:LATIN}. The decremental version of these problems, where we wish to maintain the $2$-edge- and the $2$-vertex-connected components of a directed graph under edge deletions, were considered in \cite{decdom17} using different techniques. The decremental algorithms in \cite{decdom17} achieve $O(mn \log{n})$ total running time, using $O(n^2 \log{n})$ space. Can these bounds be improved?
Finally, we note that our approach is also able to provide alternative linear-time algorithms for computing
 the $2$-edge-connected and $2$-vertex-connected components of a digraph, which appear to be simpler than previous algorithms, and therefore likely to perform better in practice. We refer to \cite{2CC:ALENEX18} for an experimental evaluation of such algorithms.
%

\section*{Acknowledgments}
We are indebted to Peter Widmayer for suggesting the problem of finding the edge whose deletion yields the smallest largest strongly connected components in a strongly connected digraph, and to
Mat\'u\v{s} Mihal\'ak,
Przemys\l aw Uzna\'nski and Pencho Yordanov for useful discussions on the practical applications of this problem and for pointing out references~\cite{Gunawardena12,MihalakUY15}.

\bibliographystyle{plain}
\bibliography{ltg}

\end{document}